\documentclass[11pt]{article}
\pdfoutput=1
\usepackage{etex}

\usepackage[margin=1in]{geometry}
\usepackage[latin2]{inputenc}
\usepackage[english]{babel}
\usepackage[all=normal,paragraphs=tight,bibliography=tight]{savetrees}
\usepackage{tikz}

\usepackage{graphicx}

\usepackage{multicol}

\usepackage{bbm}
\usepackage{mathrsfs}
\usepackage{latexsym}
\usepackage{lmodern}

\usepackage{amsmath,amsfonts,amssymb,amsthm} 
\usepackage{mathtools}
\usepackage{thmtools}
\usepackage{thm-restate}
\usepackage{enumerate}

\newtheorem{theorem}{Theorem}[section]
\newtheorem{lemma}[theorem]{Lemma}
\newtheorem{claim}[theorem]{Claim}
\newtheorem{corollary}[theorem]{Corollary}
\newtheorem{proposition}[theorem]{Proposition}

\newtheorem{conjecture}{Conjecture}
\theoremstyle{definition}
\newtheorem{definition}[theorem]{Definition}

\newcommand{\retheorem}{theorem}

\usepackage{graphicx}

\usepackage{wrapfig}
\usepackage{booktabs} 
\usepackage{longtable}
\usepackage{float}

\usepackage{enumerate}
\usepackage{xspace}
\usepackage{comment}
\usepackage{ifthen}
\usepackage[noadjust,nospace]{cite} 
\usepackage{ellipsis}
\usepackage{fixltx2e}
\usepackage{url}
\usepackage{scrtime}

\usepackage{microtype} 

\usepackage[draft,author=]{fixme}
\fxsetup{theme=color}



\newcommand{\FPT}{\ensuremath{\text{FPT}}\xspace}

\newcommand{\W}[1]{\ensuremath{\text{W}[#1]}}

\newcommand{\N}{\ensuremath{\mathbb{N}}\xspace} 
\newcommand{\R}{\ensuremath{\mathbb{R}}\xspace}	

{\bfseries \upshape}{\itshape}
{\bfseries \upshape}{\itshape}

\newcommand{\mc}{\mathcal}

\newcommand{\pr}[3]{M_{#1}(#2,#3)}
\newcommand{\prg}[4]{M^{#4}_{#1}(#2,#3)}
\newcommand{\cl}[2]{{\rm{cl}}_{#1}(#2)}

\newcommand{\Problem}[1]{\textsc{#1}\xspace}

\newcommand{\DS}{\Problem{Dominating Set}}


\DeclareMathOperator{\wcol}{wcol}
\DeclareMathOperator{\col}{col}
\DeclareMathOperator{\coim}{coim}

\DeclareMathOperator{\rad}{radius} 
\DeclareMathOperator{\den}{density} 


\newcommand{\lexprod}{\ensuremath{\mathbin{{}\bullet{}}}}

\def\grad_#1{\nabla\!_{#1}}
\def\topgrad_#1{\widetilde \nabla\!_#1}


\renewcommand{\leq}{\leqslant}

\renewcommand{\geq}{\geqslant}

\newcommand{\widthm}[1]{\ensuremath{\mathop\mathbf{#1}}\xspace}

\newcommand{\width}{\widthm{width}}

\newcommand{\nab}{\mathop{\triangledown}}
\newcommand{\topnab}{\mathop{\tilde \triangledown}}
\newcommand{\dist}{\ensuremath{\text{dist}}}

\newcommand{\numcliques}{\ensuremath{\mathop{\#\omega}}}




\newcommand*{\ie}{i.e.\@\xspace}

\makeatletter
\newcommand*{\etc}{%
    \@ifnextchar{.}%
        {etc}%
        {etc.\@\xspace}%
}
\makeatother
%



\newif\ifshort

\newif\ifappend

\newcommand{\omitted}{%
    \ifappend%
    \else%
        \textup{[$\star$]}
    \fi
}

\usepackage{framed}
\usepackage{ctable}

\usepackage{xargs}
\usepackage{framed}
\renewenvironmentx{leftbar}[2][1=0.5pt, 2=5pt]{%
  \MakeFramed {\advance\hsize-\width \FrameRestore}}%
{\endMakeFramed}
\newenvironment{draft}[1]{%
                \color{gray}%
                \begin{leftbar}%
                \begin{center}\emph{Draft:} {#1}\end{center}
            }{
                \end{leftbar}%
             }

\def\compxy(#1){\mathcal{U}_{#1}}

\def\Nesetril{Ne\v{s}et\v{r}il\xspace}
\def\Dvorak{Dvo\v{r}\'{a}k\xspace}


\newcommand{\ourtitlethanks}
{The research of P.\ Drange, F.\ Fomin, M.\ Pilipczuk, and M.\ Pilipczuk
leading to these results has received funding from the European
Research Council under the European Union's Seventh Framework
Programme (FP/2007-2013) / ERC Grant Agreement n.~267959.
Mi. Pilipczuk currently holds a post-doc position at Warsaw Centre of Mathematics and Computer Science, and is supported by the Foundation for Polish Science.
F.\ Reidl, F.\ Villaamil, and S.\ Sikdar are supported by DFG-Project RO
927/13-1 ``Pragmatic Parameterized Algorithms''.
M.\ Dregi and D.\ Lokshtanov are supported by the BeHard grant under the
recruitment programme of the of Bergen Research Foundation.
S.\ Saurabh is supported by PARAPPROX, ERC starting grant no.~306992.}


\newcommand{\eps}{{\varepsilon}}

\newcommand{\Oh}{{\mathcal{O}}}

\newcommand{\ds}{\mathbf{ds}}

\def\domcoresize{f_{\text{coresize}}}

\newcommand{\iftenpages}[2]{\ifthenelse{\equal{\tenpages}{yes}}{#1}{#2}}

\newcommand{\onlyfullversion}[1]{\iftenpages{}{#1}}

\newcommand{\tenpages}{no}

\shortfalse



\title{Kernelization and Sparseness: the case of {\sc{Dominating Set}}\thanks{\ourtitlethanks}}
\author{
  P{\r a}l Gr{\o}n{\r a}s Drange
  \thanks{
    Department of Informatics, University of Bergen, Norway, \texttt{\{pal.drange,markus.dregi,fomin,daniello\}@ii.uib.no}.
  }
  \and
  Markus S. Dregi$^\dagger$
  \and
  Fedor V. Fomin$^\dagger$
  \and
  Stephan Kreutzer%
  \thanks{
    Institute of Software Technology and Theoretical Computer Science, Technische Universit\"at Berlin, Germany, \texttt{\{stephan.kreutzer,sebastian.siebertz\}@tu-berlin.de}.
  }
  \and
  Daniel Lokshtanov$^\dagger$
  \and
  Marcin Pilipczuk
  \thanks{
    Institute of Informatics, University of Warsaw, Poland, \texttt{\{marcin,michal\}.pilipczuk@mimuw.edu.pl}.
  }
  \and
  Micha\l{} Pilipczuk$^\S$
  \and
  Felix Reidl
  \thanks{
    Theoretical Computer Science, Department of Computer Science, RWTH Aachen University, Germany, \texttt{\{reidl,fernando.sanchez,sikdar\}@cs.rwth-aachen.de}
  }
  \and
  Fernando S\'{a}nchez Villaamil$^{\P}$
  \and 
  Saket Saurabh\thanks{
    Institute of Mathematical Sciences, India \texttt{saket@imsc.res.in} and
    Department of Informatics, University of Bergen, Norway, \texttt{saket.saurabh@ii.uib.no}.
  }
  \and
  Sebastian Siebertz$^\ddagger$
  \and
  Somnath Sikdar$^{\P}$
}

\date{\today}


\def\cqedsymbol{\ifmmode$\lrcorner$\else{\unskip\nobreak\hfil
\penalty50\hskip1em\null\nobreak\hfil$\lrcorner$
\parfillskip=0pt\finalhyphendemerits=0\endgraf}\fi} 

\newcommand{\cqed}{\renewcommand{\qed}{\cqedsymbol}}

\newcommand{\probCDS}{\textsc{Connected Dominating Set}}

\newcommand{\probMotif}{\textsc{Graph Motif}}
\newcommand{\compassunless}{$\mathrm{NP} \subseteq \mathrm{coNP} / \mathrm{poly}$}
\newcommand{\CDSG}[1]{G^{\textrm{cds}}_{#1}}
\newcommand{\ColG}[1]{H^{\textrm{col}}_{#1}}

\newcommand{\defparproblem}[4]{
  \vspace{1mm}
\noindent\fbox{
  \begin{minipage}{0.97\textwidth}
  \begin{tabular*}{0.97\textwidth}{@{\extracolsep{\fill}}lr} #1 & {\bf{Parameter:}} #3 \\ \end{tabular*}
  {\bf{Input:}} #2  \\
  {\bf{Question:}} #4
  \end{minipage}
  }
  \vspace{1mm}
}

\begin{document}
\maketitle

\begin{abstract}
We prove that for every positive integer $r$ and for every graph class $\mc G$ of bounded expansion, the {\sc{$r$-Dominating Set}} problem admits a linear kernel on graphs from $\mc G$. Moreover, when $\mc G$ is only assumed to be nowhere dense, then we give an almost linear kernel on $\mc G$ for the classic {\sc{Dominating Set}} problem, i.e., for the case $r=1$. These results generalize a line of previous research on finding linear kernels for {\sc{Dominating Set}} and {\sc{$r$-Dominating Set}}~\cite{AlberFN04,BodlaenderFLPST09,FominLST10,FominLST12,FominLST13}. However, the approach taken in this work, which is based on the theory of sparse graphs, is radically different and conceptually much simpler than the previous approaches.

We complement our findings by showing that for the closely related {\sc{Connected
Dominating Set}} problem, the existence of such kernelization
algorithms is unlikely, even though the problem is known to admit a
linear kernel on $H$-topological-minor-free graphs~\cite{FominLST13}. Also, we prove that for any somewhere dense class $\mc G$, there is some $r$ for which {\sc{$r$-Dominating Set}} is W[$2$]-hard on $\mc G$. Thus, our results fall short of proving a sharp dichotomy for the parameterized complexity of {\sc{$r$-Dominating Set}} on subgraph-monotone graph classes: we conjecture that the border of tractability lies exactly between nowhere dense and somewhere dense graph classes.

\end{abstract}
\newpage

\newcommand{\Wt}

\section{Introduction}
\label{sec:intro}

\paragraph*{Domination and kernelization.} In the classic {\sc{Dominating Set}} problem, 
given a graph $G$ and an integer $k$, we are asked to determine the existence of a subset 
$D\subseteq V(G)$ of size at most $k$ such that every vertex $u\in V(G)$ is {\em{dominated}} 
by $D$: either $u$ belongs to $D$ itself, or it has a neighbor that belongs to $D$. The {\sc{$r$-Dominating Set}} problem, for a positive integer $r$, is a generalization where each vertex of $D$ dominates all the vertices at distance at most $r$ from it; by setting $r=1$ we obtain the original problem.
{\sc{Dominating Set}} is NP-hard and remains so even in very restricted settings, e.g. on planar 
graphs of maximum degree~$3$ (cf.\ [GT2] in Garey and Johnson~\cite{GareyJ79}). The complexity 
of {\sc{Dominating Set}} and {\sc{$r$-Dominating Set}} was studied intensively under different algorithmic frameworks, most 
importantly from the points of view of approximation and of parameterized complexity. 
In this work we are interested in the latter paradigm.

{\sc{Dominating Set}} parameterized by the target size $k$ plays a central role in 
parameterized complexity as it is a predominant example of a $\W{2}$-complete problem. 
Recall that the main focus in parameterized complexity is on designing {\em{fixed-parameter}} 
algorithms, or shortly {\em{FPT}} algorithms, whose running time on an instance of 
size $n$ and parameter $k$ has to be bounded by 
$f(k)\cdot n^c$ for some computable function $f$ and constant $c$. Downey and Fellows 
introduced a hierarchy of parameterized complexity classes 
$\FPT\subseteq \W{1}\subseteq \W{2}\subseteq \ldots$ that is believed to be strict, 
see~\cite{DowneyF99,FlumGrohebook}. As {\sc{Dominating Set}} is $\W{2}$-complete in 
general, we do not expect it to be solvable in FPT time.

However, it turns out that various restrictions on the input graph lead to 
robust tractability of {\sc{Dominating Set}}. Out of these, a particularly fruitful 
line of research concerned investigation of the complexity of the problem in sparse 
graph classes, like planar graphs, graphs of bounded genus, or graphs excluding some 
fixed graph $H$ as a minor. In these classes we can even go one step further than 
just showing fixed-parameter tractability: It is possible to design a linear 
kernel for the problem. Formally, a {\em{kernelization algorithm}} (or a {\em{kernel}}) 
is a polynomial-time preprocessing procedure that given an instance $(I,k)$ of a 
parameterized problem outputs another instance $(I',k')$ of the same problem which 
is equivalent to $(I,k)$, but whose total size $|I'|+k'$ is bounded by $f(k)$ for 
some computable function $f$, called the {\em{size}} of the kernel. If~$f$ is polynomial (resp.\ linear),
then such an algorithm is called a {\em{polynomial}}
(resp.\ {\em{linear}}) {\em{kernel}}. Note that the existence of such a kernelization 
algorithm immediately implies that the problem can be solved by a very efficient 
fixed-parameter algorithm: after applying kernelization, any brute-force search or 
more clever algorithm runs in time bounded by a function of~$k$ only.

The quest for small kernels for {\sc{Dominating Set}} on sparse graph classes began 
with the groundbreaking work of Alber et al.~\cite{AlberFN04}, who showed the first 
linear kernel for the problem on planar graphs. This work also introduced the 
concept of a {\em{region decomposition}}, which proved to be a crucial tool for 
constructing linear kernels for other problems on planar graphs later on. Another 
important step was the work of Alon and Gutner~\cite{AlonG08,Gutner09}, who gave 
an $\Oh(k^c)$ kernel for the problem on $H$-topological-minor free graphs, where~$c$
depends on~$H$ only. Moreover, if $H=K_{3,h}$ for some~$h$, then the 
size of the kernel is actually linear. This led Alon and Gutner to pose the 
following excellent question: Can one characterize the families of graphs where 
{\sc{Dominating Set}} admits a linear kernel? 

The research program sketched by the works of Alber et al.~\cite{AlberFN04} and Alon 
and Gutner~\cite{AlonG08,Gutner09} turned out to be one of particularly fruitful 
directions in parameterized complexity in recent years, and eventually led to the 
discovery of new and deep techniques. In particular, linear kernels for 
{\sc{Dominating Set}} have been given for bounded genus graphs~\cite{BodlaenderFLPST09}, 
apex-minor-free graphs~\cite{FominLST10}, $H$-minor-free graphs~\cite{FominLST12}, and 
most recently $H$-topological-minor-free graphs~\cite{FominLST13}. In all these results, 
the notion of {\em{bidimensionality}} plays the central role. Using variants of the 
Grid Minor Theorem, it is possible to understand well the connections between the minimum 
possible size of a dominating set in a graph and its treewidth. The considered graph 
classes also admit powerful decomposition theorems that follow from the Graph Minors 
project of Robertson and Seymour~\cite{gm16}, or the recent work of Grohe and 
Marx~\cite{GroheM12} on excluding~$H$ as a topological minor. The combination of these 
tools provides a robust base for a structural analysis of the input instance, which 
leads to identifying {\em{protrusions}}: large portions of the graph that have constant 
treewidth and small interaction with other vertices, and hence can be efficiently 
replaced by smaller gadgets. The protrusion approach, while originating essentially 
in the work on the {\sc{Dominating Set}} problem, turned out to be a versatile tool 
for finding efficient preprocessing routines for a much wider class of problems. 
In particular, the {\em{meta-kernelization}} framework of Bodlaender et al.~\cite{BodlaenderFLPST09}, 
further refined by Fomin et al.~\cite{FominLST10}, describes how a combination of 
bidimensional and finite-state properties of a generic problem leads to the
construction of linear kernels on bounded genus and $H$-minor-free graphs. 

Beyond the current frontier of $H$-topological-minor-free graphs~\cite{FominLST13}, 
kernelization of {\sc{Dominating Set}} was studied in graphs of bounded degeneracy. 
Recall that a graph is called {\em{$d$-degenerate}} if every subgraph contains 
a vertex of degree at most~$d$. Philip et al.~\cite{PhilipRS12} obtained a kernel 
of size $\Oh(k^{(d+1)^2})$ on $d$-degenerate graphs for constant~$d$, and more 
generally a kernel of size $\Oh(k^{\max(i^2,j^2)})$ on graphs excluding the complete bipartite graph 
$K_{i,j}$ as a subgraph. However, as proved by Cygan et al.~\cite{CyganGH13}, 
the exponent of the size of the kernel needs to increase with~$d$ at least quadratically: 
the existence of an $\Oh(k^{(d-1)(d-3)-\eps})$ kernel for any $\eps>0$ would imply 
that $\textrm{NP}\subseteq \textrm{coNP}/\textrm{poly}$. Thus, in these classes the 
existence of a linear kernel is unlikely.

As far as {\sc{$r$-Dominating Set}} is concerned, the current most general result gives a linear kernel for any apex-minor-free class~\cite{FominLST10}, and follows from a general protrusion machinery. The techniques used by Fomin et al.~\cite{FominLST12,FominLST13} for $H$-(topological)-minor-free classes are tailored to the classic {\sc{Dominating Set}} problem, and do not carry over to an arbitrary radius $r$. Therefore, up to this point the existence of linear kernels for {\sc{$r$-Dominating Set}} on $H$-(topological)-minor-free classes was open.

\paragraph*{Sparsity.} The concept of sparsity has been recently the subject of intensive 
study both from the point of view of pure graph theory and of computer science. 
In particular, the notions of graph classes of {\em{bounded expansion}} and {\em{nowhere dense}} 
graph classes have been introduced by \Nesetril and Ossona de Mendez. The main idea 
behind these models is to establish an abstract notion of sparsity based on known 
properties of well-studied sparse graph classes, e.g.\ $H$-minor-free graphs, and 
to develop tools for combinatorial analysis of sparse graphs based only on this 
abstract notion. We refer to the book of \Nesetril and Ossona de Mendez~\cite{Sparsity} 
for an introduction to the topic. 

Intuitively, a graph class~$\mathcal{G}$ has {\em{bounded expansion}} if any minor 
obtained by contracting disjoint subgraphs of radius at most~$r$ is 
$d_r$-degenerate, for some constant $d_r$. 
Thus, this property can be thought of as strengthened degeneracy that persists 
after ``local'' minor operations. The notion of a {\em{nowhere dense}} graph class 
is a further relaxation of this concept\onlyfullversion{; we refer to Definition~\ref{def:nd} for a 
formal definition}. In particular, every graph class $\mathcal{G}$ that has bounded 
expansion is also nowhere dense, and all the aforementioned classes on which the 
existence of a linear kernel for {\sc{Dominating Set}} has been established (planar, 
bounded genus, $H$-minor-free, $H$-topological-minor-free) have bounded expansion.

From the point of view of theoretical computer science, of particular importance is the 
program of establishing fixed-parameter tractability of model checking First Order logic 
on sparse graphs. A long line of work resulted in FPT algorithms for model checking 
First Order formulae on more and more general classes of sparse 
graphs~\cite{DawarGK07, DvorakKT13, FlumG01, FrickG01, GroheKS14, Seese96}, similarly 
to the story of kernelization of {\sc{Dominating Set}}. Finally, FPT algorithms 
for the problem have been given for graph classes of bounded expansion by 
\Dvorak et al.~\cite{DvorakKT13}, and very recently for nowhere dense graph classes 
by Grohe et al.~\cite{GroheKS14}. This is the ultimate limit of this program: as proven 
in~\cite{DvorakKT13}, for any class $\mathcal{G}$ that is not nowhere dense 
(is {\em{somewhere dense}}) and is closed under taking subgraphs, model checking 
First Order formulae on $\mathcal{G}$ is not fixed-parameter tractable (unless $\FPT=\W{1}$).

Fixed-parameter tractability of {\sc{$r$-Dominating Set}} on nowhere dense graph classes 
follows immediately from the result of Grohe et al.~\cite{GroheKS14}, since the 
problem is definable in First Order logic (for each constant $r$). However, an explicit algorithm was given 
earlier by Dawar and Kreutzer~\cite{DawarK09}.

\medskip

To summarize, we would like to stress that \textsc{Dominating Set} has repeatedly served
as a trigger for developing new techniques in parameterized complexity:
the subexponential algorithm on planar graphs~\cite{AlberBFKN02} lead to the theory of bidimensionality;
the kernelization algorithm on planar graphs~\cite{AlberFN04} initiated meta-theorems and protrusion-based
techniques on planar graphs and beyond, which were further refined by techniques developed
for graphs with excluded topological minor~\cite{FominLST13}; and, last but not least, 
the work on \textsc{$r$-Dominating Set} in nowhere-dense graphs~\cite{DawarK09} led to generic First Order logic
results on sparse classes of graphs.
Therefore, we believe that understanding the kernelization status of \textsc{Dominating Set} and \textsc{$r$-Dominating Set}
in sparse graph classes may again lead to very fruitful developments.


\paragraph*{Kernelization results.} In this work we prove that having bounded expansion or 
being nowhere dense is sufficient for a graph class to admit an (almost) linear 
kernel for {\sc{Dominating Set}}. Henceforth, for a graph~$G$, we let~$\ds(G)$ denote the 
minimum size of a dominating set of~$G$.

\begin{restatable}{\retheorem}{restatemainbe}
  \label{thm:main-be} Let $\mathcal{G}$ be a graph class of bounded
  expansion.  There exists a polynomial-time algorithm that given a
  graph $G \in \mathcal{G}$ and an integer $k$, either correctly
  concludes that $\ds(G) > k$ or finds a subset of vertices $Y
  \subseteq V(G)$ of size $\Oh(k)$ with the property that $\ds(G) \leq
  k$ if and only if $\ds(G[Y]) \leq k$.
\end{restatable}
\begin{restatable}{\retheorem}{restatemainnd}
  \label{thm:main-nd}
  Let $\mathcal{G}$ be a nowhere dense graph class and let $\eps > 0$
  be a real number.  There exists a polynomial-time algorithm that
  given a graph $G \in \mathcal{G}$ and an integer $k$, either
  correctly concludes that $\ds(G) > k$ or finds a subset of vertices
  $Y \subseteq V(G)$ of size $\Oh(k^{1+\eps})$ with the property that
  $\ds(G) \leq k$ if and only if $\ds(G[Y]) \leq k$.
\end{restatable}

In both cases, to obtain a kernel we apply the 
algorithm given by Theorem~\ref{thm:main-be} or~\ref{thm:main-nd},
and then either provide a trivial no-instance (in case the algorithm 
concluded that $\ds(G)>k$), or we output $(G[Y],k)$. This immediately yields the following:

\begin{corollary}
For every hereditary graph class~$\mathcal{G}$ with bounded expansion, 
{\sc{Dominating Set}} admits a kernel of size $\Oh(k)$ on graphs from~$\mathcal{G}$. 
For every hereditary and nowhere dense graph class~$\mathcal{G}$ and every $\eps>0$, 
{\sc{Dominating Set}} admits a kernel of size $\Oh(k^{1+\eps})$ on graphs from~$\mathcal{G}$.
\end{corollary}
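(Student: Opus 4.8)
The plan is to read both statements straight off Theorems~\ref{thm:main-be} and~\ref{thm:main-nd}; the corollary is essentially a bookkeeping consequence. The only extra ingredients needed are that the vertex set $Y$ produced by those algorithms induces a subgraph of $G$ which, since $\mathcal{G}$ is hereditary, still lies in $\mathcal{G}$, and that the members of $\mathcal{G}$ are sparse enough for the bound on $|Y|$ to translate into a bound on the encoding length.

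For the bounded-expansion case, given an instance $(G,k)$ with $G\in\mathcal{G}$, I would run the polynomial-time algorithm of Theorem~\ref{thm:main-be}. If it certifies $\ds(G)>k$, output the constant-size no-instance $(K_1,0)$; it belongs to $\mathcal{G}$ since $\mathcal{G}$ is hereditary and nonempty (the degenerate cases of empty or edgeless $\mathcal{G}$ being trivially handled directly), and $\ds(K_1)=1>0$, so it is equivalent to the also-negative input. Otherwise the algorithm returns $Y\subseteq V(G)$ with $|Y|=\Oh(k)$ and $\ds(G)\le k \iff \ds(G[Y])\le k$; output $(G[Y],k)$. Since $G[Y]$ is an induced subgraph of $G$ and $\mathcal{G}$ is hereditary, $G[Y]\in\mathcal{G}$, so $(G[Y],k)$ is a legitimate instance of {\sc{Dominating Set}} on $\mathcal{G}$, equivalent to $(G,k)$. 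Finally, bounded expansion implies that every graph in $\mathcal{G}$ is $d$-degenerate for some constant $d=d(\mathcal{G})$, hence $|E(G[Y])|\le d\cdot|Y|=\Oh(k)$ and the total size of the output, including the $\Oh(\log k)$ bits needed to write down $k$, is $\Oh(k)$. Every step runs in polynomial time, so this is a kernelization.

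The nowhere-dense case is structurally identical, using Theorem~\ref{thm:main-nd} in place of Theorem~\ref{thm:main-be}. The only subtlety is bounding $|E(G[Y])|$: a nowhere dense class need not be uniformly degenerate, but each of its members $H$ satisfies, for every $\delta>0$, an almost-linear edge bound $|E(H)|\le c_\delta\,|V(H)|^{1+\delta}$, with $c_\delta$ depending only on $\mathcal{G}$ and $\delta$. Given the target $\eps>0$, I would set $\delta:=\min(\eps/3,1)$, invoke Theorem~\ref{thm:main-nd} with parameter $\delta$ to obtain $Y$ with $|Y|=\Oh(k^{1+\delta})$, and in the positive branch output $(G[Y],k)$ (and $(K_1,0)$ otherwise). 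Then $|E(G[Y])|\le c_\delta|Y|^{1+\delta}=\Oh(k^{(1+\delta)^2})=\Oh(k^{1+\eps})$, because $(1+\delta)^2=1+2\delta+\delta^2\le 1+3\delta\le 1+\eps$ using $\delta\le 1$ and $\delta\le\eps/3$. As before $G[Y]\in\mathcal{G}$ by hereditariness and is equivalent to the input, so the output is a valid instance of size $\Oh(k^{1+\eps})$, computed in polynomial time.

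There is no genuine obstacle here: the corollary follows mechanically from the two main theorems. The only point requiring an argument beyond ``apply the theorem'' is the passage from a bound on $|Y|$ to a bound on the bit-length of $(G[Y],k)$, i.e.\ controlling $|E(G[Y])|$, which is precisely where the (near-)uniform sparsity of bounded-expansion and nowhere dense classes enters; all the real work lies upstream, in establishing Theorems~\ref{thm:main-be} and~\ref{thm:main-nd}.
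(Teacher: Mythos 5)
Your proposal matches the paper's own derivation: run the algorithm of Theorem~\ref{thm:main-be} or~\ref{thm:main-nd}, output a trivial no-instance if $\ds(G)>k$ is reported, and otherwise output $(G[Y],k)$, using hereditariness of $\mathcal{G}$ so that $G[Y]\in\mathcal{G}$. Your extra bookkeeping---bounding $|E(G[Y])|$ via degeneracy in the bounded-expansion case and rescaling $\eps$ to $\delta=\min(\eps/3,1)$ to control $|E(G[Y])|=\Oh(k^{1+\eps})$ in the nowhere-dense case---is a harmless refinement of a point the paper leaves implicit, not a different argument.
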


Note that we formally need to assume that the graph class $\mathcal{G}$ is hereditary 
(closed under taking induced subgraphs), in order to ensure that the 
output instance $(G[Y],k)$ is of the same problem as the input one. However, this is a purely formal problem: for any class $\mathcal{G}$ that either has bounded expansion or is nowhere dense, its closure under taking induced subgraphs also has this property, with exactly the same expansion parameters. So for the sake of kernelization we can always remain in the closure of $\mathcal{G}$ under taking induced subgraphs.

For {\sc{$r$-Dominating Set}}, for $r>1$, we can give a linear kernel for any graph class of bounded expansion.
Unfortunately, there is a technical subtlety that does not allow us to state the kernel in a nice form as in Theorems~\ref{thm:main-be} and~\ref{thm:main-nd}.
Instead, we can kernelize an annotated version of the problem, where only a given subset of vertices of $G$ needs to be dominated.
The annotated version can be reduced to the classic one by simple gadgeteering that, unfortunately, may lead to a slight increase in the bounded expansion
guarantees.

In the following, by $\ds_r(G)$ we denote the minimum size of an {\sc{$r$-Dominating Set}} in a graph $G$, while for $Z\subseteq V(G)$, $\ds_r(G,Z)$ denotes the minimum size of a {\em{$(Z,r)$-dominator}} in $G$, that is, a set $D\subseteq V(G)$ that $r$-dominates (i.e., is at distance at most $r$) every vertex of $Z$. 
For an integer $i$, by $\grad_i(G)$ we denote the maximum density of a minor of $G$ that is constructed by creating disjoint connected subgraphs
of radius at most $i$; see Section~\ref{ss:grads} for a formal definition.

\begin{restatable}{\retheorem}{restatemainbedominator}\label{thm:mainrbe-dominator}
Let $\mathcal{G}$ be a graph class of bounded
  expansion, and let $r$ be a positive integer.  There exists a polynomial-time algorithm that given a
  graph $G \in \mathcal{G}$ and an integer $k$, either correctly
  concludes that $\ds_r(G) > k$ or finds subsets of vertices $Z\subseteq W\subseteq V(G)$, where $|W|=\Oh(k)$, with the property that $\ds_r(G) \leq
  k$ if and only if $\ds_r(G[W],Z) \leq k$.
\end{restatable}


\begin{restatable}{\retheorem}{restatemainbereduced}\label{thm:mainrbe-reduced}
Let $\mathcal{G}$ be a graph class of bounded
  expansion, and let $r$ be a positive integer.  There exists a polynomial-time algorithm that given a
  graph $G \in \mathcal{G}$ and an integer $k$, either correctly
  concludes that $\ds_r(G) > k$ or finds a graph $G'$ such that $|V(G')|=\Oh(k)$, $\grad_i(G')\leq \max(\grad_i(G)+1,2)$ for each nonnegative integer $i$, and $\ds_r(G)\leq k$ if and only if $\ds_r(G')\leq k+1$.
\end{restatable}

Although, formally speaking, Theorems~\ref{thm:mainrbe-dominator} and~\ref{thm:mainrbe-reduced} do not give linear kernels for {\sc{$r$-Dominating Set}} on $\mc G$, because in Theorems~\ref{thm:mainrbe-dominator} we reduce to a different problem, whereas the output graph of Theorem~\ref{thm:mainrbe-reduced} may not belong to $\mc G$ because of some simple gadget attached to it (that may slightly worsen the sparsity guarantees), arguably both these results are as good as a linear kernel compliant to the most restrictive definition. In fact, both of them give a polynomial compression algorithm into bitsize $\Oh(k\log k)$, which is indistinguishable from a linear kernel using current lower bounds techniques. Also, from the proof it is imminent that the problem used in Theorem~\ref{thm:mainrbe-dominator}, where only a subset of vertices needs to be dominated, is much more natural in our context. 

The obtained results strongly generalize the previous results on linear kernels 
for {\sc{Dominating Set}} on sparse graph classes~\cite{AlberFN04,BodlaenderFLPST09,FominLST10,FominLST12,FominLST13}, 
since all the graph classes considered in these results have bounded expansion. 
Moreover, by giving a linear kernel for {\sc{$r$-Dominating Set}} on any class $\mc G$ of bounded expansion, we obtain the same result for any $H$-minor-free or $H$-topological-minor-free class  as well. The existence of such kernels was not known before.

We see the main strength of our results in that they constitute an 
abrupt turn in the current approach to kernelization of {\sc{Dominating Set}} and {\sc{$r$-Dominating Set}}
on sparse graphs: the tools used to develop the new algorithms are radically 
different from all the previously applied techniques. Instead of investigating 
bidimensionality and treewidth, and relying on intricate decomposition theorems 
originating in the work on graph minors, our algorithms exploit only basic 
properties of bounded expansion and nowhere dense graph classes. As a result, 
this paper presents essentially self-contained proofs of all the stated kernelization results.
The only external facts that we use are basic properties of 
weak and centered colorings, and the constant-factor approximation algorithm for {\sc{$r$-Dominating Set}} 
of \Dvorak~\cite{Dvorak13}. All in all, the results show that only the 
combinatorial sparsity of a graph class is essential for designing (almost) 
linear kernels for {\sc{Dominating Set}}, and further topological constraints 
like excluding some (topological) minor are unnecessary.

\paragraph*{Lower bounds.} We complement our study by proving that for the closely related {\sc{Connected Dominating Set}} 
problem, where the sought dominating set $D$ is additionally required to induce 
a connected subgraph, the existence of even polynomial kernels for bounded expansion 
and nowhere dense graph classes is unlikely. More precisely, we prove the following result:

\begin{restatable}{\retheorem}{restatecondomlb}
  \label{thm:condom-lb}
  There exists a class of graphs~$\mathcal{G}$ of bounded expansion
  such that \probCDS{} does not admit a polynomial kernel when
  restricted to~$\mathcal{G}$, unless \compassunless{}. 
  Furthermore,~$\mathcal{G}$ is closed under taking subgraphs.
\end{restatable}

Up to this point, linear kernels for {\sc{Connected Dominating Set}} were given 
for the same family of sparse graph classes as for {\sc{Dominating Set}}: a linear 
kernel for the problem on $H$-topological-minor-free graphs was obtained by Fomin 
et al.~\cite{FominLST13}. Hence, classes of bounded expansion constitute the point 
where the kernelization complexity of both problems diverge: while {\sc{Dominating Set}} 
admits a linear kernel by Theorem~\ref{thm:main-be}, for {\sc{Connected Dominating Set}} 
even a polynomial kernel is unlikely by Theorem~\ref{thm:condom-lb}.
Our intuition about the phenomenon is as follows: 
the connectivity constraint has a completely different nature, and topological 
properties of the graph class become necessary to handle it efficiently. Indeed, a deeper examination of the proof of 
Theorem~\ref{thm:condom-lb} shows that we essentially exploit only the connectivity constraint to establish the lower bound.

Next, we show also that nowhere dense classes form the ultimate limit of parameterized tractability of {\sc{$r$-Dominating Set}}, similarly as it was the case for model checking First Order formulae.
\begin{restatable}{theorem}{wsomewheredense}\label{thm:w2-hardness}
For every somewhere dense graph class $\mc G$ that is closed under taking subgraphs, there exists an integer $r$ such that {\sc{$r$-Dominating Set}} is $\W{2}$-hard on graphs from $\mc G$.
\end{restatable}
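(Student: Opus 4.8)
The plan is to reduce from \textsc{Dominating Set} on general graphs, which is $\W{2}$-hard, by encoding distances using the structure that a somewhere dense, subgraph-closed class must contain. Recall that if $\mc G$ is somewhere dense and closed under taking subgraphs, then by the characterization of nowhere denseness there is an integer $p=p(\mc G)$ such that the $p$-subdivision of \emph{every} graph (equivalently, of every clique $K_n$) belongs to $\mc G$: one may take $r$ to be roughly $2(p+1)$ or $2p+1$. The core idea is that in the $p$-subdivision of $K_n$, two original (branch) vertices are at distance exactly $p+1$, so an $r$-dominating set with $r$ slightly larger than $p+1$ placed on branch vertices behaves like an ordinary dominating set on $K_n$; to make this useful we replace $K_n$ by the graph we actually want to dominate.

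Concretely, given an instance $(H,k)$ of \textsc{Dominating Set} on an $n$-vertex graph $H$, first I would construct the graph $H^+$ obtained from $H$ by adding, for every vertex $v\in V(H)$, a pendant path $P_v$ of some fixed length $\ell$ attached at $v$, with a private endpoint vertex $t_v$. Then I would take $G^\star$ to be the $p$-subdivision of $H^+$ (subdividing every edge exactly $p$ times), which lies in $\mc G$ because $\mc G$ is subgraph-closed and contains the $p$-subdivision of $K_N$ for $N\ge |V(H^+)|$, and $H^+$ is a subgraph of $K_N$, so its $p$-subdivision is a subgraph of the $p$-subdivision of $K_N$. The parameters $\ell$ and $r$ are chosen so that: (i) the only vertices that can ``efficiently'' $r$-dominate the terminals $t_v$ (after subdivision) are the subdivided copies of the original vertices $V(H)$ themselves, or vertices on their incident paths close to them; and (ii) a subdivided copy of $u\in V(H)$ $r$-dominates the terminal $t_v$ if and only if $uv\in E(H)$ or $u=v$. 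This is arranged by setting the terminal-path length so that $\dist_{G^\star}(u,t_v)$ equals a fixed value $d$ when $u=v$, equals $d+(p+1)$ when $uv\in E(H)$, and is strictly larger otherwise; then choosing $r=d+p+1$ captures exactly the closed neighborhood relation of $H$. A short exchange argument shows one may always push an $r$-dominator onto branch vertices without increasing its size, so $\ds_r(G^\star)\le k$ if and only if $\ds(H)\le k$.

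The key steps, in order, are: (1) invoke the somewhere-dense/subgraph-closed hypothesis to fix $p$ with $K_n^{(p)}\in\mc G$ for all $n$, and set $r$ and the auxiliary path lengths as functions of $p$; (2) build the subdivided gadget $G^\star$ from $(H,k)$ in polynomial time and verify $G^\star\in\mc G$ via subgraph-closure; (3) compute the relevant pairwise distances in $G^\star$ and verify the ``closed neighborhood'' correspondence between branch vertices and terminals; (4) prove the exchange lemma that normalizes any $r$-dominator to live on branch vertices (or to be replaceable by branch vertices), yielding equivalence of the two instances with the parameter preserved; (5) conclude $\W{2}$-hardness, since the reduction is a polynomial parameterized reduction from $\W{2}$-hard \textsc{Dominating Set}.

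The main obstacle I anticipate is step~(4): controlling \emph{all} possible $r$-dominating sets, not just the ``intended'' ones. A vertex sitting in the middle of a subdivided edge, or on a terminal path, can $r$-dominate several terminals and several subdivision vertices simultaneously, so one must argue that any such choice can be locally rerouted to a branch vertex without loss — this requires the path lengths and $r$ to be tuned with enough slack that non-branch choices are never strictly advantageous, while still keeping $r$ a fixed constant depending only on $\mc G$. A secondary subtlety is ensuring the subdivision vertices themselves (of which there are $\Theta(p\cdot|E(H^+)|)$) all get $r$-dominated ``for free'' by any dominator of the terminals; padding the construction (e.g.\ attaching short pendant paths to subdivision vertices, or choosing $r$ large enough relative to $p$) handles this but must be checked not to break the branch-vertex correspondence.
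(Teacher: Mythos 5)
Your proposal is correct in substance but takes a genuinely different route from the paper. Both proofs begin identically: from somewhere denseness and subgraph-closure one extracts a constant $p$ with $\mc H_p\subseteq\mc G$ (the paper's Claim~\ref{cl:subdivisions}). From there the paper reduces from \textsc{Set Cover}, building for each of the $k$ ``slots'' a $2r_0$-subdivided clique $A^i$ over the set family plus guard vertices $b^i,c^i$, and for each universe element a vertex $u_e$ with a pendant $v_e$, with $r=3r_0$; the guards force one selection per slot and the pendants $v_e$ enforce coverage. You instead reduce from \textsc{Dominating Set} itself: attach a pendant path of length $\ell\ge 1$ with terminal $t_v$ to each $v\in V(H)$, take the $p$-subdivision $G^\star$ (a subgraph of the $p$-subdivision of a clique, hence in $\mc G$), and set $r=(\ell+1)(p+1)$. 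Your gadget is smaller and the intuition is cleaner, and I verified the distance calculus: a branch vertex $u$ reaches $t_v$ within $r$ iff $u\in N_H[v]$, the terminals $t_v$ are ``private'' (no interior vertex of $P_v$ nor any vertex two edges away can reach another $t_u$), and with $\ell\ge 1$ a dominating set of $H$ placed on branch vertices $r$-covers every subdivision vertex. One remark on your flagged ``obstacle'' in step~(4): you do not actually need to reroute $D$ into another $r$-dominating set of $G^\star$, which is indeed the delicate formulation. It suffices to define a projection $\pi\colon V(G^\star)\to V(H)$ sending each pendant-path or subdivision vertex to an incident branch vertex, observe that any $r$-dominator of $t_v$ must lie on $P_v\cup\{v\}\cup(\text{subdivided edges at }v)$ and hence projects into $N_H[v]$, and conclude directly that $\pi(D)$ dominates $H$ with $|\pi(D)|\le|D|$. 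This sidesteps any concern about intermediate sets losing coverage of subdivision vertices. With that simplification, your approach gives a shorter and arguably more transparent proof than the paper's, at the cost of a slightly larger radius ($2(p+1)$ vs.\ $3r_0$, which are comparable).
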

Theorem~\ref{thm:w2-hardness} together with our kernelization results give rise to an interesting dichotomy conjecture about the parameterized complexity of {\sc{$r$-Dominating Set}} on graph classes closed under taking subgraphs. We expand this topic in Section~\ref{sec:conclusion}.

\paragraph*{Our techniques.} As explained before, the techniques applied to prove
our kernelization results differ radically from tools used 
in the previous works~\cite{AlberFN04,BodlaenderFLPST09,FominLST10,FominLST12,FominLST13}. 
The main reason is that so far all the approaches were based on bidimensionality and 
decomposition theorems for graph classes with topological constraints, like 
$H$-(topological)-minor-free graphs. For bounded expansion and nowhere dense classes, 
there are no known global decomposition theorems. Bidimensional arguments also cease to work, 
since they are inextricably linked to surface embeddings of graphs, meaningless in the 
world of nowhere dense and bounded expansion graph classes.

The failure of known techniques, seemingly a large obstacle for our project,
actually came as a blessing as it forced us to search for the ``real'' reasons why 
{\sc{$r$-Dominating Set}} admits linear kernels on sparse graph classes.
Identifying the right tools enabled us to streamline the reasoning so that it 
is significantly simpler than the previous works. We now briefly describe the 
main approach for proving Theorem~\ref{thm:mainrbe-dominator}; Theorem~\ref{thm:main-be} actually follows as a by-product of this proof. The proof of 
Theorem~\ref{thm:main-nd} is technically more complicated. This is due to the fact that certain tools for bounded expansion graph classes, which simplify the reasoning significantly, cease to work in the more general setting of nowhere dense classes.

The first general idea is to kernelize the instance in two phases: Intuitively, in the first phase we 
reduce the number of {\em{dominatees}}, vertices whose domination is 
essential, and in the second phase we reduce the number of {\em{dominators}}, vertices 
that are sensible to use to dominate other vertices. In order to formalize this 
approach, we introduce the following notion: a subset $Z\subseteq V(G)$ 
is an {\em{$r$-domination core}} if every minimum-size subset $D\subseteq V(G)$ that $r$-dominates~$Z$ 
is guaranteed to $r$-dominate the whole graph. Hence, every vertex whose domination is 
identified as irrelevant can safely be removed from the domination core. In the 
first phase of the algorithm we find an $r$-domination core in the graph of size 
linear in the parameter~$k$, and in the second phase we reduce the number of 
vertices outside it. The first phase is the most difficult one, while the 
second is much simpler. However, it is the second phase where technical problems arise, due to which for $r>1$ we are not able to obtain a kernel that is an induced subgraph of the input graph. Intuitively, we can locate $\Oh(k)$ relevant dominatees and $\Oh(k)$ relevant dominators, but the remaining vertices may also be essential to preserve connections between dominators and dominatees. It is precisely this role that is problematic to reduce; note that the problem arises only for $r>1$.

The small domination core is found iteratively, by first taking $Z=V(G)$ and 
then removing vertices from~$Z$ one by one. Hence, the main difficulty is to find a 
vertex that can be safely removed from~$Z$; for the sake of this overview, we focus on the first iteration 
when $Z=V(G)$. The first step is to apply the approximation algorithm of \Dvorak for 
{\sc{$r$-Dominating Set}} on graphs of bounded expansion~\cite{Dvorak13}. This algorithm has the 
following very important feature: given a parameter~$k$, it either provides an $r$-dominating set 
of size $\Oh(k)$, or it outputs a proof that $\ds_r(G)>k$ in the form of a 
{\em{$2r$-scattered set}}~$S$ of size larger than~$k$; recall that~$S$ is $2r$-scattered if every 
two vertices of~$S$ are at distance more than~$2r$ from each other. The idea is to 
apply the algorithm of \Dvorak repeatedly: 
In each iteration we either 
identify another approximate $r$-dominating set and remove it from the graph, or we find a 
large $2r$-scattered set in the remaining instance and terminate the iteration. However, for reasons that will become clear later, after each iteration we perform a ``closure'' step: we iteratively remove from the graph all the vertices that ``see'' a super-constant number of removed vertices at distance at most $3r$. As we 
work in a graph class of bounded expansion, it can be shown that this closure blows up the number of removed vertices only by a constant factor. Moreover, the whole process can be shown to terminate 
after a constant number of steps; the closure step is an important part of this reasoning.

Hence, we end up with the following structure in 
the graph: an $r$-dominating set $X\subseteq V(G)$ of size $\Oh(k)$, and
a set $S\subseteq V(G)\setminus X$ that is $2r$-scattered in $G-X$. Moreover, every vertex of $V(G)\setminus X$ sees only a constant number of vertices of $X$ within radius $3r$ of it (more precisely, we count only vertices reachable by paths of length at most $3r$ that do not intersect $X$ before the endpoint). By carefully selecting the parameters
of the approximation, we can ensure that $|S|>c|X|$ for as large a constant~$c$ as we like.

Having identified such a pair $(X,S)$, we partition $S$ into equivalence 
classes such that two vertices are equivalent when they have exactly the 
same $1$st, $2$nd, $3$rd, etc. neighborhoods in~$X$, up to radius $3r$. Using the sparseness of the graph class we work with, it can be shown that the number of such classes is bounded linearly in $|X|$; in fact, this argument is the crux of our approach. 

Recall that we assumed that $|S|>c\cdot |X|$ for a constant $c$ as large as we like. Hence, we can identify a class $\kappa$ that has a large number of vertices; more precisely, at least a constant that is as large as we like. Then we argue that any vertex of $\kappa$ is an irrelevant dominatee that can be removed from the $r$-domination core. The rationale is as follows: Vertices of $\kappa$ are equivalent from the point of view of $r$-dominating them ``via'' $X$, whereas $r$-dominating them not via $X$ is suboptimal due to their large number. In the latter argument it is crucial that every vertex of $\kappa$ sees only a constant number of vertices of $X$ at distance at most $3r$; this property was achieved by dint of the closure step when constructing $(X,S)$.

This reasoning can be applied as long as $|Z|>Ck$ for some constant~$C$, so we 
eventually compute an $r$-domination core of size linear in~$k$. To reduce the number of 
dominators, we again apply the neighborhood diversity argument. We partition 
the vertices of $V(G)\setminus Z$ into classes with respect to their $r$-neighborhoods 
in~$Z$, the number of these classes is linear in~$|Z|$, and it is safe to pick only one arbitrary vertex from each class as a relevant dominator. For Theorem~\ref{thm:main-be}, the graph induced by relevant dominatees and dominators is the required kernel. For Theorem~\ref{thm:mainrbe-dominator}, we moreover need to preserve connections between dominators and dominatees, which introduces technical problems. Essentially, we are able to reduce the number of vertices needed for connections to $\Oh(k)$, but we cannot require their domination in the kernel. As mentioned, Theorem~\ref{thm:mainrbe-reduced} follows from Theorem~\ref{thm:mainrbe-dominator} by simple gadgeteering.

When trying to generalize the result to the nowhere dense setting (Theorem~\ref{thm:main-nd}), the main difference is that the closure step does not carry over. The construction of $(X,S)$ can be performed similarly only for $r=1$, by using a different argument for why the procedure finishes after a constant number of iterations. Moreover, this also creates complications in the second phase, where an irrelevant dominatee is identified. We can still partition $V(G)\setminus X$ into an almost linear number of classes with respect to the neighborhoods in $X$, and hence find a class that has a large number of vertices of $S$, but it is no longer obvious that any vertex of such a class is an irrelevant dominatee. To overcome this issue, we create an auxiliary structure of a graph on the set of classes. We prove that this class graph is sparse by itself. Hence, using a potential argument we can find a class $\kappa$ where the number of vertices from $S$ is large compared to the number of classes with which $\kappa$ neighbors in the class graph. For such $\kappa$ it can be argued that any its vertex is an irrelevant dominatee. The analysis of the class graph is, however, very delicate; we are able to perform it only for $r=1$.

The proof of Theorem~\ref{thm:condom-lb} uses the technique of 
compositionality to refute the existence of a polynomial kernel, and is based 
on the kernelization hardness result for {\sc{Connected Dominating Set}} 
on $2$-degenerate graphs presented by Cygan et al.~\cite{bimber}. The output 
instances of the original construction of Cygan et al.~\cite{bimber} do not have 
bounded expansion, but after adding a number of new technical ideas the 
construction can be modified to ensure this property. 

The proof of Theorem~\ref{thm:w2-hardness} uses the same technical characterization of somewhere dense classes as \Dvorak et al.~\cite{DvorakKT13} in their proof of intractability of model checking First Order logic formulae: if a graph class $\mc G$ is somewhere dense and closed under taking subgraphs, then for some $r_0$ it contains $r_0$-subdivisions of all graphs. Using this, we give a parameterized reduction from {\sc{Set Cover}} that shows $\W{2}$-hardness of {\sc{$3r_0$-Dominating Set}} on $\mc G$.

\paragraph*{Relation with the previous version.} The previous version of this paper~\cite{previous} did not contain the result for {\sc{$r$-Dominating Set}} on bounded expansion graph classes. Originally, our methodology for {\sc{Dominating Set}} on bounded expansion and nowhere dense classes was very similar, but the new techniques (mostly the Closure Lemma, i.e. Lemma~\ref{lem:closure}) developed for {\sc{$r$-Dominating Set}} greatly simplified our original proof for {\sc{Dominating Set}} on bounded expansion classes. Consequently, the presentation of results for bounded expansion classes follows the new, simplified approach, whereas for nowhere dense classes we resort to our old methodology. Throughout the text we sometimes remark about the differences.

\iftenpages{%
\medskip
\noindent In this extended abstract we present a sketch of the proof of Theorem~\ref{thm:main-be}.
The proof of Theorem~\ref{thm:main-nd} follows the same approach, but in a few places is more technically
involved.
}{%
\paragraph*{Organization of the paper.} In Section~\ref{sec:preliminaries} we 
recall the most important definitions and facts about sparse graph classes. We also prove some auxiliary results that will be used 
later on. Section~\ref{sec:bnd-exp} contains the proof of Theorems~\ref{thm:main-be},~\ref{thm:mainrbe-dominator}, and~\ref{thm:mainrbe-reduced} --- 
the main results for bounded expansion classes. Section~\ref{sec:nowhere-dense} 
contains the proof of Theorem~\ref{thm:main-nd} --- the main result for nowhere dense 
classes. This section is also equipped with additional preliminaries needed in the nowhere dense setting. In Section~\ref{sec:condom} we present the lower bound for 
{\sc{Connected Dominating Set}}, i.e., Theorem~\ref{thm:condom-lb}, whereas Section~\ref{sec:w-hardness} contains the proof of Theorem~\ref{thm:w2-hardness}.
Section~\ref{sec:conclusion} contains concluding remarks and prospects for future 
work. Proofs of auxiliary facts (marked with $\star$) that are very easy and/or 
follow directly from known results have been deferred to Appendix~\ref{sec:omitted} 
in order not to distract the attention of the reader. Appendix~\ref{sec:fewneighbourhoods} contains the proof of Lemma~\ref{lem:fewneighbourhoods}. This lemma was essentially already used in the literature, but we give our own proof for completeness.}

\section{Preliminaries}
\label{sec:preliminaries}

\subsection{Notation}

\paragraph{Basic graph notation}
All graphs we consider are finite, simple, and undirected.  For a
graph~$G$, we denote by $|G| = |V(G)|$ the number of vertices and by
$||G|| = |E(G)|$ the number of edges in~$G$.  The \emph{density} of a
graph~$G$, denoted $\den(G)$ is defined as $\den(G) = {||G||} /
{|G|}$.
For an integer~$k \in \N$ we denote by $[k] = \{1, \dots, k\}$ the first~$k$
positive integers.

For a vertex~$v$ in a graph~$G$, we denote by $N_G(v)=\{u\colon uv\in
E(G)\}$ the \emph{open neighborhood} of~$v$ and by $N_G[v]=N_G(v)\cup
\{v\}$ the \emph{closed neighborhood} of~$v$ in~$G$.  These notions
can be naturally extended to sets of vertices $X \subseteq V(G)$ as
follows: $N_G[X] = \bigcup_{v \in X} N_G[v]$ and $N_G(X) = N_G[X]
\setminus X$.  If~$G$ is clear from the context, we omit the
subscripts.  Furthermore, we write~$N_X(v)$ to denote the neighborhood
of~$v$ restricted to~$X$, i.e., $N_X(v) = N_G(v)\cap X$, and refer to
it as the $X$-neighborhood of~$v$.  The \emph{degree} of a vertex $v \in
V(G)$ is the number of neighbors it has, i.e., $\deg(v) = |N(v)|$.
For a positive integer $r$ and $v\in V(G)$, by $N_G^r[v]$ we denote the ball of radius $r$ around $v$, i.e., the set of vertices of $G$ that are at distance at most $r$ from $v$. We say that $v$ {\em{$r$-dominates}} every vertex at distance at most $r$ from it, or just {\em{dominates}} if $r=1$.

The induced subgraph~$G[X]$ for $X \subseteq V(G)$ is the graph with vertex
set~$X$ and for $x_1,x_2 \in X$ we have that $x_1x_2\in E(G[X])$ if
and only if $x_1x_2\in E(G)$.  A graph $H = (V_H,E_H)$ is a subgraph
of $G = (V_G, E_G)$ if $V_H\subseteq V_G$ and $E_H \subseteq E(G[V_H])$.  We will
say that~$H$ is a subgraph of~$G$ if~$H$ is isomorphic to a subgraph
of~$G$.  For a set of vertices $X \subseteq V(G)$, we write $G - X$ to
denote the induced subgraph $G[V(G) \setminus X]$.

Given a graph~$G$, positive integer $r$, and two vertex subsets $D,Z \subseteq V(G)$, we say that~$D$
is a \emph{$(Z,r)$-dominator} if~$D$ $r$-dominates~$Z$ in~$G$, that is, every
vertex $z \in Z \setminus D$ is at distance at most $r$ from some vertex of~$D$. In case $r=1$ we simply say that $D$ is a {\em{$Z$-dominator}}.
We denote by $\ds_r(G,Z)$ the size of a smallest $(Z,r)$-dominator of~$G$.  By $\ds_r(G)$
we mean $\ds_r(G,V(G))$, i.e., the size of a smallest $r$-dominating set
in~$G$. The subscript $r$ is omitted when $r=1$.

A set $S\subseteq V(G)$ is $\ell$-scattered in~$G$ if for
every pair of distinct vertices $s_1,s_2\in S$, the distance
between~$s_1$ and~$s_2$ is at least~$\ell+1$, i.e., any path
from~$v_1$ to~$v_2$ has at least~$\ell$ internal vertices.  Note that
if there is a~$2r$-scattered set~$S$ of size~$k$, then any $r$-dominating
set of~$G$ must have size at least~$k$, since every vertex of~$G$ can
$r$-dominate at most one vertex of~$S$.  Hence, we call a $2r$-scattered
set of size~$k+1$ an obstruction for an $r$-dominating set of size~$k$.

A \emph{clique} in a graph is a subset of pairwise adjacent vertices.
We write~$\omega(G)$ to denote the \emph{clique number} of a graph
$G$, i.e., the size of a maximum clique in~$G$.  We write
$\numcliques(G)$ to be the total number of cliques in~$G$.  By~$K_c$
we denote the complete graph on~$c$ vertices,
and by $K_{c_1,c_2}$ we denote the complete bipartite graph
with the sides of the bipartition of sizes~$c_1$ and~$c_2$,
respectively.

The \emph{radius} of a graph~$G$, denoted $\rad(G)$ is the minimum
integer~$r$ for which there exists a vertex~$v \in V(G)$ (a
\emph{center}) such that every vertex in~$V(G)$ is within distance at
most~$r$ from~$v$.

\paragraph{Minors and minor operations}
For an edge $e=uv$ in a graph $G$, the graph $G/e$ is the graph
obtained from \emph{contracting}~$e$, i.e., we replace the
vertices~$u$ and~$v$ with a vertex $w_{uv}$ that is adjacent to every
vertex of $N_G(\{u,v\})$ in $G/e$.  If $S \subseteq V(G)$ is a set of
vertices such that $G[S]$ is connected, we let $G/S$ denote the graph
obtained from $G$ by contracting $S$ to a single vertex.  That is,
$G/S$ is the graph obtained from deleting~$S$ from~$G$ and adding a
vertex~$v_S$ which is adjacent to every vertex of~$N_G(S)$; note that
this is equivalent to contracting all the edges of any spanning tree
of $G[S]$.

A somehow reverse operation of contraction is the operation of
\emph{subdivision}.  Given a graph~$G$ and an edge $uv = e \in E(G)$,
the graph obtained from subdividing~$e$ in~$G$ is the graph with
vertex set $V(G) \cup \{w_e\}$ and edge set $E(G) \setminus \{e\} \cup
\{uw_e, vw_e\}$.

A graph~$H$ which is obtained from a graph~$G$ after a sequence of
contractions is called a \emph{contraction of~$G$}.  If~$H$ is
subgraph of a contraction of~$G$, then we say that~$H$ is a
\emph{minor} of~$G$.  A graph~$G$ is said to be $H$-minor-free if~$H$
is not a minor of~$G$, and a graph class $\mathcal{G}$ is
$H$-minor-free if every graph of $\mathcal{G}$ is $H$-minor-free.


\subsection{Shallow minors, grad and expansion}\label{ss:grads}
%
%
%
%
\begin{definition}[Shallow minor]
  A graph~$M$ is an \emph{$r$-shallow minor} of~$G$, where~$r$ is an
  integer, if there exists a set of disjoint subsets $V_1, \ldots,
  V_{|M|}$ of~$V(G)$ such that
  \begin{enumerate}
  \item each graph $G[V_i]$ is connected and has radius at most~$r$,
    and
  \item there is a bijection $\psi \colon V(M) \to \{V_1, \ldots,
    V_{|M|}\}$ such that for every edge $uv \in E(M)$ there is an edge
    in~$G$ with one endpoint in $\psi(u)$ and second in $\psi(v)$.
  \end{enumerate}
  The set of all $r$-shallow minors of a graph~$G$ is denoted by~$G
  \nab r$.  Similarly, the set of all $r$-shallow minors of all the
  members of a graph class $\mc G$ is denoted by $\mc G \nab r =
  \bigcup_{G \in \mc G} (G \nab
  r)$.
\end{definition}

\begin{definition}[Grad and bounded expansion]
  For a graph~$G$ and an integer~$r \geq 0$, we define the \emph{greatest
    reduced average density (grad) at depth~$r$} as
  \[
  \grad_r(G) = \max_{M \in G \nab r} \den(M) = \max_{M \in G \nab r}
  {||M||}/{|M|} .
  \]
  We extend this notation to graph classes as $\grad_r(\mc G) =
  \sup_{G \in \mc G} \grad_r(G)$.  A graph class $\mc G$ then has
  \emph{bounded expansion} if there exists a function $f \colon \N \to
  \R$ such that for all~$r$ we have that $\grad_r(\mc G) \leq f(r)$.
\end{definition}
We use shorthands $\nabla(G)$ and $\nabla(\mc G )$ to denote infinite sequences $(\grad_i(G))_{i\geq 0}$ and $(\grad_i(\mc G))_{i\geq 0}$.

Graph classes excluding a topological minor, such as planar and
bounded-degree graphs, have bounded expansion~\cite{Sparsity}.
Observe that bounded expansion implies bounded degeneracy, since
the degeneracy of $G$ lies between $\grad_0(G)$ and $2\grad_0(G)$.  However, the converse
does not hold: For an example, consider the class of cliques with each
edge subdivided once.

Observe that for every graph $G$ and integers $r\leq r'$, it holds that
$\grad_{r}(G)\leq \grad_{r'}(G)$, and the same inequality holds for
classes of graphs.  Let us revisit some basic properties of grads that
will be used later on.

\begin{restatable}{lemma}{pendant}\omitted
  \label{lem:grad-pendant}
  Let~$G$ be a graph and~let $G'$ be obtained from~$G$ by adding a
  pendant, i.e., a new vertex $v'$ with only one neighbor $v$. Then $\grad_r(G')\leq \max(\grad_r(G),1)$.
\end{restatable}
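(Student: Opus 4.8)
The plan is to fix an arbitrary $r$-shallow minor $M \in G' \nab r$ that attains $\den(M) = \grad_r(G')$, and to show $\den(M) \le \max(\grad_r(G),1)$; taking the supremum over all such $M$ then gives the claim. Let $V_1,\dots,V_{|M|}$ be the branch sets witnessing that $M$ is an $r$-shallow minor of $G'$, and let $\psi$ be the associated bijection. The new vertex $v'$, having only the neighbor $v$, lies in at most one branch set, so I would distinguish three cases.

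If $v' \notin \bigcup_i V_i$, or if $v' \in V_j$ with $|V_j|\ge 2$, I claim $M$ is in fact an $r$-shallow minor of $G$. In the first case this is immediate. In the second case, replace $V_j$ by $V_j' := V_j \setminus \{v'\}$: since $v'$ is a leaf of $G'[V_j]$, the graph $G'[V_j']$ is still connected, and it still has radius at most $r$ (if the center of $G'[V_j]$ was some vertex other than $v'$, deleting the leaf $v'$ does not increase distances among the remaining vertices; if the center was $v'$ itself, then $v$ becomes a center of radius at most $r-1$ in $G'[V_j']$, because every shortest path out of $v'$ starts with the edge $v'v$). Moreover $v'$ has no neighbor outside $V_j$, so every edge of $G'$ witnessing an edge of $M$ can be chosen to avoid $v'$; hence the sets $V_1,\dots,V_j',\dots,V_{|M|}$ lie in $V(G)$ and still realize $M$. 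In either situation $\den(M) \le \grad_r(G) \le \max(\grad_r(G),1)$.

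The remaining case is $v' \in V_j$ with $V_j = \{v'\}$. Let $x = \psi^{-1}(V_j) \in V(M)$. The only possible neighbor of $x$ in $M$ is the vertex whose branch set contains $v$ (the unique neighbor of $v'$ in $G'$), so $\deg_M(x)\le 1$. Deleting $x$ and its branch set, the remaining branch sets $\{V_i : i\neq j\}$ lie entirely in $V(G)$ and witness that $M - x$ is an $r$-shallow minor of $G$, whence $\|M-x\| \le \grad_r(G)\cdot(|M|-1)$. Therefore $\|M\| \le \grad_r(G)(|M|-1) + 1$. If $\grad_r(G)\ge 1$ this is at most $\grad_r(G)\cdot|M|$; if $\grad_r(G) < 1$ it is at most $(|M|-1)+1 = |M|$. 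In both cases $\den(M) \le \max(\grad_r(G),1)$, which completes the proof.

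The only mildly delicate point is the radius bookkeeping in the second case — specifically checking that deleting the pendant $v'$ from a branch set of radius at most $r$ leaves a branch set of radius at most $r$ — but this is a short argument using that $v'$ is a leaf, so no real obstacle arises. Everything else is routine case analysis on where $v'$ sits among the branch sets.
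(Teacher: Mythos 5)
Your proof is correct and follows essentially the same route as the paper's: a case analysis on whether $\{v'\}$ is a singleton branch set of the extremal minor $M$, removing $v'$ from a larger branch set in the easy case (as a pendant it does not affect connectivity, radius, or witnessing edges), and in the singleton case deleting the corresponding degree-$\le 1$ vertex of $M$ and comparing densities. The only cosmetic difference is that the paper assumes $\grad_r(G')\ge 1$ upfront and argues $\den(M')\le\den(M'-x)$ directly, whereas you bound $\|M\|$ by $\grad_r(G)(|M|-1)+1$ and split on $\grad_r(G)\gtrless 1$ at the end; both yield the same conclusion.
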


\begin{restatable}{lemma}{universalvertex}\omitted
  \label{lem:grad-universal-vertex}
  Let~$G$ be a graph and~let $G'$ be obtained from~$G$ by adding a
  universal vertex to~$G$, i.e., a vertex that is adjacent to every
  vertex of $V(G)$.  Then $\grad_r(G') \leq \grad_r(G) + 1$.
\end{restatable}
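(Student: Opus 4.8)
The plan is to work directly from the definition of $\grad_r$ as a maximum density over $r$-shallow minors. Write $w$ for the universal vertex added to $G$, so that $V(G') = V(G) \cup \{w\}$, and fix an arbitrary $M \in G' \nab r$, witnessed by pairwise disjoint branch sets $V_1, \dots, V_{|M|} \subseteq V(G')$ (each inducing a connected subgraph of $G'$ of radius at most $r$) together with the edge-respecting bijection $\psi \colon V(M) \to \{V_1,\dots,V_{|M|}\}$. It will suffice to show that $\den(M) \le \grad_r(G) + 1$; taking the maximum over all such $M$ then yields $\grad_r(G') \le \grad_r(G) + 1$.

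The key observation is that $w$ lies in at most one branch set, since the branch sets are pairwise disjoint. I would split into two cases. If $w$ lies in no branch set, then every $V_i$ is a subset of $V(G)$ and, because $w \notin V_i$, we have $G'[V_i] = G[V_i]$, so each $V_i$ still induces a connected subgraph of $G$ of radius at most $r$; moreover every edge of $G'$ joining two distinct branch sets has both endpoints in $V(G)$, hence is an edge of $G$. Thus $M \in G \nab r$ and $\den(M) \le \grad_r(G)$, which is even stronger than needed. Otherwise $w \in V_1$ for a unique index, say $1$. Put $v_1 := \psi^{-1}(V_1)$ and $M' := M - v_1$. The surviving branch sets $V_2, \dots, V_{|M|}$ are all subsets of $V(G)$, so by exactly the argument just made $M' \in G \nab r$ and hence $||M'|| \le \grad_r(G)\cdot |M'|$. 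Re-inserting $v_1$ recovers $M$, and since $v_1$ can have at most $|M| - 1 = |M'|$ neighbours in $M$ (in fact it has exactly that many, as $w$ is universal and so $V_1$ touches every other branch set), we get $|M| = |M'| + 1$ and $||M|| \le ||M'|| + |M'| \le (\grad_r(G) + 1)\,|M'|$. Therefore
\[
\den(M) = \frac{||M||}{|M|} \le \frac{(\grad_r(G) + 1)\,|M'|}{|M'| + 1} \le \grad_r(G) + 1,
\]
using $|M'|/(|M'|+1) \le 1$; the degenerate case $|M'| = 0$ (that is, $M$ a single vertex) gives $\den(M) = 0$ and is immediate.

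I do not expect any genuine obstacle here: the lemma is essentially bookkeeping, and it parallels (and is slightly simpler than) Lemma~\ref{lem:grad-pendant}. The only points that require a moment of care are verifying that deleting the branch set containing $w$ leaves a legitimate $r$-shallow minor of $G$ — which rests on the fact that branch sets avoiding $w$ induce the same subgraphs and keep the same connecting edges in $G$ as in $G'$ — and the elementary estimate in the final display. The remark that $v_1$ is in fact a universal vertex of $M$ also makes clear why the additive $+1$ cannot be improved in general.
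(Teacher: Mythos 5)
Your proof is correct and follows essentially the same approach as the paper's: delete the branch set containing the universal vertex, observe that what remains is an $r$-shallow minor of $G$, and account for the at most $|M|-1$ edges lost. The only cosmetic difference is that the paper maximizes density first and bounds from below, whereas you bound each minor from above and then take the supremum; the estimates are equivalent.
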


The following proposition follows directly from the definition of grads.

\begin{proposition}\label{prop:comp-of-grads}
  For every graph class $\mc G$ and every pair of nonnegative integers
  $r,s$, the following holds: $(\mc G \nab s)\nab r\subseteq \mc G\nab
  (2rs+r+s)$.  Consequently, $\grad_s(G')\leq \grad_{2rs+r+s}(\mc G)$
  for every $G'\in \mc G\nab r$.  In particular, $\grad_r(G')\leq
  \grad_{3r+1}(\mc G)$ and $\grad_1(G')\leq \grad_4(\mc G)$ for each
  $G'\in \mc G\nab 1$.
\end{proposition}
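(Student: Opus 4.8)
The statement that does the work is the inclusion $(\mc G \nab s)\nab r\subseteq \mc G\nab (2rs+r+s)$; once this is established, the two density bounds follow by taking $\den(\cdot)$ of an individual graph and then the supremum over the class, and the two ``in particular'' inequalities are just the instances $(a,b)=(1,r)$ and $(a,b)=(1,1)$ of the general consequence ``$\grad_b(G')\le\grad_{2ab+a+b}(\mc G)$ whenever $G'\in\mc G\nab a$'' (note $2ab+a+b$ is symmetric in $a,b$, so it does not matter which depth is ``inner''). So the plan is: prove the inclusion, then read off the rest.

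To prove the inclusion, fix $G\in\mc G$ and $M\in(G\nab s)\nab r$, and unfold both levels of the definition. There is $H\in G\nab s$ with $M\in H\nab r$; pick disjoint connected vertex sets $\{A_h\}_{h\in V(H)}$ in $G$, each $G[A_h]$ of radius at most $s$, realizing $H$ as an $s$-shallow minor of $G$, and disjoint connected vertex sets $\{W_x\}_{x\in V(M)}$ in $H$, each $H[W_x]$ of radius at most $r$, realizing $M$ as an $r$-shallow minor of $H$. The obvious candidate model of $M$ inside $G$ assigns to $x\in V(M)$ the set $U_x:=\bigcup_{h\in W_x}A_h$. I would then verify the three model requirements: the $U_x$ are pairwise disjoint because the $W_x$ are and the $A_h$ are; each $G[U_x]$ is connected because $H[W_x]$ is connected, each $G[A_h]$ is connected, and every edge of $H[W_x]$ lifts to a connecting $G$-edge between the corresponding $A_h$'s; and every edge $xy\in E(M)$ lifts to a connecting $G$-edge between $U_x$ and $U_y$, since the $H$-edge between $W_x$ and $W_y$ lifts to a $G$-edge between some $A_h\subseteq U_x$ and some $A_{h'}\subseteq U_y$.

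\textbf{The one point requiring an actual estimate} --- and the only place the number $2rs+r+s$ enters --- is bounding the radius of $G[U_x]$. I would take a center $h_0$ of $H[W_x]$ and let $c$ be a center of $G[A_{h_0}]$; given any $u\in A_h$ with $h\in W_x$, fix a path $h_0=p_0,p_1,\dots,p_\ell=h$ in $H[W_x]$ with $\ell\le r$, and follow it in $G[U_x]$: start at $c$ and move inside $A_{p_0}$ to the endpoint of the connecting edge towards $A_{p_1}$ (length $\le s$, as $c$ is a center), then alternately cross each of the $\ell$ connecting edges (length $1$ each) and cross each intermediate branch set $A_{p_i}$ from one connecting-edge endpoint to the next (length $\le 2s$ each, since radius $s$ gives diameter $\le 2s$), ending inside $A_{p_\ell}$ at $u$ (length $\le 2s$). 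The total length is at most $s+\ell(2s+1)\le s+r(2s+1)=2rs+r+s$, so $c$ witnesses $\rad(G[U_x])\le 2rs+r+s$; the degenerate cases $\ell=0$, $r=0$, $s=0$ only make the bound slacker. Hence $M\in G\nab(2rs+r+s)$, and ranging over $G\in\mc G$ gives $(\mc G\nab s)\nab r\subseteq\mc G\nab(2rs+r+s)$.

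Finally, for $G'\in\mc G\nab a$ and any $M\in G'\nab b$ we have $M\in(\mc G\nab a)\nab b\subseteq\mc G\nab(2ab+a+b)$, hence $\den(M)\le\grad_{2ab+a+b}(\mc G)$, so $\grad_b(G')\le\grad_{2ab+a+b}(\mc G)$; taking $(a,b)=(r,s)$ gives the ``consequently'' line, and $(a,b)=(1,r)$, resp.\ $(a,b)=(1,1)$, gives the two closing inequalities $\grad_r(G')\le\grad_{3r+1}(\mc G)$ and $\grad_1(G')\le\grad_4(\mc G)$. I do not expect any genuine obstacle here; the only thing to be careful about is keeping the constants in the radius estimate honest, which the explicit walk above makes transparent.
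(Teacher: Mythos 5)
Your proof is correct, and the approach is the standard one: compose the two minor models by setting $U_x:=\bigcup_{h\in W_x}A_h$ and then track the radius along a walk that alternates between crossing branch sets (cost $\leq s$ initially, $\leq 2s$ thereafter by the radius-to-diameter bound) and crossing lifted $H$-edges (cost $1$), yielding $s+\ell(2s+1)\leq 2rs+r+s$. The paper provides no proof here — it asserts the proposition ``follows directly from the definition of grads'' — so your argument simply fills in the expected details, and the reduction of the ``consequently'' and ``in particular'' clauses to the inclusion via the symmetry of $2ab+a+b$ is exactly right.
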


An important property of graphs of bounded expansion that we will use later on, is their
stability under taking lexicographic products.
\begin{definition}[Lexicographic product]
  Given two graphs~$G$ and~$H$, the \emph{lexicographic product} $G
  \lexprod H$ is defined as the graph on the vertex set $V(G) \times
  V(H)$ where the vertices $(u,a)$ and $(v,b)$ are adjacent if $uv \in
  E(G)$ or if $u = v$ and $ab \in E(H)$.
\end{definition}

Figure~\ref{fig:lexprod} exemplifies this procedure.  The following
lemma 
shows that the grad of the lexicographic product of a graph and a
complete graph is bounded.

\begin{figure}[ht]
  \centering
  \includegraphics[width=.7\textwidth]{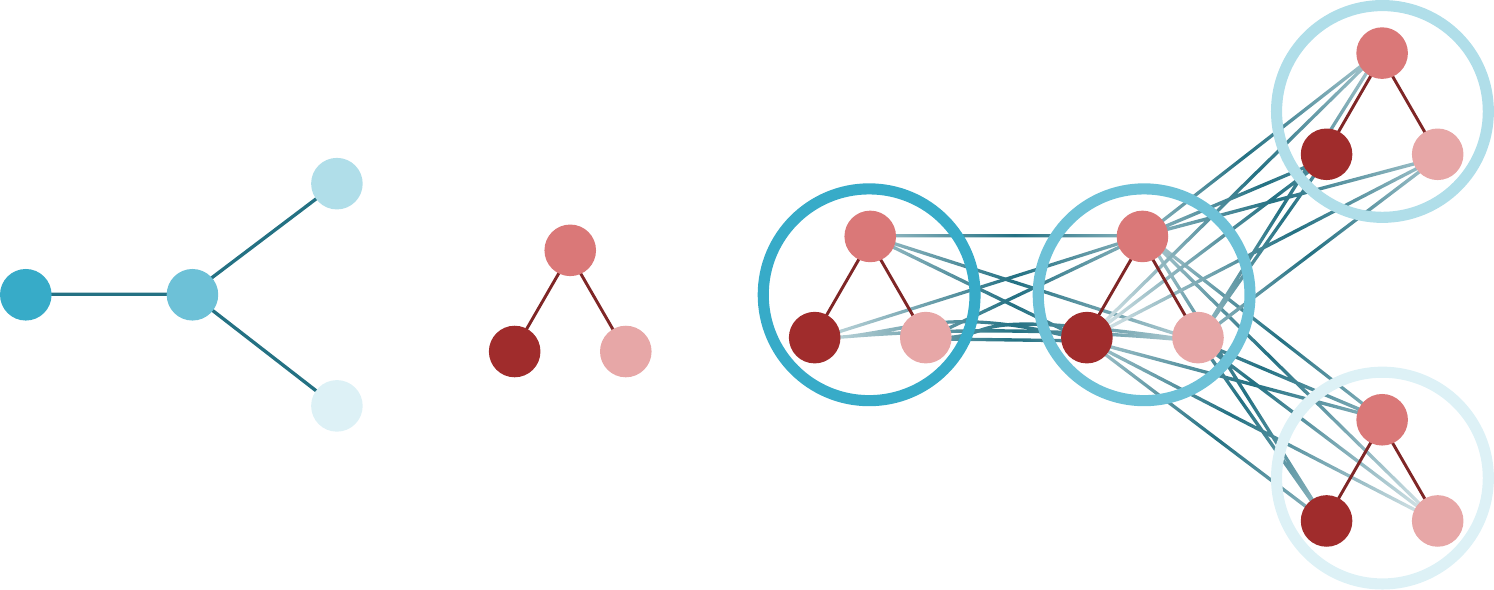}
  \caption{\label{fig:lexprod}The lexicographic product of a claw and a
    $P_3$.}
\end{figure}

\begin{restatable}{lemma}{lexprodprop}\omitted
  \label{lem:lexprod}
  For any graph~$G$ and non-negative integers~$c\geq 1$ and~$r$ we have
  that
  \[
  \grad_r(G \lexprod K_c)
  \leq 4(8c(r+c)\cdot \grad_r(G)+4c)^{(r+1)^2} .
  \]
\end{restatable}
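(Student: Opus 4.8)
The claim only concerns the grad sequence, so it suffices to bound $\den(M)$ for an arbitrary $r$-shallow minor $M$ of $H:=G\lexprod K_c$; indeed $\grad_r(H)=\max_{M\in H\nab r}\grad_0(M)$, since every subgraph of an $r$-shallow minor is again an $r$-shallow minor. The plan is to control $\grad_0$ of $r$-shallow minors of $H$ through generalized colouring numbers, whose behaviour under lexicographic products is transparent. As a base case I would first record the elementary inequality $\grad_0(A\lexprod K_d)\le d^2\,\grad_0(A)+\binom{d}{2}$, valid for every graph $A$ and every $d$: a densest subgraph of $A\lexprod K_d$ is supported on ``columns'' $B_x\subseteq\{x\}\times V(K_d)$ of size at most $d$, it has at most $\binom d2$ edges inside each nonempty column and at most $d^2$ edges across each edge of $A$ joining two nonempty columns, whereas its number of vertices is at least the number of nonempty columns.

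The engine is the standard quantitative link between grads and weak colouring numbers: $\grad_r(H)$ is bounded by an explicit function of $\wcol_{s}(H)$ for $s$ linear in $r$ (for instance $\grad_r(H)<\binom{\wcol_{2r}(H)}{2}$), and conversely $\wcol_s(G)$ is bounded in terms of $s$ and finitely many shallow grads of $G$. Both of these are among the ``basic properties of weak and centred colourings'' that the paper allows us to quote. The new ingredient is that this colouring parameter interacts well with the lexicographic product with a clique. Fix a linear order of $V(G)$ witnessing $\wcol_s(G)$ and refine it arbitrarily inside each column of $H$. Given a vertex $(v,a)$ of $H$, any vertex weakly $s$-reachable from it lies in a column $u$ that is weakly $s$-reachable from $v$ in $G$: an ``ascending'' walk of length at most $s$ from $(v,a)$ projects to a walk of length at most $s$ from $v$ to $u$ in $G$, and after deleting any return to the starting column $v$ this walk becomes ascending in $G$. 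Since each column of $H$ has only $c$ vertices, the weakly $s$-reachable set of $(v,a)$ meets each such column in at most $c$ vertices, and hence $\wcol_s(H)\le c\cdot\wcol_s(G)$.

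Combining, $\grad_r(G\lexprod K_c)<\binom{\wcol_{2r}(G\lexprod K_c)}{2}\le\binom{c\cdot\wcol_{2r}(G)}{2}$, and substituting the bound on $\wcol_{2r}(G)$ as a function of $r$ and $\grad_r(G)$ and then collecting constants yields the stated inequality: the base linear in $c$ and $\grad_r(G)$ reflects the factor $c$ above together with the base-case estimate $\grad_0(A\lexprod K_d)\le d^2\grad_0(A)+\binom d2$, while the exponent $(r+1)^2$ comes from the colouring-number side, namely from the (polynomial in $r$) recursion that controls weak $2r$-reachability in terms of shallow grads. I expect the main obstacle to be exactly this last point: pushing the bound on $\wcol_{2r}(G)$ through with a clean dependence --- on $r$ and $\grad_r(G)$ only, with the exponent no worse than $(r+1)^2$ --- so that the final constants and exponent land as claimed; everything else is routine bookkeeping. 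An essentially equivalent route phrases the argument via $p$-centred (low tree-depth) colourings instead of weak colouring numbers, using the submultiplicativity $\td(A\lexprod K_d)\le d\cdot\td(A)$ of tree-depth under lexicographic products, which is where the quadratic blow-up $(r+1)^2$ is most visible.
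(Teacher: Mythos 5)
Your route is genuinely different from the paper's, and as written it has a gap that you in fact flag yourself but misdiagnose. The paper does not go through weak colouring numbers at all: it passes to \emph{topological} grads. Proposition~4.6 of \Nesetril--Ossona de Mendez gives the clean bound $\topgrad_r(G\lexprod K_c)\le \max\{2r(c-1)+1,c^2\}\cdot\topgrad_r(G)+c-1\le 2c(r+c)\topgrad_r(G)+c$, and the exponent $(r+1)^2$ in the statement then comes \emph{entirely} from the conversion between the two kinds of grad, namely $\topgrad_r(G)\le\grad_r(G)\le 4(4\topgrad_r(G))^{(r+1)^2}$ (Corollary~4.1 of the same book). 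Chaining these two gives the lemma in two lines.

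The gap in your version is precisely the step you call ``the main obstacle'': you would need $\wcol_{2r}(G)$ bounded by a polynomial of degree roughly $(r+1)$ in $\grad_r(G)$ so that squaring (for $\binom{\cdot}{2}$) lands inside exponent $(r+1)^2$, but no such bound is available. The known inequalities go $\wcol_{2r}(G)\le(c(c-1)^{2r-1}+1)^{2r}$ with $c=\mathrm{adm}_{2r}(G)$ and $\mathrm{adm}_{2r}(G)\le\mathrm{col}_{2r}(G)\le 1+Q_r(\grad_r(G))$, which already yields a dependence of the form $\mathrm{poly}(\grad_r(G))^{\Theta(r^2)}$ with a constant in the exponent strictly larger than that in $(r+1)^2$, and the degree of $Q_r$ makes it worse still. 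So your claim that ``the exponent $(r+1)^2$ comes from the colouring-number side'' is incorrect: that exponent is intrinsic to the $\grad$--$\topgrad$ comparison, not to any $\wcol$--$\grad$ comparison. The tree-depth variant you sketch ($\td(A\lexprod K_d)\le d\cdot\td(A)$ together with $p$-centred colourings) has the same problem, since the transfer from $\chi_p$ back to grads is again lossier than $(r+1)^2$. Your intermediate observations are fine --- $\grad_0(A\lexprod K_d)\le d^2\grad_0(A)+\binom d2$ and $\wcol_s(G\lexprod K_c)\le c\cdot\wcol_s(G)$ are both correct and could be useful elsewhere --- but they do not assemble into the stated inequality; you should switch to the topological-grad route.
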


\subsection{High degree vertices and the number of neighborhoods}

Gajarsk\'y et al.~\cite{BndExpKernels13} observed that in a graph $G$ from a class of bounded expansion, the number of possible neighborhoods in a given subset of vertices $X$ is bounded linearly in $|X|$. Moreover, the number of vertices that have many neighbors in $X$ is also small. This point of view, also used very recently by Bonamy et al.~\cite{BonamyKPS15}, is the cornerstone of our approach. More precisely, Gajarsk\'y et al. proved the following.

\begin{lemma}[\cite{BndExpKernels13}]\label{lem:twin-classes}
  Let $G$ be a graph, $X \subseteq V(G)$ be a vertex subset, and $R=V(G)\setminus X$.
  Then for every integer $p \geq \grad_1(G)$ it holds that
  \begin{enumerate}[(i)]
  \item\label{p:highdeg} $| \{ v \in R \colon |N_X(v)| \geq 2p \}| \leq 2p \cdot
    |X|$, and
  \item\label{p:neigh} $| \{ A \subseteq X \colon |A| < 2p \textrm{ and } \exists_{v\in R}\ A =
      N_X(v) \}| \leq (4^p + 2p) |X|$.
  \end{enumerate}
  Consequently, the following bound holds:
  \[
   | \{ A\subseteq X \colon \exists_{v\in R}\ A = N_X(v) \}| \leq 
  \left( 4^{\grad_1(G)} + 4\grad_1(G) \right) \cdot |X| .
  \]
\end{lemma}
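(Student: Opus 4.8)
The plan is to prove the two numbered items and the concluding bound in order, fixing throughout an integer $p\ge\grad_1(G)$ (we may assume $p\ge 1$, the case of an edgeless $G$ being trivial). Two sparsity facts are used repeatedly: every subgraph $H$ of $G$ is a $0$-shallow minor of $G$, so $\den(H)\le\grad_0(G)\le\grad_1(G)\le p$; and, by Proposition~\ref{prop:comp-of-grads}, a $1$-shallow minor of a subgraph of $G$ is again a $1$-shallow minor of $G$, hence has density at most $p$ (so in particular is $2p$-degenerate). For item~(i), put $R'=\{v\in R\colon|N_X(v)|\ge 2p\}$ and let $B'$ be the bipartite subgraph of $G$ with parts $X$ and $R'$; then $2p\,|R'|\le\sum_{v\in R'}|N_X(v)|=\|B'\|\le\den(B')\,(|X|+|R'|)\le p\,(|X|+|R'|)$, so $|R'|\le|X|\le 2p\,|X|$ (in fact slightly better than stated).

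The core is item~(ii). I would show, for every integer $d\ge 0$, every subgraph $H$ of $G$ and every $S\subseteq V(H)$, that the number $\mathrm{nb}_d(H,S)$ of distinct sets $N_H(v)\cap S$ with $v\in V(H)\setminus S$ and $|N_H(v)\cap S|=d$ satisfies $\mathrm{nb}_d(H,S)\le c_d\,|S|$, with $c_0=c_1=1$ and $c_{d+1}=2p\,c_d$; summing over $0\le d<2p$ then bounds the number of distinct neighborhoods of size $<2p$ by $(2p)^{2p}\,|S|$, linear in $|S|$ with a constant depending only on $p$. The cases $d\le 1$ are trivial. For the inductive step, fix $H,S$ and pick a representative $v_A$ for each realized neighborhood $A$ of size $d+1$, giving $Y\subseteq V(H)\setminus S$ with pairwise distinct $(d+1)$-element neighborhoods and $|Y|=\mathrm{nb}_{d+1}(H,S)$; discard from $S$ the vertices used by no $v_A$. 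Choose $\phi(y)\in N_H(y)\cap S$ for each $y\in Y$ and contract, for each $x\in S$, the star $\{x\}\cup\phi^{-1}(x)$ (which has radius $\le 1$, and the stars are pairwise disjoint); this produces a $1$-shallow minor $M$ of $H$ on vertex set $S$ in which, for every $y$, each vertex of $(N_H(y)\cap S)\setminus\{\phi(y)\}$ is adjacent to $\phi(y)$, so $N_H(y)\cap S\subseteq N_M[\phi(y)]$, and $\|M\|\le p\,|S|$. Now fix $x\in S$: over $y\in\phi^{-1}(x)$ the sets $(N_H(y)\cap S)\setminus\{x\}$ are pairwise distinct $d$-element subsets of $N_M(x)$, each being the $N_M(x)$-neighborhood of $y$ in the subgraph $H[N_M(x)\cup\phi^{-1}(x)]$ of $G$; hence $|\phi^{-1}(x)|\le\mathrm{nb}_d\big(H[N_M(x)\cup\phi^{-1}(x)],N_M(x)\big)\le c_d\,|N_M(x)|=c_d\deg_M(x)$ by the induction hypothesis. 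Summing, $\mathrm{nb}_{d+1}(H,S)=\sum_{x}|\phi^{-1}(x)|\le c_d\sum_x\deg_M(x)=2c_d\|M\|\le 2p\,c_d\,|S|$, as required.

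The concluding bound follows by taking $p=\grad_1(G)$: by item~(i) at most $2p\,|X|$ vertices, hence neighborhoods, have size $\ge 2p$, and item~(ii) handles the rest, giving at most $(4^{\grad_1(G)}+4\grad_1(G))\,|X|$ once one has the sharp $(4^p+2p)\,|X|$ bound on small neighborhoods. That sharp constant is the step I expect to require real care: the contraction above is clean but loses a factor $2p$ at each of the up to $2p$ recursion levels, so by itself it only yields a constant of the shape $(2p)^{\Theta(p)}$. To bring it down to $4^{\grad_1(G)}+2\grad_1(G)$ one should (a) treat neighborhoods of size strictly between $p$ and $2p$ separately, where the edge count of item~(i) applied with threshold $p+1$ already gives a bound of $p\,|X|$, and (b) for the remaining neighborhoods of size $\le p$, contract the stars so that $M$ is sparse enough (close to a forest) that each recursion level costs only an $O(1)$ factor. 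Since the kernelization applications only need linearity in $|X|$, the cruder recursion above already suffices; the optimal constant is the content of the cited result~\cite{BndExpKernels13}.
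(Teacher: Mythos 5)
The paper does not actually prove Lemma~\ref{lem:twin-classes}; it is cited to Gajarsk\'y et al.~\cite{BndExpKernels13} without an in-paper argument, so there is no ``paper's own proof'' to compare against. I therefore evaluate your argument on its own terms.

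Your proof of item~(i) is correct, and in fact slightly stronger than stated: the density inequality $2p|R'|\le\den(B')\,(|X|+|R'|)\le p(|X|+|R'|)$ yields $|R'|\le|X|$ directly (assuming $p\ge 1$, which you justify). Your recursive contraction argument for item~(ii) is also sound as a proof of \emph{linearity}: the key verifications --- that the disjoint stars $\{x\}\cup\phi^{-1}(x)$ give a $1$-shallow minor model $M$ on $S$, that $N_H(y)\cap S\subseteq N_M[\phi(y)]$, that the sets $(N_H(y)\cap S)\setminus\{\phi(y)\}$ for $y\in\phi^{-1}(x)$ are distinct $d$-element subsets of $N_M(x)$, and that $\sum_x\deg_M(x)=2\|M\|\le 2p|S|$ --- all check out, and the induction goes through cleanly. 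You also correctly observe that this yields $c_d=(2p)^{d-1}$ and hence a total constant of order $(2p)^{\Theta(p)}$, not $4^p+2p$.

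That mismatch in the constant is the genuine gap, and you are honest about it. Your sketch for closing it is only partially convincing. Step~(a) is fine: the item~(i) argument with threshold $p+1$ gives $(t-p)|R'_t|\le p|X|$, so at most $p|X|$ neighborhoods have size in $(p,2p)$. Step~(b), however, is not a proof: the density bound $\den(M)\le p$ does not make $M$ ``close to a forest'' ($M$ can have $\Theta(p)$ average degree), so it is unclear how the per-level recursion cost would drop to $O(1)$. Recovering the precise $4^p+2p$ constant needs a different accounting (e.g., a degeneracy-ordering/charging argument rather than repeated minor contractions), which is exactly what is delegated to \cite{BndExpKernels13}.

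Finally, for the way the paper uses the lemma, your weaker constant is harmless: $(2p)^{\Theta(p)}=2^{O(p\log p)}=2^{\mathrm{poly}(p)}$, so the explicit $2^{\mathrm{poly}(\grad_2(\mathcal{G}))}\cdot k$ kernel-size claim in the conclusion would survive with your version of the lemma. In short: a correct and self-contained proof of linearity by a route the paper does not itself develop, but not a proof of the stated constants.
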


This statement is best suited for the standard {\sc{Dominating Set}} problem on graphs of bounded expansion, but to extend our result to {\sc{$r$-Dominating Set}}, we need proper generalizations. Suppose $G$ is a graph and $X$ is a subset of its vertices. For $u\in V(G)\setminus X$ and positive integer $r$, we define the {\em{$r$-projection}} of $u$ onto $X$ as follows: $\prg{r}{u}{X}{G}$ is the set of all those vertices $w\in X$, for which there exists a path $P$ in $G$ that starts in $u$, ends in $w$, has length at most $r$, and whose all internal vertices do not belong to $X$. Whenever the graph is clear from the context, we omit the superscript. In the following we will use the following strengthening of Lemma~\ref{lem:twin-classes}\eqref{p:highdeg}.

\newcommand{\thresh}{\xi}

\begin{lemma}[Closure lemma]\label{lem:closure}
Let $\mc G$ be a class of bounded expansion. There exists an algorithm that, given a graph $G\in \mc G$, positive integer $r$, and $X\subseteq V(G)$, computes the {\em{$r$-closure}} of $X$, denoted $\cl{r}{X}$, with the following properties (in the following, we denote $\thresh=\lceil 2\grad_{r-1}(\mc G)\rceil$):
\begin{enumerate}[(a)]
\item\label{p:superset} $X\subseteq \cl{r}{X}\subseteq V(G)$;
\item\label{p:size} $|\cl{r}{X}|\leq ((r-1)\thresh+2)\cdot |X|$; and
\item\label{p:nomore} $|\prg{r}{u}{\cl{r}{X}}{G}|\leq \thresh(1+(r-1)\thresh)$ for each $u\in V(G)\setminus \cl{r}{X}$.
\end{enumerate}
\end{lemma}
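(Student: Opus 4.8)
The statement: given a bounded expansion class, a graph $G$, radius $r$, and a set $X$, compute an $r$-closure $\cl{r}{X}$ with three properties: (a) it contains $X$, (b) its size is linear in $|X|$ (with constant $((r-1)\xi+2)$ where $\xi = \lceil 2\grad_{r-1}(\mc G)\rceil$), and (c) every vertex outside the closure has $r$-projection of bounded size $\xi(1+(r-1)\xi)$ onto the closure.

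The key observation is that Lemma~\ref{lem:twin-classes}\eqref{p:highdeg} controls the number of vertices with $\ge 2p$ neighbors in $X$ when $p\ge\grad_1$. To generalize to $r$-projections, I want to relate the $r$-projection of $u$ onto $X$ to an ordinary neighborhood in some shallow minor. The natural construction: think of the $r$-projection in terms of "levels" — for $i\in\{1,\dots,r\}$, let $P^i_X(u)$ be the set of vertices of $X$ reachable from $u$ by a path of length exactly $\le i$ whose internal vertices avoid $X$; these are nested, and $P^r_X(u)=\prg{r}{u}{X}{G}$. I would build the closure iteratively: start with $X_0=X$; at stage $i$, add to the current set all vertices $u\notin X_i$ whose $i$-projection onto $X_i$ exceeds the threshold $\xi$. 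The point is that after processing level $i$, every vertex outside has small $i$-projection. A vertex with large $i$-projection onto $X_i$ corresponds (by routing the $\le i$-length paths and contracting them appropriately — each path has $\le i-1$ internal vertices, so contracting gives a graph in $G\nab(\lfloor (i-1)/2\rfloor)$ or so, with degeneracy controlled by $\grad_{i-1}$) to a high-degree vertex in a shallow minor. So the number of such vertices added at stage $i$ is at most $\xi\cdot|X_i|$ by the degeneracy/counting bound, analogous to part~\eqref{p:highdeg}. Iterating over $i=1,\dots,r$ gives a geometric-type accumulation; being careful, the size bound $|X_i|\le(1 + (i-1)\xi)\cdot\dots$ should yield exactly the claimed $((r-1)\xi+2)|X|$.

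More precisely, the clean way: process $i$ from $1$ up to $r$, maintaining a set $Y$ (initially $X$). At step $i$, repeatedly add any $u\notin Y$ with $|\prg{i}{u}{Y}{G}|>\xi$ to $Y$ until no such vertex exists. The crucial claim is that at step $i$, the total number of vertices ever added is at most $\xi\cdot|Y_{\text{current}}|$, and since $|\prg{i}{u}{Y}{G}|$ only grows when $Y$ grows (monotonicity of projections under superset — here one must check that adding vertices to $Y$ can only shrink projections, because a path witnessing membership in the projection may now hit $Y$ earlier; actually projections *onto* a larger set can be larger too, so I need to be careful and instead argue via a charging/potential argument as in the proof of Lemma~\ref{lem:twin-classes}). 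The safest route is to reprove the high-degree bound directly for projections: fix the final set $Y=\cl{r}{X}$, form the auxiliary graph $H$ on $Y\cup(V(G)\setminus Y)$ by putting an edge between $u\notin Y$ and $w\in Y$ whenever $w\in\prg{r}{u}{Y}{G}$; realize $H$ (after deleting low-degree $u$'s iteratively) as a subgraph of a shallow minor of $G$ at depth $\approx r-1$, whose density is $\le\grad_{r-1}(\mc G)$, and conclude via degeneracy that the number of $u$ with $|\prg{r}{u}{Y}{G}|\ge\xi$ is $\le\xi|Y|$ — but these were all supposed to be *inside* $Y$, forcing property~\eqref{p:nomore} for the vertices outside, and giving the size bound by summing the additions.

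The main obstacle I anticipate is the routing/realization step: showing that large $r$-projections translate faithfully into large degrees in a shallow minor at the correct depth, while keeping the minor model's branch sets disjoint and of radius $\le r-1$. The subtlety is that many vertices $u$ outside $Y$ may share internal path-vertices, so a naive model is not vertex-disjoint; the fix is the standard one (used implicitly in \cite{BndExpKernels13} and \cite{Sparsity}) — process the $u$'s one at a time, and when routing the paths for a new $u$, if an internal vertex was already used, either shortcut or charge; iterating a deletion argument on low-degree vertices then controls the count. Handling this bookkeeping so that the depth comes out as exactly $r-1$ and the threshold as exactly $\xi=\lceil2\grad_{r-1}(\mc G)\rceil$ (rather than a worse constant) is where the real care is needed, and also the reason the multiplicative factor in~\eqref{p:size} is the somewhat delicate $(r-1)\xi+2$ rather than something cruder.
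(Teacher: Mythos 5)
You correctly identify the main obstacle — realizing large $r$-projections as large degrees in a shallow minor at depth $r-1$ while keeping branch sets vertex-disjoint — but neither of your two proposed resolutions actually closes it, and the central trick of the paper's proof is missing. Your level-by-level scheme (closing under $i$-projections for $i=1,\dots,r$ in turn) compounds the blow-up across $r$ stages and would not give the clean additive bound $((r-1)\xi+2)|X|$; and your "safest route" (fix the final $Y$, build an auxiliary bipartite graph, delete low-degree vertices, then realize the rest as a shallow minor) is circular about what $Y$ is and punts on the disjointness problem with "shortcut or charge," which doesn't work: if a second vertex $u_2$ wants to route a projection path through an internal vertex already claimed by $u_1$'s branch set, shortcutting changes which set $u_2$ projects onto, and the charging is unspecified.

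The paper's proof avoids all of this with one idea you don't find: it builds $Y$ \emph{incrementally inside a shrinking graph $H$}, and whenever a vertex $u\notin Y$ has $|\prg{r}{u}{Y}{H}|\ge\xi$, it selects $\xi$ short paths from $u$ with internal vertices outside $Y$, \emph{contracts their union onto $u$}, and immediately puts the new contracted vertex \emph{into $Y$}. Because contracted bundles land in $Y$, they can never later appear as internal vertices of a projection path of some other $u'$, so branch sets are automatically disjoint and each is a connected set of radius $\le r-1$ in $H-Y$; hence $H$ stays an $(r-1)$-shallow minor of $G$ throughout. Termination then follows from a simple density count on $H[Y]$: after $|X|+1$ rounds $H[Y]$ would have $2|X|+1$ vertices and $\ge\xi(|X|+1)$ edges, contradicting $\grad_{r-1}(\mc G)$. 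You should also note where the factor $(1+(r-1)\xi)$ in property~(c) comes from: the procedure only guarantees $|\prg{r}{u}{Y}{H}|<\xi$ in the contracted graph, and un-contracting each bundle (of size at most $1+(r-1)\xi$) gives $\prg{r}{u}{\cl{r}{X}}{G}\subseteq\tau(\prg{r}{u}{Y}{H})$, hence the product bound; your plan doesn't account for this step, and it is precisely why the advertised constant is not just $\xi$.
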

\begin{proof}
Consider the following iterative procedure.
\begin{enumerate}
\item Start with $H=G$ and $Y=X$. We will maintain the invariant that $Y\subseteq V(H)$.
\item If there exists a vertex $u\in V(H)\setminus Y$ with $|\prg{r}{u}{Y}{H}|\geq \thresh$, then do the following:
\begin{itemize}
\item Select an arbitrary subset $Z_u\subseteq \prg{r}{u}{Y}{H}$ of size $\thresh$.
\item For each $w\in Z_u$, select a path $P_w$ that starts at $u$, ends at $w$, has length at most $r$, and all its internal vertices are in $V(H)\setminus Y$.
\item Modify $H$ by contracting $\bigcup_{w\in Z_u} (V(P_w)\setminus \{w\})$ onto $u$, and add the obtained vertex to $Y$.
\end{itemize}
\item Otherwise, finish the procedure.
\end{enumerate}
Observe that in a round of the procedure above we always make a contraction of a connected subgraph of $H-Y$ of radius at most $r-1$. Also, the resulting vertex falls into $Y$ and hence does not participate in future contractions. Thus, at each point $H$ is an $(r-1)$-shallow minor of $G$. For any moment of the procedure and any $u\in V(H)$, by $\tau(u)$ we denote the subset of original vertices of $G$ that were contracted onto $u$ during earlier rounds. Note that either $\tau(u)=\{u\}$ when $u$ is an original vertex of $G$, or $\tau(u)$ is a set of cardinality at most $1+(r-1)\thresh$.

We claim that the presented procedure stops after at most $|X|$ rounds. Suppose otherwise, that we successfully constructed the graph $H$ and subset $Y$ after $|X|+1$ rounds. Examine graph $H[Y]$. This graph has $2|X|+1$ vertices: $|X|$ original vertices of $X$ and $|X|+1$ vertices that were added during the procedure. Whenever a vertex $u$ is added to $Y$ after contraction, then it introduces at least $\thresh$ new edges to $H[Y]$: these are edges that connect the contracted vertex with the vertices of $Z_u$. Hence, $H[Y]$ has at least $\thresh(|X|+1)$ edges, which means that
$$\den(H[Y])\geq \frac{\thresh(|X|+1)}{2|X|+1}>\grad_{r-1}(\mc G).$$
This is a contradiction with the fact that $H$ is an $(r-1)$-shallow minor of $G$.

Therefore, the procedure stops after at most $|X|$ rounds producing $(H,Y)$ where $|\prg{r}{u}{Y}{H}|<\thresh$ for each $u\in V(H)\setminus Y$. Define $\cl{r}{X}=\tau(Y)=\bigcup_{u\in Y} \tau(u)$. Property~\eqref{p:superset} is obvious. Since $|\tau(u)|=1$ for each original vertex of $X$ and $|\tau(u)|\leq 1+(r-1)\thresh$ for each $u$ that was added during the procedure, property~\eqref{p:size} follows. We are left with property~\eqref{p:nomore}.

By the construction $V(H)\setminus Y=V(G)\setminus \cl{r}{X}$. Take any $u\in V(H)\setminus Y$ and observe that $\prg{r}{u}{\cl{r}{X}}{G}\subseteq \tau(\prg{r}{u}{Y}{H})$. Since $|\prg{r}{u}{Y}{H}|< \thresh$ for each $u\in V(H)\setminus Y$ and $|\tau(u)|\leq 1+(r-1)\thresh$ for each $u\in V(H)$, property~\eqref{p:nomore} follows.
\end{proof}

Whenever $\grad_{r-1}(\mc G)\geq 1$, which will be the case in our main proof, we will use simplified, weaker bounds: $|\cl{r}{X}|\leq 3r\grad_{r-1}(\mc G)\cdot |X|$ and $|\prg{r}{u}{\cl{r}{X}}{G}|\leq 9r\grad_{r-1}(\mc G)^2$. Observe that Lemma~\ref{lem:closure} is not merely a generalization of Lemma~\ref{lem:twin-classes}\eqref{p:highdeg} to $r$-neighborhoods. It shows that a certain maximality property can be achieved; this property may be not true if, even for $r=1$, we would construct $\cl{r}{X}$ from $X$ by just adding all vertices with many neighbors in $X$.

The generalization of Lemma~\ref{lem:twin-classes}\eqref{p:neigh} which we will use later is the following.

\begin{lemma}\omitted\label{lem:fewneighbourhoods}
Let $\mc G$ be a class of bounded expansion and let $r$ be a positive integer. Let $G\in \mc G$ be a graph and $X\subseteq V(G)$. Then 
$$|\{Y \colon Y=\pr{r}{u}{X}\textrm{ for some }u\in V(G)\setminus X\}|\leq c\cdot |X|,$$
for some constant $c$ depending only on $r$ and the grads of $\mc G$.
\end{lemma}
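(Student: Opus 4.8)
The plan is to reduce the general-$r$ statement to the case $r=1$, which is exactly the ``consequence'' form of Lemma~\ref{lem:twin-classes}\eqref{p:neigh}: since $\pr{1}{u}{X}=N_X(u)$, there are at most $\left(4^{\grad_1(G)}+4\grad_1(G)\right)\cdot|X|$ distinct $1$-projections onto $X$, a quantity bounded in terms of $\grad_1(\mathcal{G})$ only. The bridge to larger $r$ is the following unfolding of an $r$-projection: for every $u\in V(G)\setminus X$,
$$\pr{r}{u}{X}\;=\;\bigcup_{v\,\in\, N^{r-1}_{G-X}[u]} N_X(v),$$
where $N^{r-1}_{G-X}[u]$ is the ball of radius $r-1$ around $u$ in $G-X$. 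One direction is seen by tracing a witnessing path $u=p_0,p_1,\dots,p_\ell=w$ of length $\ell\le r$ with $p_1,\dots,p_{\ell-1}\notin X$: its penultimate vertex $p_{\ell-1}$ lies within $G-X$-distance $r-1$ of $u$ and has $w$ as an $X$-neighbour; the converse is immediate by concatenating a short $G-X$-path out of $u$ with one edge into $X$. (Equivalently, $\pr{r}{u}{X}=\bigcup_{v\in N_G[u]\setminus X}\pr{r-1}{v}{X}$.)

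Now set $\mathcal{Z}=\{N_X(v)\colon v\in V(G)\setminus X\}$; by the $r=1$ case (Lemma~\ref{lem:twin-classes}) we already know $|\mathcal{Z}|=\Oh(|X|)$. Let $\zeta\colon V(G)\setminus X\to\mathcal{Z}$ be $v\mapsto N_X(v)$. The displayed identity says that $\pr{r}{u}{X}=\bigcup\zeta\bigl(N^{r-1}_{G-X}[u]\bigr)$, so $\pr{r}{u}{X}$ is determined by the subset $\zeta\bigl(N^{r-1}_{G-X}[u]\bigr)\subseteq\mathcal{Z}$, and it is enough to bound the number of distinct such subsets. To this end I would introduce the auxiliary bipartite graph $\hat{G}$ with sides $V(G)\setminus X$ and $\mathcal{Z}$, where $u$ is adjacent to $Z\in\mathcal{Z}$ iff $Z\in\zeta\bigl(N^{r-1}_{G-X}[u]\bigr)$; then $\zeta\bigl(N^{r-1}_{G-X}[u]\bigr)=N^{\hat{G}}_{\mathcal{Z}}(u)$. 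Applying Lemma~\ref{lem:twin-classes}\eqref{p:neigh} once more, now to the graph $\hat{G}$ and the vertex subset $\mathcal{Z}$, bounds the number of distinct such neighbourhoods by $\left(4^{\grad_1(\hat{G})}+4\grad_1(\hat{G})\right)\cdot|\mathcal{Z}|=\Oh(|X|)$ --- \emph{provided} $\grad_1(\hat{G})$ is bounded by a constant depending only on $r$ and $\nabla(\mathcal{G})$. This would finish the proof, with $c=c(r,\nabla(\mathcal{G}))$.

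Thus everything reduces to one genuinely non-trivial point: \textbf{controlling the grads of $\hat{G}$}. Some care is needed, because $\hat{G}$ is morally obtained from $G-X$ by an ``$(r-1)$-th neighbourhood power composed with $\zeta$'', and plain power-like operations do \emph{not} preserve bounded expansion (the square of a star is a clique, and $\{K_n\}$ is not of bounded expansion). What rescues the argument is that the ``$\zeta$-edges'' $v\,\zeta(v)$ form a \emph{disjoint union of stars} --- one star per twin-class $\zeta^{-1}(Z)$, since each $v$ has a unique $\zeta$-image --- so that the twin-classes partition $V(G)\setminus X$. The key step I would carry out is to show that any shallow minor of $\hat{G}$ pulls back, with only a bounded increase of depth and density, to a shallow minor of the graph $(G-X)$ with this star gadget attached; the latter has all grads bounded, because $G-X$ is a subgraph of $G\in\mathcal{G}$ and attaching one apex per part of a partition worsens grads in a controlled way (by an argument in the spirit of Lemma~\ref{lem:grad-universal-vertex}). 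I expect this pull-back --- tracking how a branch set of $\hat{G}$ centred at a $\mathcal{Z}$-vertex spreads over its entire twin-class and through radius-$(r-1)$ balls of $G-X$ without over-counting --- to be the main obstacle and the place where the constant $c(r,\nabla(\mathcal{G}))$ is determined. An alternative route, presumably the one meant by the lemma being ``essentially already used in the literature'', is to fix a vertex ordering of $G$ of bounded weak $2r$-coloring number and bound the number of distinct $r$-projections directly by charging to $X$ along weakly reachable sets; I would nonetheless favour the reduction above as the most self-contained, with the grad bound for $\hat{G}$ isolated as a separate lemma.
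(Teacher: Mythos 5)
Your unfolding identity $\pr{r}{u}{X}=\bigcup_{v\in N^{r-1}_{G-X}[u]}N_X(v)$ is correct, and so is the observation that the $r=1$ case together with Lemma~\ref{lem:twin-classes} bounds $|\mathcal{Z}|$ by $\Oh(|X|)$. The difficulty is that the step you yourself isolate as the main obstacle --- bounding $\grad_1(\hat{G})$ in terms of $r$ and the grads of $\mathcal{G}$ --- is not merely unproven but false. Take $G$ to be the spider on $\{c\}\cup\{l_1,\dots,l_n\}\cup\{x_1,\dots,x_n\}$ with edges $cl_i$ and $l_ix_i$ for $i\in[n]$, set $X=\{x_1,\dots,x_n\}$ and $r=3$. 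Then $G$ is a tree, all of its grads are at most $1$, and the lemma holds trivially since every $3$-projection equals $X$. Yet $\zeta(l_i)=\{x_i\}$ gives $n$ distinct twin classes, $G-X$ is the star $K_{1,n}$ centred at $c$, and $N^{2}_{G-X}[l_i]$ contains every $l_j$; hence in $\hat{G}$ every $l_i$ is adjacent to every class $\{x_j\}\in\mathcal{Z}$, so $\hat{G}$ contains $K_{n,n}$ as a subgraph and $\grad_1(\hat{G})\geq\grad_0(\hat{G})\geq n/2$. Feeding this into Lemma~\ref{lem:twin-classes} gives a bound of the order $4^{n/2}\cdot|\mathcal{Z}|$, which is exponential in $|X|$ rather than linear.

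The proposed rescue does not close this gap. In the example, the auxiliary graph $\widetilde G$ obtained from $G-X$ by attaching one apex $z_A$ per twin class is again a tree, and the pull-back you sketch would have to exhibit $K_n$ (a $1$-shallow minor of $K_{n,n}\subseteq\hat{G}$) as a shallow minor of a tree at some depth depending only on $r$ --- impossible. This is precisely the power blow-up you already flagged with the star-squaring remark: the $\zeta$-edges are indeed a disjoint union of stars, but the edges of $\hat{G}$ are witnessed by paths of length up to $r$ in $\widetilde G$ that all reuse a single high-degree hub ($c$ here), and this reuse is what prevents a shallow minor of $\hat{G}$ from lifting to a shallow minor of $\widetilde G$ at comparable depth. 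The paper sidesteps this by never forming a power graph: it unrolls each length-$\leq r$ projection path into a straight path of an auxiliary \emph{layered} graph $G'$ on $r+1$ layers (a subgraph of $G\lexprod K_{r+1}$, so still of bounded expansion), takes a $2(r+1)$-centered coloring of $G'$, and counts the distinct sets $\pr{r}{u}{X}$ by a combinatorial ``zigzag-free family'' argument over path signatures in that coloring --- essentially the centered/weak-coloring route you mention at the end, and the one you would need to pursue.
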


Lemma~\ref{lem:fewneighbourhoods}, in a slightly different form, can be found in the PhD thesis of the eight author~\cite[Theorem 18]{FelixNew}. For the sake of completeness, in Appendix~\ref{sec:fewneighbourhoods} we give a self-contained proof of this result using centered colorings, which is similar to the proof given in~\cite{FelixNew}.

Finally, for the proof of Theorem~\ref{thm:mainrbe-dominator} for $r>1$ we will need the following lemma.

\begin{lemma}[Short paths closure lemma]\label{lem:shpclo}
Let $\mc G$ be a class of bounded expansion and let $r$ be a positive integer. Let $G\in \mc G$ be a graph and $X\subseteq V(G)$. Then there is a superset of vertices $X'\supseteq X$ with the following properties:
\begin{enumerate}[(a)]
\item\label{pr:sh-corr} Whenever $\dist_G(u,v)\leq r$ for some distinct $u,v\in X$, then $\dist_{G[X']}(u,v)=\dist_G(u,v)$.
\item\label{pr:sh-size} $|X'|\leq Q_r(\grad_{r-1}(\mc G))\cdot |X|$ for some polynomial $Q_r$.
\end{enumerate}
Moreover, $X'$ can be computed in polynomial time.
\end{lemma}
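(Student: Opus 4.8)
The plan is to build $X'$ by taking $X$ together with the internal vertices of one short witnessing path for every relevant pair $u,v \in X$, and then to bound the total number of added vertices using the Closure Lemma (Lemma~\ref{lem:closure}) and a shallow-minor density argument in the spirit of its proof. Concretely, first I would fix, for every ordered pair of distinct vertices $u,v \in X$ with $\dist_G(u,v) \le r$, a shortest path $P_{u,v}$ in $G$ between them; since it is shortest, it has at most $r$ edges. Let $X'$ be the union of $X$ with the vertex sets of all these paths $P_{u,v}$. Property~\eqref{pr:sh-corr} is then immediate: for such a pair, $P_{u,v}$ lies entirely inside $G[X']$, so $\dist_{G[X']}(u,v) \le \dist_G(u,v)$, and the reverse inequality is trivial as $G[X']$ is a subgraph of $G$. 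The whole construction is clearly polynomial time (compute pairwise distances, e.g.\ by BFS from each vertex of $X$, and read off the paths).

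The substance is the size bound~\eqref{pr:sh-size}. Here I would argue that although there are up to $|X|^2$ pairs, the union of the chosen paths is still only linear in $|X|$. The key observation is that the internal vertices of the $P_{u,v}$'s realize a bounded-depth shallow minor structure on $X$: each $P_{u,v}$ is a connected subgraph of radius at most $r/2 \le r$, and the depth-$(r-1)$-ish contraction of these paths produces a graph whose density is controlled by $\grad_{r-1}(\mc G)$ (or $\grad_{\lceil r/2\rceil}(\mc G)$, whichever bound the authors prefer), which is a constant. More precisely, I would take the $r$-closure $\cl{r}{X}$ from Lemma~\ref{lem:closure}: by property~\eqref{p:nomore}, every vertex $u \notin \cl{r}{X}$ has $|\prg{r}{u}{\cl{r}{X}}{G}| \le \thresh(1+(r-1)\thresh) =: D$, a constant. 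Now for a path $P_{u,v}$ of length $\le r$, every internal vertex of $P_{u,v}$ that lies outside $\cl{r}{X}$ sees both $u$ and $v$ within $r$ steps by subpaths of $P_{u,v}$; one then charges such internal vertices to pairs from the constant-size projection set, and counts how many distinct paths a single vertex outside the closure can ``serve'' — this is where one needs a genuine counting argument, not just the projection bound.

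The cleanest way to finish, and the step I expect to be the main obstacle, is an analogue of the iterative contraction argument inside Lemma~\ref{lem:closure}: process vertices of $V(G)\setminus X$ one at a time; whenever the current "reservoir" of already-added internal vertices incident to a new vertex $w$ via short subpaths grows beyond the threshold $\thresh$, contract a star of $\thresh$ such subpaths onto $w$, move $w$ into the accumulated set, and observe this adds $\ge \thresh$ edges between accumulated vertices in an $(r-1)$-shallow minor of $G$ — forcing termination after $O_r(\grad_{r-1}(\mc G)\cdot |X|)$ steps by the same density inequality $\den(H[Y]) \le \grad_{r-1}(\mc G)$ used there. Accounting for the $\le r$ internal vertices each path contributes and the $\le 1+(r-1)\thresh$ original vertices each accumulated vertex represents gives $|X'| \le Q_r(\grad_{r-1}(\mc G)) \cdot |X|$ for an explicit polynomial $Q_r$ (roughly $Q_r(x) = O(r^2 x^2)$). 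The delicate point is organizing the charging so that a path $P_{u,v}$ whose internal vertices are scattered — some inside $\cl{r}{X}$, some outside — is still accounted for exactly once; handling the vertices already inside $\cl{r}{X}$ is free (their count is $\le |\cl{r}{X}| \le 3r\grad_{r-1}(\mc G)\cdot|X|$ by property~\eqref{p:size}), so the real work is only on the portions of paths living outside the closure, and for those the shallow-minor density bound does the job.
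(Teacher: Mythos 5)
Your construction differs from the paper's in a way that simplifies correctness but leaves the size bound open. You take, for every close pair $u,v\in X$, an unrestricted shortest $u$--$v$ path in $G$; this makes property~\eqref{pr:sh-corr} immediate. The paper instead first builds $X_0=\cl{r}{X}$, and for each pair $u,v\in X_0$ adds a shortest $u$--$v$ path \emph{whose internal vertices avoid $X_0$}; property~\eqref{pr:sh-corr} is then \emph{not} trivial and needs a triangle-inequality/segment-stitching argument, but this restriction is precisely what makes the size bound provable. Two concrete problems with your version. First, the charging step is wrong as stated: if $w$ is an internal vertex of your $P_{u,v}$ outside $\cl{r}{X}$, the subpath from $w$ to $u$ may pass through other vertices of $\cl{r}{X}$, and then $u\notin\prg{r}{w}{\cl{r}{X}}{G}$, so you cannot conclude that $w$ serves at most $\binom{D}{2}$ pairs; the projection bound of Lemma~\ref{lem:closure} controls only endpoints reached by paths internally avoiding the closure. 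Second, and more fundamentally, even a correct per-vertex charge (``each $w$ outside $X_0$ lies on at most $\tau$ short $X_0$--$X_0$ segments'') does not bound $|X'|$: you also need a bound on the \emph{total} number of short $X_0$--$X_0$ segments that actually occur, since $|X'|$ scales with that count. The iterative star-contraction you sketch does not produce such a bound --- the closure lemma's potential argument counts edges internal to the accumulated set, which has no direct bearing on the number of distinct paths/segments.

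What is missing is the step the paper uses to close exactly this gap: form the auxiliary graph $H$ on $X_0$ whose edges are the pairs joined by a retained short path, observe (using the projection bound, which applies because those paths internally avoid $X_0$) that each vertex outside $X_0$ lies on at most $\tau=\binom{D}{2}$ of them, and conclude that $H$ embeds as an $(r-1)$-shallow (indeed $\lceil(r-1)/2\rceil$-shallow topological) minor of $G\lexprod K_\tau$ by using the $\tau$ copies to make the overlapping paths internally disjoint. Lemma~\ref{lem:lexprod} then gives $\grad_{r-1}(G\lexprod K_\tau)$ bounded polynomially in $\grad_{r-1}(\mc G)$, hence $|E(H)|\le\grad_{r-1}(G\lexprod K_\tau)\cdot|X_0|=O(|X_0|)$, and $|X'|\le|X_0|+(r-1)|E(H)|=O(|X|)$. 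You gesture at ``the shallow-minor density bound'' and at contraction, but without the lexicographic-product trick to resolve the path overlaps into a genuine minor model, there is no valid argument that the number of retained paths --- and therefore $|X'|$ --- is only linear in $|X|$.
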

\begin{proof}
First, using Lemma~\ref{lem:closure} we compute $X_0=\cl{r}{X}$. Then $|X_0|\leq ((r-1)\thresh+2)|X|$, where $\thresh=\lceil 2\grad_{r-1}(\mc G)\rceil$, and for each vertex $u\notin X_0$ we have $|\prg{r}{u}{X_0}{G}|\leq \thresh(1+(r-1)\thresh)$. Now, for each pair of distinct vertices $u,v\in X_0$, select an arbitrary path $P_{u,v}$ that connects $u$ and $v$, and whose internal vertices do not belong to $X_0$, and which is the shortest among the paths satisfying these properties; in case there is no such path, put $P_{u,v}=\emptyset$. Note that $P_{u,v}$ can be computed in polynomial time. Then define $X'$ to be $X_0$ plus the vertex sets of all paths $P_{u,v}$ that have length at most $r$.

\begin{claim}\label{cl:blowup}
$|X'|\leq \widetilde{Q}_r(\grad_{r-1}(\mc G))\cdot |X_0|$, for some polynomial $\widetilde{Q}_r$.
\end{claim}
\begin{proof}
Let $H$ be a graph on vertex set $X_0$, where $uv\in E(X_0)$ if and only if $P_{u,v}$ exists and has length at most $r$, and hence its vertex set was added to $X$. Clearly $|X'|\leq |X_0|+(r-1)|E(H)|$, so it suffices to prove an upper bound on $|E(H)|$. Take any $w\in X'\setminus X_0$, and consider for how many pairs $\{u,v\}$ it can hold that $w\in P_{u,v}$. If $\{u,v\}$ is such a pair, then in particular $u,v\in \prg{r}{w}{X_0}{G}$. But we know that $|\prg{r}{w}{X_0}{G}|\leq \thresh(1+(r-1)\thresh)$, so the number of such pairs is at most $\tau=\binom{\thresh(1+(r-1)\thresh)}{2}$. Consequently, we observe that graph $H$ is an $(r-1)$-shallow minor (actually even an $\lceil (r-1)/2\rceil$-shallow topological minor) of $G\lexprod K_\tau$: when each vertex $w\in X\setminus X_0$ is replaced with $\tau$ copies, then we can realize all the paths $P_{u,v}$ in $G\lexprod K_\tau$ so that they are internally vertex-disjoint. From Lemma~\ref{lem:lexprod} we know that the $\grad_{r-1}(G\lexprod K_\tau)$ is bounded polynomially in $\grad_{r-1}(G)$ and $\tau$, which in turn is also bounded polynomially in $\grad_{r-1}(\mc G)$. Hence $\grad_{r-1}(G\lexprod K_\tau)$ is bounded polynomially in $\grad_{r-1}(\mc G)$. As $|E(H)|\leq \grad_{r-1}(G\lexprod K_\tau)\cdot |X_0|$ and $|X'|\leq |X_0|+(r-1)|E(H)|$, we are done.
\cqed\end{proof}

\begin{claim}\label{cl:correctness}
If $u,v\in X_0$ are distinct and $\dist_G(u,v)\leq r$, then $\dist_{G[X']}(u,v)=\dist_G(u,v)$.
\end{claim}
\begin{proof}
Let $R$ be a shortest path between $u$ and $v$ in $G$, and let $a_1,a_2,\ldots,a_q$ be consecutive vertices of $X_0$ visited on $R$, where $u=a_1$ and $v=a_q$. For each $i=1,2,\ldots,q-1$, let $R_i$ be the segment of $R$ between $a_i$ and $a_{i+1}$. Then the existence of $R_i$ certifies that some path of length at most $|R_i|$ between $a_i$ and $a_{i+1}$ was added when constructing $X'$ from $X_0$, and hence $\dist_{G[X']}(a_i,a_{i+1})\leq |R_i|$. Consequently, by the triangle inequality we infer that
$$\dist_{G[X']}(u,v)\leq \sum_{i=1}^{q-1}\dist_{G[X']}(a_i,a_{i+1})\leq \sum_{i=1}^{q-1}|R_i|=|R|=\dist_G(u,v).$$
However, the opposite inequality $\dist_{G[X']}(u,v)\geq \dist_G(u,v)$ follows directly from the fact that $G[X']$ is an induced subgraph of $G$. Hence we are done.
\cqed\end{proof}

Claim~\ref{cl:blowup} and the fact that $|X_0|\leq ((r-1)\thresh+2)|X|$ prove property~\eqref{pr:sh-size}. Claim~\ref{cl:correctness} and the fact that $X\subseteq X_0$ prove property~\eqref{pr:sh-corr}.
\end{proof}

\subsection{Domination and scattered sets}

%

We now state the constant-factor approximation for $r$-\DS proved by
\Dvorak~\cite{Dvorak13}.  The statement is slightly different from the
results there, and we therefore explain how this exact statement can
be derived from the work of \Dvorak{} in Appendix~\ref{sec:omitted}.


\begin{restatable}{theorem}{dvorakfull}\omitted
  \label{thm:dvorak}
  Let $r$ be a positive integer. There is a polynomial $P_r$ and a polynomial-time algorithm that given a graph $G$ and an
  integer $k$, either finds an $r$-dominating set of size at most
  $P_r(\grad_r(G))\cdot k$ or a $2r$-scattered set of size at
  least~$k+1$ in~$G$.
\end{restatable}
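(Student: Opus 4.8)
The plan is to obtain Theorem~\ref{thm:dvorak} essentially as a reformulation of the approximation algorithm of \Dvorak~\cite{Dvorak13}, repackaging its guarantee into the promised ``either--or'' form with an explicit dependence on $\grad_r(G)$. Read in the form we need it, the algorithm of \Dvorak does the following: given a graph $G$ and an integer $r$, in polynomial time it outputs a distance-$r$ dominating set $D$ together with a $2r$-scattered set $A$ (a distance-$2r$ independent set) satisfying $|D|\le h_r(G)\cdot|A|$, where $h_r(G)$ is bounded by a polynomial in a density parameter of $G$ depending only on $r$. Since every vertex of $G$ can $r$-dominate at most one element of a $2r$-scattered set, the set $A$ simultaneously witnesses $\ds_r(G)\ge|A|$ and $\ds_r(G)\le|D|\le h_r(G)\cdot|A|$; that is, the pair $(D,A)$ is a certified $h_r(G)$-approximation of $\ds_r(G)$. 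Crucially, this algorithm need not be told the expansion function of any class containing $G$: it runs in time polynomial in $|G|$ for every fixed $r$, and only the quality of its output degrades with the density of $G$.

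Given this, the theorem follows by a trivial case distinction. I would first run the algorithm of \Dvorak on $(G,r)$ to obtain the pair $(D,A)$, and only then look at $k$. If $|A|\ge k+1$, output $A$: it is a $2r$-scattered set of size at least $k+1$, i.e.\ an obstruction to an $r$-dominating set of size at most $k$ in the terminology of Section~\ref{sec:preliminaries}. Otherwise $|A|\le k$, so $|D|\le h_r(G)\cdot|A|\le h_r(G)\cdot k$; it then remains only to bound $h_r(G)$ by $P_r(\grad_r(G))$ for a suitable polynomial $P_r$, and we output $D$. Both branches run in polynomial time, which yields the claimed algorithm.

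The only step that requires care is matching the density parameter appearing in the bound of \Dvorak with $\grad_r(G)$ itself. His guarantee is phrased in terms of quantities such as weak $r$-coloring numbers, or grads at a depth proportional to $r$ rather than at depth exactly $r$. Using the monotonicity $\grad_i(G)\le\grad_j(G)$ for $i\le j$, together with the standard bounds of \Nesetril and Ossona de Mendez relating weak coloring numbers to grads --- of the shape ``$\wcol_s(G)$ is bounded by a polynomial in $\grad_{\Theta(s)}(G)$'', cf.\ the inequality $\wcol_2(G)\le(8\grad_1(G)^3+1)^2$ --- all of these parameters are bounded by a polynomial in $\grad_{cr}(G)$ for some absolute constant $c$. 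One then either inspects the construction to verify that the shallow-minor operations it invokes only ever contract connected subgraphs of radius at most $r$, so that $\grad_r(G)$ already suffices and one may take $P_r$ to be a polynomial; or one simply reads the statement with $\grad_{cr}(G)$ in place of $\grad_r(G)$, which is a cosmetic change since in all our applications $G$ lives in a fixed class of bounded expansion and every $\grad_i(G)$ is then a constant. I expect this bookkeeping of the grad depth to be the main --- indeed essentially the only --- obstacle: the approximation itself is entirely due to \Dvorak, and nothing beyond his result is needed.
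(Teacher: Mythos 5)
Your proposal is correct and takes the same route as the paper: both treat \Dvorak's algorithm as a routine that outputs a certified pair $(D,A)$ with $D$ an $r$-dominating set, $A$ a $2r$-scattered set, and $|D|\le c^2|A|$, and then branch on whether $|A|\ge k+1$, so the only remaining work is bounding $c$ polynomially in $\grad_r(G)$. The paper discharges the bookkeeping you defer by computing the required ordering via a polynomial-time factor-$2r$ approximation of $2r$-admissibility and chaining $\textrm{adm}_{2r}(G) < \col_{2r}(G) \le 1+Q_r(\grad_r(G))$ (the last step from~\cite{Sparsity}), which is exactly the route your third paragraph gestures towards.
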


We remark that the proof of Theorem~\ref{thm:dvorak} does not assume
that the graph belongs to some class of bounded expansion.  If this is
the case, then algorithm can be implemented with a slightly better
approximation ratio and in linear time.  However, in the nowhere
dense case it will be important for us that we can apply
Theorem~\ref{thm:dvorak} without this assumption, and in particular
that the running time does not depend exponentially on the grads
of~$G$.

We need the following strengthened version of \Dvorak's algorithm
that approximates domination of only some subset of vertices.

\begin{lemma}\label{lem:better-dvorak}
  Let $r$ be a positive integer. There is a polynomial $\widetilde{P}_r$ and a polynomial-time algorithm that, given a graph $G$, a
  vertex subset $Z \subseteq V(G)$ and an integer $k$, finds either
  \begin{itemize}
  \item a $(Z,r)$-dominator in $G$ of size at most
    $\widetilde{P}_r(\grad_r(G))\cdot k$, or
  \item a subset of~$Z$ of size at least~$k+1$ that is $2r$-scattered
    in $G$.
  \end{itemize}
\end{lemma}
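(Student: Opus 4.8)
The plan is to reduce Lemma~\ref{lem:better-dvorak} to Theorem~\ref{thm:dvorak} by a gadget construction that ``hides'' the vertices outside $Z$ from the domination requirement while leaving distances essentially intact. First I would build an auxiliary graph $G'$ from $G$ as follows: keep all vertices and edges of $G$, and for every vertex $v\in V(G)\setminus Z$ attach a path of length $r$ with one endpoint identified with $v$ (call the other endpoint a \emph{pendant tail}). This way, every vertex of $V(G)\setminus Z$ acquires $r$ fresh ``private'' vertices that can only be $r$-dominated by $v$ itself or by vertices along its tail; more importantly, I want the construction to force any reasonable $r$-dominating set of $G'$ to take care of these tails cheaply while forbidding it from being helped by them. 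Actually the cleaner variant is the dual one: we want to make vertices of $V(G)\setminus Z$ \emph{easy} to dominate, so that an $r$-dominating set of $G'$ restricted to the ``real'' part is almost a $(Z,r)$-dominator. So instead I would attach to each $z\in Z$ a pendant path (or just a pendant vertex, iterated) ensuring $z$ must be close to some dominator, while leaving $V(G)\setminus Z$ with no such obligation is not possible directly --- a global $r$-dominating set must dominate everything. The correct trick is: add a single new apex-like vertex $a$ adjacent (via paths of length $r$) to every vertex of $V(G)\setminus Z$, but this perturbs distances.

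The robust approach, and the one I would actually carry out, is the following. Construct $G'$ by taking $G$ and, for each $v \in V(G)\setminus Z$, adding a pendant path $v = p_0^v, p_1^v, \dots, p_r^v$ of length $r$; then add one global vertex $u^\star$ adjacent to all the endpoints $p_r^v$. The point is that $u^\star$ together with one vertex on each tail $r$-dominates all of $V(G')\setminus(V(G))$ and also $r$-dominates every vertex of $V(G)\setminus Z$ itself, at an additive cost that is \emph{independent of $k$} --- in fact $u^\star$ alone is within distance $r+1$ of each $v \in V(G)\setminus Z$, so after contracting or by taking $u^\star$ plus $O(1)$ helpers we cover $V(G)\setminus Z$ ``for free''; meanwhile distances inside $G$ between vertices of $Z$ are unchanged because the only new short connections route through tails attached to $V(G)\setminus Z$ and $u^\star$, and one checks these detours have length $\geq 2r$. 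Then: run Theorem~\ref{thm:dvorak} on $G'$ with parameter $k + O(1)$. If it returns an $r$-dominating set $D'$ of $G'$ of size $P_r(\grad_r(G'))\cdot(k+O(1))$, project $D'$ back to $V(G)$ (replacing any tail or $u^\star$ vertex by a nearby vertex of $V(G)$, incurring only a constant blow-up) to obtain a $(Z,r)$-dominator of $G$ of the claimed size, using $\grad_r(G') \leq \grad_r(G) + O(1)$ via Lemmas~\ref{lem:grad-pendant} and~\ref{lem:grad-universal-vertex} so that $\widetilde P_r$ is a polynomial in $\grad_r(G)$. If instead it returns a $2r$-scattered set $S'$ of size $\geq k+1+O(1)$ in $G'$: discard the $O(1)$ gadget-adjacent elements, note that the surviving elements all lie in $Z$ (tail and $u^\star$ vertices are within distance $\leq 2r$ of each other or of their anchor, so at most a bounded number of them can be scattered), and since distances between $Z$-vertices agree in $G$ and $G'$, the surviving set is a $2r$-scattered subset of $Z$ in $G$ of size $\geq k+1$.

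The main obstacle I anticipate is making the gadget simultaneously satisfy two competing demands: it must trivialize the domination of $V(G)\setminus Z$ (so that an $r$-dominating set of $G'$ is morally a $(Z,r)$-dominator) \emph{without} creating shortcuts that falsify the scattered-set transfer, and \emph{without} inflating $\grad_r$ by more than an additive constant. A single universal-type vertex $u^\star$ handles the first point but one must verify carefully (i) that $u^\star$ plus a bounded number of tail vertices indeed $r$-dominates all gadget vertices and all of $V(G)\setminus Z$, so the additive overhead in $k$ is a genuine constant depending only on $r$; (ii) that adding $r$ pendant paths per vertex and one high-degree vertex keeps $\grad_r$ controlled --- pendant paths are handled by iterating Lemma~\ref{lem:grad-pendant} (subdividing/pendant operations do not increase grads beyond $1$), and the single extra vertex by Lemma~\ref{lem:grad-universal-vertex} applied to a shallow minor, though one must be slightly careful since $u^\star$ is adjacent only to the tail-endpoints, not universal, which only makes the bound easier; and (iii) the distance-preservation claim: any $u$--$v$ path with $u,v \in Z$ that uses a gadget vertex must enter and leave the gadget, and since each tail has length $r$ and $u^\star$ sits at the far end, such a detour costs $> r$, hence cannot beat the in-$G$ distance when that distance is $\leq 2r$, which is all we need for scattered sets. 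Once these three bookkeeping facts are pinned down, the reduction goes through and yields $\widetilde P_r(t) = P_r(t + c_r)$ for a constant $c_r$.
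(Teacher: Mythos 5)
Your construction is the right idea and is essentially the paper's gadget, but the tail lengths are off by one, and that off-by-one is fatal for the scattered-set case. In your $G'$ the hub $u^\star$ sits at distance $r+1$ from every anchor $v\in V(G)\setminus Z$ (length-$r$ tail plus the edge to $u^\star$), so $u^\star$ does \emph{not} $r$-dominate $V(G)\setminus Z$, and there is no set of $O(1)$ helpers that covers all anchors either, since each anchor has its own private tail --- so item~(i) of your checklist is simply false. More seriously, the set $V(G')\setminus Z$ has diameter $2r+2$ in $G'$ rather than $2r$: two anchors $v_1,v_2$ that are far apart in $G$ are still at distance $2r+2$ through their tails, which is $\geq 2r+1$, so they can coexist in a $2r$-scattered set of $G'$. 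Thus if $G$ contains many vertices of $V(G)\setminus Z$ pairwise at distance $\geq 2r+1$, Dvo\v{r}\'ak's algorithm may legitimately output a $2r$-scattered set consisting entirely of anchors, and discarding $O(1)$ elements does not rescue you. Your parenthetical ``tail and $u^\star$ vertices are within distance $\leq 2r$ of each other'' is precisely what fails.

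The paper's construction fixes this by making the hub $v$ the \emph{endpoint} of a length-$r$ path to each $w\in (V(G)\setminus Z)\cup\{v'\}$, so that $v$ lies within distance $r$ of every vertex of $V(G')\setminus Z$; hence $G'-Z$ has diameter $2r$ and a $2r$-scattered set can contain at most one vertex outside $Z$. Equivalently, you should shorten your tails to length $r-1$ before attaching them to $u^\star$. The extra dummy vertex $v'$ (with its own length-$r$ path from $v$) is added to force any $r$-dominating set of $G'$ to hit $P_{v'}$, which lets one normalize the set to contain $v$ and project to a $(Z,r)$-dominator of $G$ with an overhead of exactly $+1$; this makes the budget accounting clean with parameter $k+1$ rather than a vague $k+O(1)$. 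The grad bookkeeping via Lemmas~\ref{lem:grad-pendant} and~\ref{lem:grad-universal-vertex} and the distance-preservation observation you discuss are fine and match the paper once the gadget lengths are corrected.
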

\begin{proof}
  Obtain $G'$ from $G$ by the following construction: add two new vertices $v$ and $v'$, and for each vertex $w\in (V(G)\setminus Z)\cup \{v'\}$ create a new path $P_w$ of length $r$ with endpoints $v$ and $w$.  Apply Theorem~\ref{thm:dvorak} to graph $G'$ with parameter $k+1$.
  
  Suppose first that the algorithm outputs an $r$-dominating set $D$ in
  $G'$. Observe that a supergraph of $G'$ can be obtained from $G$ by iteratively adding pendants and once adding a universal vertex. Hence, by Lemmas~\ref{lem:grad-pendant},~\ref{lem:grad-universal-vertex} and~Theorem~\ref{thm:dvorak}, $D$ has size at most
  \[
  P_r(\grad_r(G'))\cdot(k+1) \leq
  P_r(\grad_r(G)+2)\cdot(k+1) \leq 
  2P_r(\grad_r(G)+2)\cdot k.
  \]

Construct $D'$ from $D$ as follows: First, remove all the vertices of $V(P_{v'})\cap D$ from $D$ (note that $V(P_{v'})\cap D\neq\emptyset$ since $D$ $r$-dominates $v'$) and replace them by $\{v\}$. Then, for every $w\in V(G)\setminus Z$, if $(V(P_w)\setminus \{v\})\cap D\neq \emptyset$, then remove $(V(P_w)\setminus \{v\})\cap D$ from $D$ and replace it with $\{w\}$. From the construction it follows that $|D'|\leq |D|$ and $D'$ is an $r$-dominating set. Since $D'\cap (V(G')\setminus V(G))=\{v\}$ and $v$ does not $r$-dominate any vertex of $Z$, we infer the $D'\cap V(G)$ is a $(Z,r)$-dominator of size at most $2P_r(\grad_r(G)+2)\cdot k-1$. Hence, we can take $\widetilde{P}_r(x)=2P_r(x+2)$.
  
  Suppose now that the algorithm provided a $2r$-scattered set $S$ in
  $G'$ of size at least $k+2$.  Observe that the graph $G'-Z$ has
  diameter $2r$ since $v$ is at distance at most $r$ from each vertex of this graph.  Hence
  any $2r$-scattered set in $G'$ contains at most one vertex from $V(G)
  \setminus Z$.  Therefore, $S$~can contain at most one vertex outside
  of $Z$ in $G'$, hence $|S \cap Z| \geq k + 1$ and $S\cap Z$ is the
  sought $2r$-scattered subset of $Z$.
\end{proof}

The same construction as in the proof of Lemma~\ref{lem:better-dvorak} shows the following simple reduction, which we will need for the proof of Theorem~\ref{thm:mainrbe-reduced}.

\begin{lemma}\label{lem:dominator-reduction}
There exists a polynomial-time algorithm that, given a graph $G$, set $Z\subseteq V(G)$, and positive integer $r$, outputs a supergraph $G'$ of $G$ such that the following holds:
\begin{itemize}
\item $\ds_r(G')=\ds_r(G,Z)+1$;
\item $|V(G')|\leq (r+1)\cdot (|V(G)|+1)$;
\item $\grad_i(G')\leq \max(\grad_i(G)+1,2)$, for each nonnegative integer $i$.
\end{itemize}
\end{lemma}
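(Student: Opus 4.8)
The plan is to take $G'$ to be exactly the graph constructed in the proof of Lemma~\ref{lem:better-dvorak}: add two fresh vertices $v,v'$ and, for every $w\in(V(G)\setminus Z)\cup\{v'\}$, attach a path $P_w$ of length $r$ with endpoints $v$ and $w$. This is plainly polynomial-time computable and $G\subseteq G'$. For the size bound, besides $v$ and $v'$ each of the at most $|V(G)\setminus Z|+1\le|V(G)|+1$ paths contributes exactly $r-1$ internal vertices, so $|V(G')|=|V(G)|+2+(r-1)(|V(G)\setminus Z|+1)\le r|V(G)|+r+1\le (r+1)(|V(G)|+1)$.

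For the grad bound I would reuse the observation made inside the proof of Lemma~\ref{lem:better-dvorak} that $G'$ is a subgraph of the graph $G''$ obtained from $G$ by (i) adding $v'$ as an isolated vertex, (ii) attaching to each $w\in(V(G)\setminus Z)\cup\{v'\}$ a pendant path on $r-1$ new vertices (a sequence of pendant additions), and finally (iii) adding $v$ as a universal vertex; indeed, after $v$ becomes universal the far endpoint of each pendant chain becomes adjacent to $v$, so that chain together with this extra edge realizes a length-$r$ $v$--$w$ path, i.e.\ $G'$ embeds into $G''$. Adding isolated vertices leaves all grads intact, so Lemma~\ref{lem:grad-pendant} gives $\grad_i\le\max(\grad_i(G),1)$ after phase (ii), and Lemma~\ref{lem:grad-universal-vertex} then yields $\grad_i(G'')\le\max(\grad_i(G),1)+1=\max(\grad_i(G)+1,2)$. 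Since every shallow minor of $G'$ is also a shallow minor of $G''$, $\grad_i(G')\le\grad_i(G'')\le\max(\grad_i(G)+1,2)$.

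The substance of the proof is the identity $\ds_r(G')=\ds_r(G,Z)+1$, which I would establish with the same set surgery as in the proof of Lemma~\ref{lem:better-dvorak}. For ``$\le$'', if $D$ is a minimum $(Z,r)$-dominator of $G$ then $D\cup\{v\}$ $r$-dominates $G'$: the new vertex $v$ is within distance $r$ of every vertex of $V(G')\setminus V(G)$ and of every vertex of $V(G)\setminus Z$, while $D$ already $r$-dominates $Z$ inside $G\subseteq G'$; hence $\ds_r(G')\le|D|+1$. For ``$\ge$'', take any $r$-dominating set $D^{\ast}$ of $G'$. Since $N_{G'}^{r}[v']=V(P_{v'})$, the set $D^{\ast}$ must meet $V(P_{v'})$; replace $D^{\ast}\cap V(P_{v'})$ by $\{v\}$, and then for each $w\in V(G)\setminus Z$ whose path $P_w$ still contains a selected vertex other than $v$, replace that portion of the set by $\{w\}$. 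Neither step increases the cardinality, and the resulting set $\hat D$ remains $r$-dominating: $v$ alone covers $V(G')\setminus V(G)$ and $V(G)\setminus Z$; an internal vertex of $P_w$ can $r$-dominate a vertex of $Z$ only via $w$ (the route through $v$ is longer than $r$), so the substituted $w$ keeps that vertex covered; and vertices of $Z$ untouched by the replacements keep their old dominator, which is still in $\hat D$. Now $\hat D$ meets $V(G')\setminus V(G)$ in exactly $\{v\}$, so $|\hat D\cap V(G)|=|\hat D|-1\le|D^{\ast}|-1$, and $\hat D\cap V(G)$ is a $(Z,r)$-dominator of $G$: any $z\in Z$ is $r$-dominated in $G'$ by a vertex of $\hat D$ different from $v$ (since $\dist_{G'}(v,z)>r$), hence by some $d\in\hat D\cap V(G)$ with $\dist_{G'}(d,z)\le r$, and because every $d$--$z$ path using a gadget vertex has length at least $2r>r$ we get $\dist_G(d,z)=\dist_{G'}(d,z)\le r$. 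Thus $\ds_r(G,Z)\le|D^{\ast}|-1$ for every $r$-dominating set $D^{\ast}$ of $G'$, which together with the first inequality gives the claimed equality.

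The step that genuinely needs care is the last one: verifying that the canonicalization of $D^{\ast}$ simultaneously does not enlarge the set, keeps it $r$-dominating (the crucial point being that an internal vertex of $P_w$ is useful for dominating $Z$ only ``through $w$'', so the swap is safe), and confines the set to the single vertex $v$ outside $V(G)$ --- together with the short observation that the attached paths never shorten distances between original vertices of $G$. These are exactly the phenomena already exploited in the proof of Lemma~\ref{lem:better-dvorak}, so the write-up will be short, but it is where the content lies; the size and grad bounds are routine bookkeeping once the auxiliary graph $G''$ has been identified.
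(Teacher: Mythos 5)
Your proposal is correct and follows exactly the same construction and line of argument as the paper: the paper's proof builds the same graph $G'$, invokes Lemmas~\ref{lem:grad-pendant} and~\ref{lem:grad-universal-vertex} for the grad bound, and refers the equality $\ds_r(G')=\ds_r(G,Z)+1$ back to the canonicalization argument already worked out in the proof of Lemma~\ref{lem:better-dvorak}, which you have simply spelled out in full. The only difference is presentational --- you unpack the ``$\geq$'' direction directly rather than citing the earlier lemma, which is a perfectly reasonable way to write it.
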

\begin{proof}
Construct $G'$ from $G$ as follows: Add two new vertices $v$ and $v'$ and connect $v$ with every vertex of $(V(G)\setminus Z)\cup \{v'\}$ by a path of length $r$. The reasoning contained in the proof of Lemma~\ref{lem:better-dvorak} shows that $\ds_r(G')\geq \ds_r(G,Z)+1$, while the opposite inequality follows from the observation that any $(Z,r)$-dominator in $G$ becomes an $r$-dominating set in $G'$ after adding vertex $v$. The bound on $|V(G')|$ follows directly from the construction, whereas the bound on the grads of $G'$ follows from Lemmas~\ref{lem:grad-pendant},~\ref{lem:grad-universal-vertex} and the fact that a supergraph of $G'$ can be constructed from $G$ by iteratively adding pendant vertices, and once adding a universal vertex.
\end{proof}


\section{A kernel for graphs of bounded expansion}
\label{sec:bnd-exp}

In this section we give a linear kernels for \textsc{Dominating Set} and \textsc{$r$-Dominating Set} on
graphs of bounded expansion; that is, we prove
Theorems~\ref{thm:main-be},~\ref{thm:mainrbe-dominator}, and~\ref{thm:mainrbe-reduced}.  Let us fix a graph class $\mathcal{G}$
that has bounded expansion, and let $(G,k)$ be the input instance of
\textsc{$r$-Dominating Set}, where $G\in \mathcal{G}$.  We assume that
%
$\grad_0(\mc G) \geq 1$, otherwise~$G$ is a forest and the
\textsc{$r$-Dominating Set} problem can be solved in linear time.

We assume that $\mc G$ is fixed, and hence so are also
the values of $\grad_i(\mc G)$ for all nonnegative integers~$i$.
We discuss in Section~\ref{sec:conclusion} 
that the values of $\grad_i(\mc G)$ need not be known to
the algorithm in advance, but this assumption will significantly simplify the analysis.

As explained in Section~\ref{sec:intro}, the first goal is to reduce the
number of dominatees.  More precisely, we find a subset of vertices~$Z$ of
size linear in~$k$, called an \emph{$r$-domination core}, such that any minimum-size
$(Z,r)$-dominator is guaranteed to $r$-dominate the whole graph.  In this manner,
domination of vertices outside the $r$-domination core is not relevant to the
problem, and they can only serve the role of $r$-dominators.  Reducing their
number is performed in the second step of the algorithm.

\subsection{Reducing dominatees}
\def\NX2{\hat N^2}
We begin with introducing formally the notion of an $r$-domination core:
\begin{definition}[$r$-domination core]\label{def:domcore}
  Let~$G$ be a graph and~$Z$ be a subset of vertices.  We say that~$Z$
  is an \emph{$r$-domination core} in~$G$ if 
  every minimum-size $(Z,r)$-dominator in~$G$ is also an $r$-dominating set
  in~$G$.
\end{definition}
Clearly, the whole $V(G)$ is an $r$-domination core, but we look for an
$r$-domination core that is small in terms of~$k$.  Note that if~$Z$ is an
$r$-domination core, then $\ds_r(G)=\ds_r(G,Z)$.  Let us remark that in this
definition we do not require that every~$(Z,r)$-dominator is an $r$-dominating
set in~$G$; there can exist $(Z,r)$-dominators that are not of minimum
size and that do not dominate the whole graph.


The rest of this subsection is devoted to the proof of the following
theorem.

\begin{theorem}
  \label{thm:dcore}
  There exists a function $\domcoresize(\cdot)$ of the grads of $\mc G$ and a polynomial-time
  algorithm that, given an instance $(G,k)$ where $G\in \mc G $,
  either correctly concludes that $\ds_r(G) > k$, or finds an $r$-domination
  core $Z \subseteq V(G)$ with $|Z| \leq \domcoresize(\nabla(\mc
  G))\cdot k$.
\end{theorem}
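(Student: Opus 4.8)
The plan is to construct the $r$-domination core $Z$ iteratively: start with $Z=V(G)$, and as long as $|Z|>\domcoresize(\nabla(\mc G))\cdot k$ for a suitable threshold, find a single vertex $z\in Z$ that can be safely removed, i.e.\ such that $Z\setminus\{z\}$ is still an $r$-domination core (this uses the trivial fact that if $Z'\subseteq Z$ is an $r$-domination core and $\ds_r(G,Z')=\ds_r(G,Z)$ then removing the gap vertices is harmless; more carefully one argues that removing $z$ preserves the property that every minimum-size $(Z\setminus\{z\},r)$-dominator $r$-dominates the whole graph). Since the process strictly decreases $|Z|$ at each step and runs in polynomial time per step, after at most $|V(G)|$ iterations it terminates with a core of the desired size. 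The heart of the matter is therefore: \emph{given a large $Z$, find a removable vertex in polynomial time, or conclude $\ds_r(G)>k$.}

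For the removable-vertex step, first apply Lemma~\ref{lem:better-dvorak} to $(G,Z,k)$. If it returns a $2r$-scattered subset of $Z$ of size $k+1$, then $\ds_r(G)\ge\ds_r(G,Z)>k$ and we are done. Otherwise we obtain a $(Z,r)$-dominator $X_0$ with $|X_0|=\Oh(k)$; actually we iterate this several times, each time removing the found dominator and re-running on the remainder, interleaved with taking $3r$-closures via Lemma~\ref{lem:closure} (the ``closure step''), so that after a constant number of rounds we end up with: a set $X\subseteq V(G)$ with $|X|=\Oh(k)$, a set $S\subseteq Z\setminus X$ with $|S|\ge c|X|$ for a constant $c$ as large as we wish (obtained by tuning the approximation parameter, using that $|Z|$ is large), where $S$ is $2r$-scattered in $G-X$, and crucially each vertex of $V(G)\setminus X$ has $|\prg{3r}{u}{X}{G}|$ bounded by a constant depending only on $\nabla(\mc G)$ and $r$. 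The termination of this closure iteration in constantly many rounds, and the $\Oh(k)$ bound after each closure, are exactly what Lemma~\ref{lem:closure} is designed to give.

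Next, partition $S$ according to the tuple of projections $(\prg{1}{u}{X}{G},\prg{2}{u}{X}{G},\dots,\prg{3r}{u}{X}{G})$; by Lemma~\ref{lem:fewneighbourhoods} (applied at each radius $1,\dots,3r$, and using that the relevant projections live inside a set of size $\Oh(|X|)$) the number of classes is $\Oh(|X|)$. Since $|S|\ge c|X|$ with $c$ large, some class $\kappa$ contains many vertices of $S$ --- more than any constant we like. The claim is then that an arbitrary $z\in\kappa$ is an irrelevant dominatee, i.e.\ $Z\setminus\{z\}$ is still an $r$-domination core. To see this, take any minimum-size $(Z\setminus\{z\},r)$-dominator $D$; we must show $D$ $r$-dominates $z$ (hence all of $V(G)$, since $Z$ was a core and $D$, being minimum for $Z\setminus\{z\}$, is at least as small as a minimum $(Z,r)$-dominator, so a standard exchange argument forces it to be minimum for $Z$ too). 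Every $w\in\kappa$ is $r$-dominated by $D$; route a shortest such domination path from $w$ to its dominator in $D$. Either this path hits $X$, in which case it ``enters'' $X$ at a vertex of $\prg{\le 3r}{w}{X}{G}=\prg{\le 3r}{z}{X}{G}$ and the same vertex of $D$ (or the same entry point combined with knowledge of distances within $X$) $r$-dominates $z$ as well; or the path avoids $X$, but since $S$ is $2r$-scattered in $G-X$ distinct vertices of $\kappa\cap S$ need distinct dominators outside of the common ``$X$-reachable'' part, forcing $|D|$ to be huge --- larger than $k$ --- contradicting $\ds_r(G)\le k$ (and if $\ds_r(G)>k$ we'd have halted earlier). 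Hence $z$ is removable.

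The main obstacle, and the place where the closure step pays off, is the last dichotomy: controlling the vertices of $\kappa$ whose domination ``goes around $X$''. Without the bound on $|\prg{3r}{u}{X}{G}|$ one cannot argue that two $\kappa$-vertices with the same $X$-projections are genuinely interchangeable, because a dominator could reach one through $X$ via a long path and the other not; the closure guarantees that all short interactions with $X$ are captured by the constantly-many projected vertices, so that the scatteredness of $S$ in $G-X$ really does force the counting contradiction. Assembling the constants --- how large $c$ must be as a function of the projection bound and $r$, hence what $\domcoresize(\nabla(\mc G))$ comes out to --- is routine bookkeeping once these pieces are in place.
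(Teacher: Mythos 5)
Your plan reproduces the paper's high-level structure --- iterative removal from $Z=V(G)$, the $(X,S)$ construction interleaving the approximation algorithm of Lemma~\ref{lem:better-dvorak} with $3r$-closures, projection classes, a large class $\kappa$ --- but the final step, arguing that an arbitrary $z\in\kappa$ is removable, does not close as you sketch it. You split into ``some domination path from a $\kappa$-vertex hits $X$'' (correct: that dominator then also $r$-dominates $z$, since the entry point lies in $\pr{j}{w}{X}=\pr{j}{z}{X}$) versus ``all paths avoid $X$, forcing $|D|>k$.'' The second branch cannot hold in a linear kernel. After partitioning into $\Oh(|X|)$ classes, $|\kappa|$ is only guaranteed to exceed a \emph{constant} (roughly the projection bound from Lemma~\ref{lem:closure}); to force $|\kappa|>k$ you would need $|S|=\Omega(k|X|)=\Omega(k^2)$, which pushes the approximation parameter $C_0$ and hence the termination threshold on $|Z|$ to scale super-linearly in $k$. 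With $|\kappa|$ constant, the $\geq|\kappa|-1$ pairwise-distinct dominators $v(s)$ give only $|D|\geq\mathrm{const}$, no contradiction with $\ds_r(G)\leq k$ (and the algorithm in any case guarantees only $\ds_r(G)=\Oh(k)$, not $\ds_r(G)\leq k$).

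What actually closes the argument is an \emph{exchange} (the paper's Lemma~\ref{lem:irrelevant}, Claim~\ref{cl:exchange}). Assume $z$ is undominated; then for each $s\in\kappa\setminus\{z\}$ the domination path $P(s)$ avoids $X$, and the endpoints $v(s)\in D$ are pairwise distinct by $2r$-scatteredness, so $W=\{v(s)\}$ has size $\geq|\kappa|-1$, which exceeds $|\pr{3r}{z}{X}|$. Form $D'=(D\setminus W)\cup\pr{3r}{z}{X}$: then $|D'|<|D|$, and one verifies $D'$ is still a $(Z',r)$-dominator --- for any $a\in Z'$ that lost its dominator $v(s)$, concatenate $P(s)$, the length-$\leq r$ path from $v(s)$ to $a$, and a length-$\leq r$ path from $a$ into $X$ (which exists because $X$ is a $(Z,r)$-dominator); the first vertex of this length-$\leq 3r$ walk in $X$ lies in $\pr{3r}{s}{X}=\pr{3r}{z}{X}\subseteq D'$ and is within distance $r$ of $a$. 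This contradicts minimality of $D$. Your counting argument must be replaced by this exchange; without it the ``large class'' is not large enough to do any work. (Two smaller imprecisions: termination in constantly many rounds is not ``exactly what Lemma~\ref{lem:closure} is designed to give'' but a separate argument, the paper's Lemma~\ref{lem:extraction-terminates}, combining the post-closure projection bound with the presence of a fresh $(Z\setminus X_i,r)$-dominator each round; and the $\Oh(|X|)$ class bound uses one application of Lemma~\ref{lem:fewneighbourhoods} at radius $3r$ times a constant for ``entry times,'' not $3r$ separate applications, which would over-count.)
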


We fix~$G$ and~$k$ in the following to improve readability.  For the
proof of Theorem~\ref{thm:dcore} we start with $Z = V(G)$ and
gradually reduce~$|Z|$ by removing one vertex at a time, while
maintaining the invariant that~$Z$ is an $r$-domination core.  To this end,
we need to prove the following lemma, from which
Theorem~\ref{thm:dcore} follows trivially as explained:

\begin{lemma}
  \label{lem:reduce-corce}
  There exists a function $\domcoresize(\cdot)$ of the grads of $\mc G$ and a polynomial-time
  algorithm that, given an $r$-domination core $Z \subseteq V(G)$ with $|Z|
  > \domcoresize(\nabla(\mc G)) \cdot k$,
  either correctly concludes that $\ds_r(G) > k$, or finds a vertex $z
  \in Z$ such that $Z \setminus \{z\}$ is still an $r$-domination core.
\end{lemma}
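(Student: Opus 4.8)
The plan is to implement exactly the strategy sketched in the introduction, building the structure $(X,S)$ via repeated calls to the approximation routine combined with closure steps, and then extracting an irrelevant dominatee from a large neighborhood class. First I would run the strengthened approximation of Lemma~\ref{lem:better-dvorak} (with $G$ and $Z$ and a suitably inflated parameter $c\cdot k$) repeatedly: in each round, either we obtain a $(Z,r)$-dominator of size $\Oh(k)$, which we add to an accumulator set $X$ and delete from the graph, or we obtain a $2r$-scattered subset of $Z$ of size $>k$, which is an obstruction witnessing $\ds_r(G)>k$, and we stop and report that. After each round we apply the Closure Lemma (Lemma~\ref{lem:closure}) with radius $3r$ to the current accumulator $X$, replacing $X$ by $\cl{3r}{X}$; since $\mc G$ has bounded expansion, this only blows up $|X|$ by a constant factor, and it guarantees that every remaining vertex $u \notin X$ has $|\prg{3r}{u}{X}{G}|$ bounded by a constant. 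The termination argument — that this loop finishes after a constant number of rounds — is exactly where the closure step earns its keep: after the closure, a vertex removed in a later round cannot ``see'' too many previously removed vertices, which together with a sparsity/density count bounds the number of rounds by a constant depending only on $\nabla(\mc G)$ and $r$. Tuning the approximation parameter lets us guarantee $|S| > c\,|X|$ for whatever constant $c$ we fix in advance.

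Next I would exploit the resulting structure: an $r$-dominating set $X$ of size $\Oh(k)$, a set $S \subseteq Z\setminus X$ that is $2r$-scattered in $G-X$, with $|S|>c|X|$, and the closure property that every $u \in V(G)\setminus X$ has $|\prg{3r}{u}{X}{G}|$ at most a constant. I would then partition $S$ according to the tuple of projections $(\pr{1}{u}{X},\pr{2}{u}{X},\ldots,\pr{3r}{u}{X})$ onto $X$. By iterating Lemma~\ref{lem:fewneighbourhoods} over the $3r$ radii (and using that each projection is a subset of $X$, with the number of realized projections linear in $|X|$), the number of classes is $\Oh(|X|)$, with the hidden constant depending only on $r$ and the grads of $\mc G$. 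Since $|S|$ exceeds $c|X|$ with $c$ as large as we wish, pigeonhole yields a class $\kappa$ containing a large number (again, a constant we control) of vertices of $S$. I claim any $z\in\kappa$ may be removed from the domination core.

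The correctness of removing $z$ is the combinatorial heart of the argument. Let $D$ be a minimum-size $(Z\setminus\{z\},r)$-dominator; I must show $D$ is in fact an $r$-dominating set of $G$, equivalently that $D$ already $r$-dominates $z$. Consider how the other vertices $z'\in\kappa\cap S$ get $r$-dominated by $D$: for each such $z'$, fix a witness $d_{z'}\in D$ at distance $\le r$, and a shortest $z'$--$d_{z'}$ path. Either this path stays within distance $r$ without hitting $X$ — but then $d_{z'}$ lies in a projection $\pr{\le r}{z'}{X}$-type neighborhood, which is shared by $z$ since $z,z'\in\kappa$, so the same vertex $r$-dominates $z$ through $X$ and we are done — or the path enters $X$; in the latter case the relevant portion is captured by the shared projection data up to radius $3r$, and since the vertices of $\kappa\cap S$ are $2r$-scattered in $G-X$, too few vertices of $D$ can simultaneously handle many of them ``off $X$'' unless $|D|$ is forced to be large (larger than $k$, in which case we would have concluded $\ds_r(G)>k$ already). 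Making this trade-off precise — bounding how many $z'\in\kappa\cap S$ can be $r$-dominated ``not via the shared neighborhood in $X$'' by $\Oh(|X|)=\Oh(k)$, using both the scatteredness and the constant bound on $|\prg{3r}{z'}{X}{G}|$ — is the main obstacle, and it is precisely the step that dictates how large the constant $c$ (hence $\domcoresize(\nabla(\mc G))$) must be chosen. Once $z$ is removed we have a smaller $r$-domination core, as required; the whole procedure is clearly polynomial-time since it consists of a constant number of approximation and closure calls plus a neighborhood-class computation.
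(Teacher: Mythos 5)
Your high-level plan matches the paper's: extract a $(Z,r)$-dominator iteratively while applying the $3r$-closure, terminate with a pair $(X,S)$, partition $S$ by projection tuples onto $X$, pigeonhole a large class $\kappa$, and argue any member is an irrelevant dominatee. However, the ``combinatorial heart'' that you flag as the main obstacle is precisely where your sketch goes wrong, and the error is not a detail that can be patched by adjusting constants.

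You have the two cases backwards. If the witness path from $z'$ to its dominator $d_{z'}$ stays \emph{entirely off $X$}, then $d_{z'}\notin X$ and in particular $d_{z'}$ does \emph{not} lie in any projection $\pr{j}{z'}{X}$, so shared projections give you nothing in this case; it is exactly the case where the $2r$-scatteredness of $S$ in $G-X$ applies and forces the dominators $d_{z'}$ to be pairwise distinct. Conversely, if the path \emph{does} touch $X$, let $w$ be the first vertex of $X$ met when walking from $z'$; then $w\in\pr{j}{z'}{X}=\pr{j}{z}{X}$ for the appropriate $j\leq r$, and concatenating the prefix of the path from $d_{z'}$ to $w$ with a length-$\leq j$ path from $w$ to $z$ shows $d_{z'}$ $r$-dominates $z$ as well --- an immediate contradiction with the assumption that $z$ is undominated. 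So the paper first proves that \emph{all} witness paths for $\kappa\setminus\{z\}$ avoid $X$, hence all the $d_{z'}$ are distinct vertices of $V(G)\setminus X$.

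The second problem is the derivation of the contradiction. You propose to bound by $\Oh(k)$ the number of $z'\in\kappa$ that can be dominated off $X$ and conclude $|D|>k$. That cannot work: pigeonhole only guarantees $|\kappa|$ is a constant (roughly $|S|/(\text{number of classes})$, both of which are $\Theta(k)$), so you never have $\Omega(k)$ vertices in $\kappa$ to play against. The correct argument is a \emph{local exchange}: let $W=\{d_{z'}: z'\in\kappa\setminus\{z\}\}$, with $|W|\geq\Delta_{\rm cl}+1$ by the pigeonhole choice of $C_0$, and replace $D$ by $D'=(D\setminus W)\cup\pr{3r}{z}{X}$. Since $|\pr{3r}{z}{X}|\leq\Delta_{\rm cl}$ (by the closure property), $|D'|<|D|$. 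One then verifies $D'$ is still a $(Z\setminus\{z\},r)$-dominator: any vertex $a$ that lost its dominator $d_{z'}\in W$ is within distance $r$ of $d_{z'}$ and within distance $r$ of $X$ (as $X$ is a $(Z,r)$-dominator), so $z'$ is within $3r$ of $X$ along this walk; the first $X$-vertex on it belongs to $\pr{3r}{z'}{X}=\pr{3r}{z}{X}\subseteq D'$ and lies within distance $r$ of $a$. This contradicts minimality of $D$, and only requires $|\kappa|>\Delta_{\rm cl}+1$, a constant. You should also note that your termination sketch (``a vertex removed in a later round cannot see too many previously removed vertices, plus a density count'') is hand-waved; the paper's argument tracks, for a surviving $u\in Z\setminus X_\Lambda$, the set of ``good'' vertices reachable along short paths, bounding it by $\sum_{i=1}^r\Delta_{\rm cl}^i$ via a branching count at ``important vertices,'' then derives a contradiction from the existence of an $i$-good vertex for each $i$. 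That is the precise role of the closure in the termination proof.
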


Thus, from now on we focus on proving Lemma~\ref{lem:reduce-corce}. To remove possible confusion, let us remark that function $\domcoresize(\cdot)$ that is yielded by our proof will depend only on the first $p$ grads of $\mc G$, for some constant $p$ depending on $r$. Thus, the algorithm does not need to have access to an infinite sequence of grads to compute the constants used in its code.

\subsubsection{Iterative extraction of $Z$-dominators}
\label{sec:extraction}

The first phase of the algorithm of Lemma~\ref{lem:reduce-corce} is to build a structural
decomposition of the graph~$G$.  More precisely, we try to ``pull
out'' a small set~$X$ of vertices that $r$-dominates~$Z$, so that after removing them,~$Z$ contains a large
subset~$S$, which is $2r$-scattered in the remaining graph.  Given such
a structure, intuitively we can argue that in any optimal $(Z,r)$-dominator, vertices of $X$ serve as ``hubs'' that route almost all the domination paths leading to vertices of $S$. This is because any vertex of $V(G)\setminus X$ can $r$-dominate only at most one vertex
from~$S$ via a path that avoids $X$.  Since~$S$ will be large compared to~$X$, some vertices
of~$S$ will be indistinguishable from the point of view of $r$-domination
routed through~$X$, and these will be precisely the vertices that can be removed
from the domination core.  The identification of the irrelevant
dominatee will be the goal of the second phase,
whereas the goal of this phase is to construct the pair $(X,S)$.

\newcommand{\Cdv}{C_{\rm{dv}}}
\newcommand{\Cclsz}{\Gamma_{\rm{cl}}}
\newcommand{\Cclnei}{\Delta_{\rm{cl}}}

Let $\Cdv=\widetilde{P}_r(\grad_r(\mc G))$ be the approximation ratio of the algorithm of Lemma~\ref{lem:better-dvorak}.  Given~$Z$, we first apply the algorithm of
Lemma~\ref{lem:better-dvorak} to~$G$,~$Z$, and the parameters~$r$ and~$k$.
Thus, we either find a $(Z,r)$-dominator~$Y_1$ such that $|Y_1|\leq
\Cdv\cdot k$, or
we find a subset $S \subseteq Z$ of size at least~$k+1$ that is
$2r$-scattered in~$G$.  In the latter case,
since~$S$ is an obstruction to an $r$-dominating set of size at most~$k$,
we may terminate the algorithm and provide a negative answer.  Hence,
from now on we assume that~$Y_1$ has been successfully constructed.

Let $C_0$ be a constant depending on $\nabla(\mc G)$, to be defined later. Now, in search for the pair $(X,S)$, we inductively construct sets $X_1,Y_2,X_2,Y_3,X_3,\ldots$ such that
$Y_1\subseteq X_1\subseteq Y_2\subseteq X_2\subseteq \ldots$ using the following definitions:
\begin{itemize}
\item If $Y_i$ is already defined, then set $X_i=\cl{3r}{Y_i}$.
\item If $X_i$ is already defined, then apply the algorithm of
  Lemma~\ref{lem:better-dvorak} to $G - X_i$, $Z \setminus X_i$, and
  the parameters $r$ and $C_0 \cdot |X_i|$. 
\begin{enumerate}
\item\label{big-scattered} Suppose the algorithm finds a set $S\subseteq Z \setminus X_i$ that is $2r$-scattered in $G - X_i$ and
  has cardinality greater than $C_0 \cdot |X_i|$. Then
  we let $X = X_i$, terminate the procedure and
  proceed to the second phase with the pair $(X,S)$.
\item\label{small-domset} Otherwise, the algorithm has found a $(Z
  \setminus X_i,r)$-dominator $D_{i+1}$ in $G - X_i$ of size at most $\Cdv\cdot C_0\cdot |X_i|$. Then set $Y_{i+1}=X_i\cup D_{i+1}$ and proceed.
\end{enumerate}
\end{itemize}

Let $\Cclsz=9r\grad_{3r-1}(\mc G)$ be the bound on the size blow-up in Lemma~\ref{lem:closure} applied to radius $3r$, and let $\Cclnei=27r\grad_{3r-1}(\mc G)^2$ be the upper bound on the sizes of $3r$-projections given by Lemma~\ref{lem:closure}. From Lemmas~\ref{lem:better-dvorak},~\ref{lem:closure}, and a trivial induction we infer that the following bounds hold for all $i$ for which $(Y_i,X_i)$ were constructed:
\begin{eqnarray*}
|Y_i|& \leq & \Cdv\Cclsz^{i-1}(1+\Cdv C_0)^{i-1}\cdot k,\\
|X_i|& \leq & \Cdv\Cclsz^i(1+\Cdv C_0)^{i-1}\cdot k.
\end{eqnarray*}
For a nonnegative integer $i$, let $K_i=\Cdv\Cclsz^i(1+\Cdv C_0)^{i-1}$.

In this manner, the algorithm consecutively extracts dominators
$D_2,D_3,D_4,\ldots$ and performs $3r$-closure, constructing sets $X_2,X_3,X_4,\ldots$ up to the
point when case (\ref{big-scattered}) is encountered.  Then the
computation is terminated and the sought pair $(X,S)$ is constructed.
We now claim that case (\ref{big-scattered}) always happens within a
constant number of iterations.

\newcommand{\numiter}{\Lambda}

\begin{lemma}\label{lem:extraction-terminates}
  Let $\numiter=\sum_{i=0}^r \Cclnei^i\leq (r+1)\Cclnei^r$. Assuming that $|Z| > K_{\numiter} \cdot k$,
  the construction terminates yielding some pair~$(X,S)$ before performing $\numiter$ iterations, that is, before constructing~$Y_{\numiter}$.
\end{lemma}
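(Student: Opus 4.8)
The plan is to show that after a bounded number of iterations, the closure step has "saturated" enough that no vertex outside $X_i$ can reach many vertices of $X_i$ via short paths, and this forces case (\ref{big-scattered}) to trigger. The key quantity to track is, for a potential scattered set candidate, how large its projection onto the current $X_i$ can be. The closure step guarantees that every $u \notin X_i = \cl{3r}{Y_i}$ satisfies $|\prg{3r}{u}{X_i}{G}| \leq \Cclnei$. The idea is that if $S$ were a large $2r$-scattered set in $G - X_i$ that is NOT of the required size relative to $|X_i|$, then the dominator $D_{i+1}$ extracted in case (\ref{small-domset}) must itself be large, and crucially it $r$-dominates (within $G - X_i$) a large scattered set — so the new $X_{i+1}$ "absorbs" a lot of structure.

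**First I would** set up the potential/counting argument. Suppose the construction does NOT terminate before iteration $\numiter$, so we have constructed $Y_0 = Y_1 \subseteq X_1 \subseteq Y_2 \subseteq \cdots \subseteq X_{\numiter-1}$, and in each step $i < \numiter$ we landed in case (\ref{small-domset}): the algorithm of Lemma~\ref{lem:better-dvorak} applied to $G - X_i$, $Z \setminus X_i$, parameter $C_0 |X_i|$ returned a $(Z \setminus X_i, r)$-dominator $D_{i+1}$ rather than a scattered set. Since $|Z| > K_{\numiter} k \geq |X_i|$ (for all $i < \numiter$, using the bounds $|X_i| \leq K_i k \leq K_{\numiter} k$), the set $Z \setminus X_i$ is nonempty and in fact large. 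The plan is to build, by a greedy/pigeonhole process run in parallel, an element $z \in Z$ together with a chain of paths witnessing that $z$ lies "deeper and deeper inside" the $X_i$'s — specifically, to exhibit for suitable $i$ that $z$ can reach, within $G - X_{i}$ restricted appropriately, a vertex whose $3r$-projection onto $X_{i+1}$ grows. Iterating $\numiter = \sum_{i=0}^r \Cclnei^i$ times, the projection size would have to exceed $\Cclnei^{r+1}$ or similar, contradicting...

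Actually, **the cleaner approach I would take** is the following. Pick any $z \in Z \setminus X_{\numiter - 1}$ (exists since $|Z| > K_{\numiter}k \geq |X_{\numiter-1}|$). For each $i$, since $X_{i+1} \supseteq Y_{i+1} = X_i \cup D_{i+1}$ and $D_{i+1}$ is a $(Z\setminus X_i, r)$-dominator in $G - X_i$, the vertex $z$ (if it is in $Z \setminus X_i$, which it is since $z \notin X_{\numiter-1} \supseteq X_i$) is $r$-dominated in $G - X_i$ by some $d_{i+1} \in D_{i+1} \subseteq X_{i+1}$ via a path of length $\leq r$ avoiding $X_i$. So $d_{i+1} \in \prg{r}{z}{X_{i+1}}{G} \subseteq \prg{3r}{z}{X_{i+1}}{G}$, and moreover $d_{i+1} \notin X_i$. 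This shows $\prg{3r}{z}{X_{i+1}}{G}$ strictly gains a vertex ($d_{i+1}$) not present in... but that only gives linear growth, not the needed bound. The real mechanism must be that the closure bound $|\prg{3r}{z}{\cl{3r}{Y_i}}{G}| \leq \Cclnei$ caps things: once $|\prg{3r}{z}{X_i}{G}|$ is forced to exceed $\Cclnei$ we get a contradiction. So I need the projection to grow by a factor, not additively. This is where the $3r$ radius (versus $r$-dominating) and the composition of projection paths comes in: a path of length $r$ from $z$ to $d_{i+1}$ avoiding $X_i$, concatenated with the structure inside $X_i$, lets one "reroute" and multiply reachable vertices — and the sum $\sum_{i=0}^r \Cclnei^i$ in $\numiter$ strongly suggests a geometric blow-up over $\leq r+1$ "levels" before hitting the $\Cclnei$ cap at each level.

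**The main obstacle** I expect is making precise the geometric growth of projections: one must argue that failing to terminate for $\numiter$ steps forces, for some vertex and some scale, a $3r$-projection onto $\cl{3r}{Y_i}$ of size exceeding $\Cclnei(1 + (3r-1)\xi)$, i.e. exceeding the Lemma~\ref{lem:closure}\eqref{p:nomore} bound — a contradiction. The bookkeeping combining "$D_{i+1}$ $r$-dominates $Z\setminus X_i$ in $G - X_i$", "paths of length $\leq r$ compose with closure witnesses to stay within radius $3r$", and "the scattered-set alternative did not fire, so the dominator is genuinely present" is delicate; the choice $\numiter = \sum_{i=0}^{r}\Cclnei^i$ is exactly calibrated to this iterated blow-up. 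Once the contradiction is reached, we conclude that case (\ref{big-scattered}) must occur at some iteration $i < \numiter$, yielding the desired pair $(X, S)$ with $|X| \leq K_{\numiter} k$, which completes the proof of Lemma~\ref{lem:extraction-terminates}.
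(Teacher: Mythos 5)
Your proposal lands on the right starting point but never supplies the key counting argument; you explicitly flag "the main obstacle" and then leave it unresolved, so there is a genuine gap.

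Concretely, two things need repair.  First, the claim "$d_{i+1} \in \prg{r}{z}{X_{i+1}}{G}$" is not correct as stated: the witnessing path of length $\le r$ from $z$ to $d_{i+1}$ avoids $X_i$ internally, but nothing prevents it from hitting other vertices of $X_{i+1}\setminus X_i$ before reaching $d_{i+1}$, in which case $d_{i+1}$ itself is not in the $r$-projection onto $X_{i+1}$ --- only the \emph{first} vertex of $X_{i+1}$ on that path is.  Second, and more importantly, the phenomenon you diagnose as "only linear growth" is actually that the projections $\prg{3r}{z}{X_i}{G}$ are \emph{not nested} as $i$ grows: a vertex that is in $\prg{3r}{z}{X_i}{G}$ may fall out of $\prg{3r}{z}{X_{i+1}}{G}$ because its witnessing path passes through $X_{i+1}\setminus X_i$.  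So the distinct witnesses $w_1,\dots,w_\Lambda$ you produce are scattered across \emph{different} projections, each individually bounded by $\Cclnei$, and there is no single projection that accumulates all of them.  This is why a simple ``projection exceeds $\Cclnei$'' contradiction does not yield termination after $\Cclnei+1$ iterations, and why the bound is $\sum_{i=0}^r \Cclnei^i$ rather than $\Cclnei+1$.

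What is actually needed (and what the paper does) is to fix $u\in Z\setminus X_\Lambda$, call a vertex $w\in X_i\setminus X_{i-1}$ \emph{$i$-good} if some path of length $\le r$ from $u$ to $w$ has all internal vertices outside $X_i$, and then prove two things: for each $1\le i\le\Lambda$ there is at least one $i$-good vertex (because $D_i$ $r$-dominates $u$ in $G-X_{i-1}$ and $D_i\subseteq X_i\setminus X_{i-1}$), so the number of good vertices is at least $\Lambda$; and, via a decomposition of each witnessing path into a strictly level-decreasing chain of "important vertices," each good vertex is determined by at most $r$ choices each from a set of size $\le\Cclnei$ (the $r$-projection of the previous important vertex onto the relevant closure $X_{j-1}$), so the number of good vertices is at most $\sum_{p=1}^r\Cclnei^p=\Lambda-1$.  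The contradiction between these two bounds is what forces termination before iteration $\Lambda$.  Your proposal gestures at this ("paths of length $\le r$ compose with closure witnesses to stay within radius $3r$," "iterated blow-up") but stops short of defining the chain, proving that each step is constrained to $\le\Cclnei$ options, and proving that each step strictly decreases the index $j$, which is what makes the enumeration finite.  Until that bookkeeping is done, the argument is incomplete.
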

\begin{proof}
For the sake of contradiction, suppose~$Y_{\numiter}$ and~$X_{\numiter}$ were successfully constructed. Since $|Z|>K_{\numiter} \cdot k$ and $|X_{\numiter}|\leq K_{\numiter} \cdot k$, there is some vertex $u\in Z\setminus X_{\numiter}$. For an index $1\leq i\leq \numiter$, we shall say that a vertex $w\in X_i\setminus X_{i-1}$ is {\em{$i$-good}} if there is a path $P$ that starts at $u$, ends at $w$, has length at most $r$, and all its internal vertices do not belong to $X_i$ (we denote $X_0=\emptyset$). Vertex $w$ is {\em{good}} if it is good for some index $i$.

\begin{claim}\label{cl:upper}
The number of good vertices is at most $\numiter-1$.
\end{claim}
\begin{proof}
Let $w$ be any good vertex, and let $P$ be a path certifying this. Let $q\leq r$ be the length of $P$, and denote the vertices of $P$ by $u_i$ for $0\leq i\leq q$, where $u_0=u$ and $u_q=w$. Observe that internal vertices of $P$ can belong only to sets $X_j\setminus X_{j-1}$ for $j>i$, or to $V(G)\setminus X_{\numiter}$. We say that a vertex $u_\ell$ of $P$ is {\em{important}} if there is an index $j$, with $i\leq j\leq \numiter$, such that $u_\ell\in X_j\setminus X_{j-1}$ but $u_{\ell'}\notin X_j$ for all $\ell'<\ell$. Clearly, $w=u_q$ is important. Let $\ell_1<\ell_2<\ldots<\ell_p=q$ be the indices of important vertices on $P$, and let $j_1>j_2>\ldots>j_p$ be such that $u_{\ell_i}\in X_{j_i}\setminus X_{j_i-1}$, for all $1\leq i\leq p$. We will denote $\ell_0=0$, so $u_{\ell_0}=u$, and $j_0=\numiter+1$ (denoting $X_{\numiter+1}=V(G)$).
\begin{figure}[htb]
\centering
  \includegraphics[scale=.85]{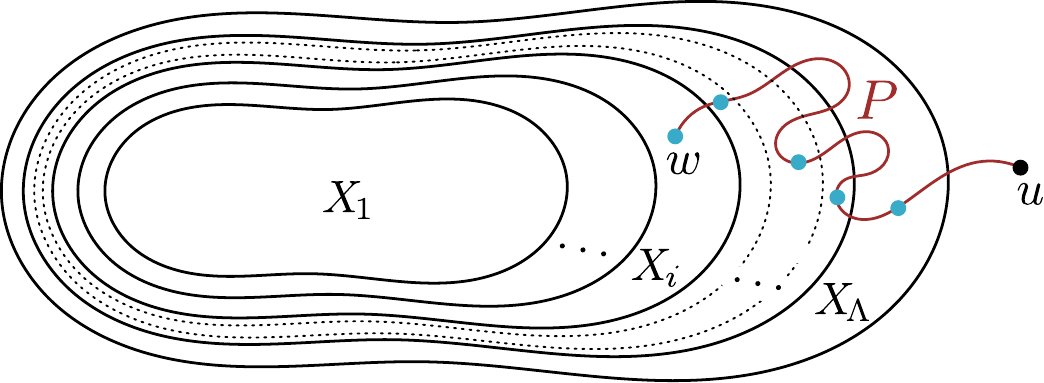}
  \caption{\label{fig:goodv} Situation in the proof of Claim~\ref{cl:upper}. Grey dots denote important vertices.}
\end{figure}
\noindent Consider any index $i$ with $1\leq i\leq p$. Observe that on the part between $u_{\ell_{i-1}}$ and $u_{\ell_{i}}$, path $P$ never entered $X_{j_{i-1}-1}$, because first such entrance would constitute an important vertex that was not recorded. Since $u_{\ell_{i}}\in X_{j_{i-1}-1}$, we infer that $u_{\ell_{i}}\in \prg{r}{u_{\ell_{i-1}}}{X_{j_{i-1}-1}}{G}$. Since $X_{j_{i-1}-1}=\cl{3r}{Y_{j_{i-1}-1}}$ by the construction, we infer that $|\prg{r}{u_{\ell_{i-1}}}{X_{j_{i-1}-1}}{G}|\leq \Cclnei$. Therefore, once vertex $u_{\ell_{i-1}}$ is selected, there are at most $\Cclnei$ choices for the next important vertex $u_{\ell_i}$. We infer that the choice of the sequence of important vertices on $P$ can be modeled by taking at most $r$ decisions, each from a selection of at most $\Cclnei$ options. Since $w$ is the last important vertex, there are at most $\sum_{i=1}^r \Cclnei^i=\numiter-1$ ways to select $w$.
\cqed\end{proof}

\begin{claim}\label{cl:lower}
For every $1\leq i\leq \numiter$, there is an $i$-good vertex.
\end{claim}
\begin{proof}
Recall that $D_i\subseteq X_i\setminus X_{i-1}$ is a $(Z\setminus X_{i-1},r)$-dominator in the graph $G-X_{i-1}$. Since $u\in Z\setminus X_{i-1}$, in $G-X_{i-1}$ there is a path $P$ of length at most $r$ from $u$ to a vertex of $D_i$. Take $w$ to be the first vertex of this path that belongs to $X_i\setminus X_{i-1}$. Then the prefix of $P$ from $u$ to $w$ certifies that $w$ is an $i$-good vertex.
\cqed\end{proof}

Claims~\ref{cl:upper} and~\ref{cl:lower} contradict each other, which finishes the proof.
\end{proof}
In Lemma~\ref{lem:reduce-corce} we will set $\domcoresize(\nabla(\mc G))=K_{\numiter}$, so that Lemma~\ref{lem:extraction-terminates} can be applied.

Therefore, unless the size of~$Z$ is bounded by $K_{\numiter} \cdot k$, the construction terminates within
$\numiter=\sum_{i=0}^r \Cclnei^i\leq (r+1)\Cclnei^r$ iterations with a pair~$(X,S)$. By the
construction of~$X$ and~$S$, we have the following properties:
\begin{itemize}
\item $|X|\leq K_{\numiter}\cdot k$;
\item $X$ is a $(Z,r)$-dominator in $G$ (because $Y_1\subseteq X$);
\item for each $u\in V(G)\setminus X$, we have $|\prg{3r}{u}{X}{G}|\leq \Cclnei$;
\item $|S| > C_0\cdot |X|$;
\item $S \subseteq Z \setminus X$ and $S$ is $2r$-scattered in $G - X$.
\end{itemize}
With sets~$X$ and~$S$ computed we proceed to the second phase, that
is, finding an irrelevant dominatee that can be removed from~$Z$.


\subsubsection{Finding an irrelevant dominatee}\label{sec:finding}

\newcommand{\Cnum}{C_{\rm{nei}}}

Given~$G$,~$Z$, and the constructed sets~$X$ and~$S$, we denote by $R =
V(G) \setminus X$ the set of vertices outside~$X$.  Using this notation,~$S$
is $2r$-scattered in the graph~$G[R]$.  Recall that for any vertex $u
\in R$, we have $|\pr{3r}{u}{X}|\leq \Cclnei$.

Define the following equivalence relation $\simeq$ on $S$: for $u,v\in S$, let
$$
u\simeq v\quad \Leftrightarrow \quad \pr{i}{u}{X} = \pr{i}{v}{X}\textrm{ for each $1\leq i\leq 3r$.}
$$
Let us denote by $\Cnum$ the constant $c$ given by Lemma~\ref{lem:fewneighbourhoods} for class $\mc G$ and radius $3r$. Hence, the number of different $3r$-projections in $X$ of vertices of $R$ is bounded by $\Cnum\cdot |X|$.

\begin{lemma}\label{lem:num-classes}
Equivalence relation $\simeq$ has at most $\Cnum\cdot (3r)^{\Cclnei} \cdot |X|$ classes.
\end{lemma}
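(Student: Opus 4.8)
The plan is to reduce the count of $\simeq$-classes to two independent counting problems: how many distinct outermost projections $\pr{3r}{u}{X}$ can occur, and, for a fixed such set, in how many ways the nested family $\pr{1}{u}{X}\subseteq \pr{2}{u}{X}\subseteq\cdots\subseteq \pr{3r}{u}{X}$ can be arranged inside it. First I would introduce, for $u\in R=V(G)\setminus X$ and $w\in X$, the quantity $d(u,w)$ defined as the minimum length of a path from $u$ to $w$ in $G$ whose internal vertices all avoid $X$ (and $+\infty$ if no such path exists); by the definition of projections, $\pr{i}{u}{X}=\{w\in X\colon d(u,w)\leq i\}$ for every positive integer $i$, and since $u\notin X$ we always have $d(u,w)\geq 1$.

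The key observation is then that, for $u\in S\subseteq R$, the whole tuple $(\pr{1}{u}{X},\pr{2}{u}{X},\ldots,\pr{3r}{u}{X})$ --- which is exactly what determines the $\simeq$-class of $u$ --- is recoverable from the single set $A_u:=\pr{3r}{u}{X}$ together with the restriction of the map $w\mapsto d(u,w)$ to $A_u$. Indeed, every $w\in X\setminus A_u$ satisfies $d(u,w)>3r$ and so belongs to none of $\pr{1}{u}{X},\ldots,\pr{3r}{u}{X}$, whence $\pr{i}{u}{X}=\{w\in A_u\colon d(u,w)\leq i\}$ for every $1\leq i\leq 3r$, and on $A_u$ the values of $d(u,\cdot)$ lie in $\{1,2,\ldots,3r\}$.

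It then remains to bound the two sources of choice. The set $A_u$ is a $3r$-projection onto $X$ of a vertex of $R$, so Lemma~\ref{lem:fewneighbourhoods} applied with radius $3r$ bounds the number of possibilities for $A_u$ by $\Cnum\cdot |X|$. Once $A_u$ is fixed, the number of functions $A_u\to\{1,2,\ldots,3r\}$ is $(3r)^{|A_u|}$, and since the properties of the pair $(X,S)$ established at the end of Section~\ref{sec:extraction} guarantee $|A_u|=|\pr{3r}{u}{X}|\leq \Cclnei$, this is at most $(3r)^{\Cclnei}$. Multiplying the two bounds gives the claimed $\Cnum\cdot(3r)^{\Cclnei}\cdot |X|$. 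I do not anticipate a genuine obstacle here: the combinatorial content is entirely in Lemma~\ref{lem:fewneighbourhoods}, and the only point requiring care is the clean reconstruction of the projection tuple from $(A_u,\,d(u,\cdot)|_{A_u})$ --- in particular noting that vertices of $X\setminus A_u$ contribute to no inner projection, so that each $\pr{i}{u}{X}$ is a downward slice of $d(u,\cdot)$ inside $A_u$.
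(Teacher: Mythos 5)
Your proof is correct and takes essentially the same approach as the paper's: both arguments bound the count by choosing the outermost projection $\pr{3r}{u}{X}$ (via Lemma~\ref{lem:fewneighbourhoods}) and then, using the cardinality bound $|\pr{3r}{u}{X}|\leq\Cclnei$, counting the at most $(3r)^{\Cclnei}$ ways to assign to each element its ``entry level'' in the nested chain of projections. Your $d(u,w)$ is precisely the paper's ``smallest index $j$ with $w\in\pr{j}{u}{X}$,'' so the two presentations differ only in notation.
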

\begin{proof}
Observe that for each $u\in S$,
$$\pr{1}{u}{X}\subseteq \pr{2}{u}{X}\subseteq \ldots \subseteq \pr{3r-1}{u}{X}\subseteq \pr{3r}{u}{X}.$$
By Lemma~\ref{lem:fewneighbourhoods}, the number of choices for $\pr{3r}{u}{X}$ is at most $\Cnum\cdot |X|$. Moreover, since $u\in R$, we have that $|\pr{3r}{u}{X}|\leq \Cclnei$. Hence, to define sets $\pr{i}{u}{X}$ for $1\leq i<3r$ it suffices, for every $w\in \pr{3r}{u}{X}$, to choose the smallest index $j$, $1\leq j\leq 3r$, such that $w\in \pr{j}{u}{X}$. The number of such choices is at most $(3r)^{\Cclnei}$, and hence the claim follows.
\end{proof}

We can finally set the constant $C_0$ that was introduced in the previous section: $C_0=(\Cclnei+1)\cdot \Cnum\cdot (3r)^{\Cclnei}$. Since we have that $|S|>C_0\cdot |X|$, from Lemma~\ref{lem:num-classes} and the pigeonhole principle we infer that there is a class $\kappa$ of relation $\simeq$ with $|\kappa|>\Cclnei+1$. Note that we can find such a class $\kappa$ in polynomial time, by computing the classes of $\simeq$ directly from the definition and examining their sizes. We are ready to prove the final lemma of this section: any vertex of $\kappa$ can be removed from the $r$-domination core $Z$ (recall that $S\subseteq Z$).

\begin{lemma}\label{lem:irrelevant}
Let $z$ be an arbitrary vertex of $\kappa$. Then $Z\setminus \{z\}$ is an $r$-domination core.
\end{lemma}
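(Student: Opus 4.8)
The plan is to reduce the lemma to the statement that every minimum-size $(Z\setminus\{z\},r)$-dominator $D$ in $G$ also $r$-dominates $z$. This suffices: such a $D$ is then a $(Z,r)$-dominator, and since $\ds_r(G,Z)\le|D|=\ds_r(G,Z\setminus\{z\})\le\ds_r(G,Z)$ we get $|D|=\ds_r(G,Z)$, so $D$ is a minimum-size $(Z,r)$-dominator; as $Z$ is an $r$-domination core, $D$ is an $r$-dominating set of $G$, which is precisely what being an $r$-domination core requires of $Z\setminus\{z\}$.

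\textbf{Core argument.} Fix such a $D$ and suppose toward a contradiction that $\dist_G(D,z)>r$. Put $N=\prg{3r}{z}{X}{G}$; since $z\notin X$, the closure property of $X$ gives $|N|\le\Cclnei$, and since $X$ $(Z,r)$-dominates $z$ we have $\prg{r}{z}{X}{G}\neq\emptyset$. Because every vertex of $\kappa$ is $\simeq$-equivalent to $z$, its $i$-projection onto $X$ equals that of $z$ for all $i\le 3r$; in particular every vertex of $\prg{r}{z}{X}{G}$ is within distance $r$ of every vertex of $\kappa$ (including $z$). I would then establish the following dichotomy: for $v\in\kappa\setminus\{z\}$, a shortest path $P$ of length $\le r$ witnessing that some $d\in D$ $r$-dominates $v$ \emph{cannot} meet $X$; for if $w$ were the last vertex of $P$ in $X$, at distance $j\ge1$ from $v$, then $w\in\prg{j}{v}{X}{G}=\prg{j}{z}{X}{G}$, so concatenating a length-$\le j$ path from $z$ to $w$ (internally avoiding $X$) with the length-$\le r-j$ portion of $P$ from $d$ to $w$ would give $\dist_G(d,z)\le r$, contradicting the assumption. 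Hence every $D$-dominator of a vertex of $\kappa\setminus\{z\}$ lies in $D\setminus X$ and reaches it within distance $r$ in $G-X$; since $\kappa\subseteq S$ is $2r$-scattered in $G-X$, each vertex of $D\setminus X$ can play this role for at most one vertex of $\kappa$, so the set $D_\kappa$ of all such dominators satisfies $|D_\kappa|\ge|\kappa|-1>\Cclnei\ge|N|$.

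\textbf{The swap, and the main obstacle.} Define $D'=(D\setminus D_\kappa)\cup N$, so that $|D'|<|D|$. The decisive step — and the one I expect to be most delicate — is to verify that $D'$ is still a $(Z\setminus\{z\},r)$-dominator, which contradicts the minimality of $D$. Vertices of $\kappa\setminus\{z\}$ are $r$-dominated by any vertex of $\prg{r}{z}{X}{G}\subseteq N$. The real work is a vertex $u\in(Z\setminus\{z\})\setminus\kappa$: if some vertex of $D\setminus D_\kappa\subseteq D'$ $r$-dominates $u$ there is nothing to do, so assume all $D$-dominators of $u$ lie in $D_\kappa$. Pick such a dominator $d$, let $v\in\kappa\setminus\{z\}$ be the vertex it internally dominates, and follow the walk from $v$ to $d$ inside $G-X$ and then from $d$ to $u$. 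If this walk meets $X$, its first $X$-vertex $\tilde w$ lies on the $d$--$u$ leg, hence $\tilde w\in\prg{2r}{v}{X}{G}=\prg{2r}{z}{X}{G}\subseteq N$ and $\dist_G(\tilde w,u)\le r$; if the walk avoids $X$, then $\dist_{G-X}(u,v)\le 2r$, and since $X$ $(Z,r)$-dominates $u$ any $x_u\in\prg{r}{u}{X}{G}$ satisfies $x_u\in\prg{3r}{v}{X}{G}=N$ and $\dist_G(x_u,u)\le r$. Either way $u$ is $r$-dominated by $N\subseteq D'$ (the case $u\in X$ is subsumed, as then the walk trivially hits $X$). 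This is exactly the point where the radius $3r$ in the definition of $\simeq$, together with the closure bound $|\prg{3r}{\cdot}{X}{G}|\le\Cclnei$ and the fact that $X$ is a $(Z,r)$-dominator, are needed, so I would treat it as the heart of the proof. The resulting contradiction with minimality shows $\dist_G(D,z)\le r$, completing the reduction and the lemma.
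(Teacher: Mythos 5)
Your proof is correct and follows essentially the same strategy as the paper's: reduce to showing that every minimum-size $(Z\setminus\{z\},r)$-dominator already $r$-dominates $z$, show via the $\simeq$-invariance of projections that any dominator of a vertex of $\kappa\setminus\{z\}$ must avoid $X$ and reach it inside $G-X$ (hence these dominators are pairwise distinct by the $2r$-scattered property, giving more than $\Cclnei$ of them), and then swap them out for $\prg{3r}{z}{X}{G}$ and verify the result still $r$-dominates $Z\setminus\{z\}$ by concatenating domination paths with the $X$-reaching path and identifying the first $X$-vertex. The only cosmetic difference is that you remove the full set $D_\kappa$ of all dominators of $\kappa\setminus\{z\}$ rather than a system of representatives as in the paper, and you split the verification into two cases (walk meets $X$ / walk avoids $X$) where the paper uses a single concatenated walk of length $\le 3r$; both variants go through.
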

\begin{proof}
Let $Z'=Z\setminus \{z\}$. Take any minimum-size $(Z',r)$-dominator $D$ in $G$. If $D$ also dominates $z$, then $D$ is a minimum-size $(Z,r)$-dominator as well. Since $Z$ was an $r$-domination core, we infer that $D$ is an $r$-dominating set in $G$, and we are done. Hence, suppose $z$ is not $r$-dominated by $D$. We prove that this case leads to a contradiction, which will conclude the proof.

Every vertex $s\in \kappa\setminus \{z\}$ is $r$-dominated by $D$. For each such $s$, let $v(s)$ be an arbitrarily chosen vertex of $D$ that $r$-dominates $s$, and let $P(s)$ be an arbitrarily chosen path of length at most $r$ that connects $v(s)$ with $s$.
\begin{claim}\label{cl:dom-via-X}
For each $s\in \kappa\setminus \{z\}$, path $P(s)$ does not pass through any vertex of $X$ (in particular $v(s)\notin X$). Consequently, vertices $v(s)$ for $s\in \kappa\setminus \{z\}$ are pairwise different.
\end{claim}
\begin{proof}
Suppose otherwise and let $w$ be the vertex of $V(P(s))\cap X$ that is closest to $s$ on $P(s)$. Then the suffix of $P(s)$ from $w$ to $s$ certifies that $w\in \pr{j}{s}{X}$, for $j$ being the length of this suffix. As $s\simeq z$, we also have that $w\in \pr{j}{z}{X}$, so there is a path $Q$ of length at most $j$ from $w$ to $z$. By concatenating the prefix of $P(s)$ from $v(s)$ to $w$ with $Q$ we obtain a walk of length at most $r$ from $v(s)$ to $z$, a contradiction with the assumption that $z$ is not $r$-dominated by $D$.

For the second part of the claim, suppose $v(s)=v(s')$ for some distinct $s,s'\in \kappa\setminus \{z\}$. Then the concatenation of $P(s)$ and $P(s')$ would be a path of length at most $2r$ connecting $s$ and $s'$ that is entirely contained in $G[R]$. This would be a contradiction with the fact that $S$ is $2r$-scattered in $G[R]$.
\cqed\end{proof}

Let $W=\{v(s)\colon s\in \kappa\setminus \{z\}\}$. From Claim~\ref{cl:dom-via-X} we have that $|W|=|\kappa\setminus \{z\}|\geq \Cclnei+1$. Define $D'=(D\setminus W)\cup \pr{3r}{z}{X}$. Since $|\pr{3r}{z}{X}|\leq \Cclnei$, we have that $|D'|<|D|$.

\begin{claim}\label{cl:exchange}
$D'$ is a $(Z',r)$-dominator.
\end{claim}
\begin{proof}
For the sake of contradiction, suppose there is some $a\in Z'$ that is not $r$-dominated by $D'$. Since $a$ was $r$-dominated by $D$ and $D\setminus D'=W$, there must be a vertex $s\in \kappa\setminus \{z\}$ such that vertex $v(s)$ $r$-dominates $a$. Consequently, in $G$ there is a path $Q_0$ of length at most $r$ that leads from $v(s)$ to $a$.
Furthermore, since $X$ is a $(Z,r)$-dominator,
there is a path $Q_1$ of length at most $r$ that leads from $a$ to 
some $x \in X$. Let $Q$ be the concatenation of $P(s)$, $Q_0$, and $Q_1$;
$Q$ is a walk of length at most $3r$ that connects $s$ and $x \in X$.

Let $x'$ be the first (closest to $s$) vertex on $Q$ that belongs to $X$;
such a vertex exists as $x \in X$ is on $Q$. 
As the length of $Q$ is at most $3r$, we have $x' \in \pr{3r}{s}{X}$.
Since $s \simeq z$, we have $x' \in \pr{3r}{z}{X}$, and, consequently, $x' \in D'$.
However, by Claim~\ref{cl:dom-via-X}, $x'$ does not lie on $P(s)$. Hence $x'$ lies on the part of $Q$ between $v(s)$ and $x$, but each vertex of this part is at distance at most $r$ from $a$ on $Q$. Thus
$a$ is $r$-dominated by $x'$, a contradiction.
\cqed\end{proof}



As $|D'|<|D|$, Claim~\ref{cl:exchange} is a contradiction with the assumption that $D$ is a minimum-size $(Z',r)$-dominator. This concludes the proof.
\end{proof}

Lemma~\ref{lem:irrelevant} finishes the proof of Lemma~\ref{lem:reduce-corce}: we set $z$ to be any vertex of $\kappa$. 


\subsection{Reducing dominators}\label{sec:dominators}

In the rest of this section we work with arbitrary $r$ towards the proof of Theorems~\ref{thm:mainrbe-dominator} and~\ref{thm:mainrbe-reduced}. At some point we will argue that for $r=1$, the statement of Theorem~\ref{thm:main-be} is immediate. For convenience, we recall the statements of the results we are going to prove.

\restatemainbe*
\restatemainbedominator*
\restatemainbereduced*

Having reduced the number of vertices whose domination is essential,
we arrive at the situation where the vast majority of vertices serve
only the role of dominators, or, when $r>1$, they serve as connections between dominators with dominatees. 
Now, it is relatively easy to reduce
the number of candidate dominators in one step. This immediately gives the sought kernel for $r=1$, i.e., proves Theorem~\ref{thm:main-be}. For $r>2$, the treatment of vertices connecting dominators and dominatees without introducing additional gadgets turns out to be problematic. Therefore, we are unable to give a kernel that is an induced subgraph of the original graph, and we resort to the statements of Theorems~\ref{thm:mainrbe-dominator} and~\ref{thm:mainrbe-reduced}. 

  We proceed to the proof of Theorem~\ref{thm:mainrbe-dominator}. The algorithm works as follows.  First, we apply the algorithm of
  Theorem~\ref{thm:dcore} to compute a small domination core in the
  graph.  In case the algorithm gives a negative answer, we output
  that $\ds_r(G) > k$.  Hence, from here on, we assume that we have
  correctly computed an $r$-domination core $Z_0\subseteq V(G)$ of size at
  most $\domcoresize(\nabla(\mc G)) \cdot k$.

  Compute $Z=\cl{r}{Z_0}$ using Lemma~\ref{lem:closure}; then we have that $|Z|\leq 3r\grad_{r-1}(\mc G)|Z_0|=\Oh(k)$. Observe that in any graph, any superset of an $r$-domination core is also an $r$-domination core; this follows easily from the definition. Consequently, $Z$ is an $r$-domination core in $G$.

Partition $V(G)\setminus Z$ into equivalence classes with respect to the following relation $\simeq$, defined similarly as in Section~\ref{sec:finding}: For $u,v\in V(G)\setminus Z$, set:
$$
u\simeq v\quad \Leftrightarrow \quad \pr{i}{u}{Z} = \pr{i}{v}{Z}\textrm{ for each $1\leq i\leq r$.}
$$
  From Lemma~\ref{lem:closure} we know that for each $u\in V(G)\setminus Z$, it holds that $|\pr{i}{v}{Z}|\leq 9r\grad_{r-1}(\mc G)^2$. Moreover, Lemma~\ref{lem:fewneighbourhoods} implies that the number of possible different projections $\pr{r}{u}{Z}$ for $u\in V(G)\setminus Z$ is at most $c\cdot |Z|$, for some constant $c$ depending on the grads of $\mc G$. Hence, using the same reasoning as in the proof of Lemma~\ref{lem:num-classes} we obtain the following.

\begin{claim}\label{cl:num-classes}
For $C=c\cdot r^{9r\grad_{r-1}(\mc G)^2}$, the equivalence relation $\simeq$ has at most $C \cdot |Z|$ classes.
\end{claim}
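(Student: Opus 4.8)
The plan is to transcribe, almost verbatim, the counting argument from the proof of Lemma~\ref{lem:num-classes}, with the pair $(X,3r)$ there replaced by $(Z,r)$ here; the paragraph immediately preceding the claim has already lined up the two ingredients this needs. First I would record that, because $Z=\cl{r}{Z_0}$, Lemma~\ref{lem:closure} gives $|\pr{r}{u}{Z}|\leq 9r\grad_{r-1}(\mc G)^2$ for every $u\in V(G)\setminus Z$, and that Lemma~\ref{lem:fewneighbourhoods}, applied to $\mc G$ with radius $r$, bounds the number of distinct sets of the form $\pr{r}{u}{Z}$, over $u\in V(G)\setminus Z$, by $c\cdot|Z|$ for the constant $c$ named in the claim.

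Next I would exploit the monotonicity of projections in the radius: for every $u\in V(G)\setminus Z$,
$$\pr{1}{u}{Z}\subseteq \pr{2}{u}{Z}\subseteq \ldots \subseteq \pr{r}{u}{Z}.$$
The $\simeq$-class of $u$ is determined by the tuple $(\pr{1}{u}{Z},\ldots,\pr{r}{u}{Z})$, and by the nesting this tuple is reconstructed from $\pr{r}{u}{Z}$ together with, for each $w\in\pr{r}{u}{Z}$, the least index $j\in\{1,\ldots,r\}$ with $w\in\pr{j}{u}{Z}$. Since $|\pr{r}{u}{Z}|\leq 9r\grad_{r-1}(\mc G)^2$, the number of such index assignments is at most $r^{9r\grad_{r-1}(\mc G)^2}$. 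Multiplying by the bound $c\cdot|Z|$ on the number of choices for $\pr{r}{u}{Z}$ yields that $\simeq$ has at most $c\cdot r^{9r\grad_{r-1}(\mc G)^2}\cdot|Z|=C\cdot|Z|$ classes, which is exactly the assertion; the classes themselves can moreover be listed in polynomial time directly from the definition, so the algorithm never needs the explicit value of $C$.

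I do not expect any real obstacle: the argument is a line-by-line copy of Lemma~\ref{lem:num-classes}, and both ingredients are already in hand. The only points requiring a little care are that the projection-size bound must be invoked for $Z=\cl{r}{Z_0}$ rather than for $Z_0$ itself, so that Lemma~\ref{lem:closure} applies with the stated guarantee, and that the constant $c$ is the one produced by Lemma~\ref{lem:fewneighbourhoods} for radius $r$ (not $3r$) --- which affects only the numerical value of $C$, not the linear-in-$|Z|$ shape of the bound.
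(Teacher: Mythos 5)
Your proof is correct and matches the paper's intent exactly: the paper itself only says "using the same reasoning as in the proof of Lemma~\ref{lem:num-classes}", and you have carried out precisely that transcription, replacing $(X,3r)$ with $(Z,r)$ and combining the projection-size bound $9r\grad_{r-1}(\mc G)^2$ from Lemma~\ref{lem:closure} with the $c\cdot|Z|$ bound from Lemma~\ref{lem:fewneighbourhoods}. Nothing to add.
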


Construct set $Y$ as follows: start with $Z$ and, for each equivalence class~$\kappa$ of relation $\simeq$, add an arbitrarily selected member $v_\kappa$ of $\kappa$. Hence we have that $|Y|\leq (C+1)\cdot |Z|$, so in particular $|Y|=\Oh(k)$.

\begin{claim}\label{cl:containedY}
There exists a minimum-size $r$-dominating set in $G$ that is contained in $Y$. 
\end{claim}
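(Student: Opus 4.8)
The plan is to start from an arbitrary minimum-size $(Z,r)$-dominator $D$ in $G$, which has $|D| = \ds_r(G,Z) = \ds_r(G)$, the last equality holding because $Z$ is an $r$-domination core. I would then define $D'$ by replacing every vertex of $D$ lying outside $Y$ with the chosen representative of its $\simeq$-class: for $u \in D \setminus Y$ we have $u \in V(G)\setminus Z$, so $u$ belongs to some class $\kappa$, and we swap $u$ for $v_\kappa \in Y$. Since this replacement is a function into $Y$, we obtain $D' \subseteq Y$ with $|D'| \le |D|$. The heart of the argument is to verify that $D'$ is still a $(Z,r)$-dominator; once this is done, $|D'| \ge \ds_r(G,Z) = |D|$ forces $|D'| = |D|$, so $D'$ is a minimum-size $(Z,r)$-dominator, and the defining property of an $r$-domination core immediately yields that $D'$ is an $r$-dominating set of $G$. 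As $D' \subseteq Y$ and $|D'| = \ds_r(G)$, the claim follows.

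The key step, that $D'$ still $r$-dominates every $z \in Z$, I would handle vertex by vertex via an exchange argument in the spirit of Lemma~\ref{lem:irrelevant}. Fix $z \in Z$ and a vertex $u \in D$ that $r$-dominates it, with a witnessing path $P$ from $u$ to $z$ of length at most $r$. If $u \in Y$ — in particular if $u \in Z$, since $Z \subseteq Y$, or if $u = z$ — then $u \in D'$ and we are done. Otherwise $u \in V(G)\setminus Z$, so $u \ne z$; let $w$ be the first vertex of $P$ (going from $u$) that lies in $Z$, which exists as $z \in Z$, and let $j \le r$ be the length of the prefix of $P$ from $u$ to $w$. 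All internal vertices of that prefix avoid $Z$ by the choice of $w$, and $u \notin Z$, so $w \in \pr{j}{u}{Z}$. Writing $\kappa$ for the $\simeq$-class of $u$, the definition of $\simeq$ gives $w \in \pr{j}{v_\kappa}{Z}$, hence a walk of length at most $j$ from $v_\kappa$ to $w$; concatenating it with the suffix of $P$ from $w$ to $z$ (of length at most $r-j$) produces a walk of length at most $r$ from $v_\kappa \in D'$ to $z$. Thus $z$ is $r$-dominated by $D'$, and ranging over all $z \in Z$ shows $D'$ is a $(Z,r)$-dominator.

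The remaining bookkeeping is routine: $|D'| \le |D|$ from the replacement being a function into $Y$, equality with $|D|$ from minimality of $D$ together with $D'$ being a $(Z,r)$-dominator, and then the $r$-domination core property of $Z$ finishes everything. I do not anticipate a real obstacle; the only point requiring a little care is the ``first vertex of $P$ in $Z$'' step, where one must check that the chosen prefix genuinely witnesses $w \in \pr{j}{u}{Z}$ (its internal vertices must avoid $Z$, which is exactly how $w$ was selected), and noting that $D'$ being possibly strictly smaller than $D$ causes no problem because it is still a $(Z,r)$-dominator and hence of size at least $\ds_r(G,Z)$.
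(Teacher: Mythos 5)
Your proof is correct and takes essentially the same approach as the paper's: start from a minimum-size dominator, replace each vertex outside $Y$ by its class representative $v_\kappa$, verify that $r$-domination of $Z$ is preserved, and then invoke the $r$-domination core property. The only difference is that the paper asserts in one sentence that $v_\kappa$ $r$-dominates the same vertices of $Z$ as other members of $\kappa$, whereas you spell out the underlying projection/exchange argument (first vertex of the path in $Z$, use $\pr{j}{u}{Z}=\pr{j}{v_\kappa}{Z}$, concatenate) — a correct and helpful elaboration, but not a different route.
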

\begin{proof}
Let $D$ be a minimum-size $r$-dominating set in $G$, so $|D|=\ds_r(G)=\ds_r(G,Z)$ (because $Z$ is an $r$-domination core). It follows
  that~$D$ is a minimum-size $(Z,r)$-dominator as well. We construct~$D'$ by replacing $\kappa \cap D$ with~$v_\kappa$ for each class~$\kappa$ of $\simeq$
  that has a nonempty intersection with~$D$. Clearly, $|D'| \leq |D| = \ds(G,Z)$ and $D'\subseteq Y$.  Moreover,~$D'$ is
  still a $(Z,r)$-dominator in~$G$. Indeed, the definition of $\simeq$ implies that the representative vertex~$v_\kappa$
  $r$-dominates exactly the same vertices in~$Z$ as any other vertex of $D \cap \kappa$. Therefore, since $|D'|\leq \ds_r(G,Z)$, it must hold that~$D'$ is a
  minimum-size $(Z,r)$-dominator in~$G$ and $|D'|=\ds_r(G,Z)$.  Since~$Z$ is an $r$-domination core, we infer that~$D'$ is an $r$-dominating set in~$G$, and obviously $D'\subseteq Y$.
\cqed\end{proof}

The proof of Theorem~\ref{thm:main-be} now follows from the following simple claim.

\begin{claim}
If $r=1$, then $\ds(G)\leq k$ if and only if $\ds(G[Y])\leq k$.
\end{claim}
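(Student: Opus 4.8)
The plan is to derive both directions of the equivalence from the two facts already in hand: that $Z$ is a $1$-domination core (hence $\ds(G)=\ds(G,Z)$, recalling that the subscript is dropped when $r=1$), and Claim~\ref{cl:containedY}, which provides a minimum-size dominating set of $G$ living inside $Y$. No further structural or sparsity input is needed; this is a short bookkeeping argument layered on top of the construction of $Y$.

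For the ``only if'' direction I would assume $\ds(G)\le k$ and apply Claim~\ref{cl:containedY} to obtain a dominating set $D$ of $G$ with $D\subseteq Y$ and $|D|=\ds(G)\le k$. The key observation is that $D$ dominates every vertex of $Y$ already in $G$, and since $D\subseteq Y$, for each $y\in Y\setminus D$ an edge from $y$ to $D$ witnessing its domination has both endpoints in $Y$, so it is present in $G[Y]$. Hence $D$ is a dominating set of $G[Y]$, giving $\ds(G[Y])\le |D|\le k$.

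For the ``if'' direction I would assume $\ds(G[Y])\le k$ and take a dominating set $D'$ of $G[Y]$ with $|D'|\le k$. Since $Z\subseteq Y$ and $G[Y]$ is an induced subgraph of $G$, every vertex of $Z$ that is dominated by $D'$ in $G[Y]$ is also dominated by $D'$ in $G$; thus $D'$ is a $(Z,1)$-dominator in $G$, so $\ds(G,Z)\le |D'|\le k$. Since $Z$ is a $1$-domination core, $\ds(G)=\ds(G,Z)\le k$, as desired.

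The only point requiring care — and it is minor — is that the two implications use the induced-subgraph relation in opposite ways: the ``only if'' direction exploits $D\subseteq Y$ so that domination restricts to $G[Y]$, whereas the ``if'' direction exploits that a small $(Z,1)$-dominator of $G$ can be upgraded to a genuine dominating set of $G$ precisely because $Z$ is a domination core. I do not anticipate any real obstacle; all the substantive work has already been carried out in building $Y$ and proving Claim~\ref{cl:containedY}.
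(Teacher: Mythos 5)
Your proof is correct and follows essentially the same two-step argument as the paper: the forward direction invokes Claim~\ref{cl:containedY} to obtain a minimum-size dominating set $D\subseteq Y$ (which therefore dominates $G[Y]$), and the reverse direction observes that any dominating set of $G[Y]$ is in particular a $Z$-dominator in $G$, so the domination-core property of $Z$ gives $\ds(G)=\ds(G,Z)\leq k$. Your write-up is a bit more explicit about why $D\subseteq Y$ lets domination restrict to $G[Y]$, but the substance and the appeals to Claim~\ref{cl:containedY} and the domination-core property match the paper exactly.
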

\begin{proof}
If $\ds(G)\leq k$, then by Claim~\ref{cl:containedY} there is also a dominating set $D$ of $G$ that has size at most $k$ and is contained in $Y$. Then $D$ is also a dominating set of $G[Y]$, and hence $\ds(G[Y])\leq k$.

If $\ds(G[Y])\leq k$, then there is a set $D'\subseteq Y$ that has size at most $k$ and dominates $Y$ in $G$. As $Z\subseteq Y$, it is also a $(Z,r)$-dominator in $G$. As $Z$ is an $r$-domination core, we infer that $\ds(G)=\ds(G,Z)\leq |D'|\leq k$.
\cqed\end{proof}

For Theorem~\ref{thm:mainrbe-dominator}, we run the algorithm of Lemma~\ref{lem:shpclo} on set $Y$, and let $W=Y'$ be the obtained superset of $Y$. By Lemma~\ref{lem:shpclo} we have that $|W|\leq C'\cdot |Y|$ for some constant $C'$, so in particular $|W|=\Oh(k)$. Then Theorem~\ref{thm:mainrbe-dominator} follows immediately from the following verification.

\begin{claim}
$\ds_r(G)\leq k$ if and only if $\ds_r(G[W],Z)\leq k$.
\end{claim}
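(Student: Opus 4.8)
The plan is to prove the two implications separately, relying on three facts already available: $Z\subseteq Y\subseteq W$; that $Z$ is an $r$-domination core in $G$, so that $\ds_r(G)=\ds_r(G,Z)$; that by Claim~\ref{cl:containedY} some minimum-size $r$-dominating set of $G$ lies inside $Y$; and that $W=Y'$ is the set produced by Lemma~\ref{lem:shpclo} applied with $X:=Y$, so pairwise distances at most $r$ among vertices of $Y$ are preserved in $G[W]$.

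For the forward implication I would assume $\ds_r(G)\le k$ and fix, via Claim~\ref{cl:containedY}, a minimum-size $r$-dominating set $D$ of $G$ with $D\subseteq Y\subseteq W$; then $|D|\le k$. The claim is that $D$ is a $(Z,r)$-dominator in $G[W]$. Take any $z\in Z$. Since $D$ $r$-dominates $z$ in $G$, there is $d\in D$ with $\dist_G(d,z)\le r$. If $d=z$ then $z\in D$ and is trivially dominated; otherwise $d$ and $z$ are two distinct vertices of $Y$ at distance at most $r$ in $G$, so property~(a) of Lemma~\ref{lem:shpclo} yields $\dist_{G[W]}(d,z)=\dist_G(d,z)\le r$. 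Hence $D$ $r$-dominates $Z$ inside $G[W]$, giving $\ds_r(G[W],Z)\le|D|\le k$.

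For the backward implication I would assume $\ds_r(G[W],Z)\le k$ and fix a $(Z,r)$-dominator $D'$ in $G[W]$ with $|D'|\le k$. Since $G[W]$ is an induced subgraph of $G$, distances do not increase when passing from $G[W]$ to $G$, so $D'$ is also a $(Z,r)$-dominator in $G$; therefore $\ds_r(G,Z)\le|D'|\le k$, and since $Z$ is an $r$-domination core this gives $\ds_r(G)=\ds_r(G,Z)\le k$.

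The only genuinely delicate point is the forward direction: a priori the short domination paths from $D$ to $Z$ in $G$ could leave $W$, and then they would be lost in the induced subgraph. This is precisely what Lemma~\ref{lem:shpclo} is designed to prevent, and it applies here because both $D\subseteq Y$ and $Z\subseteq Y$, so every relevant domination path has both endpoints in $Y$. Beyond invoking this lemma together with Claim~\ref{cl:containedY} and the domination-core property, the argument is routine, so I do not anticipate any further obstacle.
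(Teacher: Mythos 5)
Your proof is correct and follows the same approach as the paper: invoke Claim~\ref{cl:containedY} to get a small $r$-dominating set inside $Y$, use property~(a) of Lemma~\ref{lem:shpclo} to transfer domination of $Z$ to $G[W]$, and use the induced-subgraph/domination-core facts for the converse.
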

\begin{proof}
If $\ds_r(G)\leq k$, then by Claim~\ref{cl:containedY} there is also an $r$-dominating set $D$ of $G$ that has size at most $k$ and is contained in $Y$. By Lemma~\ref{lem:shpclo} (property~\eqref{pr:sh-corr}), whenever some vertex $u\in D\subseteq Y$ $r$-dominates some other vertex $v\in Y$ in $G$, then $u$ also $r$-dominates $v$ in $G[W]$. Since $Z\subseteq Y$, we infer that $D$ is an $(Z,r)$-dominator in $G[W]$, and consequently $\ds_r(G[W],Z)\leq k$.

If $\ds_r(G[W],Z)\leq k$, then there is a set $D'\subseteq W$ that has size at most $k$ and $r$-dominates $Z$ in $G[W]$. Then in particular $D'$ is also a $(Z,r)$-dominator in $G$. As $Z$ is an $r$-domination core, we infer that $\ds_r(G)=\ds_r(G,Z)\leq |D'|\leq k$.
\cqed\end{proof}

Finally, Theorem~\ref{thm:mainrbe-reduced} follows from Theorem~\ref{thm:mainrbe-dominator} by applying the reduction of Lemma~\ref{lem:dominator-reduction} to $G[W]$ and set $Z$.

\section{A kernel for nowhere dense graphs}\label{sec:nowhere-dense}

\newcommand{\excbic}{c}
\newcommand{\excbicone}{c'}

In this section we give an almost linear kernel for {\sc{Dominating Set}} in nowhere dense graph classes. In other words, we prove Theorem~\ref{thm:main-nd}. The proof will follow the high-level strategy that was used in Section~\ref{sec:bnd-exp}, but will be technically more complicated. The main reason is that in the nowhere dense setting we lack the analogue of Lemma~\ref{lem:closure}, which was a crucial tool for simplifying the analysis of the instance once sets $X$ and $S$ are constructed, and for lifting the result to {\sc{$r$-Dominating Set}}. Therefore, the proof in this section will be only for $r=1$, and additional tools specific for nowhere dense graph classes will be necessary. 

The proof contained in this section actually follows closely our initial approach for {\sc{Dominating Set}} on classes of bounded expansion, which can be found in an earlier version of this work~\cite{previous}. While the following presentation will be self-contained, the reader might find it useful to read the description contained in~\cite{previous} for bounded expansion classes before approaching this section.

\subsection{Additional preliminaries for nowhere dense graph classes}\label{sec:prelim-nd}

                                                           
In this section we introduce auxiliary definitions and facts about
nowhere dense graph classes that will be needed throughout this section. First, we recall the notion of {\em{weak colorings}}.

\paragraph*{Weak colorings.}
For a graph $G$, let $\Pi(G)$ denote the set of all linear orderings of~$V(G)$.  Given a
graph~$G$, an integer~$r$ and an ordering $\sigma \in \Pi(G)$, we say
that a vertex~$u$ is \emph{weakly $r$-accessible} from a vertex~$v$
in~$\sigma$ if $u<_{\sigma} v$ and there is a path~$P$ of length at
most~$r$ with endpoints~$u$ and~$v$ such that every internal
vertex~$w$ on~$P$ has the property that $u <_{\sigma} w$.  We denote
by $B^{G,\sigma}_r(v)$ the set of vertices that are weakly
$r$-accessible from~$v$ in~$\sigma$.  When~$G$ is clear from context,
we drop it from the superscript and write $B^\sigma_r(v)$.

\begin{definition}[Weak $r$-coloring number]
  \label{def:wcol}
  The \emph{weak $r$-coloring} number of a graph~$G$ is defined as
  \[
  \wcol_r(G) = 1 + \min_{\sigma \in \Pi(G)} \max_{v \in V(G)} | B^\sigma_r (v) | .
  \]
\end{definition}

The weak coloring number of a graph is related to its grads.
We shall need the following upper bound, which follows from
\cite[Proposition~4.8 and Theorem~7.11]{Sparsity}:

\begin{lemma}[\cite{Sparsity}]\label{lem:wcol-grad}
  For any graph $G$, it holds that $\wcol_2(G) \leq (8\grad_1(G)^3
  +1)^2$.
\end{lemma}

\newcommand{\dvorakfun}{f_{\textrm{dv}}}  
\newcommand{\neifun}{f_{\textrm{nei}}}    
\newcommand{\chrgfun}{f_{\textrm{chrg}}}  
\newcommand{\gradfun}{f_{\nabla}}           
\newcommand{\wcolfun}{f_{\wcol}}            
\newcommand{\cnumfun}{f_{\numcliques}}            

\paragraph*{Nowhere dense classes.} We first introduce the definition of a nowhere dense graph class; recall that $\omega(G)$ denotes the size of the largest clique in $G$ and $\omega(\mc G)=\sup_{G\in \mc G} \omega(G)$.

\begin{definition}[Nowhere dense]\label{def:nd}
  A graph class $\mc G$ is \emph{nowhere dense} if there exists a
  function $f_\omega \colon \N \to \N$ such that for
  all~$r$ we have that $\omega(\mc G \nab r) \leq f_\omega(r)$.
\end{definition}

Since cliques have non-constant density, we have that every class of bounded expansion is also nowhere dense; however, the converse is not true~\cite{Sparsity}. We shall mostly rely on the following alternative characterization of nowhere dense graph classes, which follows easily from the following results of~\cite{NesetrilM11a}: Theorem 4.1, points (ii) and (x), and Corollary 4.3.

\begin{proposition}[\cite{NesetrilM11a}]\label{prop:nd-parameter}
Let $\mc G$ be a nowhere dense graph class. Then: 
\begin{enumerate}[(1)]
\item\label{grad} There is a function $\gradfun(r,\eps)$ such that $\grad_0(G')\leq \gradfun(r,\eps)\cdot |G'|^\eps$ for every integer $r\geq 0$, $G'\in \mc G\nab r$, and real $\eps>0$. In particular, $\grad_r(G)\leq \gradfun(r,\eps)\cdot |G|^\eps$ for every integer $r\geq 0$, $G\in \mc G$, and real $\eps>0$.
\item\label{wcol} There is a function $\wcolfun(r,\eps)$ such that $\wcol_r(G)\leq \wcolfun(r,\eps)\cdot |G|^\eps$ for every integer $r\geq 0$, $G\in \mc G$, and real $\eps>0$.
\end{enumerate}
\end{proposition}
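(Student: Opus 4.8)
The plan is to obtain both parts as an immediate consequence of the known characterizations of nowhere dense classes due to \Nesetril and Ossona de Mendez~\cite{NesetrilM11a}; the only thing we add is a routine repackaging of their asymptotic bounds into the uniform form ``$\leq f(r,\eps)\cdot|G|^\eps$ for all graphs'', which is done by absorbing a size threshold into the multiplicative constant.

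For part~\eqref{grad}, I would first record two elementary observations: $G\nab 0$ is precisely the family of subgraphs of $G$, so $\grad_0(G')=\max\{\den(M)\colon M\subseteq G'\}$, and the class $\mc G\nab r$ is closed under taking subgraphs. The substantive input from~\cite{NesetrilM11a} (the edge-density item of Theorem~4.1 together with Corollary~4.3) is that $\mc G$ is nowhere dense in the sense of Definition~\ref{def:nd} if and only if, for every integer $r\geq 0$, the ratio $\log ||M|| / \log |M|$ has limit superior at most $1$ over $M\in\mc G\nab r$ as $|M|\to\infty$. Fixing $r$ and $\eps>0$, this yields a threshold $n_0(r,\eps)$ such that $||M||\leq|M|^{1+\eps}$, hence $\den(M)\leq|M|^\eps$, for every $M\in\mc G\nab r$ with $|M|\geq n_0(r,\eps)$, while for smaller $M$ the trivial bound $\den(M)<n_0(r,\eps)$ takes over. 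Setting $\gradfun(r,\eps):=\max\{n_0(r,\eps),1\}$ therefore gives $\den(M)\leq\gradfun(r,\eps)\cdot|M|^\eps$ for every $M\in\mc G\nab r$. Applying this to each subgraph $M$ of a fixed $G'\in\mc G\nab r$ (all of which lie in $\mc G\nab r$ and satisfy $|M|\leq|G'|$) proves $\grad_0(G')\leq\gradfun(r,\eps)\cdot|G'|^\eps$. The ``in particular'' clause then follows since every $M\in G\nab r$ belongs to $\mc G\nab r$, has $|M|\leq|G|$, and satisfies $\den(M)\leq\grad_0(M)\leq\gradfun(r,\eps)\cdot|M|^\eps\leq\gradfun(r,\eps)\cdot|G|^\eps$; taking the maximum over $M\in G\nab r$ bounds $\grad_r(G)$.

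Part~\eqref{wcol} is handled by the same scheme, now invoking the characterization of nowhere denseness via weak (generalized) coloring numbers in~\cite{NesetrilM11a}: for every $r$, $\wcol_r(G)=|G|^{o(1)}$ as $|G|\to\infty$ over $G\in\mc G$, so for each $\eps>0$ there is a threshold $m_0(r,\eps)$ with $\wcol_r(G)\leq|G|^\eps$ whenever $|G|\geq m_0(r,\eps)$. As $\wcol_r(G)\leq|G|$ holds unconditionally, the bound $\wcol_r(G)\leq\wcolfun(r,\eps)\cdot|G|^\eps$ with $\wcolfun(r,\eps):=\max\{m_0(r,\eps),1\}$ holds for the remaining small graphs as well.

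I do not expect any genuine obstacle here: all the weight of the argument — that Definition~\ref{def:nd} is equivalent to the edge-density and weak-coloring formulations, proved in~\cite{NesetrilM11a} via quasi-wideness and an analysis of shallow and shallow topological minors — is imported wholesale as a black box, and the only point that needs a line of care is the passage from an asymptotic bound to one valid for every graph, which is precisely where the trivial estimates $\den(M)<n_0(r,\eps)$ and $\wcol_r(G)\leq|G|$ are used.
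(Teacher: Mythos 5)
Your proposal is correct and follows exactly the route the paper intends: the statement is cited from~\cite{NesetrilM11a} (Theorem~4.1, points (ii) and (x), and Corollary~4.3), and the paper explicitly remarks that the multiplicative-constant form it states is equivalent to the more customary threshold form in the literature, which is precisely the repackaging you carry out by absorbing $n_0(r,\eps)$ and $m_0(r,\eps)$ into $\gradfun$ and $\wcolfun$ using the trivial bounds $\den(M)<|M|$ and $\wcol_r(G)\leq|G|$.
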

As shown in \cite{NesetrilM11a}, conditions (\ref{grad}) and (\ref{wcol}) are in fact equivalent to $\mc G$ being nowhere dense, provided that $\mc G$ is closed under taking subgraphs.

We remark that in the other literature on the topic, it is customary to use an alternative variant of this statement: for instance, there exists a constant $N^{\wcol}_{r,\eps}$ such that $\wcol_r(G)\leq |G|^\eps$ for any integer $r$, real $\eps$ and graph $G\in \mc G$ such that $|G|\geq N^{\wcol}_{r,\eps}$; see e.g. \cite[Lemma 5.3]{GroheKS14}. Whereas this formulation can be easily seen to be equivalent to ours, we find it more cumbersome to use in the proofs.

\paragraph*{Clique density.} It turns out that the \emph{clique density}, \ie the number of complete subgraphs
in a graph divided by the size of the graph, is an important measure that
that determines the structure of nowhere dense graphs. Recall that 
$\numcliques(G)$ denotes the total number of cliques in $G$.  

\begin{lemma}[Clique density of nowhere dense graph]\label{lem:nd-cliques}
Let $\mc G$ be a nowhere dense class of graphs. Then there exists a function $\cnumfun(r,\eps)$ such that for any $G\in \mc G$, integer $r\geq 0$ and real $\eps>0$, we have that $\numcliques(G\nab r)\leq \cnumfun(r,\eps)\cdot |G|^{1+\eps}$.
\end{lemma}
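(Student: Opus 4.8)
The plan is to use the two structural hallmarks of a nowhere dense class at once: that every shallow minor has bounded \emph{clique number}, and that every shallow minor is subpolynomially sparse in the sense of Proposition~\ref{prop:nd-parameter}. The first property lets us truncate the count to cliques of constant size, and the second controls, via a degeneracy argument, how many cliques of each fixed size there can be. Since for a fixed finite $G$ the set $G\nab r$ is finite up to isomorphism, $\numcliques(G\nab r)$ is to be read as $\max_{M\in G\nab r}\numcliques(M)$, so it suffices to bound $\numcliques(M)$ for an arbitrary fixed $M\in G\nab r$; note that then $M\in\mathcal G\nab r$ and $|M|\le|G|$, and we may assume $|G|\ge 1$.

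First I would isolate an elementary counting estimate. For a graph $H$ and integer $j\ge 1$ write $\#K_j(H)$ for the number of cliques of size $j$ in $H$. If $H$ is $d$-degenerate, then $\#K_j(H)\le |H|\cdot\binom{d}{j-1}\le |H|\cdot(d+1)^{j-1}$: fix a degeneracy ordering of $V(H)$, in which every vertex has at most $d$ neighbours preceding it, charge each clique $Q$ to its last vertex $v$, and note that $Q\setminus\{v\}$ is then a $(j-1)$-element subset of the at most $d$ predecessors of $v$. Since, as recalled in the excerpt, the degeneracy of any graph $H$ is at most $2\grad_0(H)$ (so $H$ is $\lceil 2\grad_0(H)\rceil$-degenerate), this yields $\#K_j(H)\le |H|\cdot(2\grad_0(H)+1)^{j-1}$ for every $j\ge 1$.

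Now fix $M\in G\nab r$ and set $c=f_\omega(r)$, with $f_\omega$ the function from Definition~\ref{def:nd}; then $\omega(M)\le c$, so every clique of $M$ has at most $c$ vertices. If $c\le 1$ then $M$ is edgeless and $\numcliques(M)=|M|\le|G|$, so assume $c\ge 2$. Applying Proposition~\ref{prop:nd-parameter}\eqref{grad} to $M\in\mathcal G\nab r$ with a parameter $\eps'>0$ to be fixed, we get $\grad_0(M)\le\gradfun(r,\eps')\cdot|M|^{\eps'}\le\gradfun(r,\eps')\cdot|G|^{\eps'}$. Summing the estimate above over the admissible clique sizes,
\[
\numcliques(M)=\sum_{j=1}^{c}\#K_j(M)\le |M|\sum_{j=1}^{c}\bigl(2\grad_0(M)+1\bigr)^{j-1}\le c\cdot|G|\cdot\bigl(2\gradfun(r,\eps')\,|G|^{\eps'}+1\bigr)^{c-1}.
\]
Since $|G|^{\eps'}\ge 1$, we have $2\gradfun(r,\eps')|G|^{\eps'}+1\le(2\gradfun(r,\eps')+1)|G|^{\eps'}$, so the right-hand side is at most $c\,(2\gradfun(r,\eps')+1)^{c-1}\cdot|G|^{1+(c-1)\eps'}$. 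Choosing $\eps'=\eps/(c-1)$ makes the exponent equal to $1+\eps$, and then
\[
\cnumfun(r,\eps)=\max\Bigl(1,\ f_\omega(r)\cdot\bigl(2\gradfun\bigl(r,\tfrac{\eps}{f_\omega(r)-1}\bigr)+1\bigr)^{f_\omega(r)-1}\Bigr)
\]
satisfies $\numcliques(M)\le\cnumfun(r,\eps)\cdot|G|^{1+\eps}$ for every $M\in G\nab r$, which gives the claimed bound on $\numcliques(G\nab r)$.

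I do not expect a real obstacle here, but the one point that must be handled correctly — and the actual reason the lemma holds — is that the crude degeneracy bound $\numcliques(H)\le|H|\cdot 2^{d}$ is useless in this regime: here $d$ is essentially $2\grad_0(M)$, which is only guaranteed to be subpolynomial in $|G|$, so $2^{d}$ may be superpolynomial in $|G|$. It is precisely the boundedness of $\omega(M)$ coming from nowhere denseness that lets the sum over clique sizes stop at the constant $c=f_\omega(r)$, keeping the exponent of $\grad_0(M)$ constant so that the $|G|^{\eps'}$ factor can be absorbed by rescaling $\eps'$ to $\eps/(c-1)$. The remaining details — taking the degeneracy bound as the integer $\lceil 2\grad_0(M)\rceil$, the degenerate cases $c\le1$ and $|G|\le1$, and the fact that a shallow minor $M$ of $G$ still lies within the scope of Proposition~\ref{prop:nd-parameter} since $M\in\mathcal G\nab r$ — are routine.
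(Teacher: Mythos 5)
Your proof is correct and is essentially the paper's own argument in a thin disguise: the paper fixes $H\in G\nab r$, uses $\omega(H)\le f_\omega(r)$ to truncate the clique size, takes an ordering where every vertex has at most $\wcolfun(1,\eps')\cdot|G|^{\eps'}$ preceding neighbours (Proposition~\ref{prop:nd-parameter}\eqref{wcol} with $r=1$), and charges each clique to its last vertex in that ordering. Your degeneracy ordering with at most $\lceil 2\grad_0(M)\rceil$ preceding neighbours is exactly such an ordering (the weak $1$-coloring number is the coloring number, i.e.\ degeneracy $+1$, and you obtain the subpolynomial bound on it from Proposition~\ref{prop:nd-parameter}\eqref{grad} plus the degeneracy--grad inequality rather than directly from \eqref{wcol}); the choice $\eps'=\eps/(c-1)$ and the final absorption of the $|G|^{(c-1)\eps'}$ factor are likewise identical.
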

\begin{proof}
Take any $H\in G\nab r$; of course, $|H|\leq |G|$. Since $G\in \mc G$, we have that $\omega(H)\leq f_\omega(r)$. By Proposition~\ref{prop:nd-parameter}, point (\ref{wcol}) applied to $r=1$ and $\eps'=\eps/(f_\omega(r)-1)$, there exists an ordering $\sigma\in \Pi(H)$ such that for each $v\in V(H)$ we have that $B^{H,\sigma}_1(v)=\{u\ \colon\ u<_\sigma v\wedge uv\in E(H)\}$ has size at most $\wcolfun(1,\eps')\cdot |H|^{\eps'}\leq \wcolfun(1,\eps')\cdot |G|^{\eps'}$. For each clique $Q\subseteq V(H)$, let $v_Q$ be the last vertex of $Q$ in $\sigma$. Then we have that $Q\subseteq B^{H,\sigma}_1(v_Q)$. Therefore, for each $v\in V(H)$ we have that the number of cliques $Q\subseteq V(H)$ with $v=v_Q$ is at most
$$\sum_{d=0}^{f_\omega(r)-1} |B^{H,\sigma}_1(v)|^d\leq f_\omega(r)\cdot |B^{H,\sigma}_1(v)|^{f_\omega(r)-1}\leq f_\omega(r)\cdot \wcolfun(1,\eps')^{f_\omega(r)-1}\cdot |G|^{\eps}.$$
The claim follows by summing through all the vertices of $H$ and using the fact that $|H|\leq |G|$. 
\end{proof}

We now use the following result from \cite[Lemma~6.6, arxiv version v2]{BndExpKernels13}
that relates the structure of bipartite graphs to the edge- and
clique-density of its respective graph class. 

\begin{proposition}\label{prop:bipartite-general}
  Let $G=(X,Y,E)$ be a bipartite graph, and let $\mc G_1$ be the family of $1$-shallow minors of $G$ that have at most $|X|$ vertices. Let further $h = \max_{H \in \mc G_1}
  ({\numcliques(H)}/{|H|})$. Then there are at most
  \begin{enumerate}
    \item $2\grad_0(\mc G_1)\cdot|X|$ vertices in $Y$ with degree 
      larger than $\omega(\mc G_1)$;
    \item $\big(h+2\grad_0(\mc G_1)\big)\cdot|X|$ 
      subsets $A \subseteq X$ such that $A = N(u)$ for some 
      $u \in Y$.
  \end{enumerate}
\end{proposition}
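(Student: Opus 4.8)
The plan is to prove both items with a single greedy construction that streamlines the argument of Gajarsk\'y et al.~\cite{BndExpKernels13} behind Lemma~\ref{lem:twin-classes}; the point is to read the bounds off from $\grad_0(\mc G_1)$, $\omega(\mc G_1)$, and $h$ rather than from $\grad_1(G)$. Write $d=\grad_0(\mc G_1)$ and $\omega=\omega(\mc G_1)$. The construction processes a suitable list of vertices of $Y$ one at a time, maintaining a $1$-shallow minor $H$ of $G$ on vertex set $X$: for each $x\in X$ its branch set is $\{x\}$ together with all the vertices of $Y$ assigned to $x$ so far --- each such vertex being a neighbor of $x$, so the branch set induces a star of radius $1$ --- and $v_x$ denotes the corresponding vertex of $H$. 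At every stage $H$ has exactly $|X|$ vertices and is a $1$-shallow minor of $G$, so $H\in\mc G_1$; in particular $||H||\le d\,|X|$, $\omega(H)\le\omega$, and $\numcliques(H)\le h\,|X|$.

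When a vertex $u\in Y$ is processed, I would inspect the set $\{v_x\colon x\in N(u)\}$ in the current $H$. If these vertices do not already form a clique of $H$, I pick $x_1,x_2\in N(u)$ with $v_{x_1}\not\sim v_{x_2}$, assign $u$ to $x_1$, and note that this creates the previously absent edge $v_{x_1}v_{x_2}$ (now $u$ lies in the branch set of $x_1$ and $u\sim x_2$ in $G$); I call such a $u$ of \emph{edge type}. Otherwise $\{v_x\colon x\in N(u)\}$ is already a clique of $H$; I call $u$ of \emph{clique type} and leave $H$ unchanged. Two observations then drive the count. First, edges of $H$ are never deleted, so every edge-type vertex strictly increases $||H||$; hence at most $||H||\le d\,|X|$ of the processed vertices are of edge type. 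Second, a clique of $H\in\mc G_1$ has at most $\omega$ vertices, so any $u$ with $|N(u)|>\omega$ is necessarily of edge type.

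For item (1) I would run this on $Y_{\mathrm h}=\{u\in Y\colon\deg(u)>\omega\}$; by the second observation every processed vertex is of edge type, so $|Y_{\mathrm h}|\le ||H||\le d\,|X|\le 2\grad_0(\mc G_1)\,|X|$. For item (2) I would fix a set $Y^\star\subseteq Y$ containing exactly one vertex per distinct neighborhood, so that $|Y^\star|$ is precisely the quantity to be bounded, and run the construction on $Y^\star$. Its edge-type vertices number at most $d\,|X|$; and for a clique-type $u$, the set $\{v_x\colon x\in N(u)\}$ is a clique of the current $H$, hence also of the final $H$, and since the members of $Y^\star$ have pairwise distinct neighborhoods these cliques are pairwise distinct, so there are at most $\numcliques(H)\le h\,|X|$ of them. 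Summing yields $|Y^\star|\le(h+d)\,|X|\le(h+2\grad_0(\mc G_1))\,|X|$. The only genuinely delicate point is item (2): one must record all the clique-type neighborhoods as distinct cliques of a \emph{single} $1$-shallow minor on only $|X|$ vertices, which is exactly what maintaining $H$ on vertex set $X$ and choosing distinct representatives $Y^\star$ accomplishes; the remaining checks (branch sets stay radius-$1$ stars, the branch sets stay disjoint, enlarging a branch set only adds edges to $H$) are immediate. This is essentially a repackaging of~\cite[Lemma~6.6]{BndExpKernels13}.
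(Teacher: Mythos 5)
The paper does not prove Proposition~\ref{prop:bipartite-general} itself; as the sentence preceding it makes explicit, it is imported verbatim from \cite[Lemma~6.6]{BndExpKernels13}, so there is no in-paper argument to compare against. That said, your proof is correct, and --- as you rightly flag at the end --- it is essentially the argument behind the cited lemma. Growing a $1$-shallow minor $H$ on exactly $|X|$ vertices whose branch sets are pairwise-disjoint radius-$1$ stars rooted at $X$, and classifying each processed $y\in Y$ as \emph{edge type} (when $\{v_x : x\in N(y)\}$ is not a clique of $H$, in which case attaching $y$ to some $x_1\in N(y)$ strictly enlarges $E(H)$) or \emph{clique type} (when it is, in which case record the clique), is exactly the right mechanism. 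Both counting steps are sound: $H\in\mc G_1$ throughout since $|H|=|X|$, so $||H||\le\grad_0(\mc G_1)\cdot|X|$ caps the number of edge-type vertices; and since edges only accumulate, the clique-type neighborhoods of the distinct representatives in $Y^\star$ give pairwise-distinct cliques of the final $H$, capped by $\numcliques(H)\le h|X|$. You in fact obtain the sharper constants $\grad_0(\mc G_1)\cdot|X|$ and $(h+\grad_0(\mc G_1))\cdot|X|$; the extra factor of two in the stated bounds is slack your argument does not need.
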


With these tools at hand, we can prove the following important
lemma that serves the role of Lemma~\ref{lem:twin-classes}
in the nowhere dense case.

\begin{lemma}[Twin classes]\label{lem:twin-classes2}
  Let $\mc G$ be nowhere dense graph class. Then there exists a function $\neifun(\cdot)$
  such that for any graph $G\in \mc G$, any nonempty vertex subset $X\subseteq V(G)$ and any $\eps > 0$, the following holds:
  \[
  | \{ A\subseteq X \colon \exists_{v\in V\setminus X}\ A=N_X(v) \}| \leq \neifun(\eps)\cdot |X|^{1+\eps}.
  \]
\end{lemma}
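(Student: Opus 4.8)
The plan is to reduce the statement to the bipartite setting of Proposition~\ref{prop:bipartite-general}. Fix $G\in\mc G$, a nonempty $X\subseteq V(G)$, and $\eps>0$, and set $Y=V(G)\setminus X$. I form the bipartite graph $B$ with sides $X$ and $Y$ and edge set $E(G)\cap(X\times Y)$; then $N_B(v)=N_X(v)$ for every $v\in Y$, so the quantity to be bounded equals $|\{\,N_B(v)\colon v\in Y\,\}|$. Since $B$ is a subgraph of $G$ and shallow minors are monotone under taking subgraphs, every $1$-shallow minor of $B$ is a $1$-shallow minor of $G$; in particular the auxiliary family $\mc G_1$ from Proposition~\ref{prop:bipartite-general} --- the $1$-shallow minors of $B$ on at most $|X|$ vertices --- satisfies $\mc G_1\subseteq\mc G\nab 1$, and every member of $\mc G_1$ has at most $|X|$ vertices.

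Next I would bound the two parameters appearing in Proposition~\ref{prop:bipartite-general}, the crucial point being to keep everything in terms of $|X|$ rather than the global $|G|$. For $\grad_0(\mc G_1)$: every $H\in\mc G_1$ lies in $\mc G\nab 1$, so Proposition~\ref{prop:nd-parameter}, point~(\ref{grad}) (with $r=1$) gives $\grad_0(H)\le\gradfun(1,\eps)\cdot|H|^{\eps}\le\gradfun(1,\eps)\cdot|X|^{\eps}$, whence $\grad_0(\mc G_1)\le\gradfun(1,\eps)\cdot|X|^{\eps}$. For $h=\max_{H\in\mc G_1}\numcliques(H)/|H|$: note that $\mc G\nab 1$ is itself nowhere dense and closed under taking subgraphs (immediate, using Proposition~\ref{prop:comp-of-grads}), so Lemma~\ref{lem:nd-cliques} may be applied to it; equivalently, the proof of Lemma~\ref{lem:nd-cliques} directly yields $\numcliques(H)\le\cnumfun(1,\eps)\cdot|H|^{1+\eps}$ for every $H\in\mc G\nab 1$, since that argument bounds $\numcliques(H)$ in terms of $|H|$ and only afterwards relaxes $|H|\le|G|$. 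Hence $\numcliques(H)/|H|\le\cnumfun(1,\eps)\cdot|H|^{\eps}\le\cnumfun(1,\eps)\cdot|X|^{\eps}$, so $h\le\cnumfun(1,\eps)\cdot|X|^{\eps}$. (The remaining parameter $\omega(\mc G_1)\le\omega(\mc G\nab 1)\le f_\omega(1)$ is a constant and is used only inside Proposition~\ref{prop:bipartite-general}.)

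Plugging these estimates into part~(2) of Proposition~\ref{prop:bipartite-general} gives
\[
  |\{\, A\subseteq X\colon \exists_{v\in V\setminus X}\ A=N_X(v)\,\}|
  = |\{\, N_B(v)\colon v\in Y\,\}|
  \le \bigl(h+2\grad_0(\mc G_1)\bigr)\cdot|X|
  \le \bigl(\cnumfun(1,\eps)+2\gradfun(1,\eps)\bigr)\cdot|X|^{1+\eps}.
\]
Setting $\neifun(\eps)=\cnumfun(1,\eps)+2\gradfun(1,\eps)$ (one may add $1$ to be safe about whether the empty neighborhood is counted; this is harmless since $1\le|X|^{1+\eps}$) completes the proof.

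The only genuinely delicate point is the one highlighted above: every density estimate must be expressed in terms of $|X|$, not $|G|$. This is exactly why Proposition~\ref{prop:bipartite-general} is phrased through the family $\mc G_1$ of shallow minors of bounded size, why Proposition~\ref{prop:nd-parameter}, point~(\ref{grad}) is already stated for members of $\mc G\nab r$ with the size bound carried by the shallow minor itself, and why one needs the ``$|H|$-version'' of Lemma~\ref{lem:nd-cliques} rather than its literal statement. Everything else is routine substitution.
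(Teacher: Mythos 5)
Your proof is correct and follows the paper's argument step for step: same bipartite graph, same bounds for $h$ and $\grad_0(\mc G_1)$ via Lemma~\ref{lem:nd-cliques} and Proposition~\ref{prop:nd-parameter}, same invocation of Proposition~\ref{prop:bipartite-general}(2), and the same constant $\neifun(\eps)=\cnumfun(1,\eps)+2\gradfun(1,\eps)$. Your explicit remark that the clique-density estimate must be read out of the \emph{proof} of Lemma~\ref{lem:nd-cliques} so as to obtain $\numcliques(H)\le\cnumfun(1,\eps)\cdot|H|^{1+\eps}$ per member $H$, rather than from the lemma's literal statement in terms of $|G|$, is exactly what the paper uses implicitly when it places $\cnumfun(1,\eps)|H|^{1+\eps}$ inside the supremum.
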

\begin{proof}
  We would like to use the second bound of Proposition~\ref{prop:bipartite-general}. Fix $\eps > 0$,
  a graph $G \in \mc G$ and a nonempty vertex set $X \subseteq G$. Let $G_0$ be the bipartite graph $(X,V(G)\setminus X,E(G)\cap (X\times (V(G)\setminus X)))$.
  To obtain the sought bound, we need bounds on the 
  quantities $h:=\sup_{H \in \mc G_1}(\numcliques(H)/|H|)$
  and $\grad_0(\mc G_1)$, where $\mc G_1$ is defined for $G_0$ as in Proposition~\ref{prop:bipartite-general}.

  Since $G_0$ is a subgraph of $G$, we have that $\mc G_1\subseteq G\nab 1$. Hence, from Lemma~\ref{lem:nd-cliques} we obtain
  \begin{equation}\label{twin1}
    h =\sup_{H \in \mc G_1}\frac{\numcliques(H)}{|H|} 
    \leq \sup_{\substack{H \in G \nab 1 \\ |H| \leq |X|}}
           \frac{\cnumfun(1,\eps)|H|^{1+\eps}}{|H|}
    \leq f_{\numcliques}(1,\eps) \cdot |X|^\eps.
  \end{equation}
  The bound of the grad follows directly from Proposition~\ref{prop:nd-parameter}, point (\ref{grad}): 
  \begin{equation}\label{twin2}
    \grad_0( \mc G_1 ) = \sup_{\substack{H \in G \nab 1 \\ |H| \leq |X|}} \grad_0(H)
    \leq \gradfun(1,\eps) \cdot |X|^\eps.
  \end{equation}
  By plugging (\ref{twin1}) and (\ref{twin2}) in upper bound of Proposition~\ref{prop:bipartite-general} (2), we obtain that
  \[
    | \{ A\subseteq X \colon \exists_{v\in V\setminus X}\ A=N_X(v) \}| \leq (h + 2\grad_0(\mc G_1)) \cdot |X|
      \leq (\cnumfun(1,\eps) + 2\gradfun(1,\eps)) \cdot |X|^{1+\eps}.
  \]
  Hence we can set $\neifun(\eps)=\cnumfun(1,\eps) + 2\gradfun(1,\eps)$.
\end{proof}

Using Lemma~\ref{lem:twin-classes2}, we can now prove the following result, which intuitively says that not only the number of neighborhoods in a graph from a nowhere dense class is small, but also these neighborhoods are ``uniformly distributed''. This lemma is a nowhere dense analogue of a result we called the ``charging lemma'' for bounded expansion classes, and which was used in the previous version of this work~\cite{previous}. Due to the introduction of Lemma~\ref{lem:closure}, the charging lemma for bounded expansion classes is no longer needed, and hence we omit it. However, in the nowhere dense case we still need this result.

\begin{lemma}\label{lem:orient2}
  Let $\mc G$ be a nowhere dense graph class. Then there exists a function $\chrgfun(\cdot)$ such that the following holds. For any $\eps>0$ and any bipartite graph $G=(X,Y,E)\in \mc G$ such that every vertex from $Y$ has a nonempty neighborhood in $X$ and no two vertices of $Y$ have the same neighborhood in $X$, there exists a mapping $\phi \colon Y \to X$ with the following properties:
  \begin{itemize}
    \item $u\phi(u)\in E$ for each $u\in Y$;
    \item $|\phi^{-1}(v)| \leq \chrgfun(\eps) \cdot |G|^{\eps} $ for each $v\in X$.
  \end{itemize}
\end{lemma}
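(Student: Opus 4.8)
The plan is to replay, in the nowhere dense regime, the argument behind the charging lemma for bounded expansion classes used in the previous version of this work~\cite{previous}: order $V(G)$ by a near-optimal weak $2$-colouring order $\sigma$ and send each $u\in Y$ to its $\sigma$-last neighbour. The two \emph{constant} bounds used there — on the weak $2$-colouring number and on the number of distinct neighbourhoods inside a set — get replaced by the almost-linear bounds available here, namely Proposition~\ref{prop:nd-parameter} point~(\ref{wcol}) and Lemma~\ref{lem:twin-classes2}. Concretely: fix $\eps>0$, put $\eps'=\min(1,\eps/2)$ (so that $\eps'(1+\eps')\le\eps$), and assume $|G|\ge 1$, since otherwise the claim is vacuous. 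By Proposition~\ref{prop:nd-parameter} point~(\ref{wcol}) there is $\sigma\in\Pi(G)$ with $|B^\sigma_2(v)|\le \wcol_2(G)-1\le \wcolfun(2,\eps')\cdot|G|^{\eps'}$ for all $v\in V(G)$. Define $\phi(u)$ to be the $\sigma$-last vertex of $N(u)$; this is well defined because $N(u)\neq\emptyset$ for every $u\in Y$, and $u\phi(u)\in E$ by construction, so the first requirement holds.

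Next I would bound $|\phi^{-1}(v)|$ for a fixed $v\in X$ by splitting its elements according to whether they precede or follow $v$ in $\sigma$ (they cannot equal $v$, since $u\in Y$ and $v\in X$). If $u\in\phi^{-1}(v)$ and $u<_\sigma v$, then the edge $uv$ witnesses $u\in B^\sigma_1(v)\subseteq B^\sigma_2(v)$, so there are at most $\wcolfun(2,\eps')\cdot|G|^{\eps'}$ such $u$. If instead $v<_\sigma u$, then every $w\in N(u)\setminus\{v\}$ satisfies $w<_\sigma v<_\sigma u$ (as $v$ is the $\sigma$-last neighbour of $u$), so the length-$2$ path $w-u-v$ witnesses $w\in B^\sigma_2(v)$; since $u\in Y$ gives $N(u)\subseteq X$, we get $N(u)\subseteq X':=(B^\sigma_2(v)\cap X)\cup\{v\}$. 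Here $X'\subseteq X$, hence $u\in V(G)\setminus X'$ and $N_{X'}(u)=N(u)=N_X(u)$; as $Y$ is twin-free, distinct such $u$ yield distinct $X'$-neighbourhoods, so their number is at most the number of distinct $X'$-neighbourhoods of vertices outside $X'$. Since $X'$ is nonempty (it contains $v$) and $|X'|\le(\wcolfun(2,\eps')+1)\cdot|G|^{\eps'}$, Lemma~\ref{lem:twin-classes2} bounds this by $\neifun(\eps')\cdot|X'|^{1+\eps'}\le \neifun(\eps')\,(\wcolfun(2,\eps')+1)^{1+\eps'}\cdot|G|^{\eps'(1+\eps')}$.

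Finally, adding the two estimates and using $|G|\ge 1$ together with $\eps'\le\eps'(1+\eps')\le\eps$ gives
\[
|\phi^{-1}(v)|\ \le\ \Big(\wcolfun(2,\eps')+\neifun(\eps')\,(\wcolfun(2,\eps')+1)^{1+\eps'}\Big)\cdot|G|^{\eps},
\]
so one sets $\chrgfun(\eps)$ equal to the parenthesised expression, which depends only on $\eps$ (and on the fixed grad- and clique-density parameters of $\mc G$). The proof itself is short because the real work sits inside Lemma~\ref{lem:twin-classes2} and Proposition~\ref{prop:nd-parameter}; the only delicate point is the exponent bookkeeping — the $|X'|^{1+\eps'}$ factor coming from Lemma~\ref{lem:twin-classes2} inflates the exponent from $\eps'$ to $\eps'(1+\eps')$, which is why $\eps'$ must be fixed as a function of $\eps$ up front, and why recording the harmless assumption $|G|\ge 1$ (so that $|G|^{\eps'}\ge 1$) is needed for the powers of $|G|$ to combine in the intended direction.
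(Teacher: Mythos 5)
Your proposal is correct and follows essentially the same strategy as the paper's proof: order by a near-optimal weak $2$-colouring ordering, send each $u\in Y$ to its $\sigma$-last neighbour, and split $\phi^{-1}(v)$ by position relative to $v$, bounding the ``before'' part by $|B^\sigma_2(v)|$ and the ``after'' part via the twin-count bound of Lemma~\ref{lem:twin-classes2}. The only cosmetic differences are that you invoke Proposition~\ref{prop:nd-parameter}(\ref{wcol}) directly rather than going through Lemma~\ref{lem:wcol-grad} and Proposition~\ref{prop:nd-parameter}(\ref{grad}), and you apply Lemma~\ref{lem:twin-classes2} to $G$ itself with subset $X'=(B^\sigma_2(v)\cap X)\cup\{v\}$ (with parameter $\eps'$) rather than, as the paper does, to the bipartite subgraph induced between $B^\sigma_2(v)$ and the ``after'' set with parameter $1$; your variant avoids needing $\mc G$ closed under subgraphs, a hypothesis the paper introduces explicitly for its version of the step, though both routes yield the same $|G|^{\eps}$ conclusion after appropriately rescaling $\eps$.
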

\begin{proof}
Without loss of generality assume that $\mc G$ is closed under taking subgraphs, since otherwise we can consider the closure of $\mc G$ under this operation, which is also nowhere dense.

Let us fix $G=(X,Y,E)$ and $\eps>0$. Using Lemma~\ref{lem:wcol-grad} we infer that there exists an ordering $\sigma\in \Pi(G)$ such that for every vertex $v$, we have $|B^\sigma_2(v)|\leq (8\grad_1(G)^3+1)^2$. By applying Proposition~\ref{prop:nd-parameter}, point (\ref{grad}), for $r=1$ and $\eps/12$, we obtain that $|B^\sigma_2(v)|\leq f_0(\eps)\cdot |G|^{\eps/2}$, for some value $f_0(\eps)$ depending on $\gradfun(1,\eps/12)$.

Construct~$\phi\colon Y\to X$ as follows: for every $u \in Y$, set~$\phi(u)$ to that
  vertex of~$N(u)$ that is last in~$\sigma$; note that the validity of
  this definition is asserted by the assumption that $Y$ does not
  contain isolated vertices.  The first condition is trivially
  satisfied by~$\phi$, so we proceed to proving the second one.

Fix a vertex~$v\in X$ and consider all the vertices~$u$ with $\phi(u) = v$. Let 
$$U_v^-=\{u\ \colon\ u\in Y\,\wedge\, \phi(u)=v\,\wedge\, u<_\sigma v\}\textrm{ and }U_v^+=\{u\ \colon\ u\in Y\,\wedge\, \phi(u)=v\,\wedge\, v<_\sigma u\}.$$
Clearly we have that $U_v^-\subseteq B^\sigma_1(v)$ and hence
$$|U_v^-|\leq |B^\sigma_1(v)|\leq |B^\sigma_2(v)|\leq f_0(\eps)\cdot |G|^{\eps/2}.$$ 
Since $v=\phi(u)$ was chosen to be the last vertex of $N(u)$ in $\sigma$, for every vertex $u\in U_v^+$ we have that $N(u)\subseteq B^\sigma_2(v)\cup \{v\}$. Since every pair of vertices in $Y$ have different neighborhoods in $X$, we can apply Lemma~\ref{lem:twin-classes2} to the bipartite graph induced in $G$ between $B^\sigma_2(v)$ and $U_v^+$ (note that this graph belongs to $\mc G$ since $\mc G$ is closed under taking subgraphs) and parameter $1$, and conclude that 
$$|U_v^+|\leq \neifun(1)\cdot |B^\sigma_2(v)|^2\leq \neifun(1)\cdot f_0(\eps)^2\cdot |G|^{\eps}.$$
Concluding,
$$|\phi^{-1}(v)|=|U_v^-|+|U_v^+|\leq f_0(\eps)\cdot |G|^{\eps/2}+\neifun(1)\cdot f_0(\eps)^2\cdot |G|^{\eps}.$$
Hence we can take $\chrgfun(\eps)=f_0(\eps)+\neifun(1)\cdot f_0(\eps)^2$.
\end{proof}

Finally, we state the variant of \Dvorak's algorithm suitable for nowhere dense graphs. The following lemma follows directly from plugging the bound of Proposition~\ref{prop:nd-parameter}, point (\ref{grad}), into Lemma~\ref{lem:better-dvorak} (for $r=1$).

\begin{lemma}\label{lem:better-dvorak-nd}
Let $\mc G$ be a nowhere dense class of graphs. Then there exists a function $\dvorakfun(\cdot)$ and a polynomial-time algorithm that, given a graph $G\in \mc G$, a vertex subset $Z\subseteq V(G)$ and an integer $k$, finds either:
\begin{itemize}
  \item a $Z$-dominator in $G$ whose size is bounded by $\dvorakfun(\eps)\cdot k\cdot |G|^\eps$, for every $\eps>0$, or 
  \item a subset of~$Z$ of size at least~$k+1$ that is $2$-scattered in $G$.
\end{itemize}
\end{lemma}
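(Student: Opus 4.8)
The plan is to simply combine Lemma~\ref{lem:better-dvorak}, instantiated at radius $r=1$, with the polynomial bound on $\grad_1$ available in nowhere dense classes from Proposition~\ref{prop:nd-parameter}(\ref{grad}); the argument is therefore short, and the algorithm will be literally the one from Lemma~\ref{lem:better-dvorak}.

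First I would run the algorithm of Lemma~\ref{lem:better-dvorak} on the input $G$, $Z$, $k$ with $r=1$. In polynomial time it returns either a $Z$-dominator $D$ in $G$ with $|D|\le \widetilde P_1(\grad_1(G))\cdot k$, or a subset of $Z$ of size at least $k+1$ that is $2$-scattered in $G$. In the second case we are already done, as this is precisely one of the two allowed outputs of Lemma~\ref{lem:better-dvorak-nd}; so assume henceforth that the first case occurred, and it remains to bound $\widetilde P_1(\grad_1(G))$ in terms of $|G|$.

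Let $d$ be the degree of the polynomial $\widetilde P_1$; note that $\widetilde P_1$ is a fixed polynomial depending only on the constant $r=1$, hence independent of $\mc G$, so $d$ is a genuine constant. If $d=0$ then $\widetilde P_1$ is constant and the desired bound is immediate, so assume $d\ge 1$. Since $\widetilde P_1$ arises as $x\mapsto 2P_1(x+2)$ and hence has nonnegative coefficients, there is a constant $C$ depending only on $\widetilde P_1$ with $\widetilde P_1(x)\le C(1+x)^d$ for all $x\ge 0$. Now apply Proposition~\ref{prop:nd-parameter}(\ref{grad}) with $r=1$ and $\eps'=\eps/d$, obtaining $\grad_1(G)\le \gradfun(1,\eps/d)\cdot |G|^{\eps/d}$. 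Plugging this in and using $1+\gradfun(1,\eps/d)\,|G|^{\eps/d}\le \bigl(1+\gradfun(1,\eps/d)\bigr)\,|G|^{\eps/d}$, which is valid since $|G|\ge 1$, gives
\[
\widetilde P_1(\grad_1(G))\;\le\; C\bigl(1+\gradfun(1,\eps/d)\bigr)^d\cdot |G|^{\eps}.
\]
Hence setting $\dvorakfun(\eps)=C\bigl(1+\gradfun(1,\eps/d)\bigr)^d$ — a quantity depending only on $\eps$, and on $\mc G$ only through the fixed functions $\widetilde P_1$ and $\gradfun$ — we get $|D|\le \dvorakfun(\eps)\cdot k\cdot |G|^{\eps}$, as required.

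There is essentially no obstacle in this proof; the only points that deserve a moment of attention are that $\widetilde P_1$ is a fixed polynomial independent of $\mc G$ (so that $C$ and $d$ are legitimate constants), and that the factor-of-$d$ loss in the exponent incurred by raising $\grad_1(G)$ to the $d$-th power is absorbed by invoking Proposition~\ref{prop:nd-parameter}(\ref{grad}) with parameter $\eps/d$ rather than $\eps$.
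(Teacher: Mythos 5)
Your proof is correct and follows exactly the route the paper indicates: the paper's entire justification for Lemma~\ref{lem:better-dvorak-nd} is the one-line remark that it ``follows directly from plugging the bound of Proposition~\ref{prop:nd-parameter}(\ref{grad}) into Lemma~\ref{lem:better-dvorak} (for $r=1$),'' and you have simply spelled out the arithmetic (choosing the exponent $\eps/d$ to absorb the degree-$d$ polynomial). One cosmetic remark: the claim that $\widetilde P_1$ has nonnegative coefficients is neither established nor needed — any fixed degree-$d$ polynomial satisfies $\widetilde P_1(x)\le C(1+x)^d$ on $[0,\infty)$ for a suitable constant $C$, so the bound goes through regardless.
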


\subsection{Setting up the proof}

We proceed to the proof of Theorem~\ref{thm:main-nd}. From now on, we assume that $\mc G$ is a fixed nowhere dense graph class. Without loss of generality we assume that $\mc G$ is closed under taking subgraphs, since otherwise we may consider the closure of $\mc G$ under this operation, which is also nowhere dense. We fix all the functions given by Proposition~\ref{prop:nd-parameter} and Lemmas~\ref{lem:twin-classes2},~\ref{lem:orient2},~\ref{lem:better-dvorak-nd} for the class $\mc G$. Observe that the class $\mc G \nab 1$ is also nowhere dense, hence we can apply these results also to this class. We therefore fix also the functions given by Proposition~\ref{prop:nd-parameter} and Lemmas~\ref{lem:twin-classes2},~\ref{lem:orient2},~\ref{lem:better-dvorak-nd} for $\mc G \nab 1$, and we shall denote them by $\gradfun^1(\cdot,\cdot)$, $\neifun^1(\cdot)$, $\chrgfun^1(\cdot)$ etc. Moreover, since $\mc G$ is nowhere dense, there exist constants $\excbic$ and $\excbicone$ such that $K_{\excbic,\excbic}\notin \mathcal{G}\nab 0$ and $K_{\excbicone,\excbicone}\notin \mathcal{G}\nab 1$; in the following we shall use these constants extensively.

We also fix the real value $\eps>0$ for which the algorithm is constructed. Recall that Theorem~\ref{thm:main-nd} asserts the existence of an algorithm for each fixed value of $\eps$, and not an algorithm that gets $\eps$ on the input. Thus, the values of functions given by Proposition~\ref{prop:nd-parameter} and Lemmas~\ref{lem:twin-classes2},~\ref{lem:orient2},~\ref{lem:better-dvorak-nd} for classes $\mc G$ and $\mc G\nab 1$ applied to any fixed $\eps'$ depending on $\eps$ can be hard-coded in the algorithm, and do not need to be computed. If we would like to implement one algorithm that works for $\eps$ given on the input, then we would need to assume that class $\mc G$ is {\em{effectively nowhere dense}}, that is, that function $f_\omega(r)$ in Definition~\ref{def:nd} is computable~\cite{DawarK09}. Then we would be able to derive that all the functions introduced in Section~\ref{sec:prelim-nd} are computable as well.

Let $(G,k)$ be the input instance of {\sc{Dominating Set}} such that $G\in \mathcal{G}$. We denote $n=|G|$. 

\subsection{Reducing dominatees}

Exactly as in Section~\ref{sec:bnd-exp}, we are going to reduce the number of vertices whose domination is essential in the graph to almost linear in terms of $k$. More formally, we are going to find domination core that has size bounded by $g(\eps)\cdot k\cdot n^{\eps}$, for some function $g(\cdot)$ and every $\eps>0$. In this proof we shall use the same definition of a domination core as in Section~\ref{sec:bnd-exp}, but restricted to $r=1$:

\begin{definition}
  Let~$G$ be a graph and~$Z$ be a subset of vertices.  We say that~$Z$
  is a \emph{domination core} in~$G$ if 
  every minimum-size $Z$-dominator in~$G$ is also a dominating set
  in~$G$.
\end{definition}

Mirroring Theorem~\ref{thm:dcore}, we prove the following result:

\begin{theorem}\label{thm:dcore-nd}
There exists a function $g(\cdot)$ such that for every $\eps > 0$ there exists a polynomial-time algorithm that, given an instance $(G,k)$ where $G\in \mathcal{G}$, either correctly concludes that $\ds(G) > k$, or finds a domination core $Z \subseteq V(G)$ with $|Z| \leq g(\eps)\cdot k\cdot n^{\eps}$.
\end{theorem}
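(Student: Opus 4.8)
The plan is to follow, for $r=1$, the two-phase strategy of the proof of Theorem~\ref{thm:dcore} in Section~\ref{sec:bnd-exp}, with every bounded-expansion tool replaced by its nowhere dense counterpart from Section~\ref{sec:prelim-nd}. As there, it suffices to prove a one-vertex reduction: a polynomial-time algorithm that, given a domination core $Z\subseteq V(G)$ with $|Z|>g(\eps)\cdot k\cdot n^{\eps}$, either certifies $\ds(G)>k$ or outputs a vertex $z\in Z$ such that $Z\setminus\{z\}$ is again a domination core; starting from $Z=V(G)$ and applying this at most $n$ times then proves Theorem~\ref{thm:dcore-nd}. Throughout I work with a small auxiliary exponent $\eps_1=\eps_1(\eps)>0$, chosen only at the very end so that every $n^{O(\eps_1)}$ factor accumulated below collapses into $n^{\eps}$; since $\eps$, hence $\eps_1$, is a fixed constant, all the functions of Section~\ref{sec:prelim-nd}, evaluated at arguments depending only on $\eps_1$, are constants hard-wired in the algorithm.

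\emph{Phase 1: extracting a pair $(X,S)$.} The target is a set $X$ that $1$-dominates $Z$, has size $O(k\,n^{O(\eps_1)})$, and such that $G-X$ contains a $2$-scattered subset $S\subseteq Z\setminus X$ whose size exceeds $C_0$ times the number of classes of the relation $u\simeq v\iff N_X(u)=N_X(v)$ on $V(G)\setminus X$, for a suitably large constant $C_0$ fixed later; by Lemma~\ref{lem:twin-classes2} this number of classes is at most $\neifun(\eps_1)\,|X|^{1+\eps_1}$, so it suffices to guarantee $|S|>C_0\,\neifun(\eps_1)\,|X|^{1+\eps_1}$. I obtain $(X,S)$ by iterating the approximation algorithm of Lemma~\ref{lem:better-dvorak-nd}. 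First run it on $(G,Z,k)$: if it returns a $2$-scattered set of size $k+1$, report $\ds(G)>k$; otherwise let $Y_1$ be the returned $Z$-dominator, of size $O(k\,n^{\eps_1})$. Having built $Y_i$, run it on $(G-Y_i,\ Z\setminus Y_i,\ p_i)$ with $p_i:=\lceil C_0\,\neifun(\eps_1)\,|Y_i|^{1+\eps_1}\rceil$: if it returns a $2$-scattered set $S$ of size larger than $p_i$, stop and set $X:=Y_i$; otherwise absorb the returned dominator $D_{i+1}$ into $Y_{i+1}:=Y_i\cup D_{i+1}$ and continue. The place where the Closure Lemma (Lemma~\ref{lem:closure}) is unavailable is the bound on the number of rounds; here it is obtained not via projections but via the excluded biclique. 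Set $t^\ast:=\omega(\mc G\nab 1)+1$, a constant. If $t^\ast$ rounds passed without a large scattered set, then every vertex of $Z\setminus Y_{t^\ast}$ has a distinct neighbour in each of the pairwise disjoint layers $D_1,\dots,D_{t^\ast}$ (with $D_1=Y_1$), hence more than $\omega(\mc G\nab 1)$ neighbours in $Y_{t^\ast}$; since the relevant $1$-shallow minors lie in $\mc G\nab 1$, the first bound of Proposition~\ref{prop:bipartite-general} would force $|Z\setminus Y_{t^\ast}|\le 2\gradfun(1,\eps_1)\,|Y_{t^\ast}|^{1+\eps_1}$, and therefore $|Z|=O(k\,n^{O(\eps_1)})$, contradicting $|Z|>g(\eps)\,k\,n^{\eps}$. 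As $t^\ast=O(1)$, iterating the estimate $|Y_{i+1}|\le|Y_i|+\dvorakfun(\eps_1)\,p_i\,n^{\eps_1}=O\big(|Y_i|^{1+\eps_1}n^{O(\eps_1)}\big)$ a constant number of times, starting from $|Y_1|=O(k\,n^{\eps_1})$ and using $k\le n$, keeps $|X|=O(k\,n^{O(\eps_1)})$. The procedure thus outputs in polynomial time a pair $(X,S)$ with $X$ a $Z$-dominator of size $O(k\,n^{O(\eps_1)})$, with $S\subseteq Z\setminus X$ that is $2$-scattered in $G-X$, and with $|S|>C_0\,\neifun(\eps_1)\,|X|^{1+\eps_1}$, which is at least $C_0$ times the number of $\simeq$-classes.

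\emph{Phase 2: finding an irrelevant dominatee.} Partition $V(G)\setminus X$ by $\simeq$; by the choice of the parameters $p_i$ in Phase~1 some class $\kappa$ contains more than any prescribed constant number of vertices of $S$. In the bounded-expansion argument any vertex of such a $\kappa$ was already irrelevant, because the Closure Lemma bounded $|\pr{3}{u}{X}|$ by a constant; this fails here, so I would instead build an auxiliary \emph{class graph}. Using the charging lemma (Lemma~\ref{lem:orient2}) on the twin-free bipartite graph between $X$ and a system of representatives of the $\simeq$-classes, choose for every class $\kappa'$ a neighbour $\phi(\kappa')\in N_X(\kappa')$ so that every vertex of $X$ is picked for at most $n^{O(\eps_1)}$ classes; then let $H$ have the $\simeq$-classes as vertices, with $\kappa_1$ joined to $\kappa_2$ whenever $\phi(\kappa_1)$ and $\phi(\kappa_2)$ are linked in $G$ by a short walk whose interior avoids $X$. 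The near-injectivity of $\phi$, combined with the density and clique-density estimates of Proposition~\ref{prop:nd-parameter} and Lemma~\ref{lem:nd-cliques} applied to an appropriate shallow minor of $G$, shows that $H$ is sparse, with average degree $n^{O(\eps_1)}$. A potential (averaging) argument over $H$ then locates a class $\kappa$ for which $|\kappa\cap S|$ exceeds $\deg_H(\kappa)$ plus a fixed constant, by as wide a margin as desired. I claim any $z\in\kappa$ is an irrelevant dominatee, via an exchange argument mirroring Lemma~\ref{lem:irrelevant}: from a minimum-size $(Z\setminus\{z\})$-dominator $D$ that does not dominate $z$, the vertices $v(s)\in D$ dominating the many $s\in\kappa\cap S$ are pairwise distinct and avoid $X$ — here one uses that $S$ is $2$-scattered in $G-X$ and $N_X(s)=N_X(z)$ — and replacing all of them by the bounded set $\{\phi(\kappa')\colon\kappa'\in N_H[\kappa]\}$ strictly decreases the size while still dominating $Z\setminus\{z\}$, since any vertex dominated via some $v(s)$ is reached from $s$ by a short walk into $X$ whose first $X$-vertex lies in the $X$-neighbourhood of a class $H$-adjacent to $\kappa$, and hence can be re-supplied by the representative of that class. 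This contradicts minimality of $D$, so $Z\setminus\{z\}$ stays a domination core.

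Finally, taking $\eps_1$ small enough that all the $n^{O(\eps_1)}$ factors above are bounded by $n^{\eps}$, and $g(\eps)$ large enough to dominate the hidden constants together with the factor $k^{1+\eps_1}$ coming from the class count (absorbed into $k\,n^{\eps}$ via $k\le n$), the one-vertex reduction applies whenever $|Z|>g(\eps)\,k\,n^{\eps}$, proving Theorem~\ref{thm:dcore-nd}. I expect the genuinely hard part to be Phase~2: designing the class graph so that it is simultaneously sparse and rich enough for the exchange argument to go through, and carrying out its delicate density analysis — exactly the step the introduction flags as being feasible only for $r=1$. A secondary obstacle is the constant bound on the number of rounds in Phase~1, which has to be established without the Closure Lemma.
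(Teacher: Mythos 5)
Your Phase 1 matches the paper's strategy: you iterate the approximate dominator construction of Lemma~\ref{lem:better-dvorak-nd} and argue termination after a bounded number of rounds by exhibiting a forbidden biclique if the procedure ran too long. The paper's Lemma~\ref{lem:extraction-terminates-nd} packages this via the twin classes of Lemma~\ref{lem:twin-classes2}, finding a $K_{\excbic,\excbic}$ directly, whereas you invoke the first bound of Proposition~\ref{prop:bipartite-general}; the two are close variants, and either works. This part is fine.

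Phase 2 has several genuine gaps, however, and they are exactly at the step you flag as ``the genuinely hard part.''

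First, the equivalence relation. You partition $V(G)\setminus X$ by $u\simeq v\iff N_X(u)=N_X(v)$, i.e.\ by $X$-neighborhoods in $G$. The paper instead builds the auxiliary graph $G'\in G\nab 1$ by contracting $N(s)\setminus X$ onto each $s\in S$, and partitions by $N'_X(\cdot)$, the $X$-neighborhood in $G'$. For $s\in S$ this captures every vertex of $X$ reachable from $s$ by a path of length $\leq 2$ avoiding $X$ --- exactly the information required in the exchange argument, since a vertex $a$ dominated by some $v(s)\in N[s]\cap R$ sits at distance $\leq 2$ from $s$. With your coarser relation, two equivalent vertices $s,z\in\kappa$ agree only on direct $X$-neighbors, so after deleting the dominators $v(s)$ you have no control over the ``second-layer'' $X$-vertices that re-dominate the rescued vertices.

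Second, the bound on the relevant projections. After contracting, the paper removes from $S$ all vertices with $|N'_X(s)|\geq\excbicone$ (Lemma~\ref{lem:S-small-N-nd}), and this is what makes the later bound $|\lambda|>\excbicone(\deg_H(\kappa)+1)+1$ meaningful: the rescue set has size at most $\excbicone$ per $K_1$-class. Your write-up never introduces a step bounding these projections (the nowhere dense analogue of what the Closure Lemma gives in bounded expansion), so the rescue cost per class is not bounded by a constant.

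Third, the class graph itself. The paper's $H$ joins $\kappa_1,\kappa_2$ when they have $G'$-adjacent members; the assignment $\phi$ from Lemma~\ref{lem:orient2} is used only to embed $H$ into a shallow minor of $G'\lexprod K_\tau$ to bound its grad (Lemma~\ref{lem:classgraph-grad-nd}). You instead define $H$ by short $X$-avoiding walks between $\phi(\kappa_1)$ and $\phi(\kappa_2)$. That is a different graph, and it is not the one supporting the exchange: a vertex $a\in U_{\kappa_1}$ that loses its dominator is rescued because $\kappa_1$ is $H$-adjacent to $\kappa$ in the paper's sense (adjacent members), which need not imply that $\phi(\kappa_1)$ is near $\phi(\kappa)$.

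Fourth, the rescue set in the exchange. You add only $\{\phi(\kappa')\colon\kappa'\in N_H[\kappa]\}$, one vertex per class. But a rescued vertex $a\in U_{\kappa_1}$ for $\kappa_1\in K_1$ has an arbitrary $X$-neighbor inside $N'_X(\kappa_1)$, not necessarily $\phi(\kappa_1)$. The paper therefore adds \emph{all} of $N'_X(\kappa_1)$ for each $\kappa_1\in N_H[\kappa]\cap K_1$ (which is why the earlier bound $|N'_X(\kappa_1)|<\excbicone$ is essential), plus one vertex for each $K_2$-class (where, because members of a $K_2$-class share a single $X$-neighborhood, a single representative does suffice). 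Adding one vertex per class is not enough to re-dominate the $K_1$-side.

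To summarize: the iterative extraction in Phase~1 is essentially the paper's argument, but Phase~2 requires the contraction $G\rightsquigarrow G'\in G\nab 1$, the preprocessing that bounds $|N'_X(s)|$ for $s\in S$, the member-adjacency class graph, and a rescue set that includes entire $X$-neighborhoods $N'_X(\kappa_1)$ for $K_1$-neighbors. Those four ideas are missing or replaced by non-equivalent constructions, so the proposed proof does not go through as written.
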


Again, the proof of Theorem~\ref{thm:dcore-nd} follows trivially from iterative application of the following lemma that enables us to identify a vertex that can be safely removed from the domination core.

\begin{lemma}\label{lem:reduce-corce-nd}
There exists a function $g(\cdot)$ such that for every $\eps > 0$ there exists a polynomial-time algorithm that, given a vertex subset $Z\subseteq V(G)$ with $|Z|>g(\eps)\cdot k \cdot n^\eps$ and a promise that $Z$ is a domination core, either correctly concludes that $\ds(G) > k$, or finds a vertex $z\in Z$ such that $Z\setminus \{z\}$ is still a domination core.
\end{lemma}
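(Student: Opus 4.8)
The plan is to follow the two-phase scheme of Section~\ref{sec:bnd-exp} — here we carry out one step of the first phase, the reduction of dominatees — but with every bounded-expansion tool replaced by its nowhere-dense counterpart from Section~\ref{sec:prelim-nd}, isolating the two places where the translation is not mechanical. Fix a small constant $\eps'=\eps'(\eps)>0$ and set $C_0:=n^{\delta}$ for a small constant $\delta=\delta(\eps)$ (so $C_0$ grows slowly with $n$; this mild growth is exactly what the Phase~2 discharging below will need). The values of $\eps'$ and $\delta$ are pinned down at the end so that all size bounds collapse to $g(\eps)\cdot k\cdot n^{\eps}$. Throughout, $\simeq$ denotes the twin relation $u\simeq v\iff N_X(u)=N_X(v)$ on $V(G)\setminus X$.

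\emph{Phase~1 (extracting the pair $(X,S)$).} As in Section~\ref{sec:extraction}, but using Lemma~\ref{lem:better-dvorak-nd} in place of Lemma~\ref{lem:better-dvorak} and performing \emph{no} closure step, iteratively pull out approximate dominators. Apply Lemma~\ref{lem:better-dvorak-nd} to $(G,Z,k)$: either output ``$\ds(G)>k$'' (if a $2$-scattered set of size $\ge k+1$ is returned) or obtain a $Z$-dominator $X_1$ with $|X_1|\le\dvorakfun(\eps')\cdot k\cdot n^{\eps'}$. Then, given $X_i$, apply Lemma~\ref{lem:better-dvorak-nd} to $(G-X_i,\,Z\setminus X_i,\,C_0|X_i|)$: if it returns a $2$-scattered set $S\subseteq Z\setminus X_i$ in $G-X_i$ with $|S|>C_0|X_i|$, set $X:=X_i$ and stop; otherwise set $X_{i+1}:=X_i\cup D_{i+1}$ for the returned $(Z\setminus X_i,1)$-dominator $D_{i+1}\subseteq V(G)\setminus X_i$, which has $|D_{i+1}|\le\dvorakfun(\eps')\cdot C_0|X_i|\cdot n^{\eps'}$. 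A routine induction gives $|X_i|\le n^{O(i(\eps'+\delta))}\cdot k$, linear in $k$. The crucial point — replacing Lemma~\ref{lem:extraction-terminates} — is that the process stops after a \emph{constant} number $N=N(\mc G)$ of iterations whenever $|Z|>g(\eps)\cdot k\cdot n^{\eps}$. Where bounded expansion used the Closure Lemma to cap $|\pr{\cdot}{u}{X_i}|$, we argue instead: if the process ran $N$ steps, then $R:=Z\setminus X_{N+1}$ is huge ($|X_{N+1}|=n^{O(\eps'+\delta)}k\ll|Z|$), yet every $u\in R$ has a neighbour in each of the pairwise-disjoint sets $D_1:=X_1,D_2,\dots,D_{N+1}$, so $|N_X(u)|\ge N+1$. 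Choosing $N:=f_\omega(1)$ makes $N+1>\omega(\mc G\nab 1)$, so in the bipartite graph $B$ between $X_{N+1}$ and $R$ (keeping the $G$-edges joining the two parts) every vertex of $R$ has $B$-degree exceeding $\omega(\mc G_1)$, where $\mc G_1\subseteq G\nab 1$ is the family from Proposition~\ref{prop:bipartite-general}. Proposition~\ref{prop:bipartite-general}(1) then bounds the number of such vertices by $2\grad_0(\mc G_1)\cdot|X_{N+1}|$, and Proposition~\ref{prop:nd-parameter}(1) bounds $\grad_0(\mc G_1)$ by $\gradfun(1,\eps')\cdot|X_{N+1}|^{\eps'}$; hence $|R|\le n^{O(\eps'+\delta)}\cdot k^{1+\eps'}$, and since $|R|\ge|Z|-|X_{N+1}|$ and $k\le n$ this forces $|Z|\le C_\ast\cdot k\cdot n^{\eps}$ for a constant $C_\ast$ (once $\eps',\delta$ are small enough) — contradicting $|Z|>g(\eps)kn^{\eps}$ once $g(\eps)>C_\ast$. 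Thus Phase~1 returns, in polynomial time, a pair $(X,S)$ with $X$ a $Z$-dominator, $|X|\le\tfrac12 g(\eps)kn^{\eps}$, $S\subseteq Z\setminus X$ that is $2$-scattered in $G-X$, and $|S|>C_0|X|$.

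\emph{Phase~2 (finding an irrelevant dominatee).} Partition $V(G)\setminus X$ into classes of $\simeq$; by Lemma~\ref{lem:twin-classes2} there are at most $\neifun(\eps')\cdot|X|^{1+\eps'}$ of them. Unlike in the bounded-expansion case, a class rich in $S$-vertices does not by itself give an irrelevant dominatee, because a vertex $z$ of such a class may have an enormous $X$-neighbourhood, so the plain exchange of Lemma~\ref{lem:irrelevant} (delete the private dominators of $\kappa\setminus\{z\}$, add back $\pr{\cdot}{z}{X}$) need not shrink the dominator. Following the original bounded-expansion argument of~\cite{previous}, the fix is to build an auxiliary \emph{class graph} $H$ on the set of classes — two classes adjacent when a vertex of one can help $1$-dominate a vertex of the other through $X$ — to show that $H$ is sparse in the nowhere-dense sense (roughly, $H$ is controlled by bounded-depth shallow minors of $G$, where the charging Lemma~\ref{lem:orient2} comes in, so $|E(H)|\le n^{O(\eps')}\cdot|X|^{1+\eps'}$), and then to run a discharging argument over $H$ — weighing $|S|>C_0|X|=n^{\delta}|X|$ against $\sum_{\kappa}(1+\deg_H(\kappa))\le n^{O(\eps')}\cdot|X|^{1+\eps'}$, with $\eps'$ small enough relative to $\delta$ — to obtain a class $\kappa$ with $|\kappa\cap S|$ much larger than $1+\deg_H(\kappa)$. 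For such $\kappa$ and $z\in\kappa\cap S$, $Z\setminus\{z\}$ is still a domination core by the exchange argument of Lemma~\ref{lem:irrelevant} tailored to $H$: if a minimum $(Z\setminus\{z\},1)$-dominator $D$ fails to dominate $z$, then each $s\in(\kappa\cap S)\setminus\{z\}$ has a private dominator $v(s)\notin X$ (because $N_X(s)=N_X(z)$ and $z$ is undominated, cf.\ Claim~\ref{cl:dom-via-X}), the $v(s)$ are pairwise distinct (as $S$ is $2$-scattered in $G-X$), and replacing $\{v(s)\}$ by one vertex of $N_X(z)$ together with a bounded amount of material attached to each of the $\deg_H(\kappa)$ $H$-neighbours of $\kappa$ yields a strictly smaller $(Z\setminus\{z\},1)$-dominator — a contradiction. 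Outputting any $z\in\kappa\cap S$ proves Lemma~\ref{lem:reduce-corce-nd}; iterating it proves Theorem~\ref{thm:dcore-nd}.

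\emph{Main obstacle.} Phase~1's termination, which required the Closure Lemma in the bounded-expansion proof, is here clean: it reduces to ``few high-degree vertices'' (Proposition~\ref{prop:bipartite-general}(1)) plus the $n^{\eps}$-type bounds of Proposition~\ref{prop:nd-parameter}. The genuinely delicate step — and the one I expect to be the main obstacle — is Phase~2: choosing the right class graph $H$, proving it is sparse in the nowhere-dense sense, and doing the bookkeeping that re-dominates, with only $O(\deg_H(\kappa))$ extra vertices, everything the removed private dominators $\{v(s)\}$ had covered. This is exactly what the Closure Lemma trivializes for bounded expansion and what has no nowhere-dense analogue.
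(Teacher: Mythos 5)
Your plan follows the same two-phase scheme as the paper's proof, and most of Phase~1 is sound; where you and the paper diverge in substance is in how much of Phase~2 is actually in hand.

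\textbf{Phase 1.} Your termination argument is a correct variant of the paper's Lemma~\ref{lem:extraction-terminates-nd}. The paper partitions $Z\setminus X_{\excbic}$ by $X_{\excbic}$-neighborhoods via Lemma~\ref{lem:twin-classes2}, finds a twin class of size $\ge \excbic$, and reads off a $K_{\excbic,\excbic}$ subgraph. You instead observe directly that every $u\in Z\setminus X_{N+1}$ has at least $N+1$ neighbors in $X_{N+1}$ (one per disjoint dominator $D_i$) and invoke Proposition~\ref{prop:bipartite-general}(1) with $N=f_\omega(1)$ to bound the number of such high-degree vertices by $2\grad_0(\mc G_1)\cdot|X_{N+1}|$, then apply Proposition~\ref{prop:nd-parameter}(1). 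These are really the same argument in two forms (Proposition~\ref{prop:bipartite-general}(1) is itself proved by the biclique bound), but your route is a touch more direct, and the arithmetic with $\eps',\delta=\Theta(\eps/N)$ closes correctly.

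\textbf{Phase 2 — where the gaps are.} You correctly identify the ingredients (class graph, sparsity of the class graph via the charging Lemma~\ref{lem:orient2} and lexicographic products, a potential/discharging argument, and an exchange step), and you explicitly flag this as the hard part. But two of the paper's moves that you omit are not optional bookkeeping — the argument does not go through without them. First, the paper does not build the class graph on $\simeq$-classes of $V(G)\setminus X$ in $G$ itself; it first forms the $1$-shallow minor $G'$ by contracting $N(s)\setminus X$ onto $s$ for every $s\in S$ (using that $S$ is $2$-scattered in $G-X$ so these sets are disjoint), and only then partitions $V(G')\setminus X = S\cup R_2$ by $N'_X$-twins into $K_1\cup K_2$. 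Without the contraction, the vertices of $R_1=R\cap N[S]$ — exactly the candidate private dominators $v(s)$ you want to delete and recharge — are not absorbed into the $S$-classes, so your class graph does not see the interactions that the exchange has to repair. Your edge rule ``adjacent when a vertex of one can help dominate a vertex of the other through $X$'' is not the paper's rule (edges of $H$ are $G'$-adjacencies between non-$X$ vertices) and would give a different, denser graph. Second, before defining the classes the paper prunes $S$ of the (provably few, Lemma~\ref{lem:S-small-N-nd}) vertices $s$ with $|N'_X(s)|\ge \excbicone$; this is what caps the ``material added per $H$-neighbor'' at $\excbicone$ and makes the potential threshold $|\lambda| > \excbicone(\deg_H(\kappa)+1)+1$, and hence the choice of $C_0$, work out. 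Without the pruning, $N'_X(\kappa)$ can be arbitrarily large, and the exchange in your last step — delete more than $\excbicone(\deg_H(\kappa)+1)$ private dominators, add back at most $\excbicone$ vertices per $H$-neighbor of $\kappa$ — is not a net win. A related omission: once you are in $G'$, vertices of the same $\simeq_X$-class need \emph{not} have the same $N_X$ in $G$; the paper needs the further refinement to a subclass $\lambda\subseteq\kappa$ with a common $G$-neighborhood in $X$ (costing the factor $2^{\excbicone}$ in Lemma~\ref{lem:large-subclass-nd}), which your discharging does not account for.

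So: Phase~1 is essentially the paper's argument, correctly executed with a minor presentational variant; Phase~2 is the right blueprint (and identical in spirit to the paper's), but the two specific constructions — contract $N(s)\setminus X$ to form $G'$, and prune large-$N'_X$ vertices from $S$ — plus the refinement to a $G$-twin subclass $\lambda$ are load-bearing and missing, not merely deferred details.
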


From now on we focus on proving Lemma~\ref{lem:reduce-corce-nd}. We fix the constant $\eps>0$ given to the algorithm; without loss of generality we assume that $\eps<1/10$. That is, all the constants introduced in the sequel may depend on $\eps$.

\subsubsection{Iterative extraction of $Z$-dominators}\label{sec:extraction-nd}

We now present the analogue of the subroutine presented in Section~\ref{sec:extraction} for the nowhere dense case. Due to lack of Lemma~\ref{lem:closure}, the implementation will be quite different. In particular, the argument that the procedure finishes after a constant number of iteration is based on a different principle, suited for nowhere dense classes. The fact that we are currently unable to lift this argument to an arbitrary radius $r$ is the main limitation for proving an almost linear kernel for {\sc{$r$-Dominating Set}} on nowhere dense graph classes.

Let $\delta=\frac{\eps}{4\excbic}>0$ and let us fix some constant $C$, to be decided later. First, we apply the algorithm of Lemma~\ref{lem:better-dvorak-nd} to $G$, $Z$, and parameters $k$ and $\delta$. This algorithm either outputs a subset $S\subseteq Z$ such that $|S|>k$ and $S$ is $2$-scattered in $G$, or a $Z$-dominator $X_1$ such that $|X_1|\leq \dvorakfun(\delta)\cdot k\cdot n^{\delta}$. In case $S$ is found, every vertex of $G$ can dominate at most one vertex of $S$ and thus we can conclude that $\ds(G,Z)>k$. As $\ds(G,Z)=\ds(G)$, we infer that $\ds(G)>k$ and we can terminate the algorithm and provide a negative answer. Hence, from now on we assume that the $Z$-dominator $X_1$ has been successfully constructed.

Now, we inductively construct $Z$-dominators $X_2,X_3,X_4,\ldots$ such that $X_1\subseteq X_2\subseteq X_3\subseteq X_4\subseteq \ldots$. We maintain the invariant that
$$|X_i|\leq C_i\cdot k\cdot n^{(2i-1)\delta},$$
where constants $C_i$ are defined as 
$$C_i := (1 + \dvorakfun(\delta)\cdot C)^{i-1}\cdot \dvorakfun(\delta).$$
Observe that $|X_1|\leq \dvorakfun(\delta)\cdot k\cdot n^{\delta}$, which means that the invariant is satisfied at the first step. We now describe how $X_{i+1}$ is constructed based on $X_i$ for consecutive $i=1,2,3,\ldots$.

\begin{enumerate}
\item\label{ap-dvorak-nd} First, apply the algorithm of
  Lemma~\ref{lem:better-dvorak-nd} to graph $G-X_i$, set
  $Z\setminus X_i$, and parameter $C \cdot |X_i|\cdot n^{\delta}$.

\item\label{big-scattered-nd} Suppose the algorithm has found a set $S\subseteq Z\setminus X_i$
  that is $2$-scattered in $G\setminus X_i$ and has cardinality larger than
  $C \cdot |X_i|\cdot n^{\delta}$. We set $X=X_i$, terminate the construction of sets $X_i$ and proceed to the second phase with the pair $(X,S)$.

\item\label{small-domset-nd} Otherwise, the algorithm has found a $(Z\setminus X_i)$-dominator
  $D_{i+1}$ in $G \setminus X_i$ such that
  \begin{align*}
      |D_{i+1}| &\leq  \dvorakfun(\delta)\cdot C \cdot |X_i|\cdot n^{\delta}\cdot n^{\delta}\\
                &= \dvorakfun(\delta)\cdot C \cdot |X_i|\cdot n^{2\delta}
  \end{align*}
  We set $X_{i+1} = X_i \cup D_{i+1}$ and proceed to the next $i$. Observe that
  \begin{align*}
    |X_{i+1}|=|X_i|+|D_{i+1}| &\leq (1 + \dvorakfun(\delta)\cdot C)
                \cdot |X_i|\cdot n^{2\delta} \\
              &\leq (1 + \dvorakfun(\delta)\cdot C)\cdot C_i\cdot k\cdot n^{(2i-1)\delta}\cdot n^{2\delta} \\
              &= C_{i+1}\cdot k\cdot n^{(2(i+1)-1)\delta}.
  \end{align*}
  Hence, the invariant that $|X_i| \leq C_i\cdot k\cdot n^{(2i-1)\delta}$ is maintained in the next iteration.
\end{enumerate}

We now present the analogue of Lemma~\ref{lem:extraction-terminates}: we prove that the construction terminates by outputting a pair $(X,S)$ after a constant number of iterations. Note, however, that the argument is quite different.

\begin{lemma}\label{lem:extraction-terminates-nd}
Assuming that $|Z|>(\excbic\cdot \neifun(\eps/2)+1)\cdot C_{\excbic}\cdot k\cdot n^{\eps}$, the construction terminates by outputting some pair $(X,S)$ after at most $\excbic-1$ iterations, i.e., before constructing $X_{\excbic}$.
\end{lemma}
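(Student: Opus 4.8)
The plan is to argue by contradiction, following the same scheme as Lemma~\ref{lem:extraction-terminates} but replacing the grad/closure bookkeeping by a biclique‑counting argument adapted to nowhere dense classes. So suppose the construction runs $\excbic-1$ iterations all falling into branch~(\ref{small-domset-nd}), so that $X_1\subseteq X_2\subseteq\cdots\subseteq X_{\excbic}$ are all built. I would first isolate the structural fact that drives everything. Put $L_0=X_1$ and $L_i=X_{i+1}\setminus X_i$ for $1\le i\le\excbic-1$; these are $\excbic$ pairwise disjoint sets whose union is $X_{\excbic}$ (since $D_{i+1}\cap X_i=\emptyset$, so $L_i=D_{i+1}$). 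Every vertex $u\in R:=Z\setminus X_{\excbic}$ has a neighbour in each $L_i$: for $i=0$ because $X_1$ is a $Z$-dominator in $G$ and $u\in Z\setminus X_1$; for $1\le i\le\excbic-1$ because $D_{i+1}\subseteq L_i$ is a $(Z\setminus X_i)$-dominator in $G-X_i$ while $u\in(Z\setminus X_i)\setminus D_{i+1}$. Consequently $|N_{X_{\excbic}}(u)|\ge\excbic$ for every $u\in R$.

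Next comes the size bookkeeping. From the maintained invariant $|X_i|\le C_i\cdot k\cdot n^{(2i-1)\delta}$ and $\delta=\tfrac{\eps}{4\excbic}$ (hence $(2\excbic-1)\delta\le\eps/2$) we get $|X_{\excbic}|\le C_{\excbic}\cdot k\cdot n^{\eps/2}\le C_{\excbic}\cdot k\cdot n^{\eps}$, so
\[
|R|\ \ge\ |Z|-|X_{\excbic}|\ >\ (\excbic\cdot\neifun(\eps/2)+1)\cdot C_{\excbic}\cdot k\cdot n^{\eps}-C_{\excbic}\cdot k\cdot n^{\eps}\ =\ \excbic\cdot\neifun(\eps/2)\cdot C_{\excbic}\cdot k\cdot n^{\eps}.
\]
On the other hand, Lemma~\ref{lem:twin-classes2} applied to $G$, the set $X_{\excbic}$ and the parameter $\eps/2$ bounds the number of distinct sets $N_{X_{\excbic}}(u)$ over $u\notin X_{\excbic}$ by $\neifun(\eps/2)\cdot|X_{\excbic}|^{1+\eps/2}\le\neifun(\eps/2)\cdot|X_{\excbic}|\cdot n^{\eps/2}\le\neifun(\eps/2)\cdot C_{\excbic}\cdot k\cdot n^{\eps}$, using $|X_{\excbic}|\le n$. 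Since $|R|$ strictly exceeds $\excbic$ times this number, the pigeonhole principle yields a set $A\subseteq X_{\excbic}$ and at least $\excbic$ vertices $u_1,\dots,u_{\excbic}\in R$ with $N_{X_{\excbic}}(u_a)=A$ for all $a$; by the structural fact, $|A|=|N_{X_{\excbic}}(u_1)|\ge\excbic$.

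To finish, pick any $\excbic$ distinct vertices $v_1,\dots,v_{\excbic}\in A$. The $2\excbic$ vertices $u_1,\dots,u_{\excbic},v_1,\dots,v_{\excbic}$ are pairwise distinct (the $u_a$ lie in $Z\setminus X_{\excbic}$, the $v_b$ lie in $X_{\excbic}$) and $u_av_b\in E(G)$ for all $a,b$ because $v_b\in A=N(u_a)$. Hence $K_{\excbic,\excbic}$ is a subgraph of $G$, \ie $K_{\excbic,\excbic}\in G\nab 0\subseteq\mc G\nab 0$, contradicting the choice of $\excbic$ with $K_{\excbic,\excbic}\notin\mc G\nab 0$. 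Therefore branch~(\ref{big-scattered-nd}) must be taken, and a pair $(X,S)$ is output, within the first $\excbic-1$ iterations, which is exactly the claim.

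The step I expect to be the main obstacle is precisely the size bookkeeping: the pigeonhole closes only if the number of $X_{\excbic}$-neighbourhoods is at most $\tfrac1{\excbic}|R|$, and this is tight enough only because of the two slack-eating moves — $(2\excbic-1)\delta\le\eps/2$ (keeping a spare $n^{\eps/2}$ factor on $|X_{\excbic}|$) and relaxing $|X_{\excbic}|^{1+\eps/2}$ to $|X_{\excbic}|\cdot n^{\eps/2}$ via $|X_{\excbic}|\le n$. The second delicate point is the structural claim that every vertex of $R$ genuinely meets all $\excbic$ layers: this is what upgrades the common neighbourhood $A$ from a possibly tiny set to one of size at least $\excbic$, thereby turning "many $R$-vertices with a common neighbourhood" into a balanced biclique rather than a mere large star.
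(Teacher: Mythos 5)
Your proof is correct and follows essentially the same route as the paper's: you derive that every vertex of $Z\setminus X_{\excbic}$ has a neighbor in each of the $\excbic$ disjoint dominator layers, bound the number of $X_{\excbic}$-neighborhood classes via Lemma~\ref{lem:twin-classes2} with the same slackening $|X_{\excbic}|^{\eps/2}\le n^{\eps/2}$, and then use pigeonhole plus a $K_{\excbic,\excbic}$-subgraph to obtain the contradiction. The only cosmetic difference is that you name the layers $L_0,\dots,L_{\excbic-1}$ and state the degree lower bound up front, whereas the paper extracts the large twin class first and then observes its common neighborhood is large; both are the same argument in a different order.
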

\begin{proof}
For the sake of contradiction, suppose that the procedure actually performed $\excbic-1$ iterations, successfully constructing disjoint sets $X_1,D_2,D_3,\ldots,D_\excbic$, where $X_i=X_1\cup D_2\cup D_3\cup \ldots\cup D_i$ for $i=1,2,\ldots,\excbic$. Let $Q:=X_\excbic$; then we know that each of these sets is a $(Z\setminus Q)$-dominator. Observe that 
\begin{align*}
|Z\setminus Q| \geq |Z|-|Q|& > (\excbic\cdot \neifun(\eps/2)+1)\cdot C_{\excbic}\cdot k\cdot n^{\eps} - C_{\excbic}\cdot k\cdot n^{(2\excbic-1)\delta}\\
& \geq (\excbic\cdot \neifun(\eps/2)+1)\cdot C_{\excbic}\cdot k\cdot n^{\eps} - C_{\excbic}\cdot k\cdot n^{\eps/2}\\
& \geq \excbic\cdot \neifun(\eps/2)\cdot C_{\excbic}\cdot k\cdot n^{\eps}\geq \excbic\cdot \neifun(\eps/2)\cdot |Q|\cdot n^{\eps/2}.
\end{align*}
Now, partition vertices of $Z\setminus Q$ into classes with respect to their neighborhoods in $Q$. By Lemma~\ref{lem:twin-classes2}, we infer that the number of these classes is at most $\neifun(\eps/2)\cdot |Q|\cdot n^{\eps/2}$. Since $|Z\setminus Q|>\excbic\cdot \neifun(\eps/2)\cdot |Q|\cdot n^{\eps/2}$, we infer that one of these classes $\kappa$ satisfies $|\kappa|\geq \excbic$. However, each member of $\kappa$ has neighbors in each of the $(Z\setminus Q)$-dominators $X_1,D_2,D_3,\ldots,D_\excbic$, and hence the common $Q$-neighborhood of vertices of $\kappa$ is of size at least $\excbic$. Thus we see that the induced subgraph $G[\kappa\cup N_Q(\kappa)]$ contains a $K_{\excbic,\excbic}$ as a subgraph, a contradiction.
\end{proof}

Hence, provided that the cardinality of $Z$ satisfies the lower bound stated in Lemma~\ref{lem:extraction-terminates-nd}, the construction will terminate after at most $\excbic-1$ iterations, thus constructing sets $X$ and $S$ with the following properties:
\begin{itemize}
\item $|X|\leq C_{\excbic-1}\cdot k\cdot n^{\eps/2}$;
\item $X$ is a $Z$-dominator in $G$;
\item $|S|>C\cdot |X|\cdot n^{\delta}$;
\item $S\subseteq Z\setminus X$ and $S$ is $2$-scattered in $G-X$.
\end{itemize}
With sets $X$ and $S$ we proceed to the second phase, that is, finding an irrelevant dominatee. Note that the main difference w.r.t. the proof from Section~\ref{sec:bnd-exp} for the bounded expansion case is that we do not have a bound on the sizes of constant-radius projections of vertices of $V(G)\setminus X$ onto $X$. This will make the forthcoming analysis much more challenging.

\subsubsection{Finding an irrelevant dominatee}

Let us denote $R:=V(G)\setminus X$. Then $S\subseteq Z\cap R$ and $S$ is $2$-scattered in $G[R]$. Therefore, sets $N[s]\cap R$ are pairwise disjoint for all $s\in S$. Recall that for a vertex $u\in R$, the $X$-neighborhood of $u$ is defined as $N_X(u)=N(u)\cap X$. For $W\subseteq R$, we define $N_X(W)=\bigcup_{u\in W} N_X(u)$.

\begin{figure}[htb]
  \centering
  \includegraphics[width=.8\textwidth]{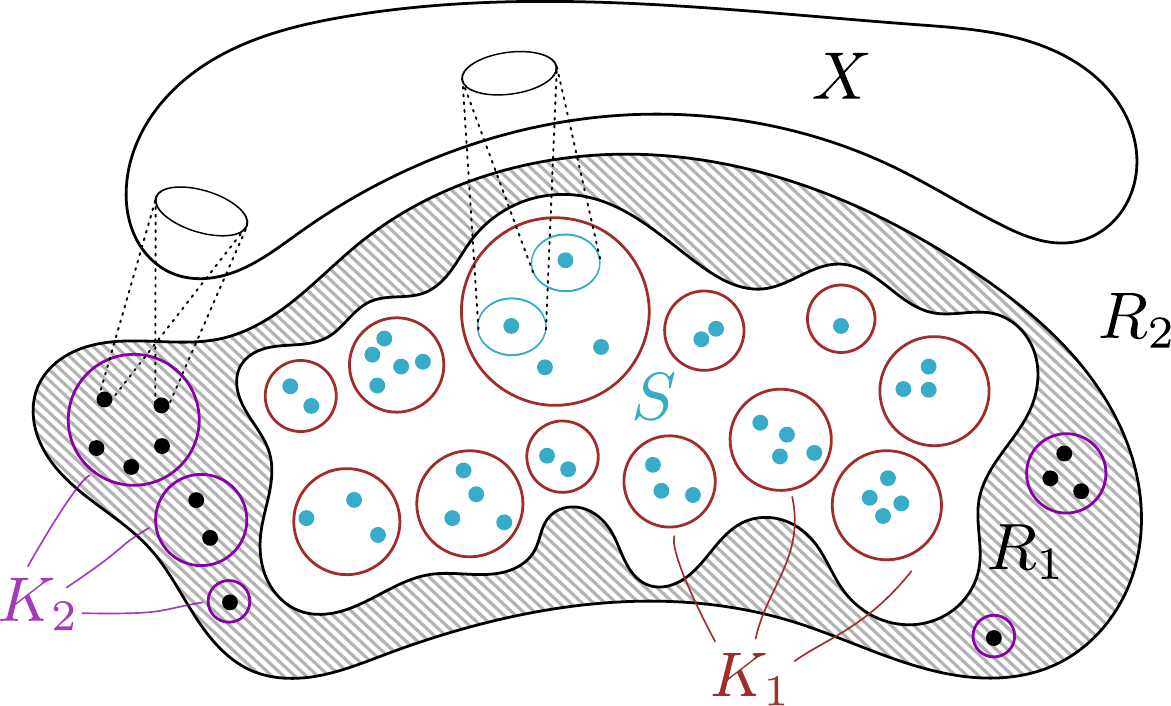}
  \caption{\label{fig:overview} Overview over important vertex sets.}
\end{figure}

We construct an auxiliary graph $G' \in G \nab 1$ as follows: for every
vertex $s \in S$, we contract every vertex of the set $N(s) \setminus
X$ into~$s$.  Since the vertices of~$S$ are $2$-scattered in~$G-X$,
the sets $N(s) \setminus X$ are pairwise disjoint for different $s\in S$
and this operation creates a $1$-shallow minor of~$G$.  The vertex
of~$G'$ onto which the set~$N(s) \setminus X$ is contracted to is renamed 
as~$s$.  We denote by $N'(\cdot)$ and $N'[\cdot]$, respectively,
open and closed neighborhoods of vertices in~$G'$. The
$X$-neighborhoods in~$G'$ are denoted $N'_X(u) = N'(u)\cap X$, for $u
\in V(G')\setminus X$, and $N'_X(W)=\bigcup_{u\in W} N'_X(u)$ for $W\subseteq V(G')\setminus X$.

First, we show that only few vertices of $S$ can have large $X$-neighborhoods in $G'$.

\begin{lemma}\label{lem:S-small-N-nd}
The number of vertices $s \in S$ for which $|N'_X(s)| \geq \excbicone$ holds is at most $\excbicone\cdot \neifun^1(\delta)\cdot |X|\cdot n^{\delta}$.
\end{lemma}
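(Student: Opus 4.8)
The plan is to exploit the fact that $G'$ is a $1$-shallow minor of $G$, hence $G' \in G \nab 1 \subseteq \mc G \nab 1$, and to reason inside the nowhere dense class $\mc G \nab 1$, to which all the auxiliary results of Section~\ref{sec:prelim-nd} apply (the corresponding functions, in particular $\neifun^1$, have been fixed). First I would apply Lemma~\ref{lem:twin-classes2} to the graph $G'$, the vertex subset $X \subseteq V(G')$, and the parameter $\delta$, obtaining
\[
|\{A \subseteq X \colon \exists_{v \in V(G') \setminus X}\ A = N'_X(v)\}| \;\leq\; \neifun^1(\delta)\cdot|X|^{1+\delta} \;\leq\; \neifun^1(\delta)\cdot|X|\cdot n^{\delta},
\]
where the last inequality uses $|X| \le |V(G')| \le n$.

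Next I would set $S' := \{s \in S \colon |N'_X(s)| \ge \excbicone\}$ and partition $S'$ into classes according to the set $N'_X(s)$; by the displayed bound there are at most $\neifun^1(\delta)\cdot|X|\cdot n^\delta$ such classes. The crucial claim is that each class contains fewer than $\excbicone$ vertices. Indeed, were some class to contain $\excbicone$ vertices $s_1,\ldots,s_{\excbicone}$ sharing a common $X$-neighborhood $A$ with $|A|\ge\excbicone$, then $s_1,\ldots,s_{\excbicone}$ together with any $\excbicone$ vertices of $A$ would exhibit $K_{\excbicone,\excbicone}$ as a subgraph of $G'$; since a subgraph of a $1$-shallow minor of $G$ is again a $1$-shallow minor of $G$, this would give $K_{\excbicone,\excbicone}\in\mc G\nab 1$, contradicting the choice of $\excbicone$. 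Multiplying the two bounds yields $|S'|\le(\excbicone-1)\cdot\neifun^1(\delta)\cdot|X|\cdot n^\delta\le\excbicone\cdot\neifun^1(\delta)\cdot|X|\cdot n^\delta$, which is the claimed bound.

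I do not expect a serious obstacle here: this estimate plays the same role as the high-degree part of the twin-classes bound. The only points needing care are that $G'$ genuinely belongs to $\mc G\nab 1$ (so that the nowhere dense toolbox for $\mc G\nab 1$ is applicable) and that an excluded biclique of a nowhere dense class is preserved under taking subgraphs of shallow minors (which legitimizes the pigeonhole step); both follow directly from the definitions.
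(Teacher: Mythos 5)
Your proof is correct and follows essentially the same approach as the paper: both bound the number of distinct $X$-neighborhoods in $G'$ via Lemma~\ref{lem:twin-classes2} applied to a graph in $\mc G \nab 1$, and both bound the size of each neighborhood class by $\excbicone$ using the excluded biclique $K_{\excbicone,\excbicone} \notin \mc G\nab 1$. The only cosmetic differences are that you argue directly rather than by contradiction, and you apply the twin-classes lemma to all of $G'$ rather than to $G'[S'\cup X]$; both choices are harmless.
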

\begin{proof}
Let $S'=\{s\ \colon\ s\in S\wedge |N'_X(s)| \geq \excbicone\}$, and for the sake of contradiction suppose $|S'|>\excbicone\cdot \neifun^1(\delta)\cdot |X|\cdot n^{\delta}$. Consider the graph $G'[S'\cup X]$ and partition the vertices of $S'$ with respect to their $X$-neighborhoods in this graph. As $G'[S'\cup X]\in G\nab 1\subseteq \mc G\nab 1$, by Lemma~\ref{lem:twin-classes2} we infer that the number of these classes is at most $\neifun^1(\delta)\cdot |X|\cdot n^{\delta}$. Hence, one of the classes, say $\kappa$, has cardinality at least $\excbicone$. Since each member of $\kappa\subseteq S'$ has at least $\excbicone$ neighbors in $X$ in graph $G'$, and this $X$-neighborhood is common among the vertices of $\kappa$, we infer that $|N_X'(\kappa)|\geq \excbicone$ and $G'[\kappa\cup N_X'(\kappa)]$ contains a biclique $K_{\excbicone,\excbicone}$ as a subgraph. This is a contradiction with $G'\in \mc G\nab 1$.
\end{proof}

We remove from $S$ all the vertices that have $X$-neighborhoods in $G'$ larger of size at least $\excbicone$. In this manner, Lemma~\ref{lem:S-small-N-nd} ensures us that the size of $S$ shrinks by at most $\excbicone\cdot \neifun^1(\delta)\cdot |X|\cdot n^{\delta}$. Hence, if we set $C:=C_0+\excbicone\cdot \neifun^1(\delta)$ for some $C_0$ to be determined later, then after performing this step we still have that the resulting set has size more than $C_0\cdot |X|\cdot n^{\delta}$. By somewhat abusing the notation, we denote the resulting set also as $S$, and we reconstruct the graph $G'$ according to the new definition of $S$. In this manner, from now on we assume that $|S|>C_0\cdot |X|\cdot n^{\delta}$ and that $|N_X'(s)|<\excbicone$ for each $s\in S$.

Let $R_1 = R \cap N[S]$ be those vertices of~$R$ that can possibly
dominate a vertex in~$S$, and let $R_2 = R \setminus R_1$ be all the
other vertices in~$R$.  We now partition the vertices of $G' - X$ into
classes according to their neighborhoods in~$X$.  Note that by the
construction of~$G'$, we have that $V(G'-X)=S\cup R_2$.  We define the
equivalence relation $\simeq_X$ over $S \cup R_2$ as follows:
\[
u \simeq_X v \Leftrightarrow N'_X(u) = N'_X(v) .
\]
In the following, we consider the quotients (sets of classes of
abstraction) $K_1 = S{/\!\!\simeq_X}$ and $K_2 = R_2{/\!\!\simeq_X}$.  We
will also use $K = K_1 \cup K_2$.  Note that since vertices of~$R_2$
are untouched during the construction of~$G'$, we have that~$K_2$ is
simply the partitioning of vertices of~$R_2$ with respect to their
$X$-neighborhoods in~$G$.  Each $\kappa \in K$ will simply be called a
\emph{class}.  For a class $\kappa\in K$, by $N'_X(\kappa)$ we denote
the common $X$-neighborhood of vertices of~$\kappa$ in~$G'$.

Observe that each class $\kappa\in K_1$ consists of vertices from $S\subseteq Z$,
which, since $X$ is a $Z$-dominator, have to have neighbors in $X$ in
graph $G$.  Hence, $N'_X(\kappa)$ is nonempty for each $\kappa\in
K_1$.  However, in $K_2$ there may be a class $\kappa_\emptyset$ whose
vertices do not have neighbors in $X$; i.e.,
$N'_X(\kappa_\emptyset)=\emptyset$.  Note that the vertices of this
class, provided it exists, cannot be contained in $Z$.

For a class $\kappa \in K_1$ we define $U_\kappa = N[\kappa]\cap R$.  That
is,~$U_\kappa$ comprises all vertices of $R\subseteq V(G)$ that have
been contracted onto the vertices of~$\kappa$ during the construction
of~$G'$.  Since~$S$ is $2$-scattered in~$G[R]$, the sets~$U_\kappa$
for $\kappa\in K_1$ are pairwise disjoint.  Moreover, $(U_\kappa)_{\kappa\in K_1}$
forms a partition of~$R_1$.

Intuitively, our goal now is to identify a large class $\kappa \in K_1$ that
cannot be dominated by a small set of vertices in~$R$.  We then argue
that such a class contains a vertex that is irrelevant: it can be
removed from~$Z$ without breaking the invariant that~$Z$ is a
domination core.

First, we define an auxiliary graph that captures the interaction
between the classes in $K$.

\begin{definition}
  The \emph{class graph}~$H$ is a graph with vertex set~$K$ that
  contains an edge between $\kappa, \kappa' \in K$ if and only if
  there exists $u \in \kappa$ and $u' \in \kappa'$ such that $uu' \in
  E(G')$.
\end{definition}

\begin{figure}[htb]
  \centering
  \includegraphics[width=.5\textwidth]{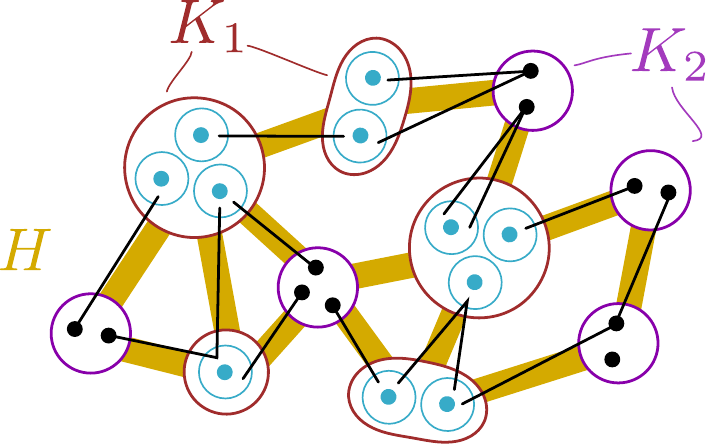}
  \caption{The class graph $H$ with vertex set $K_1 \cup K_2 = K$.}
  \label{fig:classgraph}
\end{figure}

The crucial observation, which we are going to prove next, is that the class graph actually cannot be too
large and complicated: it has an almost linear number of vertices and edges, measured in $|X|$. In the following, we denote $\gamma=\delta/2$.

\begin{lemma}[Size of the class graph]\label{lem:classgraph-V-nd} 
The following holds:
\begin{itemize}
\item $|K_1|\leq \neifun^1(\gamma)\cdot |X|\cdot n^{\gamma}$, and
\item $|K_2|\leq \neifun(\gamma)\cdot |X|\cdot n^{\gamma}$.
\end{itemize}
Consequently, $|V(H)|=|K_1|+|K_2| \leq 2\neifun^1(\gamma)\cdot |X|\cdot n^{\gamma}$.
\end{lemma}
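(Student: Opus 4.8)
The plan is to bound $|K_1|$ and $|K_2|$ by two essentially identical applications of the twin-classes estimate (Lemma~\ref{lem:twin-classes2}), one carried out inside $G$ and one inside $G'$, and then to add the two bounds. Recall that $\gamma=\delta/2>0$ is a fixed constant, and that we have fixed the twin-classes function $\neifun$ for $\mc G$ and $\neifun^1$ for $\mc G\nab 1$; the whole argument is just the counting those lemmas provide, together with the observation that the relevant induced subgraphs lie in the right nowhere dense classes.

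First I would dispose of $K_2$. Since the vertices of $R_2$ are untouched during the construction of $G'$, the partition $K_2$ of $R_2$ is precisely the partition of $R_2$ with respect to the $X$-neighborhoods $N_X(\cdot)$ computed in $G$; thus $|K_2|$ equals the number of distinct sets of the form $N_X(u)$, $u\in R_2$, realized in the induced subgraph $G[X\cup R_2]$. This subgraph belongs to $\mc G$ because $\mc G$ is closed under taking subgraphs, so applying Lemma~\ref{lem:twin-classes2} to $\mc G$, to $G[X\cup R_2]$, to the set $X$, and with parameter $\gamma$ yields $|K_2|\le \neifun(\gamma)\cdot|X|^{1+\gamma}\le\neifun(\gamma)\cdot|X|\cdot n^{\gamma}$, where the last step uses $|X|\le n$.

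The bound on $|K_1|$ is the same argument performed inside $G'$. The partition $K_1$ of $S$ is governed by the $X$-neighborhoods $N'_X(s)$, $s\in S$, so $|K_1|$ is the number of distinct sets $N'_X(s)$ realized in the induced subgraph $G'[X\cup S]$. Since $S$ is $2$-scattered in $G-X$, the branch sets $\{s\}\cup(N(s)\setminus X)$ used to build $G'$ are pairwise disjoint and have radius at most $1$, and keeping only these together with the singleton branch sets $\{x\}$, $x\in X$, exhibits $G'[X\cup S]$ as a $1$-shallow minor of $G$; hence $G'[X\cup S]\in G\nab 1\subseteq\mc G\nab 1$. The class $\mc G\nab 1$ is nowhere dense, so applying Lemma~\ref{lem:twin-classes2} to $\mc G\nab 1$, to $G'[X\cup S]$, to $X$ (which is nonempty, as $X$ is a $Z$-dominator and $Z\neq\emptyset$), and with parameter $\gamma$ gives $|K_1|\le\neifun^1(\gamma)\cdot|X|^{1+\gamma}\le\neifun^1(\gamma)\cdot|X|\cdot n^{\gamma}$. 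Finally, $|V(H)|=|K_1|+|K_2|\le(\neifun^1(\gamma)+\neifun(\gamma))\cdot|X|\cdot n^{\gamma}$, and since $\mc G\subseteq\mc G\nab 0\subseteq\mc G\nab 1$ we may choose the two twin-classes functions so that $\neifun\le\neifun^1$, which yields the stated bound $|V(H)|\le 2\neifun^1(\gamma)\cdot|X|\cdot n^{\gamma}$. I do not expect a genuine obstacle in this lemma; the only points that need care are verifying the memberships $G[X\cup R_2]\in\mc G$ and $G'[X\cup S]\in\mc G\nab 1$, checking that $X$ is nonempty, and recording that $\gamma>0$ is a fixed constant so that all three bounds have the promised shape $O(|X|\cdot n^{\gamma})$.
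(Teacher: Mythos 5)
Your proof is correct and follows essentially the same route as the paper: bound $|K_2|$ by applying Lemma~\ref{lem:twin-classes2} to $G-R_1=G[X\cup R_2]\in\mc G$, bound $|K_1|$ by applying it to $G'[X\cup S]\in\mc G\nab 1$, and add, using $\neifun\le\neifun^1$ (justified since $\mc G\subseteq\mc G\nab 1$ when $\mc G$ is closed under subgraphs) to get the stated $2\neifun^1(\gamma)$ factor. Your explicit verification that $G'[X\cup S]$ is a $1$-shallow minor via the branch sets $\{s\}\cup(N(s)\setminus X)$, and the check that $X\neq\emptyset$, are exactly the details the paper leaves implicit.
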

\begin{proof}
The upper bound on $|K_2|$ (second item) follows directly from Lemma~\ref{lem:twin-classes2} applied to the graph $G-R_1$, set $X$, and parameter $\gamma$ (we use that $|X|\leq n$). In order to obtain an upper bound on $|K_1|$, we apply Lemma~\ref{lem:twin-classes2} to the graph $G'[S\cup X]\in \mc G\nab 1$. We thus infer that the number of possible $X$-neighborhoods among the vertices in $S$, and hence the number of classes in $K_1$, is at most $\neifun^1(\gamma)\cdot |X|\cdot n^{\gamma}$.
\end{proof}

\begin{lemma}[Grad of the class graph]\label{lem:classgraph-grad-nd}
There exists a function $h(\cdot)$ such that for every $r\geq 0$ it holds that $\grad_r(H) \leq h(r)\cdot n^{\gamma}$.
\end{lemma}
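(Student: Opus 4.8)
The plan is to show that, after deleting at most one vertex, $H$ is an $O(r)$-shallow minor of a lexicographic product $G'\lexprod K_c$ where $c$ is bounded by a small power of $n$, and then to bound the grad of that product with Lemma~\ref{lem:lexprod}. The key point is that the lemma is proved separately for each fixed~$r$, so I am free to let the approximation parameters used below depend on~$r$; this is what lets me absorb the $(r+1)^2$-type exponent coming from Lemma~\ref{lem:lexprod} into the function $h(r)$ while keeping the exponent of~$n$ equal to~$\gamma$. First I would reduce to the case in which every class has a non-empty common $X$-neighbourhood in~$G'$: if the empty class $\kappa_\emptyset$ occurs, set $H'=H-\kappa_\emptyset$, and otherwise $H'=H$. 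Then $H$ is a subgraph of the graph obtained from $H'$ by adding a universal vertex, so by monotonicity of grads under taking subgraphs together with Lemma~\ref{lem:grad-universal-vertex} we get $\grad_r(H)\le\grad_r(H')+1$, and it suffices to bound $\grad_r(H')$.

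Next I would pick good representatives in~$X$. All vertices of a class $\kappa$ have the same $X$-neighbourhood $N'_X(\kappa)$ in~$G'$, which is non-empty for $\kappa\in V(H')$, and a class of $K_1$ and a class of $K_2$ are the only classes that can share the same $X$-neighbourhood; hence, choosing one witness vertex for each distinct non-empty $X$-neighbourhood, the bipartite graph between the family of those $X$-neighbourhoods and $X$ is twin-free and embeds as a subgraph of $G'\in G\nab 1\subseteq\mc G\nab 1$. Applying the charging lemma (Lemma~\ref{lem:orient2}) for the nowhere dense class $\mc G\nab 1$ and pulling the assignment back to the classes, for any $\eps''>0$ I obtain a map $\phi$ sending each class $\kappa\in V(H')$ to some $\phi(\kappa)\in N'_X(\kappa)$ such that every vertex of~$X$ is the image of at most~$c$ classes, where $c\le 2\chrgfun^1(\eps'')\cdot(2n)^{\eps''}$ (the bipartite graph has at most $2n$ vertices). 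Now I work in $\widehat G:=G'\lexprod K_c$: for each $x\in X$ give the at most~$c$ classes mapped to~$x$ pairwise distinct copies of~$x$, write $x_\kappa$ for the copy assigned to~$\kappa$, and identify each $u\in V(G')\setminus X$ with its copy $(u,1)$. Given $M\in H'\nab r$ with branch sets $B_1,\dots,B_m$, define $\beta_i:=\{x_\kappa:\kappa\in B_i\}\cup\bigcup_{\kappa\in B_i}\kappa$. The sets $\beta_i$ are pairwise disjoint (distinct classes are disjoint subsets of $V(G')\setminus X$ and receive distinct copies $x_\kappa$); each $\widehat G[\beta_i]$ is connected, since $\kappa\cup\{x_\kappa\}$ is a star centred at $x_\kappa$ and an $H'$-edge inside $B_i$ is witnessed by a $G'$-edge between the two corresponding classes; and $\widehat G[\beta_i]$ has radius at most $3r+1$, because an $H'$-path of length $\ell\le r$ from a fixed centre of $B_i$ to $\kappa$ lifts, through the pattern $x_{\mu_j}-a_j-b_j-x_{\mu_{j+1}}$ with $a_jb_j$ witnessing the edge $\mu_j\mu_{j+1}$, to a $\widehat G$-path of length $3\ell$ inside $\beta_i$, after which one more edge reaches any vertex of~$\kappa$. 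Finally an edge of~$M$ between $B_i$ and $B_j$ is witnessed by a $G'$-edge, hence a $\widehat G$-edge, between $\beta_i$ and $\beta_j$. Therefore $M\in\widehat G\nab(3r+1)$, and so $\grad_r(H')\le\grad_{3r+1}(G'\lexprod K_c)$.

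To conclude, I would apply Lemma~\ref{lem:lexprod}, obtaining $\grad_{3r+1}(G'\lexprod K_c)\le 4\bigl(8c(3r+1+c)\grad_{3r+1}(G')+4c\bigr)^{(3r+2)^2}$. Since $G'\in G\nab 1\subseteq\mc G\nab 1$ and $|G'|\le n$, the grad bound of Proposition~\ref{prop:nd-parameter} for $\mc G\nab 1$ gives $\grad_{3r+1}(G')\le\gradfun^1(3r+1,\eps''')\cdot n^{\eps'''}$ for every $\eps'''>0$. Plugging in $c\le 2\chrgfun^1(\eps'')(2n)^{\eps''}$, the base of the power is at most $D(r)\cdot n^{2\eps''+\eps'''}$ for a constant $D(r)$ depending on $r,\eps'',\eps'''$. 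Choosing $\eps''=\eps'''=\gamma/\bigl(3(3r+2)^2\bigr)$ makes $(2\eps''+\eps''')(3r+2)^2=\gamma$, so $\grad_{3r+1}(G'\lexprod K_c)\le 4D(r)^{(3r+2)^2}\cdot n^{\gamma}$; together with $\grad_r(H)\le\grad_r(H')+1$ this gives $\grad_r(H)\le h(r)\cdot n^{\gamma}$ with $h(r):=4D(r)^{(3r+2)^2}+1$.

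The main obstacle is the second step. A class is in general disconnected inside $G'-X$, so it must be glued to one of its $X$-neighbours, yet such a neighbour can be shared by arbitrarily many classes, which would destroy the disjointness needed for a shallow-minor model. The charging lemma is precisely the tool that yields an assignment of representatives in which each vertex of~$X$ serves only a near-linear number of classes, and the lexicographic product then converts this into a genuine model with pairwise disjoint branch sets. The remaining subtlety is purely arithmetic: the exponent $(3r+2)^2$ in Lemma~\ref{lem:lexprod} would be fatal if $\eps''$ had to be chosen uniformly in~$r$, but because the statement is proved one value of~$r$ at a time we may shrink $\eps''$ and $\eps'''$ with~$r$, paying only with a larger constant $h(r)$.
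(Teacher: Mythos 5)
Your proof is correct and essentially identical to the paper's: charge classes to $X$-representatives via Lemma~\ref{lem:orient2}, embed $H-\kappa_\emptyset$ as a shallow minor of $G'\lexprod K_c$, and bound grads via Lemma~\ref{lem:lexprod} together with the nowhere-dense grad estimate of Proposition~\ref{prop:nd-parameter}, reinstating $\kappa_\emptyset$ by Lemma~\ref{lem:grad-universal-vertex}. The paper differs only cosmetically --- it applies the charging lemma separately to $K_1$- and $K_2$-representatives and then merges the two assignments rather than first collapsing classes with a common $X$-neighbourhood, and it exhibits $H-\kappa_\emptyset$ as a $1$-shallow minor of the product and then invokes Proposition~\ref{prop:comp-of-grads} rather than unfolding the $(3r+1)$-shallow minor model by hand as you do --- but the decomposition, the lemmas used, and the choice $\eps''=\eps'''=\gamma/(3(3r+2)^2)$ coincide.
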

\begin{proof}
Let us fix $r$ and let $\beta = \frac{\gamma}{3(3r+2)^2}$. In the following we assume that $\kappa_\emptyset$ exists; otherwise the argument is even simpler as we do not need to consider this class separately.

We construct sets $L_1$ and $L_2$ by picking an arbitrary vertex from each class of $K_1$ and each class of $K_2\setminus \{\kappa_\emptyset\}$, respectively. Consider bipartite graph $G_1'=(X,L_1,E(G') \cap (X \times L_1))$ and $G_2'=(X,L_2,E(G') \cap (X \times L_2))$, i.e., the bipartite graph induced in $G'$ between $X$ and $L_1$ or $L_2$, respectively. By the definitions of classes of $K_1$ and $K_2$, and the fact that $X$ is a $Z$-dominator, we infer that both these graphs satisfy the assumptions of Lemma~\ref{lem:orient2}.

Hence, Lemma~\ref{lem:orient2} ensures us that there exist assignments $\phi_1\colon L_1\to X$ and $\phi_2\colon L_2\to X$ such that $|\phi_t^{-1}(u)|\leq \chrgfun^1(\beta)\cdot n^\beta$ for each $u\in X$ and $t=1,2$, and moreover $v\phi_t(v)\in E(G')$ for each $v\in L_t$. Let us combine these assignments into $\phi\colon L_1\cup L_2\to X$ such that $|\phi^{-1}(u)|\leq \tau$ for each $u\in X$, where $\tau= 2\chrgfun^1(\beta)\cdot n^\beta$. By somehow abusing the notation, we regard $\phi$ also as an assignment with domain $K_1\cup K_2\setminus \{\kappa_\emptyset\}$ in a natural way.

We now consider the lexicographic product $G'' = G' \lexprod
  K_{\tau}$.  Let us construct a $1$-shallow minor $H'\in G''\nab 1$
  as follows: for every class $\kappa \in K \setminus
  \{\kappa_\emptyset\}$, contract all the copies of all the vertices
  of~$\kappa$ onto one of the copies of $\phi(\kappa) \in X$, so that
  every class $\kappa \in K \setminus \{\kappa_\emptyset\}$ is
  contracted onto a different vertex.  Since every vertex of~$X$ is
  chosen at most~$\tau$ times by~$\phi$, such a contraction is possible.
  Let $\overline{\phi}\colon K \setminus \{\kappa_\emptyset\}\to
  V(G'')$ be an injection that assigns classes of $K \setminus
  \{\kappa_\emptyset\}$ to the copies of vertices of~$X$ they are
  contracted onto.  Then it is easy to see that~$\overline{\phi}$
  defines a subgraph embedding of $H-\{\kappa_\emptyset\}$ into~$H'$.
  Consequently, $H-\{\kappa_\emptyset\}$ is a $1$-shallow minor
  of~$G''$. Hence, we can upper bound the grads of $H-\{\kappa_\emptyset\}$ using Proposition~\ref{prop:comp-of-grads} and Lemma~\ref{lem:lexprod}:
\begin{align*}
\grad_r(H-\{\kappa_\emptyset\}) &\leq \grad_{3r+1}(G'') \leq 4(8\tau(3r+1+\tau)\grad_{3r+1}(G')+4\tau)^{(3r+2)^2}\\
           &\leq 4(16\chrgfun^1(\beta)(3r+1+2\chrgfun^1(\beta))n^{2\beta}\cdot \grad_{9r+4}(G)+8\chrgfun^1(\beta)\cdot n^\beta)^{(3r+2)^2}\\
           &\leq 4(24\chrgfun^1(\beta)(3r+1+2\chrgfun^1(\beta))n^{2\beta}\cdot \gradfun(\beta,9r+4)\cdot n^{\beta})^{(3r+2)^2}\\
           &= 4(24 \chrgfun^1(\beta)(3r+1+2\chrgfun^1(\beta))\cdot \gradfun(\beta,9r+4))^{(3r+2)^2}\cdot n^{\gamma}.
\end{align*}
Graph $H$ can be obtained from $H-\{\kappa_\emptyset\}$ by adding a universal vertex and then possibly removing some edges. Hence, by Lemma~\ref{lem:grad-universal-vertex} we infer that 
$$\grad_r(H)\leq \grad_r(H-\{\kappa_\emptyset\})+1\leq 5(24 \chrgfun^1(\beta)(3r+1+2\chrgfun^1(\beta))\cdot \gradfun(\beta,9r+4))^{(3r+2)^2}\cdot n^{\gamma}.$$
This concludes the proof.
\end{proof}

\newcommand{\CE}{C_E}

\begin{corollary}\label{cor:classgraph-nd} 
There exists a constant $\CE$ such that $|E(H)|\leq \CE\cdot |X|\cdot n^{\delta}$.
\end{corollary}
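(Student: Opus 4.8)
The plan is to derive Corollary~\ref{cor:classgraph-nd} immediately by combining the two preceding structural bounds on the class graph $H$: the bound on its number of vertices from Lemma~\ref{lem:classgraph-V-nd} and the bound on its grads from Lemma~\ref{lem:classgraph-grad-nd}. The key elementary fact I would invoke is that for any graph, the number of edges is at most $\grad_0$ times the number of vertices, since $\grad_0(H)=\max_{M\in H\nab 0}\den(M)\geq \den(H)=||H||/|H|$.

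Concretely, first I would apply Lemma~\ref{lem:classgraph-grad-nd} with $r=0$ to get $\grad_0(H)\leq h(0)\cdot n^{\gamma}$, where $\gamma=\delta/2$. Then I would apply Lemma~\ref{lem:classgraph-V-nd} to get $|V(H)|\leq 2\neifun^1(\gamma)\cdot |X|\cdot n^{\gamma}$. Multiplying,
\[
|E(H)| \leq \grad_0(H)\cdot |V(H)| \leq 2h(0)\cdot\neifun^1(\gamma)\cdot |X|\cdot n^{2\gamma} = 2h(0)\cdot\neifun^1(\gamma)\cdot |X|\cdot n^{\delta},
\]
using $2\gamma=\delta$ in the last step. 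Hence setting $\CE = 2h(0)\cdot\neifun^1(\gamma)$, which is a constant depending only on $\mathcal G$ and $\eps$ (recall $\gamma=\delta/2=\eps/(8\excbic)$ is fixed once $\eps$ is fixed), completes the proof.

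There is no real obstacle here — the corollary is a direct consequence of the two lemmas already established, and the only thing to be careful about is tracking that $\gamma=\delta/2$ so that the product of the two $n^{\gamma}$ factors telescopes exactly into the promised $n^{\delta}$, and that all introduced quantities ($h(0)$, $\neifun^1(\gamma)$) are genuine constants for fixed $\eps$ rather than functions of $n$.
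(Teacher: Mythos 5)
Your proof is correct and follows exactly the same approach as the paper: combine $|E(H)|\leq\grad_0(H)\cdot|V(H)|$ with the bounds from Lemmas~\ref{lem:classgraph-V-nd} and~\ref{lem:classgraph-grad-nd}, and observe that the two $n^{\gamma}$ factors multiply to $n^{\delta}$.
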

\begin{proof}
  Since $|E(H)| \leq \grad_0(H) \cdot |V(H)|$, we apply the upper bounds proven
  in Lemma~\ref{lem:classgraph-V-nd} and in Lemma~\ref{lem:classgraph-grad-nd} and obtain
  \begin{align*}
      |E(H)| &\leq h(0)\cdot n^{\gamma}\cdot 2\neifun^1(\gamma)\cdot |X|\cdot n^{\gamma} \\
             &= 2h(0)\neifun^1(\gamma)\cdot |X|\cdot n^{\delta}.
  \end{align*}
  Hence, we can take $\CE:=2h(0)\neifun^1(\gamma)$.
\end{proof}

Now is the moment when we can finally set the constant $C_0$ that governs how much larger is $S$ compared to $X$; more precisely, we assumed that $|S|>C_0\cdot |X|\cdot n^{\delta}$. We namely set 
$$C_0=2^{\excbicone}\cdot \left( (\excbicone+1)\cdot \neifun^1(\gamma)+2\excbicone\CE\right).$$
The following lemma is the crux of our approach in this section. Using the bound on the sparsity of $H$, we identify a subclass whose size is large compared to its possible interaction in $H$.

\begin{lemma}[Large subclass]\label{lem:large-subclass-nd}
  There exists a class $\kappa \in K_1$ and a subset $\lambda \subseteq \kappa$
  with the properties that every member~$s \in \lambda$ has the
  same neighborhood $N_X(s)$ in $G$ and
  $$| \lambda | > \excbicone \cdot(\deg_H(\kappa) + 1)  + 1.$$
\end{lemma}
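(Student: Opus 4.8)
The plan is to prove the lemma by a straightforward averaging argument over the classes in $K_1$, using the sparsity bounds on the class graph $H$ that have already been established.

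First I would record the structural fact that enables the refinement. For any class $\kappa \in K_1$ all of its vertices share the same $X$-neighborhood $N'_X(\kappa)$ in $G'$, and after the pruning step performed right after Lemma~\ref{lem:S-small-N-nd} we have $|N'_X(\kappa)| < \excbicone$. Moreover, for every $s \in \kappa \subseteq S$ the genuine neighborhood $N_X(s) = N(s)\cap X$ in $G$ satisfies $N_X(s) \subseteq N'_X(s) = N'_X(\kappa)$, since contracting $N(s)\setminus X$ onto $s$ does not remove any edge from $s$ to $X$. Hence partitioning $\kappa$ according to the $G$-neighborhood $N_X(\cdot)$ produces at most $2^{|N'_X(\kappa)|} < 2^{\excbicone}$ nonempty sub-classes, each consisting of vertices with a common $N_X$ in $G$. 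Thus it suffices to exhibit $\kappa \in K_1$ one of whose sub-classes $\lambda$ has $|\lambda| > \excbicone\cdot(\deg_H(\kappa)+1)+1$.

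The core is then a counting contradiction. Suppose no such $\kappa$ and $\lambda$ exist, i.e., for every $\kappa \in K_1$ every sub-class has size at most $\excbicone(\deg_H(\kappa)+1)+1$. Since $\kappa$ splits into at most $2^{\excbicone}$ sub-classes, $|\kappa| \le 2^{\excbicone}\bigl(\excbicone\deg_H(\kappa)+\excbicone+1\bigr)$. Summing over $\kappa \in K_1$ and using that $K_1$ partitions $S$,
$$|S| = \sum_{\kappa\in K_1}|\kappa| \le 2^{\excbicone}\Bigl(\excbicone\sum_{\kappa\in K_1}\deg_H(\kappa) + (\excbicone+1)\,|K_1|\Bigr).$$
Now I plug in the two sparsity estimates: $\sum_{\kappa\in K_1}\deg_H(\kappa) \le \sum_{\kappa\in V(H)}\deg_H(\kappa) = 2|E(H)| \le 2\CE\cdot|X|\cdot n^{\delta}$ by Corollary~\ref{cor:classgraph-nd}, and $|K_1| \le \neifun^1(\gamma)\cdot|X|\cdot n^{\gamma} \le \neifun^1(\gamma)\cdot|X|\cdot n^{\delta}$ by Lemma~\ref{lem:classgraph-V-nd} (using $\gamma = \delta/2 \le \delta$ and $n \ge 1$). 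This gives
$$|S| \le 2^{\excbicone}\bigl(2\excbicone\CE + (\excbicone+1)\neifun^1(\gamma)\bigr)\cdot|X|\cdot n^{\delta} = C_0\cdot|X|\cdot n^{\delta},$$
by the choice of $C_0$, contradicting the bound $|S| > C_0\cdot|X|\cdot n^{\delta}$ ensured at the end of Section~\ref{sec:extraction-nd}. Hence the desired $\kappa$ and $\lambda$ exist.

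The argument itself is short, since the real work has been front-loaded into Lemmas~\ref{lem:S-small-N-nd}, \ref{lem:classgraph-V-nd}, \ref{lem:classgraph-grad-nd} and Corollary~\ref{cor:classgraph-nd}, and the constant $C_0$ was reverse-engineered exactly so that the final inequality closes. The points that need care are: (i) checking that passing from $N'_X$ to the true $G$-neighborhood $N_X$ only multiplies the number of classes by the constant $2^{\excbicone}$ — this is precisely where the pruning bound $|N'_X(\kappa)| < \excbicone$ is indispensable; and (ii) the bookkeeping of exponents of $n$, making sure the $n^{\gamma}$ coming from $|K_1|$ is absorbed into the $n^{\delta}$ budget. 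The only genuinely conceptual step — which is really the insight behind the whole subsection rather than an obstacle at this point — is to weigh each class $\kappa$ against $\deg_H(\kappa)+1$ individually, so that almost-linearity of $|E(H)|$ in $|X|$ forces a single overweight sub-class by pigeonhole.
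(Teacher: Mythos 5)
Your proof is correct and is essentially the same as the paper's: the paper introduces the potential $\Phi(\kappa)=|\kappa|-2^{\excbicone}(\excbicone(\deg_H(\kappa)+1)+1)$ and shows $\sum_{\kappa\in K_1}\Phi(\kappa)>0$ using exactly the bounds $\sum\deg_H(\kappa)\le 2|E(H)|$, Lemma~\ref{lem:classgraph-V-nd}, Corollary~\ref{cor:classgraph-nd}, and the choice of $C_0$; you phrase the identical computation contrapositively as an averaging contradiction, and the refinement of $\kappa$ into fewer than $2^{\excbicone}$ subclasses via $N_X(s)\subseteq N'_X(\kappa)$ with $|N'_X(\kappa)|<\excbicone$ is likewise the paper's step verbatim.
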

\begin{proof}
Let us define a potential function for classes $\kappa\in K_1$ as follows:
  $$
    \Phi(\kappa) = |\kappa| - 2^{\excbicone} (\excbicone \cdot(\deg_H(\kappa) + 1)  + 1).
  $$
Summing up this potential through all the classes of $K_1$ we obtain the following:
  \begin{align*}
    \sum_{\kappa \in K_1} \Phi(\kappa) &=
      \sum_{\kappa \in K_1} |\kappa| - 
      2^{\excbicone}\cdot \sum_{\kappa \in K_1} (\excbicone \cdot(\deg_H(\kappa) + 1)  + 1) \\
      &= |S| - 
      2^{\excbicone}\cdot \left( \excbicone \sum_{\kappa \in K_1}\deg_H(\kappa)+ (\excbicone+1)|K_1| \right).
  \end{align*}
We now use the fact that $\sum_{\kappa \in K_1}\deg_H(\kappa)\leq \sum_{\kappa \in V(H)}\deg_H(\kappa)=2|E(H)|$ and the bounds of Lemma~\ref{lem:classgraph-V-nd} and Corollary~\ref{cor:classgraph-nd}:
  \begin{align*}
      \sum_{\kappa \in K_1} \Phi(\kappa) &\geq |S| - 2^{\excbicone} \left( 2\excbicone\cdot \CE \cdot |X|\cdot n^\delta + (\excbicone+1)\cdot \neifun^1(\gamma)\cdot |X|\cdot n^\gamma \right)\\
                                         &\geq |S| - 2^{\excbicone} \left( 2\excbicone\cdot \CE \cdot |X|\cdot n^\delta + (\excbicone+1)\cdot \neifun^1(\gamma)\cdot |X|\cdot n^\delta \right)\\
                                         &= |S| - C_0\cdot |X|\cdot n^\delta>0.      
  \end{align*}
Hence, we infer that there exists a class $\kappa\in K_1$ such that $\Phi(\kappa)>0$. Equivalently,
$$|\kappa| > 2^{\excbicone} (\excbicone \cdot(\deg_H(\kappa) + 1)  + 1).$$
Let us partition vertices of $\kappa$ into subclasses with respect to their neighborhoods in $X$ in graph $G$ (recall that they have the same neighborhoods in $X$ in graph $G'$ by the definition of $\simeq_X$, but the neighborhoods in $X$ in graph $G$ may differ). Recall that we have that $|N'_X(\kappa')|<\excbicone$ for each $\kappa'\in K_1$, and $N_X(s)\subseteq N'_X(\kappa)$ for each $s\in \kappa$, so the number of these subclasses is actually less than $2^{\excbicone}$. Hence, there exists a subclass $\lambda\subseteq \kappa$ of vertices with the same $X$-neighborhood in $G$ such that $|\lambda|\geq |\kappa|/2^{\excbicone}>\excbicone \cdot(\deg_H(\kappa) + 1)  + 1$.
\end{proof}

We now prove the bottom line: every vertex of $\lambda$ is an irrelevant dominatee.

\begin{lemma}\label{lem:replace-nd}
  Let $z$ be an arbitrary vertex of $\lambda$.  Then
  $Z\setminus \{z\}$ is still a domination core.
\end{lemma}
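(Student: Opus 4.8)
The plan is to run the exchange argument of Lemma~\ref{lem:irrelevant}, adapted to the class-graph machinery built in this section. Put $Z'=Z\setminus\{z\}$ and let $D$ be an arbitrary minimum-size $Z'$-dominator; we must show that $D$ dominates all of $G$. If $D$ already dominates $z$, then $D$ is a $Z$-dominator, and since $Z'\subseteq Z$ gives $\ds(G,Z')\le\ds(G,Z)\le|D|=\ds(G,Z')$, $D$ is in fact a minimum-size $Z$-dominator; as $Z$ is a domination core, $D$ dominates $G$ and we are done. Hence the entire difficulty is to rule out the possibility that $z$ is not dominated by $D$, which I would do by constructing from $D$ a strictly smaller $Z'$-dominator.

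So assume $z$ is not dominated by $D$. For each $s\in\lambda\setminus\{z\}$ pick $v(s)\in D$ dominating $s$ and let $W=\{v(s):s\in\lambda\setminus\{z\}\}$. First, $v(s)\notin X$: otherwise $v(s)\in N_X(s)=N_X(z)$ --- here we use the defining property of $\lambda$ from Lemma~\ref{lem:large-subclass-nd}, that all its vertices have the same $G$-neighbourhood in $X$ --- and so $v(s)$ would dominate $z$. Thus $v(s)\in N[s]\cap R\subseteq U_\kappa$, and since the sets $N[s]\cap R$ for $s\in S$ are pairwise disjoint ($S$ is $2$-scattered in $G[R]$), the map $s\mapsto v(s)$ is injective, so $|W|=|\lambda|-1>\excbicone(\deg_H(\kappa)+1)$. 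Note also that the image of $v(s)$ under the contraction producing $G'$ is $s\in\kappa$, whence $N_X(v(s))\subseteq N'_X(s)=N'_X(\kappa)$.

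The heart of the argument is to localize $Z'\cap N[W]$. The $X$-part is easy: $Z'\cap N[W]\cap X\subseteq\bigcup_{w\in W}N_X(w)\subseteq N'_X(\kappa)$. For the $R$-part, take $w'\in Z'\cap N[W]\cap R$ with $w'\in N[v(s)]$; tracing the contraction, the image of $v(s)$ is $s\in\kappa$ while the image of $w'$ lies in some class $\kappa'$ (with $w'\in U_{\kappa'}$ if $w'\in R_1$, or $w'\in\kappa'\in K_2$ if $w'\in R_2$), and the edge or equality between $w'$ and $v(s)$ forces either $\kappa'\in N_H(\kappa)$ or $\kappa'=\kappa$, so $\kappa'\in N_H[\kappa]$. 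Moreover $w'\in Z'\subseteq Z$ and $X$ is a $Z$-dominator, so $N_X(w')\ne\emptyset$, and since $w'$'s image lies in $\kappa'$ we obtain $N_X(w')\subseteq N'_X(\kappa')$; in particular $\kappa'\neq\kappa_\emptyset$. I would then set
\[
D''=\bigcup_{\kappa'\in N_H[\kappa]\cap K_1}N'_X(\kappa')\ \cup\ \{x_{\kappa'}:\kappa'\in N_H[\kappa]\cap K_2\},
\]
where $x_{\kappa'}$ is an arbitrary vertex of $N'_X(\kappa')$ (omitted when that set is empty), and $D'=(D\setminus W)\cup D''$. Since $|N'_X(\kappa')|<\excbicone$ for every $\kappa'\in K_1$, we have $|D''|\le(\excbicone-1)(\deg_H(\kappa)+1)<\excbicone(\deg_H(\kappa)+1)<|W|$, so $|D'|<|D|$. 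The localization shows $D''$ dominates every vertex of $Z'\cap N[W]$: the $X$-vertices sit in $N'_X(\kappa)\subseteq D''$ and dominate themselves; a vertex of $Z'\cap U_{\kappa'}$ with $\kappa'\in K_1$ is dominated by its neighbour in $N_X(w')\subseteq N'_X(\kappa')\subseteq D''$; and a vertex of $Z'\cap\kappa'$ with $\kappa'\in K_2$ has $N_X(w')=N'_X(\kappa')\ni x_{\kappa'}$ (noting $Z'$ never meets $\kappa_\emptyset$). Hence $D'$ is a $Z'$-dominator strictly smaller than $D$, contradicting minimality, and the lemma follows.

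I expect the bookkeeping in the localization step to be the main obstacle: one must argue precisely how each edge of $G$ incident to $W$ behaves under the contraction defining $G'$ --- whether it becomes an edge of the class graph $H$ at $\kappa$ or collapses inside $\kappa$ --- while simultaneously controlling $X$-neighbourhoods, which only grow under contraction, so as to get $N_X(w')\subseteq N'_X(\kappa')$ for the correct class. The two bounds packaged in Lemma~\ref{lem:large-subclass-nd} then play complementary roles here: the factor $\excbicone$ absorbs the (bounded) $G'$-neighbourhood of each relevant $K_1$-class in $X$, while $\deg_H(\kappa)$ controls how many such classes can arise. The degenerate cases $v(s)=s$, $w'\in W$, and $\kappa'=\kappa_\emptyset$ should be checked separately but are routine.
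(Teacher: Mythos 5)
Your proof is correct and takes essentially the same route as the paper's: identify the dominators $v(s)$ of $\lambda\setminus\{z\}$, argue they are pairwise distinct vertices of $U_\kappa\setminus X$, and exchange them for a smaller set $D''\subseteq X$ built from the $X$-profiles of $N_H[\kappa]$. The only cosmetic differences are that you condition on whether $D$ dominates $z$ (the paper conditions on $D\cap N_X(\lambda)=\emptyset$, which is implied by your assumption and suffices for the same deduction that no $v(s)$ lies in $X$), and that you remove only $W=\{v(s)\}$ from $D$ rather than all of $D\cap U_\kappa$ as the paper does; both variants give $|D'|<|D|$ and require the same localization of the freed-up vertices into the classes of $N_H[\kappa]$, which you carry out correctly.
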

\begin{proof}
Let $Z'=Z\setminus \{z\}$ and let $N_X(\lambda)\subseteq N'_X(\kappa)$ be equal to $N_X(s)$ for any
$s\in \lambda$. Take any minimum-size $Z'$-dominator $D$; we need to prove that $D$ is a dominating set of $G$. In the following we work in the graph $G$ all the time.

Suppose first that $D\cap N_X(\lambda)\neq \emptyset$. Then in particular $z$ is also dominated by $D$, hence $D$ is also a $Z$-dominator. As $Z\supseteq Z'$, $D$ must be a minimum-size $Z$-dominator, and hence also a dominating set in $G$ since $Z$ was a domination core.

Suppose then that $D\cap N_X(\lambda) = \emptyset$. We are going to arrive at a contradiction with the assumption that $D$ is of minimum possible size. Since $\lambda\setminus\{z\}\subseteq S\setminus\{z\}\subseteq Z'$, vertices of $\lambda\setminus\{z\}$ need in particular to be dominated by $D$. Since $S$ is $2$-scattered in $G-X$, so is $\lambda\setminus\{z\}$ as well. Hence any vertex of $D$ can dominate only at most one vertex of $\lambda\setminus\{z\}$, as none of them can be dominated from $X$ by the assumption that $D\cap N_X(\lambda) = \emptyset$. Also, the vertices of $D$ that dominate vertices of $\lambda\setminus\{z\}$ need to be contained in $U_\kappa$; recall that $U_\kappa:=\bigcup_{s\in \kappa} N[s]\cap R$ is the set of all vertices contracted onto vertices of $\kappa$ during the construction of $G'$. Hence, we conclude that $|D\cap U_\kappa|\geq |\lambda\setminus\{z\}|> \excbicone \cdot(\deg_H(\kappa) + 1)$.

Construct now a set $D'$ from $D$ by the following steps:
\begin{enumerate}[(a)]
\item\label{st:rem} remove all the vertices of $D\cap U_\kappa$,
\item\label{st:add1} add all the vertices of $N'_X(\kappa_1)$ for every $\kappa_1\in N_H[\kappa]\cap K_1$, and 
\item\label{st:add2} add an arbitrary vertex of $N_X(\kappa_2)$ for each $\kappa_2\in N_H[\kappa]\cap K_2$, provided that $N_X(\kappa_2)$ is non-empty. 
\end{enumerate}
In step~\eqref{st:rem} we have removed more than $\excbicone \cdot(\deg_H(\kappa) + 1)$ vertices from $D$, whereas in steps~\eqref{st:add1} and~\eqref{st:add2} we have added in total at most $\excbicone\cdot (\deg_H(\kappa) + 1)$ vertices: at most $\excbicone$ vertices per each $\kappa_1\in N_H[\kappa]\cap K_1$, and at most one vertex per each $\kappa_2\in N_H[\kappa]\cap K_2$. Hence $|D'|<|D|$, and to arrive at a contradiction it remains to prove that $D'$ is a $Z'$-dominator. 

Take any $u\in Z'$ which became not dominated when $D\cap U_\kappa$ was removed during the construction of $D'$; we prove that $u$ is dominated by the vertices added to $D'$ in steps~\eqref{st:add1} and~\eqref{st:add2}. Since $u$ was dominated by a vertex from $D\cap U_\kappa$, we have four cases: $u$ can belong (a) to $N'_X(\kappa)$, or (b) to $U_\kappa$, or (c) to $U_{\kappa_1}$ for some $\kappa_1 \in N_H(\kappa) \cap K_1$, or (d) to some $\kappa_2 \in N_H(\kappa) \cap K_2$. Moreover, since $u\in Z'$ and $X$ is a $Z$-dominator, we infer that $u$ has at least one neighbor in $X$. In case (a) we have explicitly included $N'_X(\kappa)$ to $D'$, so even $u\in D'$. In cases (b) and (c) we have added the sets $N'_X(\kappa_1)$ to $D'$ for each $\kappa_1\in N_H[\kappa]\cap K_1$, so any neighbor of $u$ in $X$ belongs to $D'$ and thus dominates $u$. In case (d), we have that $N_X(u)=N_X(\kappa_2)$ and this set is non-empty, since $u$ indeed has a neighbor in $X$. Hence, we added one vertex of $N_X(u)$ to set $D'$ and this vertex thus dominates $u$.
\end{proof}

We now conclude the proof of Lemma~\ref{lem:reduce-corce-nd}, which also concludes the proof of Theorem~\ref{thm:dcore-nd}. Adopting the notation of Section~\ref{sec:extraction-nd}, we take $g(\eps)=(\excbic\cdot \neifun(\eps/2)+1)\cdot C_{\excbic}$ (note that $C_{\excbic}$ also depends on $\eps$), so that Lemma~\ref{lem:extraction-terminates-nd} is applicable whenever $|Z|>g(\eps)\cdot k\cdot n^{\eps}$. Hence, we can safely apply the algorithm of Section~\ref{sec:extraction-nd}, which clearly works in polynomial time as it boils down to a constant number of applications of the algorithm of Lemma~\ref{lem:better-dvorak-nd}, and obtain a pair $(X,S)$ that can be used in the second phase. Construction of the class graph $H$ can be clearly done in polynomial time. Also, in polynomial time we can recognize the class $\kappa$ and subclass $\lambda\subseteq \kappa$ that satisfy the statement of Lemma~\ref{lem:large-subclass-nd}: this requires iterating through all the classes $\kappa\in K_1$, and then examining the partition of the vertices of the found class $\kappa$ with respect to the neighborhoods in $X$. Finally, Lemma~\ref{lem:replace-nd} ensures that any vertex of $\lambda$ can be output by the algorithm as an irrelevant dominatee.

\subsection{Reducing dominators}\label{sec:dominators-nd}

Having presented how to compute a small dominating core in the nowhere dense case, we can proceed to the proof of Theorem~\ref{thm:main-nd}. Before this, we prove one more lemma from which the main result for the nowhere dense case will follow very easily. Its proof is essentially the same as the proof of Theorem~\ref{thm:main-be}.

\begin{lemma}\label{lem:nd-main-neps}
  Let $\mathcal{G}$ be a nowhere dense graph class and let $\eps > 0$
  be a real number. There exists a constant $C_\eps$ and a polynomial-time algorithm that,
  given an $n$-vertex graph $G \in \mathcal{G}$ and an integer $k$, either
  correctly concludes that $\ds(G) > k$ or finds a subset of vertices
  $Y \subseteq V(G)$ of size at most $C_\eps\cdot k\cdot n^{\eps}$ with the property that
  $\ds(G) \leq k$ if and only if $\ds(G[Y]) \leq k$.
\end{lemma}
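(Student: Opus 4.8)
The plan is to mirror the ``reducing dominators'' part of the proof of Theorem~\ref{thm:main-be}, specialized to $r=1$, replacing the bounded-expansion tools with their nowhere dense counterparts and tracking the extra polynomial factors $n^{\eps}$. First I would dispose of the degenerate cases: if $V(G)=\emptyset$ the statement is vacuous, and if $k\geq n$ then $\ds(G)\leq n\leq k$, so I output $Y=V(G)$, which has size $n\leq k\leq k\cdot n^{\eps}$ and for which both sides of the asserted equivalence hold. Hence I may assume $1\leq k<n$; I may also assume $\eps<1$, since a smaller value of $\eps$ only strengthens the statement. Fix $\eps'=\eps/3$.

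I would then run the algorithm of Theorem~\ref{thm:dcore-nd} with parameter $\eps'$. If it reports $\ds(G)>k$ we are done; otherwise it returns a domination core $Z\subseteq V(G)$ with $|Z|\leq g(\eps')\cdot k\cdot n^{\eps'}$. Since a domination core of a nonempty graph is nonempty, $Z\neq\emptyset$ and $\ds(G)=\ds(G,Z)$. Next I partition $V(G)\setminus Z$ into the classes of the relation $u\simeq v\iff N_Z(u)=N_Z(v)$; by Lemma~\ref{lem:twin-classes2} (with $X:=Z$ and the parameter $\eps'$) the number of classes is at most $\neifun(\eps')\cdot|Z|^{1+\eps'}$. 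Setting $Y:=Z\cup\{v_\kappa:\kappa\text{ a class}\}$ for arbitrarily chosen representatives $v_\kappa\in\kappa$, we get (using $|Z|\geq 1$ and $k<n$, so that $k^{\eps'}\leq n^{\eps'}$)
\[
|Y|\;\leq\;(1+\neifun(\eps'))\cdot|Z|^{1+\eps'}\;\leq\;(1+\neifun(\eps'))\,g(\eps')^{1+\eps'}\cdot k\cdot n^{\eps'(2+\eps')}\;\leq\;C_\eps\cdot k\cdot n^{\eps},
\]
where $C_\eps:=(1+\neifun(\eps/3))\,g(\eps/3)^{1+\eps/3}$, since $\eps'(2+\eps')\leq 3\eps'=\eps$. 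Every step runs in polynomial time.

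It remains to check that $\ds(G)\leq k$ iff $\ds(G[Y])\leq k$, which is exactly the argument behind Claim~\ref{cl:containedY} for $r=1$. From a minimum-size dominating set $D$ of $G$ --- which, as $Z$ is a domination core, is a minimum-size $Z$-dominator --- I replace $\kappa\cap D$ by $v_\kappa$ for each class $\kappa$ meeting $D$; the result $D'\subseteq Y$ has $|D'|\leq|D|$ and still $Z$-dominates $G$, because $v_\kappa$ has the same $Z$-neighborhood as every vertex of $\kappa$. Hence $D'$ is again a minimum-size $Z$-dominator and therefore, by the domination-core property, a dominating set of $G$; so if $\ds(G)\leq k$ then $D'\subseteq Y$ witnesses $\ds(G[Y])\leq k$ (all its domination edges lie inside $Y$). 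Conversely, any dominating set of $G[Y]$ of size at most $k$ is, since $Z\subseteq Y$ and $G[Y]$ is induced, a $Z$-dominator in $G$, so $\ds(G)=\ds(G,Z)\leq k$. The one place where care is needed is the exponent bookkeeping in the middle paragraph: one must choose $\eps'$ small enough that the twin-class blow-up $|Z|^{1+\eps'}$ composed with $|Z|\leq g(\eps')\,k\,n^{\eps'}$ stays below $k\cdot n^{\eps}$, and invoke $k<n$ to keep the dependence on $k$ linear; apart from this, the combinatorial content is identical to the bounded-expansion case, which is why the lemma can be proved almost verbatim from (the dominator-reduction step of) Theorem~\ref{thm:main-be}.
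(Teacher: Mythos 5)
Your proof is correct and takes essentially the same approach as the paper's: compute a small domination core $Z$ via Theorem~\ref{thm:dcore-nd}, partition $V(G)\setminus Z$ by $Z$-neighborhoods using Lemma~\ref{lem:twin-classes2}, keep one representative per class, and verify equivalence exactly as in Claim~\ref{cl:containedY}. The only difference is cosmetic bookkeeping --- the paper runs both results at parameter $\eps/2$ and absorbs the $|Z|^{\eps/2}$ factor via $|Z|\le n$, while you use $\eps/3$ and the (separately dispatched) assumption $k<n$ to absorb $k^{\eps'}$ --- yielding a slightly different but equally valid $C_\eps$.
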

\begin{proof}
The algorithm works as follows. First, using the algorithm of Theorem~\ref{thm:dcore-nd} for parameter $\eps/2$ we compute a domination core $Z\subseteq V(G)$ such that $|Z|\leq g(\eps/2)\cdot k\cdot n^{\eps/2}$. If the algorithm of Theorem~\ref{thm:dcore-nd} concluded that $\ds(G)>k$, then we can also terminate and provide this outcome. Hence, from now on we assume that the domination core $Z$ has been successfully computed.

Let $R:=V(G)\setminus Z$ and partition the vertices of $R$ into classes with respect to their neighborhoods in $Z$. From Lemma~\ref{lem:twin-classes2} we infer that the number of these classes is at most $\neifun(\eps/2)\cdot |Z|\cdot n^{\eps/2}\leq \neifun(\eps/2)g(\eps/2)\cdot k\cdot n^{\eps}$. Construct set $Y$ by taking $Z$ and, for every nonempty class $\kappa$ of the considered partition, adding an arbitrarily picked vertex $v_\kappa\in \kappa$. Note that in this manner we have that:
$$|Y|\leq |Z|+\neifun(\eps/2)g(\eps/2)\cdot k\cdot n^{\eps}\leq (\neifun(\eps/2)+1)g(\eps/2)\cdot k\cdot n^{\eps},$$
which means that we can set $C_\eps:=(\neifun(\eps/2)+1)g(\eps/2)$. We are left with verifying that $\ds(G)\leq k$ if and only if $\ds(G[Y]) \leq k$.

Suppose first that $\ds(G)\leq k$, and let $D$ be a minimum-size dominating set in $G$ so that $|D|=\ds(G)\leq k$. As $D$ is a dominating set in $G$, it is in particular a $Z$-dominator, and it is a minimum-size $Z$-dominator in $G$ since $Z$ is a domination core and $\ds(G)=\ds(G,Z)$. Construct set $D'$ from $D$ by replacing the set $\kappa\cap D$ with $\{v_\kappa\}$ for each class $\kappa$ of the partition with $|\kappa\cap D|\geq 1$. Clearly, $|D'|\leq |D|$. Moreover, observe that set $D'$ is also a $Z$-dominator in $G$, since every vertex $v_\kappa$ dominates exactly the same set of vertices in $Z$ as other vertices of $\kappa$. As $D$ was a minimum-size $Z$-dominator, we infer that in fact $|D'|=|D|=\ds(G,Z)$ and $D'$ is also a minimum-size $Z$-dominator. Since $Z$ is a domination core, we infer that $D'$ is a dominating set in $G$. Finally, as $D'\subseteq Y$, we infer that $D'$ is also a dominating set in $G[Y]$ and hence $\ds(G[Y])\leq |D'|\leq k$.

Suppose now that $\ds(G[Y])\leq k$, and let $D'$ be a dominating set in $G[Y]$ such that $|D'|\leq k$. Set $D'$ in particular dominates the whole set $Z\subseteq Y$, which means that $D'$ is also a $Z$-dominator in $G$. Hence $\ds(G,Z)\leq |D'|\leq k$. As $Z$ is a domination core, we have that $\ds(G)=\ds(G,Z)$ and we conclude that $\ds(G)\leq k$.
\end{proof}

\restatemainnd*
\begin{proof}
We apply the algorithm of Lemma~\ref{lem:nd-main-neps} iteratively to obtain sets $V(G)=Y_0\supseteq Y_1\supseteq Y_2\supseteq Y_3\supseteq\ldots$: In the $i$-th iteration we apply the algorithm to $G[Y_{i-1}]$ in order to compute $Y_i\subseteq Y_{i-1}$. We proceed in this manner up to the point when the algorithm returns $Y_i=Y_{i-1}$, in which case we simply output $Y:=Y_i$. Clearly, $Y$ computed in this manner satisfies the requirement that $\ds(G)\leq k$ if and only if $\ds(G[Y]) \leq k$, so it remains to establish the upper bound on the size of $Y$.

Since the algorithm of Lemma~\ref{lem:nd-main-neps} returned $Y_i=Y_{i-1}$, it follows that
$$|Y|=|Y_i|\leq C_\eps\cdot k\cdot |Y_{i-1}|^{\eps}=C_\eps\cdot k\cdot |Y|^{\eps}.$$
Here, $C_\eps$ is the constant from the statement of Lemma~\ref{lem:nd-main-neps}. Consequently,
\begin{align*}
|Y|& \leq (C_\eps\cdot k)^{\frac{1}{1-\eps}}\leq C_\eps^{1+2\eps}\cdot k^{1+2\eps}.
\end{align*}
By rescaling $\eps$ by factor $2$ we obtain the result.
\end{proof}



\section{Hardness of {\sc{Connected Dominating Set}}}\label{sec:condom}

In this section we prove Theorem~\ref{thm:condom-lb}; let us recall its statement.

\restatecondomlb*

The proof of Theorem~\ref{thm:condom-lb} is a refinement of the 
proof of Cygan et al.~\cite{bimber} that \probCDS{} does not admit
a polynomial kernel in graphs of bounded degeneracy.
The main idea of~\cite{bimber} is to use \probMotif{} as a pivot problem.

\defparproblem{\probMotif{}}{A graph $G$, an integer $k$, and a surjective function $c:V(G) \to [k]$.}{$k$}{Does there exist a set $X \subseteq V(G)$ of size exactly $k$ such that $G[X]$ is connected and $c|_X$ is bijective?}

We call the function $c$ a \emph{coloring} and each value $i \in [k]$ is a \emph{color}.
In this wording, in the \probMotif{} problem we seek for a set of vertices, one of every color, that induces a connected subgraph 
of $G$.

Fellows et al.~\cite{mike:motif} were first to study the parameterized complexity of \probMotif{} and, among other results,
they prove that the problem is hard already in a very restrictive setting.

\begin{theorem}[\cite{mike:motif,bimber}]\label{thm:mike-motif}
The \probMotif{} problem, restricted to graphs $G$ being trees of maximum degree $3$,
is NP-complete and does not admit a polynomial compression when parameterized by $k$ unless \compassunless.
\end{theorem}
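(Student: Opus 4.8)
The statement has two halves. \emph{Membership in NP} is immediate: a certificate is the set $X$ of $k$ vertices, and one checks in polynomial time that $G[X]$ is connected and that $c$ restricted to $X$ is a bijection onto $[k]$. The \emph{NP-hardness} under the stated restriction is the reduction of Fellows et al.~\cite{mike:motif}; I would recall its shape, namely a reduction from a suitably restricted NP-hard problem in which the target graph is assembled from path-like ``selector'' gadgets so that the host graph is a tree of maximum degree~$3$ and the motif is colorful. Nothing new is needed here beyond quoting that reduction.

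The substance is the \emph{incompressibility} claim, and here the plan is to invoke the OR-composition / cross-composition machinery of Bodlaender, Downey, Fellows, and Hermelin together with Fortnow and Santhanam (in the cross-composition formulation of Bodlaender, Jansen, and Kratsch): it suffices to exhibit a polynomial-time algorithm which, given many instances of colorful \probMotif{} on trees of maximum degree~$3$, produces a single such instance whose answer is the OR of the inputs and whose parameter $k^*$ is polynomial in the maximum input size and \emph{independent of the number of inputs}. First, a standard polynomial equivalence relation lets us assume that all inputs $(T_1,k,c_1),\dots,(T_t,k,c_t)$ share the same $k$ and palette $[k]$. The combining construction then strings the trees along a fresh path $v_1 - v_2 - \cdots$, where all path vertices receive one new color $\star$ and each $v_j$ is joined to a degree-$\le 2$ ``port'' of the $j$-th tree; the new target is $k^* = O(k)$, and the resulting graph is again a tree of maximum degree~$3$. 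One direction is then clean: a solution $X$ of the combined instance contains exactly one $\star$-vertex $v_j$, which is a cut vertex whose only non-$\star$ neighbor in $X$ is the port of the $j$-th tree, so $X\setminus\{v_j\}$ is forced to lie entirely inside that tree and is a colorful motif there.

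The reverse direction is the part I expect to fight with: a colorful motif of an individual tree $T_i$ need not pass through whatever vertex we picked as its port, so $Y\cup\{v_j\}$ need not be connected. The fix is a preprocessing step applied to each input before stringing: branch over which occurrence of color~$1$ is used by the sought motif (this multiplies the number of instances by at most $|V(T_i)|$, which the composition framework tolerates), and in the branch that commits to the occurrence $z$, recolor $z$ with a fresh color, making $z$ uniquely colored and hence present in \emph{every} solution, while preserving whether the instance is a yes-instance. A little surgery (splitting $z$ into an edge, or hanging a fresh uniquely-colored pendant) then yields a forced port of degree $\le 2$ without destroying the ``tree of maximum degree~$3$'' property. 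After this preprocessing both implications go through, $k^* \le k+2$ is independent of $t$, so the cross-composition is valid; the degree-$3$ bookkeeping in this gadgetry is the only delicate point, and it is exactly what is carried out by Cygan et al.~\cite{bimber}. Applying the composition framework to this cross-composition, together with the NP-hardness from~\cite{mike:motif}, yields that \probMotif{} restricted to trees of maximum degree~$3$ admits no polynomial compression unless \compassunless{}.
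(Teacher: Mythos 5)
This theorem is cited in the paper from \cite{mike:motif} (NP\nobreakdash-hardness on trees of maximum degree~$3$) and \cite{bimber} (incompressibility); the paper supplies no proof of its own, so there is no in-paper argument to compare yours against. Your overall plan --- NP-membership is immediate, NP-hardness is quoted, and incompressibility is established by an OR-cross-composition that strings the input trees along a freshly colored spine after first branching on a vertex to be forced into the solution --- is the right shape, and you have correctly located the crux: the port vertex of $T_j$ need not belong to any colorful motif of $T_j$.

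The concrete gadget you propose to close that gap does not work as stated. If you commit to a color-$1$ vertex $z$ and ``recolor $z$ with a fresh color'' $k+1$, the new instance has palette $[k+1]$ and target $k+1$, so a solution must still contain a color-$1$ vertex \emph{distinct} from $z$; hence $z$ is no longer the committed color-$1$ representative. Concretely, a colorful motif $Y\ni z$ of the original instance has color multiset $\{2,\dots,k,k+1\}$ after recoloring and is one color short of the new target, while a solution of the recolored instance, once $z$ is discarded, is a set of $k$ vertices with colors $[k]$ that need not be connected. Neither direction of the claimed equivalence holds. Your alternative --- hanging a fresh uniquely-colored pendant $p$ off $z$ and setting the target to $k+1$ --- does preserve the equivalence (a leaf can be removed from a connected set without disconnecting it, and can always be re-attached to a motif through $z$), but it raises $\deg(z)$ by one. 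Since a color-$1$ vertex may already have degree $3$, this construction leaves the class of trees of maximum degree $3$, which is precisely the restriction the theorem is about. So one needs either to restrict the branching to vertices of a priori low degree or to design a more careful local surgery that simultaneously forces a degree-$\le 2$ port, preserves the motif structure, and respects the degree bound. That is the actual technical content of~\cite{bimber}; the step you flag as ``the only delicate point'' and defer to them is a genuine gap in the argument, not mere bookkeeping.
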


Here, a \emph{polynomial compression} is a generalization of the notion of a polynomial kernel,
where we relax the requirement that the output needs to be an instance of a original problem.
Formally, a polynomial compression from a parameterized language $P$ into a (classic) language $L$ is an algorithm that, given
an instance $(x,k)$, works in time polynomial in $|x|+k$ and outputs a string $y$ with the following properties: 
(i) $(x,k) \in P$ if and only if $y \in L$, and (ii) $|y|$ is bounded polynomially in $k$.

The main observation of~\cite{bimber} is that \probMotif{} easily reduces to \probCDS{}.
Let $I=(G,k,c)$ be a \probMotif{} instance. Consider a graph
$\CDSG{I}$ constructed as follows: we first take $\CDSG{I} = G$ and then, for every color $i \in [k]$, we add two vertices
$w_i$ and $w_i^\circ$, connected by an edge, and make $w_i$ adjacent to $c^{-1}(i)$, that is, to all vertices of $G$ of color $i$.
It is easy to observe the following.
\begin{lemma}[\cite{bimber}]
$I$ is a yes-instance to \probMotif{} if and only if $\CDSG{I}$ admits a connected dominating set of size at most $2k$.
\end{lemma}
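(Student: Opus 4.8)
The plan is to prove the two directions separately, each by an explicit construction; the reverse direction carries essentially all the content. For the direction ``\emph{motif $\Rightarrow$ connected dominating set}'', suppose $X$ witnesses that $I$ is a yes-instance, so $|X|=k$, $G[X]$ is connected, and $c|_X$ is a bijection. I would take $D = X \cup \{w_1,\dots,w_k\}$, of size exactly $2k$. Domination is immediate: every $v \in V(G)$ of colour $i$ is adjacent to $w_i \in D$, every $w_i^\circ$ is adjacent to $w_i \in D$, and $X \cup \{w_i\}_i \subseteq D$. For connectivity, observe that since $c|_X$ is a bijection each $w_i$ has exactly one neighbour in $X$, namely the unique $x \in X$ with $c(x)=i$; hence $\CDSG{I}[D]$ is obtained from the connected graph $G[X]$ by attaching $k$ pendant vertices, and is therefore connected.

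For the direction ``\emph{connected dominating set $\Rightarrow$ motif}'', let $D$ be a connected dominating set of $\CDSG{I}$ with $|D|\le 2k$ (ignoring the trivial cases $k\le 1$, where both sides hold). I would proceed in four steps. (i) For each $i$ the pendant $w_i^\circ$ has $w_i$ as its only neighbour, so domination forces $w_i\in D$ or $w_i^\circ\in D$; in the latter case, since $w_i^\circ$ is a leaf of $\CDSG{I}$ and $|D|\ge 2$, deleting $w_i^\circ$ keeps $D$ connected and dominating and forces $w_i\in D$ anyway, so we may assume $\{w_1,\dots,w_k\}\subseteq D$ and $w_i^\circ\notin D$; consequently $D':=D\cap V(G)$ has $|D'|=|D|-k\le k$. (ii) For every colour $i$, $D'$ contains a vertex of colour $i$: otherwise $w_i$'s only possible neighbour in $D$ is $w_i^\circ\notin D$, making $w_i$ isolated in $\CDSG{I}[D]$, contradicting connectivity. (iii) Combining (ii) with $|D'|\le k$ gives $|D'|=k$ and $c|_{D'}$ bijective. (iv) Finally $G[D']$ is connected: given $u,v\in D'$ and a simple $u$--$v$ path in $\CDSG{I}[D]$, this path cannot pass through any $w_i^\circ$ internally (degree $1$), nor through any $w_i$ internally, since the two path-neighbours of $w_i$ would be two distinct vertices of colour $i$ in $D'$, contradicting (iii); hence the path lies entirely inside $D'$. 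Therefore $X:=D'$ is a motif solution.

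I expect the main obstacle to be step (iv): a priori the hub vertices $w_i$ could be silently providing connectivity inside $\CDSG{I}[D]$, so that $G[D']$ falls apart. The point is that this danger is neutralised precisely by step (iii): once $D'$ contains exactly one vertex per colour, no $w_i$ can be an internal vertex on a path between two vertices of $D'$, so connectivity descends to $G[D']$. Steps (i) and (ii) are the standard ``a pendant forces its hub into every connected dominating set'' argument and should be routine.
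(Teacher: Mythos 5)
Your proof is correct and takes essentially the same route as the paper's: force $W=\{w_1,\dots,w_k\}$ into $D$, count against the budget $2k$ to conclude that $D\cap V(G)$ has exactly one vertex $y_i$ of each colour, and argue that connectivity descends to $G[\{y_1,\dots,y_k\}]$. The only cosmetic difference is in the last step, where you reroute a path around each $w_i$, while the paper simply observes that every $w_i$ has degree one in $\CDSG{I}[D]$ and so can be deleted without disconnecting; both are the same observation.
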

\begin{proof}
Let $W = \{w_i : 1 \leq i \leq k\}$. Observe that $W$ is a dominating set in $\CDSG{I}$.
If $k=1$, then $W$ is also connected and the claim is trivial, so assume $k\geq 2$.

In one direction, observe that if $X$ is a solution to \probMotif{} instance, then $X \cup W$ is a connected dominating set
in $G$ of size $2k$: $W$ dominates $V(G)$, while $G[X]$ is connected and every $w_i \in W$ has a (unique) neighbor 
in $X \cap c^{-1}(i)$.

In the other direction, let $Y$ be a connected dominating set of size at most $2k$ in $\CDSG{I}$.
Observe that, due to pendant vertices $w_i^\circ$, the set $Y$ needs to contain $W$.
Since $k \geq 2$, to make $W \subseteq Y$ connected, for every $1 \leq i \leq k$ the set $Y$ needs to contain a vertex $y_i \in c^{-1}(i)$.
Since $|Y| \leq 2k$, we have already enumerated all vertices of $Y$: $Y = \{w_i: 1 \leq i \leq k\} \cup \{y_i : 1 \leq i \leq k\}$.
Thus, every $w_i$ is of degree one in $\CDSG{I}[Y]$ and, consequently, $\CDSG{I}[\{y_i: 1 \leq i \leq k\}]$ is connected.
Hence, $\{y_i: 1 \leq i \leq k\}$ is a solution to \probMotif{} on $I$.
\end{proof}

In is easy to see that the reduction from $I=(G,k,c)$ to $\CDSG{I}$ described above translates not only NP-hardness,
but also kernelization lower bound: any polynomial compression
for \probCDS{}, pipelined with the aforementioned reduction, would give a polynomial compression for \probMotif{}.

As observed in~\cite{bimber}, if $G$ is a tree, then $\CDSG{I}$ is $2$-degenerate. However, $\CDSG{I}$ may not be of bounded expansion,
due to arbitrary connections in the graph introduced by the edges incident to vertices $w_i$.
Our main goal for the rest of this section is to tweak the reduction described above to make $\CDSG{I}$ of bounded expansion.

To control the expansion of $\CDSG{I}$ --- and prove Theorem~\ref{thm:condom-lb} --- we need to control how 
the colors of $I$ can neighbor each other. More formally, given an instance $I=(G,k,c)$ of \probMotif{},
let us define the \emph{color graph} $\ColG{I}$ to be a graph with vertex set $V(\ColG{I})=[k]$ and $ij \in E(\ColG{I})$ if and only if
there exists an edge $xy \in E(G)$ with $c(x) = i$ and $c(y) = j$.
The next lemma shows that if we can control the maximum degree of $\ColG{I}$, then $\CDSG{I}$ is of bounded expansion.

\begin{lemma}\label{lem:maxdeg-to-exp}
Let $(G,k,c)$ be a \probMotif{} instance.
Assume that the maximum degree of $G$ is at most $\Delta_G$, and the maximum degree of $\ColG{I}$ is at most $\Delta_H$.
Then, for every $r \geq 1$, every $r$-shallow topological minor of $\CDSG{I}$ is
$\max(\Delta_G+1, (\Delta_H+1)^{2r})$-degenerate. 
\end{lemma}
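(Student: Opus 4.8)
The plan is to work directly with the structure of $\CDSG{I}$ and with an arbitrary $r$-shallow topological minor model in it. First I would record the degree structure of $\CDSG{I}$: every $y\in V(G)$ has $\deg_{\CDSG{I}}(y)=\deg_G(y)+1\le\Delta_G+1$ (its $G$-neighbours plus the single vertex $w_{c(y)}$), every pendant $w_i^{\circ}$ has degree $1$, and the set $W=\{w_i:i\in[k]\}$ is independent with $w_i$ adjacent in $\CDSG{I}$ exactly to $c^{-1}(i)\cup\{w_i^{\circ}\}$. Hence the only vertices of possibly large degree are the $w_i$'s, and each of them is ``surrounded'' by its colour class.

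Now fix $M\in\CDSG{I}\topnab r$ together with an injective nail map $\nu\colon V(M)\to V(\CDSG{I})$ and internally vertex-disjoint paths $(P_e)_{e\in E(M)}$ of length at most $2r+1$ (as usual, no nail is used as an internal vertex of a path). Since the restriction of this model to $V(M_0),E(M_0)$ is a valid model for any subgraph $M_0\subseteq M$, it suffices to show that every such $M_0$ has a vertex of degree at most $d:=\max(\Delta_G+1,(\Delta_H+1)^{2r})$; this yields $d$-degeneracy. I would split into two cases. If some $v\in V(M_0)$ has $\nu(v)\in V(G)$ or $\nu(v)=w_i^{\circ}$ for some $i$, then the paths $P_e$ with $e\in E(M_0)$ incident to $v$ leave $\nu(v)$ through pairwise distinct neighbours of $\nu(v)$ in $\CDSG{I}$ (two such paths cannot share their first step, since a nail is never internal to a path), so $\deg_{M_0}(v)\le\deg_{\CDSG{I}}(\nu(v))\le\Delta_G+1\le d$.

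The core of the argument is the remaining case, where every vertex of $M_0$ is nailed in $W$. I would introduce a colouring $c^{*}$ on $V(G)\cup W$ by $c^{*}(y)=c(y)$ for $y\in V(G)$ and $c^{*}(w_m)=m$, and track it along a model path $P_e$ joining $w_i=\nu(v)$ and $w_j=\nu(v')$ with $i\ne j$. The structural facts to establish are: (i) $P_e$ has length at least $3$, since $W$ is independent (so length $\ge 2$) and its endpoints force its first and last internal vertices to have $c^{*}$-value $i$ and $j$ respectively, which rules out length $2$; (ii) no internal vertex of $P_e$ is a pendant $w^{\circ}$-vertex, as such a vertex has a unique neighbour; (iii) for consecutive internal vertices of $P_e$ the $c^{*}$-values either coincide or form an edge of $\ColG{I}$ --- the $V(G)$–$V(G)$ step is exactly the definition of $\ColG{I}$, while a step through an intermediate $w_m$ forces colour $m$ on the vertex on each side. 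Collapsing equal consecutive values then produces a walk in $\ColG{I}$ from $i$ to $j$ of length at most (number of internal vertices of $P_e$) $-\,1\le 2r-1$, so $\dist_{\ColG{I}}(i,j)\le 2r-1$. Since distinct $M_0$-neighbours of $v$ get distinct nails in $W$, hence distinct colours, and $\ColG{I}$ has maximum degree at most $\Delta_H$, it follows that $\deg_{M_0}(v)\le 1+\Delta_H+\dots+\Delta_H^{\,2r-1}\le(\Delta_H+1)^{2r-1}\le d$. Thus every vertex of $M_0$ has degree at most $d$, finishing this case and the proof.

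I expect the only delicate point to be the bookkeeping in the colour-tracking step: one must verify carefully that internal vertices of a model path are neither nails nor pendant $w^{\circ}$-vertices, that the endpoints $w_i,w_j$ pin down the colours at the extreme internal vertices, and that exactly the two ``boundary'' transitions disappear under collapsing, so that the exponent comes out as $2r$ (via $2r-1$) rather than $2r+1$ or $2r+2$. The remaining estimates — the ball-size bound $1+\Delta_H+\dots+\Delta_H^{\,s}\le(\Delta_H+1)^{s}$ in a graph of maximum degree $\Delta_H$, and the degree bounds in $\CDSG{I}$ — are routine.
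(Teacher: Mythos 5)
Your proof is correct and follows essentially the same approach as the paper's: split on whether the nail lies in $V(G)\cup\{w_i^{\circ}\}$ (handled by a direct degree bound) or in $W$, and in the latter case exploit the bounded degree of $\ColG{I}$ to limit how many distinct $w_j$ can be reached by short paths from $w_i$. The only cosmetic difference is that you extract a walk in $\ColG{I}$ of length at most $2r-1$ directly from the model path (yielding the marginally sharper exponent $(\Delta_H+1)^{2r-1}$), whereas the paper inductively bounds the set $L_d$ of colors reachable within distance $d$ and works with distance $2r+1$ in $\CDSG{I}$; both arguments are bounding the same ball in $\ColG{I}$.
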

\begin{proof}
Fix $r \geq 1$. Let $H$ be an $r$-shallow topological minor of $\CDSG{I}$.
To prove the lemma, it suffices to show that $H$ contains a vertex of degree at most $\max(\Delta_G+1, (\Delta_H+1)^{2r}+1)$;
the same reasoning can be performed for every induced subgraph of $H$.
Let us fix one model of $H$ in $\CDSG{I}$, and consider one vertex $x \in V(H)$ mapped to a root vertex $v \in V(\CDSG{I})$.

If $v \in V(G)$, then the degree of $v$
in $\CDSG{I}$ is at most $\Delta_G+1$, and the same bound holds for the degree of $x$ in $H$.
If $v = w_i^\circ$ for some $1 \leq i \leq k$, then the degree of $v$ in $\CDSG{I}$ is $1$, and the degree of $x$ in $H$ is at most $1$.
Thus, it remains to consider the case where every vertex $x \in V(H)$ is mapped to some vertex $w_i$, $1 \leq i \leq k$.

Consider then a vertex $w_i$. For an integer $d \geq 1$,
we say that a color $j$ is \emph{reachable within distance $d$ from $w_i$} if there exists
a vertex $v \in V(\CDSG{I})$ within distance $d$ from $w_i$ such that $c(v) = j$.
Let $L_d$ be the set of colors reachable from $w_i$ within distance $d$. Observe that
the bound on the degree of $\ColG{I}$ implies the following:
\begin{claim}\label{cl:maxdeg-to-exp}
For every $d \geq 1$ it holds that $|L_d| \leq (\Delta_H+1)^{d-1}$.
\end{claim}
\begin{proof}
We prove by induction on $d$. For $d=1$, observe that $L_1 = \{i\}$.

Consider now $j \in L_{d+1} \setminus L_d$. Since $j \notin L_d$ and every vertex $w_\iota$ has only neighbors
in $c^{-1}(\iota)$ (apart from the pendant $w_\iota^\circ$), there exists
a color $j' \in L_d$ and an edge $xy \in E(G)$ such that $c(x) = j$ and $c(y) = j'$.
Consequently, $jj' \in E(\ColG{I})$. Since the maximum degree of $\ColG{I}$ is bounded by $\Delta_H$,
  we have $|L_{d+1} \setminus L_d| \leq \Delta_H |L_d|$ and the claim follows.
\cqed\end{proof}
By Claim~\ref{cl:maxdeg-to-exp}, for a fixed vertex $w_i$, at most $(\Delta_H+1)^{2r}$ other vertices
$w_j$ are within distance at most $2r+1$ in $\CDSG{I}$ from $w_i$. Consequently,
no $r$-shallow topological minor with roots in vertices $w_i$ can have a vertex of degree more than
$(\Delta_H+1)^{2r}$. This concludes the proof of the lemma.
\end{proof}

By Lemma~\ref{lem:maxdeg-to-exp}, to prove Theorem~\ref{thm:condom-lb} it suffices to show
that the lower bounds of Theorem~\ref{thm:mike-motif}
still hold if we restrict the maximum degree of $\ColG{I}$. Luckily,
this turns out to be quite an easy task (see also Figure~\ref{fig:condom-lb-red} for an illustration of the gadget used).

\begin{figure}[htb]
  \centering
  \includegraphics[width=.95\textwidth]{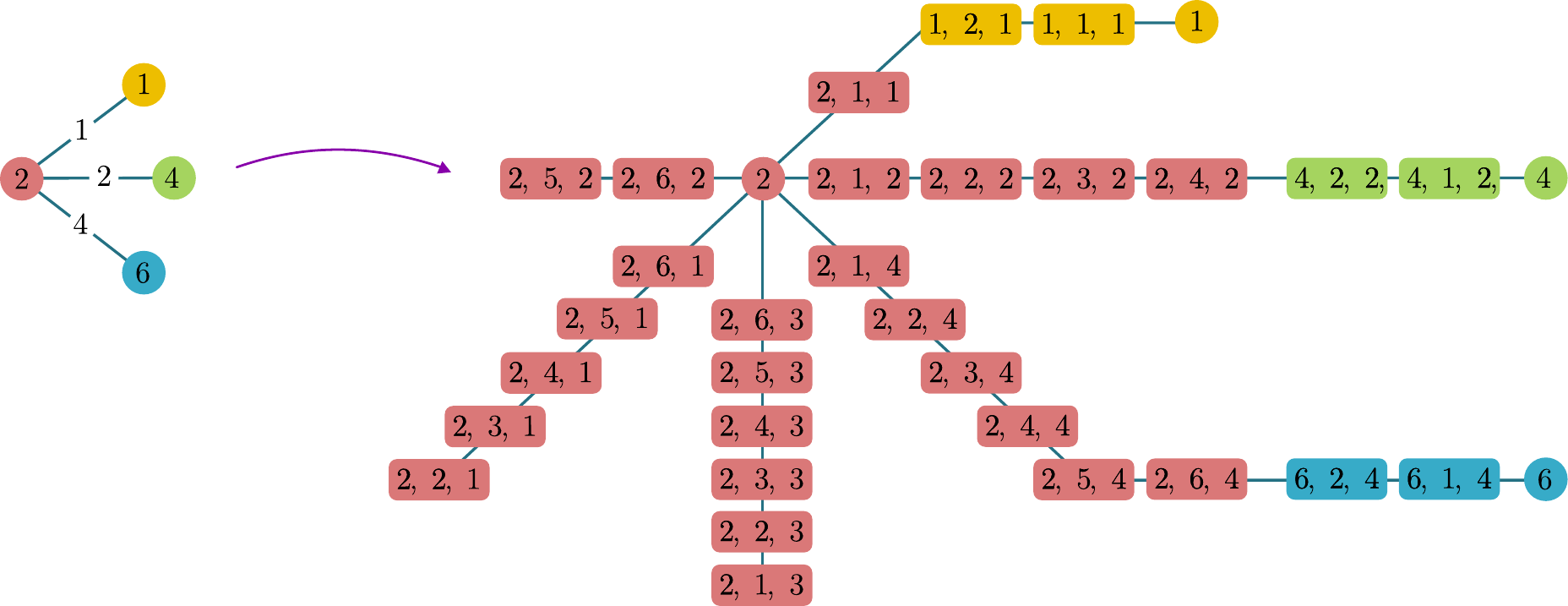}
  \caption{\label{fig:condom-lb-red} Part of the graph corresponding to a vertex $u$ of color $c(u)=2$, with neighbors of colors $1$, $4$ and $6$, and assuming $k=6$ and $\Delta_G=3$.
  The numbers on edges correspond to their colors in the coloring $f$.}
\end{figure}

\begin{lemma}\label{lem:maxdeg-colg}
There exists a polynomial algorithm that, given a \probMotif{} instance $(G,k,c)$ where
the maximum degree of $G$ is bounded by $\Delta_G$, outputs an equivalent
\probMotif{} instance $I'=(G', k', c')$ where
$k' = k + (\Delta_G+1) k^2$, the maximum degree of $G'$ is bounded by $2\Delta_G+2$,
and the maximum degree of $\ColG{I'}$ is bounded by $\max(2\Delta_G+2,3)$.
\end{lemma}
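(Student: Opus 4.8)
The plan is to replace every edge of $G$ by a long \emph{routing path} whose internal vertices receive fresh colours that are determined by only a bounded amount of local data, so that no colour of $G'$ can be incident to too many colours in $\ColG{I'}$. First I would fix a proper edge colouring $f\colon E(G)\to\{1,\dots,\Delta_G+1\}$ (which exists by Vizing's theorem, and is computable in polynomial time). The colour palette of $I'$ consists of the $k$ original colours together with fresh colours $\chi(a,s,m)$ for $a,m\in[k]$ and $s\in[\Delta_G+1]$, giving exactly $k'=k+(\Delta_G+1)k^2$ colours; each original vertex keeps its colour, i.e.\ $c'(u)=c(u)$. For an edge $e=uv$, put $a=\min(c(u),c(v))$, $b=\max(c(u),c(v))$, $s=f(e)$, and rename so that $c(u)=a$, $c(v)=b$. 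I would delete $e$ and insert a fresh path from $u$ to $v$ whose consecutive internal vertices carry the colours $\chi(a,s,a),\chi(a,s,a{+}1),\dots,\chi(a,s,b),\chi(b,s,b),\chi(b,s,b{-}1),\dots,\chi(b,s,a)$, with the first internal vertex attached to $u$ and the last to $v$ (monochromatic edges, where $a=b$, are handled by a slightly degenerate version of the same gadget). The essential feature is that the internal vertex incident to an original vertex $w$ always gets the \emph{diagonal} colour $\chi(c(w),s,c(w))$, which depends only on $c(w)$ and the label $s$. Finally, to make $c'$ surjective onto all $k'$ colours, I would attach a bounded-degree padding gadget that supplies exactly one vertex of every colour $\chi(a,s,m)$ not already realised by a routing path, hung onto the graph so that it never joins two distinct vertices of $G$.

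Once the construction is fixed, the three quantitative claims should follow by direct inspection. The colour count is immediate. For the maximum degree of $G'$: every $u\in V(G)$ meets exactly one internal vertex per incident edge of $G$, hence has degree at most $\Delta_G$ plus the $O(1)$ padding attachments, and every routing-path vertex has degree $2$ plus $O(1)$; choosing the padding appropriately keeps everything at most $2\Delta_G+2$. For the maximum degree of $\ColG{I'}$: an original colour $a$ is adjacent only to the colours of the $G'$-neighbours, taken over all vertices of colour $a$, and these are precisely the diagonal colours $\chi(a,s,a)$ for $s\in[\Delta_G+1]$ (at most $\Delta_G+1$ of them), plus $O(1)$ padding colours; a fresh colour $\chi(a,s,m)$ is adjacent only to the (at most two) colours preceding and following it on its line $\chi(a,s,\cdot)$, the unique colour it is glued to at the turning point, the single original colour $a$ when it occurs at a line endpoint, and $O(1)$ padding colours --- a small constant. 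A short computation checks that all these are at most $\max(2\Delta_G+2,3)$.

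The remaining, and main, step is proving that $I$ and $I'$ are equivalent instances of \probMotif{}. In the forward direction, given a motif $T$ of $I$ with $G[T]$ connected, one takes $X$ to consist of $T$, the internal vertices of the routing paths of all edges of $G[T]$, and the padding vertices for the still-uncovered fresh colours; the key point is that since $f$ is a proper edge colouring and $T$ contains one vertex of each colour, for every pair $(a,s)$ at most one edge of $G[T]$ sits on the line $\chi(a,s,\cdot)$, so the chosen paths are vertex-disjoint and every colour is used exactly once, while $G'[X]$ is connected because the routing paths bridge exactly the edges of $G[T]$ and the padding hangs off harmlessly. In the backward direction, given a motif $X$ of $I'$, set $T=X\cap V(G)$: since the fresh colours are disjoint from $[k]$, $T$ is a transversal of the original colours, and one then shows that any path of $G'[X]$ between two vertices of $T$ must traverse a complete routing gadget --- the colour constraints forbid leaving a line midway, and the padding gadget cannot link two distinct vertices of $G$ --- so such a path witnesses a genuine edge of $G$ between the corresponding colour representatives; contracting the gadgets inside $G'[X]$ then yields a connected spanning subgraph of $G[T]$.

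I expect the main obstacle to be exactly the design and analysis of the padding gadget together with this backward direction: the padding must simultaneously make $c'$ surjective onto all $(\Delta_G+1)k^2$ fresh colours, keep every degree (in $G'$ and in $\ColG{I'}$) within the stated bounds, and be provably unusable as a shortcut between two original vertices or as a way to cover a fresh colour that a given motif needs a routing path to cover. Pinning down a padding structure that is compatible with all of these constraints at once, and then carrying out the ``no cheating'' argument of the backward direction, is where the real work lies; the colour count and both degree bounds are routine bookkeeping once the gadget and the proper edge colouring $f$ are in place, and the lemma then plugs into Lemma~\ref{lem:maxdeg-to-exp} and Theorem~\ref{thm:mike-motif} to yield Theorem~\ref{thm:condom-lb}.
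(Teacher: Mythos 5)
Your high-level strategy matches the paper's: use Vizing's theorem, replace edges with routing paths labelled by fresh colours indexed by source colour, Vizing colour, and a position coordinate. The colour count and the idea that the first internal vertex of a routing path should have a colour determined only by the original endpoint's colour and the Vizing label are both present in the paper as well. However, your proposal has a genuine gap, and you flag it yourself: the padding gadget.

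The flaw is that a \emph{single} padding gadget supplying ``exactly one vertex of every colour $\chi(a,s,m)$ not already realised by a routing path'' cannot make the forward direction work. The set of fresh colours a motif $T$ actually covers via routing paths depends on $T$: it is the union of the colours on the routing paths of the edges of $G[T]$, not of all edges of $G$. So for colours $\chi(a,s,m)$ that appear on some routing path of $G'$ but not on any routing path of an edge in $G[T]$, neither the chosen routing-path vertices nor your padding gadget provides a representative, and $X$ fails to be a transversal. (Concretely: take two vertices $u,u'$ of the same colour $a$ with different neighbourhoods; whichever one $T$ omits contributes routing paths to $G'$ whose colours your padding then refuses to supply.) A global gadget also makes the backward ``no cheating'' argument delicate, because its internal vertices would generally have degree $>2$, so connectivity through it would not force traversal of complete routing gadgets.

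The paper's resolution is to make the padding per-vertex rather than global: for each $u\in V(G)$ and each Vizing colour $\alpha$, attach to $u$ a ``completion path'' $Q_u^\alpha$ carrying the colours $(c(u),m,\alpha)$ for the positions $m$ that the routing path $P_u^\alpha$ (towards $u$'s $\alpha$-neighbour) does not cover, and a full path $Q_u^\alpha$ when $u$ has no edge of Vizing colour $\alpha$. Then the set $W_u$ of $u$ together with all its $P_u^\alpha$ and $Q_u^\alpha$ paths is connected, has one vertex of each colour in $\{c(u)\}\cup(\{c(u)\}\times[k]\times[\Delta_G+1])$, and these colour-blocks partition $[k']$ as $c(u)$ ranges over $[k]$. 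So for any motif $T$ of $I$, taking $X=\bigcup_{u\in T}W_u$ automatically covers every colour exactly once and is connected via the routing paths of $E(G[T])$. In the backward direction, every vertex of $V(G')\setminus V(G)$ has degree at most $2$, so a path in $G'[X']$ between two original vertices with no internal original vertex is forced to be a full routing gadget $P_u^\alpha\cup P_v^\alpha$, certifying $uv\in E(G)$. This per-vertex completion idea is precisely the missing ingredient; without it the padding cannot simultaneously be motif-independent, connected to whichever $T$ is chosen, colour-exact, and shortcut-free.
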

\begin{proof}
For clarity of presentation, we identify the new set of colors, $[k']$, with $[k] \cup ([k] \times [k] \times [\Delta_G+1])$.
By Vizing's theorem, the edges of $G$ can be colored with $\Delta_G+1$ colors such that
no two incident edges have the same color. Moreover, such a coloring can be found in polynomial
time~\cite{find-vizing}. Let $f:E(G) \to [\Delta_G+1]$ be any such coloring.

For integers $i,a,b \in [k]$ and $\alpha \in [\Delta_G+1]$ with $a \leq b$
we define an $(i,\alpha;a,b)$-path to be a path on $b-a+1$ vertices denoted $x_{i,j,\alpha}$ for $a \leq j \leq b$
and with colors $c'(x_{i,j,\alpha}) = (i,j,\alpha)$.

We construct the instance $I'$ as follows. We start with $V(G') = V(G)$ and $c' = c$.
Then, for every edge $uv$ we make the following construction. Assume $c(u) = i$, $c(v) = j$, and $f(uv) = \alpha$.
We first take an $(i,\alpha;1,j)$-path $P_u^\alpha$ and an $(j,\alpha;1,i)$-path $P_v^\alpha$, and connect
them as follows: we make $x_{i,1,\alpha}$ on $P_u^\alpha$ adjacent to $u$, $x_{j,1,\alpha}$ on $P_v^\alpha$ adjacent
to $v$, and $x_{i,j,\alpha}$ on $P_u^\alpha$ adjacent to $x_{j,i,\alpha}$ on $P_v^\alpha$.
In this way we have added a path $P_u^\alpha \cup P_v^\alpha$ between $u$ and $v$ of length $j+i+1$.
Second, if $j < k$ we take a $(i,\alpha;j+1,k)$-path $Q_u^\alpha$ and make $x_{i,k,\alpha}$ on this path
adjacent to $u$. Similarly, if $i < k$ we take a $(j,\alpha;i+1,k)$-path $Q_v^\alpha$ and make $x_{j,k,\alpha}$
on this path adjacent to $v$.
If $j=k$ or $i=k$, then the corresponding path $Q_u^\alpha$ or $Q_v^\alpha$ is defined to be an empty path
for the sake of further notation.

Furthermore, if for some $u \in V(G)$ and $\alpha \in [\Delta_G+1]$ there does not exist an edge incident to $u$
colored (by $f$) with color $\alpha$, then we create a $(c(u),\alpha;1,k)$-path $Q_u^\alpha$
and make $x_{i,k,\alpha}$ on this path adjacent to $u$.

This concludes the description of the instance $I'=(G',k',c')$.
In the next three claims we prove the desired properties of $I'$.

\begin{claim}
The instances $I$ and $I'$ are equivalent.
\end{claim}
\begin{proof}
For $u \in V(G)$, let $W_u$ be the set of vertices of $G'$ associated with $u$, that is, the vertex $u$
as well as all vertices on all paths $P_u^\alpha$ and $Q_u^\alpha$, $\alpha \in [\Delta_G+1]$.
Observe that, by construction, the set $W_u$ contains exactly one vertex of every color
of $\{c(u)\} \cup (\{c(u)\} \times [k] \times [\Delta_G+1])$, and no vertices of other colors.
Furthermore, $G[W_u]$ is connected.
Consequently, if $X \subseteq V(G)$ is a solution to the \probMotif{} instance $I$,
then $X' := \bigcup_{u \in X} W_u$ is a solution to $I'$: for every edge $uv \in E(G[X])$, the corresponding
path $P_u^{f(uv)} \cup P_v^{f(uv)}$ is completely contained in $G'[X']$.

In the other direction, let $X' \subseteq V(G')$ be a solution to $I'$. We claim that $X := X' \cap V(G)$
is a solution to $I$. If $k=1$, then the claim is trivial, so assume $k \geq 2$.
Clearly, $X$ contains exactly one vertex of every color of $[k]$. Consider the following graph
$G_X$: $V(G_X) = X$ and $uv \in E(G_X)$ if and only if there exists a path in $G'[X']$ between $u$ and $v$
with no internal vertex in $X$. Clearly, the connectivity of $G'[X']$ implies that $G_X$ is connected as well.
Furthermore, observe that every vertex of $V(G') \setminus V(G)$ in $G'$ is of degree at most $2$.
Consequently, for every $uv \in E(G_X)$, the corresponding path in $G'[X']$ has to be equal to 
$P_u^{f(uv)} \cup P_v^{f(uv)}$; in particular, $uv \in E(G)$. We infer that $G_X$ is a subgraph of $G[X]$
and, hence, $G[X]$ is connected. This finishes the proof of the claim.
\cqed\end{proof}

\begin{claim}
The maximum degree of $G'$ is at most $2\Delta_G+2$.
\end{claim}
\begin{proof}
Every vertex of $V(G') \setminus V(G)$ is of degree at most two in $G'$.
Every vertex $v \in V(G)$ is adjacent in $G'$ to at most one vertex of every color of $\{c(v)\} \times \{1,k\} \times [\Delta_G+1]$,
and thus is of degree at most $2(\Delta_G+1)$.
\cqed\end{proof}
\begin{claim}
The maximum degree of $\ColG{I'}$ is at most $\max(3,2\Delta_G+2)$.
\end{claim}
\begin{proof}
As already observed,
every vertex $v \in V(G)$ is adjacent to at most one vertex of every color of $\{c(v)\} \times \{1,k\} \times [\Delta_G+1]$.
Thus, the degree of the color $i \in [k]$ in $\ColG{I'}$ is at most $2(\Delta_G+1)$.
Furthermore, observe that a vertex of color $(i,j,\alpha) \in [k] \times [k] \times [\Delta_G+1]$ can be adjacent only 
to vertices of colors: $(j,i,\alpha)$, $(i,j+1,\alpha)$ if $j<k$, $(i,j-1,\alpha)$ if $j>1$, and $i$ if $j \in \{1,k\}$.
Thus, the degree of the color $(i,j,\alpha)$ in $\ColG{I'}$ is at most $3$.
\cqed\end{proof}
The above three claims conclude the proof of Lemma~\ref{lem:maxdeg-colg}.
\end{proof}

Lemma~\ref{lem:maxdeg-colg} translates the lower bounds of Theorem~\ref{thm:mike-motif}
to the case of bounded degree of $\ColG{I}$, by setting $\Delta_G=3$.
\begin{corollary}\label{cor:maxdeg-colg}
The \probMotif{} problem, restricted to instances $I=(G,k,c)$ where
the maximum degree of $G$ and the maximum degree of $\ColG{I}$ is at most $8$,
is NP-complete and does not admit a polynomial compression when parameterized by $k$ unless \compassunless.
\end{corollary}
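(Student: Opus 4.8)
The plan is to instantiate Lemma~\ref{lem:maxdeg-colg} on the hard instances supplied by Theorem~\ref{thm:mike-motif}. Concretely, I would start from a \probMotif{} instance $(G,k,c)$ in which $G$ is a tree of maximum degree $3$; by Theorem~\ref{thm:mike-motif} the family of such instances is NP-complete and admits no polynomial compression when parameterized by $k$ unless \compassunless. Running the polynomial-time algorithm of Lemma~\ref{lem:maxdeg-colg} with $\Delta_G=3$ produces an equivalent instance $I'=(G',k',c')$ with $k'=k+(\Delta_G+1)k^2=k+4k^2$, maximum degree of $G'$ at most $2\Delta_G+2=8$, and maximum degree of $\ColG{I'}$ at most $\max(2\Delta_G+2,3)=8$. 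Thus every such $(G,k,c)$ is mapped to an instance lying exactly in the restricted class named in the statement, and the two instances have the same answer.

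For NP-completeness, membership of the restricted problem in NP is immediate (given a candidate set $X$ of size $k$ one checks in polynomial time that $G[X]$ is connected and that $c|_X$ is a bijection), and NP-hardness follows because the reduction of Lemma~\ref{lem:maxdeg-colg} runs in polynomial time and preserves the answer, while \probMotif{} on subcubic trees is NP-hard by Theorem~\ref{thm:mike-motif}. For the compression lower bound I would argue by contradiction: suppose \probMotif{} restricted to instances with maximum degree of $G$ and of $\ColG{I}$ at most $8$ admitted a polynomial compression into some language $L$ with output length bounded by a polynomial $q$ in the parameter. Composing the reduction of Lemma~\ref{lem:maxdeg-colg} (with $\Delta_G=3$) with this compression yields, for every \probMotif{} instance $(G,k,c)$ on a subcubic tree, a string $y$ with $y\in L$ if and only if $(G,k,c)$ is a yes-instance, of length at most $q(k')=q(k+4k^2)$, which is still polynomial in $k$. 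This is a polynomial compression for \probMotif{} on subcubic trees, contradicting Theorem~\ref{thm:mike-motif}; hence no such compression exists unless \compassunless.

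The only point requiring care — there is no real conceptual obstacle here, since all the work was done in the proof of Lemma~\ref{lem:maxdeg-colg} — is the bookkeeping observation that the parameter blow-up $k'=k+4k^2$ is polynomially bounded in $k$. This is exactly what is needed to ensure that a polynomial compression composed with the reduction is again a polynomial compression, i.e.\ that the final output size remains polynomial in the \emph{original} parameter $k$. With that remark in place, the corollary is a direct transfer of Theorem~\ref{thm:mike-motif} through Lemma~\ref{lem:maxdeg-colg}.
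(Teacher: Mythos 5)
Your argument is correct and is essentially the paper's own: the paper proves this corollary in one sentence by applying Lemma~\ref{lem:maxdeg-colg} with $\Delta_G=3$ to the hard instances of Theorem~\ref{thm:mike-motif}, and you have simply spelled out the routine details (the degree bounds $2\Delta_G+2=8$ and $\max(2\Delta_G+2,3)=8$, NP membership, and the composition argument for the compression lower bound). Your remark that the parameter blow-up $k'=k+4k^2$ must be polynomial in $k$ is exactly the one piece of bookkeeping that makes the compression transfer go through, and you have stated it explicitly.
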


Let us conclude with a wrap up of the proof of Theorem~\ref{thm:condom-lb}.
Let $\mathcal{G}$ be the class of graphs where, for every $r \geq 1$,
every $r$-shallow topological minor is $9^{2r}$-degenerate.
Assume we have a polynomial compression algorithm $\mathcal{A}$ for \probCDS{} restricted to $\mathcal{G}$.
Let $I=(G,k,c)$ be a \probMotif{} instance where
the maximum degree of $G$ and the maximum degree of $\ColG{I}$ is at most $8$.
By Lemma~\ref{lem:maxdeg-to-exp}, $\CDSG{I} \in \mathcal{G}$. 
Thus, by applying $\mathcal{A}$ to $\CDSG{I}$ for every such instance $I$, we obtain a polynomial compression 
for \probMotif{} for instances with the maximum degree of $G$ and $\ColG{I}$ bounded by $8$.
Theorem~\ref{thm:condom-lb} follows then from Corollary~\ref{cor:maxdeg-colg}.

\section{Domination on somewhere dense graph classes}\label{sec:w-hardness}

\wsomewheredense*
\begin{proof}
Let $\mc H_p$ be the class of $p$-subdivisions of all the simple graphs, that is, the class comprising all the graphs that can be obtained from any simple graph by replacing every edge by a path of length $p$. We need the following claim, which \Dvorak et al.~\cite{DvorakKT13} attribute to \Nesetril and Ossona de Mendez~\cite{NesetrilM11a}. Unfortunately, in~\cite{NesetrilM11a} we could not find the proof of this exact statement, so for the sake of completeness we prove it ourselves.

\begin{claim}\label{cl:subdivisions}
For every somewhere dense graph class $\mc G$ that is closed under taking subgraphs, there exists an integer $r_0$ such that $\mc H_{r_0}\subseteq \mc G$.
\end{claim}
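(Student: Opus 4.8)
The plan is to unpack the definition of somewhere dense and then massage the resulting shallow clique minor into the family of subdivisions. Recall that $\mc G$ being somewhere dense means that $\mc G$ is \emph{not} nowhere dense, so by Definition~\ref{def:nd} there is an integer $d$ such that $\omega(\mc G \nabla d) = \infty$; equivalently, for every $t$ the clique $K_t$ is a $d$-shallow minor of some graph in $\mc G$. So fix $d$ witnessing somewhere density and fix an arbitrary simple graph $F$; the goal is to realise the $r_0$-subdivision of $F$ as a subgraph of some member of $\mc G$ for a value $r_0$ depending only on $d$ (not on $F$).

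First I would pass from shallow minors to shallow \emph{topological} minors. A standard fact (and the one \Dvorak et al.\ use) is that a $d$-shallow minor of $K_t$ being present forces, for $t$ large enough in terms of $t'$ and $d$, a $d'$-shallow topological minor $K_{t'}$ with $d' $ bounded in terms of $d$; concretely one can take $t$ polynomial in $t'$ using the fact that $K_t$ has bounded degeneracy-type parameters only if $t$ is bounded, so the branch sets have bounded size and the connecting paths have bounded length. Thus for every $t'$ there is a graph $G \in \mc G$ containing a subdivision of $K_{t'}$ in which every edge of $K_{t'}$ is subdivided at most $2d'+1$ times. Since $\mc G$ is closed under taking subgraphs, this subdivided clique itself lies in $\mc G$.

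Next I would homogenise the subdivision lengths. Given $F$ on $n$ vertices and $m$ edges, embed $F$ into $K_{t'}$ with $t' = n$; inside the subdivided $K_{t'}\in\mc G$ we get, for each edge of $F$, a path of some length $\ell_e$ with $1 \le \ell_e \le 2d'+2$. To turn all these into paths of one common length $r_0$, choose $r_0$ to be a value that is large enough and divisible in the right way — e.g.\ $r_0$ a common multiple of all lengths in $\{1,\dots,2d'+2\}$, or simply $r_0 = (2d'+2)!$ — and then \emph{lengthen} each short path to length $r_0$. The subtlety is that we cannot subdivide edges of a subgraph of $\mc G$ and stay inside $\mc G$; instead, I would re-run the previous step to directly obtain sufficiently long internally disjoint paths. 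The cleanest route: a $d'$-shallow topological $K_{t'}$ inside $G$ gives internally vertex-disjoint $u$--$v$ paths for all pairs; to get \emph{exactly} length $r_0$, take a larger shallow topological clique minor $K_{t''}$ with $t''$ chosen so that each connecting path can be extended by detouring through extra branch vertices, or alternatively observe that a $(r_0/2)$-shallow topological minor $K_{t'}$ — which exists for suitable large $t$ by the same clique-density argument — yields connecting paths we may \emph{truncate and re-route} to have length exactly $r_0$ by using the slack; I would spell out that among the exponentially many internally disjoint routes between two branch vertices one finds one of any prescribed length in the allowed window. Deleting everything except the chosen branch vertices (playing the role of $V(F)$) and the chosen length-$r_0$ paths (one per edge of $F$) yields, as an induced subgraph and hence a member of $\mc G$, exactly the $r_0$-subdivision of $F$. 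As $F$ was arbitrary, $\mc H_{r_0}\subseteq \mc G$.

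The main obstacle is precisely the last point: subdivision is not a subgraph operation, so one cannot simply ``pad'' paths, and one must instead extract paths of a \emph{prescribed uniform length} directly from the shallow (topological) clique minor. I expect to handle this by choosing the clique size $t$ large enough (polynomially in $|F|$ and exponentially in $d$) that the topological model has enough disjoint internal routes of each relevant length, invoking the clique-density/degeneracy bounds for cliques to control $t$, and then pruning. This is the step that \Dvorak et al.\ presumably gloss over by citing \Nesetril--Ossona de Mendez, and it is where the real (though routine) work lies; everything else is bookkeeping with the definitions of shallow minors and $\omega(\mc G\nabla d)$.
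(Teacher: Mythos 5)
Your first half matches the paper's: both pass from somewhere-density to the existence, for every $n$, of a graph $H_n \in \mc G$ that is a subdivision of $K_n$ with every edge subdivided a bounded number of times (the paper cites Theorem~4.1(iii) of Ne\v{s}et\v{r}il--Ossona de Mendez directly for the topological-minor form; you sketch a derivation from the minor form, which is known but not spelled out). Both then want to find, inside such a bounded-length subdivided clique, a subdivided sub-clique in which \emph{all} edges have the \emph{same} subdivision length. The divergence --- and the genuine gap --- is in how you propose to do this homogenisation.

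Your plan to ``lengthen'' or ``re-route'' paths does not work. A topological minor model of $K_{t'}$ supplies \emph{exactly one} path per pair of branch vertices, not ``exponentially many internally disjoint routes''; there is nothing in the hypothesis from which to conjure alternative routes of a prescribed length, and subdividing is not a subgraph operation, as you yourself note. Detouring through extra branch vertices of a larger $K_{t''}$ doesn't rescue this: chaining model paths gives a $u$--$v$ walk whose length is a sum of the lengths of the constituent paths, and you have no control over which lengths in $\{1,\dots,2r_1+1\}$ actually occur in the model. A simple failure mode: if edge $uv$ has model-path length $2$ and edge $u'v'$ has length $3$, no amount of chaining will equalise them to a common $r_0$, since you cannot guarantee that auxiliary paths of compensating lengths are available; and even if they were, keeping the resulting detour paths internally disjoint across all $|E(F)|$ edges requires an analysis you haven't supplied. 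The step is not ``routine bookkeeping''; it is the crux.

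The paper's proof resolves this with a clean Ramsey argument that you are missing. Take $H_{N(n)}$ for $N(n)$ the Ramsey number guaranteeing a monochromatic $K_n$ in any $r_2$-edge-colouring of $K_{N(n)}$ (where $r_2 = 2r_1+1$ is the bound on subdivision lengths). Colour each edge $uv$ of the underlying $K_{N(n)}$ by the length of its model path in $H_{N(n)}$. Ramsey gives a monochromatic $K_n$ of some colour $r(n) \in [r_2]$, i.e.\ a sub-clique on $n$ branch vertices all of whose connecting paths have length \emph{exactly} $r(n)$; this exact $r(n)$-subdivision of $K_n$ is a subgraph of $H_{N(n)}$, hence in $\mc G$, and it contains the $r(n)$-subdivision of every $n$-vertex graph as a subgraph. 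Finally, since $r(n)$ ranges over the finite set $[r_2]$, some value $r_0$ recurs infinitely often, and $\mc H_{r_0} \subseteq \mc G$. Note also that this means $r_0$ is extracted \emph{after} the Ramsey step, not fixed up front as a factorial common multiple as in your plan; your a-priori choice of $r_0$ would then still require a re-routing argument that is unavailable.
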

\begin{proof}
Since $\mc G$ is somewhere dense, by \cite[Theorem 4.1 (iii)]{NesetrilM11a} we have there exists a constant $r_1$ such that $\mc G$ contains every complete graph as a topological minor of depth $r_1$. Since $\mc G$ is closed under taking subgraphs, this means that for every $n\in \N$ there exists a graph $H_n\in \mc G$ that can be obtained from a clique $K_n$ by replacing every edge by a path of length at most $r_2:=2r_1+1$.

For every $n$, let $N(n)$ be the Ramsey number such that a complete graph on $N(n)$ vertices with edges colored with $r_2$ colors always contains a monochromatic complete subgraph on $n$ vertices. Examine the complete graph $K_{N(n)}$ and assign to every edge of $K_{N(n)}$ a color from $\{1,2,\ldots,r_2\}$ depending on the length of the corresponding path in $H_{N(n)}$. By the definition of $N(n)$ we infer that there exists a color $r(n)\in \{1,2,\ldots,r_2\}$ such that there is a monochromatic complete subgraph on $n$ vertices with every edge colored with $r(n)$. This means that $H_{N(n)}$ contains a subgraph that is isomorphic to clique $K_n$ with every edge replaced by a path of length $r(n)$. Thus, $H_{N(n)}$ contains as subgraphs also all the $r(n)$-subdivisions of all the graphs on at most $n$ vertices. We conclude by taking $r_0$ to be any number that appears infinitely many times in the sequence $(r(i))_{i\in \N}$.
\cqed\end{proof}

\newcommand{\Ff}{\mathcal{F}}
\newcommand{\Gg}{\mathcal{G}}

\begin{figure}
  \begin{center}
  \includegraphics[scale=.8]{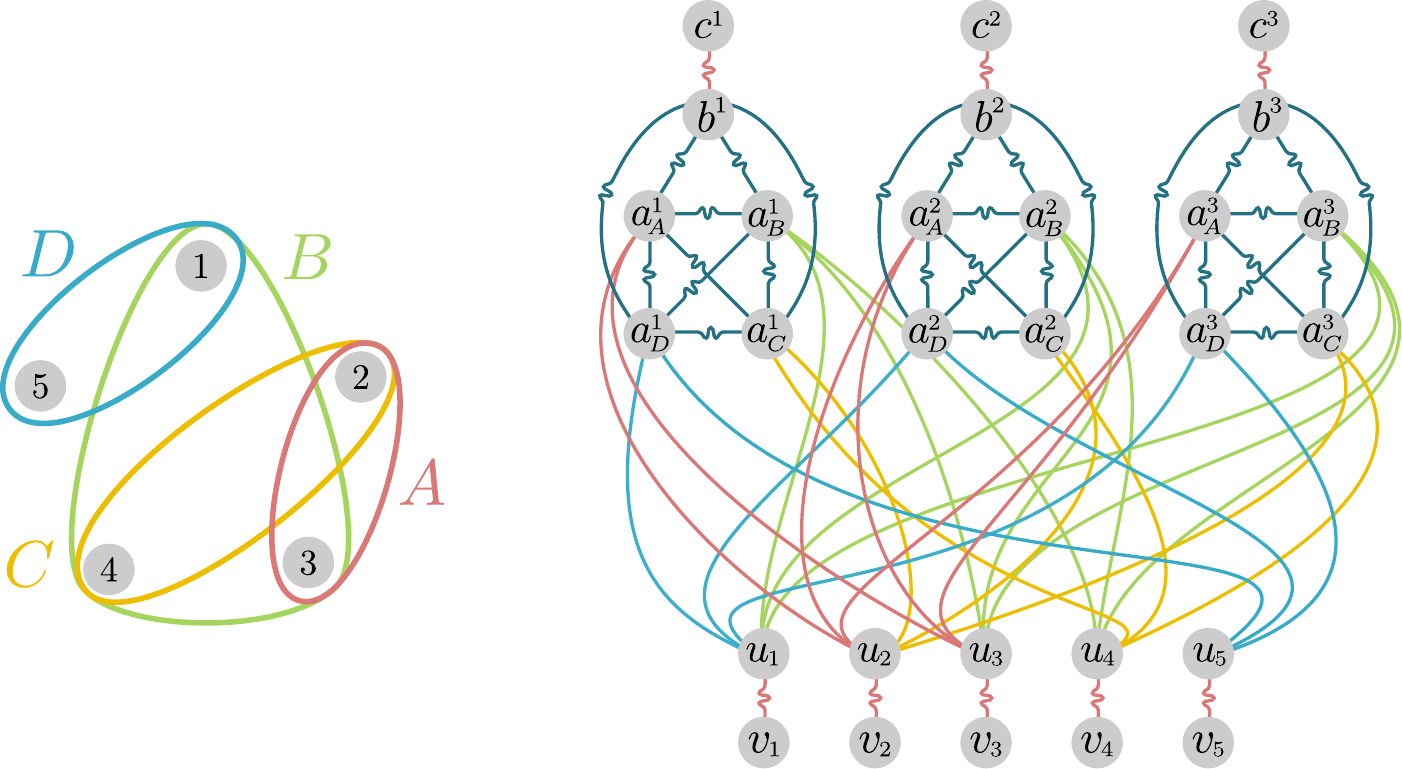}
  \caption{\label{fig:somedense} Example of the reduction for $U = [5], \Ff = \{A,B,C,D\}$ and $k=3$.
  The edges on the right denote paths of length $2r_0$, except those connecting $b_i, c_i$ and
  $u_e,v_e$ whose length is $r_0$.}
  \end{center}
\end{figure}

By Claim~\ref{cl:subdivisions}, the proof of Theorem~\ref{thm:w2-hardness} reduces to proving that for any integer $r_0\geq 0$ there exists an integer $r$ such that {\sc{$r$-Dominating Set}} is $\W{2}$-hard on the class $\mc H_{r_0}$. We prove this fact for $r=3r_0$ by a reduction from the {\sc{Set Cover}} problem parameterized by the requested solution size, which is known to be $\W{2}$-hard~\cite{DowneyF99,FlumGrohebook}. Recall that the instance of the {\sc{Set Cover}} problem consists of $(U,\Ff,k)$, where $U$ is a finite universe, $\Ff\subseteq 2^U$ is a family of subsets of the universe, and $k$ is an integer. The question is whether there exists a subfamily $\Gg\subseteq \Ff$ of size $k$ such that every element of $U$ is covered by $\Gg$, i.e., $\bigcup \Gg=U$.

Given the instance $(U,\Ff,k)$, we construct a graph $G$ as follows; see Figure~\ref{fig:somedense} for an illustration. First, for every $i\in [k]$ do the following:
\begin{itemize}
\item For each $X\in \Ff$, create a vertex $a_X^i$; let $A^i=\{a_X^i\ \colon\  X\in \Ff\}$. For every pair of distinct sets $X,X'\in \Ff$, connect $a_X^i$ and $a_{X'}^i$ with a path of length $2r_0$, thus making the set $A^i$ into a $2r_0$-subdivided clique.
\item Add a vertex $b^i$ and connect it to every vertex of $A^i$ using a path of length $2r_0$.
\item Add a pendant path of length $r_0$ with one endpoint at $b^i$. Let the second endpoint of this path be denoted by $c^i$.
\end{itemize}
Next, for every $e\in U$ do the following:
\begin{itemize}
\item Create a vertex $u_e$ and connect it to every vertex $a_X^i$ such that $i\in [k]$, $X\in \Ff$ and $e\in X$ using a path of length $2r_0$.
\item Add a pendant path of length $r_0$ with one endpoint at $u_e$. Let the second endpoint of this path be denoted by $v_e$.
\end{itemize}
This concludes the construction. It is easy to see that $G\in \mc H_{r_0}$, since $G$ consists of the named vertices connected by paths of length $r_0$ or $2r_0$. It remains to show that instance $(G,k)$ of {\sc{$3r_0$-Dominating Set}} is equivalent to the input instance $(U,\Ff,k)$ of {\sc{Set Cover}}.

\begin{claim}\label{cl:lefttoright}
If instance $(U,\Ff,k)$ of {\sc{Set Cover}} has a solution, then so does instance $(G,k)$ of {\sc{$3r_0$-Dominating Set}}.
\end{claim}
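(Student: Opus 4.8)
The plan is to build an explicit $3r_0$-dominating set of size $k$ in $G$ directly from a set cover of size at most $k$. Suppose $\Gg = \{Y_1,\dots,Y_m\}\subseteq\Ff$ with $m\le k$ satisfies $\bigcup\Gg=U$. I would fix any function $\sigma\colon[k]\to\Ff$ with $\{\sigma(i):i\in[k]\}\supseteq\Gg$ --- for instance $\sigma(i)=Y_i$ for $i\le m$ and $\sigma(i)=Y_1$ for $i>m$ --- and take $D=\{a_{\sigma(i)}^i:i\in[k]\}$. Then $D\subseteq V(G)$ and $|D|\le k$, so the whole argument reduces to verifying that $D$ is a $3r_0$-dominating set of $G$.

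The verification would proceed by a case analysis over the vertex types of $G$. The first group consists of the ``short'' pieces attached to slots and to elements. Since $b^i$ is joined to $a_{\sigma(i)}^i$ by a path of length $2r_0$, every vertex of the length-$r_0$ pendant path hanging off $b^i$ --- in particular $c^i$ --- is at distance at most $2r_0+r_0=3r_0$ from $a_{\sigma(i)}^i\in D$; the same estimate covers $b^i$ itself and every $a_X^i$, using that $\dist_G(a_X^i,a_{\sigma(i)}^i)\le 2r_0$ (this distance is $0$ if $X=\sigma(i)$, and otherwise is realised by the length-$2r_0$ path of the subdivided clique $A^i$). Dually, since $\Gg$ covers $U$, every $e\in U$ lies in $\sigma(i(e))$ for some index $i(e)$; then $u_e$ is joined to $a_{\sigma(i(e))}^{i(e)}$ by a path of length $2r_0$, so $u_e$ and the whole length-$r_0$ pendant path at $u_e$ --- in particular $v_e$ --- are within distance $3r_0$ of $a_{\sigma(i(e))}^{i(e)}\in D$.

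The remaining vertices all lie in the interior of one of the length-$2r_0$ connector paths of $G$: a subdivided-clique path of $A^i$ between $a_X^i$ and $a_{X'}^i$, a path between $b^i$ and $a_X^i$, or a path between $u_e$ and $a_X^i$ with $e\in X$. The key observation I would isolate is that for every such connector both of its endpoints are within distance $2r_0$ of $D$: one endpoint is always of the form $a_X^i$, which is within $2r_0$ of $a_{\sigma(i)}^i$ as above, while the other endpoint is either again some $a_{X'}^i$, or $b^i$ (within $2r_0$ of $a_{\sigma(i)}^i$ by construction), or $u_e$ (within $2r_0$ of $a_{\sigma(i(e))}^{i(e)}$, using that $e\in\sigma(i(e))$). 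An interior vertex of such a connector, lying at distance $t$ from one endpoint and $2r_0-t$ from the other along the connector, is therefore at distance at most $2r_0+\min(t,\,2r_0-t)\le 2r_0+r_0=3r_0$ from a vertex of $D$. Hence every vertex of $G$ is $3r_0$-dominated by $D$. The only point that needs care is this last step --- checking for all three connector types that both endpoints are within $2r_0$ of $D$, which hinges on $A^i$ being a $2r_0$-subdivided clique (and is trivial when $|\Ff|=1$, since then every slot uses the unique set) --- after which one only invokes the elementary inequality $\min(t,2r_0-t)\le r_0$ together with $3r_0=2r_0+r_0$; everything else is bookkeeping over the finitely many vertex types of the construction.
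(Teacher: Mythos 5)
Your proof is correct and follows essentially the same route as the paper: both arguments take $D$ to consist of one $a$-vertex per index $i\in[k]$ realizing the chosen cover set, and both reduce to the observation that every ``named'' vertex ($b^i$, every $a_X^i$, every $u_e$) is within $2r_0$ of $D$, while every other vertex lies within $r_0$ of some named vertex. The paper packages this as ``every vertex is within $r_0$ of $R$, and every vertex of $R$ is within $2r_0$ of $D$,'' whereas you unfold it into an explicit case split over connectors and pendants; the content and the $\min(t,2r_0-t)\le r_0$ estimate are the same.
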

\begin{proof}
Let $\Gg=\{X_1,X_2,\ldots,X_k\}$ be an arbitrary enumeration of a solution $\Gg$ to $(U,\Ff,k)$. Let $D=\{a_{X_i}^i\ \colon\ i\in [k]\}$. We claim that set $D$ $3r_0$-dominates the graph $G$. Observe that, by the construction, every vertex of $G$ is at distance at most $r_0$ from some vertex of $R:=\{b^i\ \colon\ i\in [k]\}\cup \{a_X^i\ \colon\ i\in [k],\, X\in \Ff\}\cup \{u_e\ \colon\ e\in U\}$. Therefore, it suffices to prove that every vertex of $R$ is at distance at most $2r_0$ from a vertex belonging to $D$.

Firstly, every vertex $b^i$ for $i\in [k]$ is at distance $2r_0$ from $a_{X_i}^i$. Secondly, the same holds also for every vertex $a_{X'}^i$ for every $X'\in \Ff$, $X'\neq X_i$. Finally, each vertex $u_e$ is at distance $2r_0$ from vertex $a_{X_i}^i$ for any $X_i$ such that $e\in X_i$; since $U=\bigcup \Gg$, such an index $i$ always exists. By considering all the cases, we conclude that $D$ is indeed a $3r_0$-dominating set in $G$.
\cqed\end{proof}

\begin{claim}\label{cl:righttoleft}
If instance $(G,k)$ of {\sc{$3r_0$-Dominating Set}} has a solution, then so does instance $(U,\Ff,k)$ of {\sc{Set Cover}}.
\end{claim}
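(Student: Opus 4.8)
The plan is to use the pendant paths hanging off the vertices $b^i$ and $u_e$ to pin down the structure of any small $3r_0$-dominating set, and then to read off a set cover from it.

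First I would compute, exactly, the balls of radius $3r_0$ around the gadget tips $c^i$ and $v_e$. For each $i\in[k]$, since $c^i$ is the far endpoint of a pendant path of length $r_0$ attached at $b^i$, and $b^i$ is joined to each $a_X^i$ by a path of length $2r_0$, the ball $B_i:=N_G^{3r_0}[c^i]$ consists only of vertices ``local to copy $i$'': the pendant path $c^i\ldots b^i$, the vertex $b^i$, the internal vertices of the paths joining $b^i$ to the $a_X^i$, and the vertices $a_X^i$ themselves (reached at distance exactly $3r_0$). In particular the sets $B_1,\dots,B_k$ are pairwise disjoint, and none of them contains any $u_e$, any $v_e$, or any vertex on the pendant path hanging off a $u_e$. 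Analogously, for each $e\in U$ the ball $N_G^{3r_0}[v_e]$ consists of the pendant path $v_e\ldots u_e$, the vertex $u_e$, and the internal vertices together with the endpoints of the length-$2r_0$ paths joining $u_e$ to the vertices $a_X^i$ with $e\in X$ (over all $i\in[k]$). Since for fixed $i$ and $X\ni e$ the path from $u_e$ to $a_X^i$ and the path from $b^i$ to $a_X^i$ are internally vertex-disjoint and share only the vertex $a_X^i$, this yields the key identity $B_i\cap N_G^{3r_0}[v_e]=\{a_X^i : X\in\mathcal{F},\ e\in X\}$.

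Now let $D$ be a $3r_0$-dominating set of $G$ with $|D|\le k$. To dominate $c^i$ we need $D\cap B_i\ne\emptyset$; as $B_1,\dots,B_k$ are $k$ pairwise disjoint sets and $|D|\le k$, it follows that $|D\cap B_i|=1$ for every $i$ and that $D\subseteq\bigcup_i B_i$. Write $D=\{d_1,\dots,d_k\}$ with $d_i\in B_i$. Fix $e\in U$: to dominate $v_e$ some $d_i$ must lie in $N_G^{3r_0}[v_e]$, and by the identity above $d_i=a_X^i$ for some $X\in\mathcal{F}$ with $e\in X$. Since each $d_i$ is a single vertex, for each $i$ there is at most one set $X_i\in\mathcal{F}$ with $d_i=a_{X_i}^i$; let $\mathcal{G}$ be the collection of these $X_i$. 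Then $|\mathcal{G}|\le k$ and, by the preceding sentence, every $e\in U$ lies in some member of $\mathcal{G}$, so $\mathcal{G}$ is a set cover (if $|\mathcal{G}|<k$ we pad it with arbitrary members of $\mathcal{F}$, which is nonempty whenever $U\ne\emptyset$ since the mere existence of $D$ already forces every $e\in U$ to be covered). This produces a solution of size $k$ to $(U,\mathcal{F},k)$.

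The main obstacle, and the only place requiring genuine care, is the bookkeeping of distances inside the subdivided gadget: one must check that no shortcut through a different copy $j\ne i$ or through a second element vertex $u_{e'}$ can drag an unexpected vertex into $B_i$ or into $N_G^{3r_0}[v_e]$, so that the two ball computations are precisely as claimed. Once these are established, the rest is a pigeonhole argument driven by the disjointness of the $B_i$'s.
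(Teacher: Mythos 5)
Your proof is correct and follows essentially the same route as the paper's: the sets you call $B_i$ are exactly the paper's $C^i$, the pigeonhole step forcing $|D\cap B_i|=1$ and $D\subseteq\bigcup_i B_i$ is identical, and the extraction of a cover from the domination of the $v_e$'s via the ball computation $B_i\cap N_G^{3r_0}[v_e]=\{a_X^i:e\in X\}$ matches the paper's observation that the only vertices of $\bigcup_i C^i$ within distance $3r_0$ of $v_e$ are such $a_X^i$.
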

\begin{proof}
Let $D$ be a solution to $(G,k)$. For every $i\in [k]$, let $C^i$ be the set of vertices at distance at most $3r_0$ from $c^i$; observe that $C^i$ comprises $c^i$, $b^i$, all the vertices of $A^i$, and all vertices lying on the paths connecting $b^i$ with vertices of $\{c^i\}\cup A^i$. As $c^i$ is $3r_0$-dominated by $D$, every set $C^i$ has a nonempty intersection with $D$. As sets $C^i$ are pairwise disjoint and $|D|\leq k$, we infer that $|D|=k$, $D\subseteq \bigcup_{i\in [k]} C_i$ and every set $C^i$ contains exactly one vertex of $D$. Define $\Gg=\{X_1,X_2,\ldots,X_k\}$ as follows: if $C^i\cap D\subseteq A^i$ then let $X_i$ be such that $C^i\cap D=\{a_{X_i}^i\}$, and otherwise set $X_i$ to be an arbitrary set from $\Ff$. We claim that $\Gg$ constructed in this manner is a solution to $(U,\Ff,k)$.

Take any $e\in U$ and consider the vertex $v_e$. This vertex has to be dominated by $D$, however the only vertices of $\bigcup_{i\in [k]} C_i$ that are at distance at most $3r_0$ from $v_e$ are vertices of the form $a_X^i$ for $i\in [k]$ and $X\in \Ff$ such that $e\in X$. We infer that at least one of these vertices must belong to $D$, so there exists an index $i$ with the following property: set $X_i$ is chosen so that $C^i\cap D=\{a_{X_i}^i\}$ and moreover $e\in X_i$. Since $e$ was chosen arbitrarily, we conclude that $U\subseteq \bigcup \Gg$.
\cqed\end{proof}

Claims~\ref{cl:lefttoright} and~\ref{cl:righttoleft} verify the correctness of the reduction, and thus the proof is concluded.
\end{proof}

\section{Conclusions and Further Research}\label{sec:conclusion}

We have shown that, for each $r\geq 1$, {\sc{$r$-Dominating Set}} admits a linear kernel on any graph class of bounded expansion. Before this work, the most general family of graph classes where such a kernelization result was known were apex-minor-free graphs~\cite{FominLST10}, whereas in the case of the classic {\sc{Dominating Set}}, linear kernels were shown also for general $H$-minor-free~\cite{FominLST12} and $H$-topological-minor-free classes~\cite{FominLST13}. Moreover, for $r=1$, i.e. the {\sc{Dominating Set}} problem, we can also give a kernel on any nowhere dense class of graphs, at the cost of increasing the size bound to almost linear, i.e., $\Oh(k^{1+\eps})$ for any $\eps>0$. These results vastly and broadly extend the current frontier of kernelization results for domination problem on sparse graph classes. On Figure~\ref{fig:kernelmapnce} we depicted the currently explored landscape of parameterized complexity of {\sc{Dominating Set}}, {\sc{$r$-Dominating Set}}, and {\sc{Connected Dominating Set}} on sparse graphs.

We would like to point out several features of our algorithms that at first glance may be not apparent from their description.

Firstly, while we describe our kernelization algorithms for a fixed graph class $\mc G$, it is the case that both the running time and the bound on the size of the kernel do not depend on all the grads of $\mc G$, but only on $r$ and $\grad_p(\mc G)$ for some constant $p$ depending on $r$ (and, of course, on $\grad_{p'}(\mc G)$ for $p'<p$). All the arguments use only $\grad_{3r-1}(\mc G)$ apart from Lemma~\ref{lem:fewneighbourhoods}, where the constant $c$ can be traced to depend on $\grad_{p}(G)$ for some $p=2^{\Oh(r)}$; cf. Theorem~\ref{thm:centered-colorings} and the proof of Lemma~\ref{lem:layered-nei}. In the case of $r=1$, we can replace the usage of Lemma~\ref{lem:fewneighbourhoods}, by the second part of Lemma~\ref{lem:twin-classes}, where the dependence is only on $\grad_1(\mc G)$. Hence, the kernel for {\sc{Dominating Set}} on graph classes of bounded expansion uses only finiteness of $\grad_0(\mc G)$, $\grad_1(\mc G)$, and $\grad_2(\mc G)$. Consequently, the algorithm can be applied on any graph class where these grads are finite constants, for example on the class of subgraphs of cliques with every edge subdivided $5$ times; note that this class is actually somewhere dense.

Secondly, we would like to point out that the algorithm in fact does not necessarily need to have an a priori knowledge of the values of $\grad_{p'}(\mathcal{G})$ for $0\leq p'\leq p$. In fact, the algorithm can be run with hypothetical upper bounds on the grads (say, on $\grad_p(\mc G)$), and it will either succeed in finding a correct kernel, or it will find a proof that the actual value of the grads is larger than assumed. Indeed, the crucial exchange argument in the proof of Lemma~\ref{lem:irrelevant} only relies on a comparison of the number of vertices in $\kappa$ with the size of its $3r$-projection on $X$.  Hence, whenever this comparison reveals that any member of~$\kappa$ is an irrelevant dominatee, this conclusion is drawn independently of the actual value of the grads, and hence is always correct. As a result, the algorithm can be run with larger and larger hypothetical bounds up to the point when a kernel is constructed. Therefore, after easy modifications the algorithm can be applied to basically {\em{any}} graph in hope of finding a reasonable kernel, and our analysis only shows guarantees on the output size in terms of the graph's densest $p$-shallow minor (or $2$-shallow minor, in case of {\sc{Dominating Set}}).

Thirdly, whereas the constant in the kernel size may seem impractical, we would like to point out that it provides a major improvement over the previous works. The kernels for apex-minor-free, $H$-minor-free, and $H$-topological-minor-free graphs of~\cite{FominLST10,FominLST12,FominLST13} are based on arguments originating in bidimensionality theory, graph minors, and finite-state properties of {\sc{Dominating Set}} and~{\sc{$r$-Dominating Set}}. Therefore, the dependence of the constant in the kernel size on the size of~$H$ is very difficult to trace. Even very crude estimations show that it is multiple-exponential, however still elementary. A careful analysis of our algorithm reveals that the kernel given by Theorem~\ref{thm:main-be} for {\sc{Dominating Set}} has size $2^{\textrm{poly}(\grad_2(\mc G))}\cdot k$, whereas for $\mathcal{G}$ being the class of $H$-minor-free graphs we have that $\grad_2(\mathcal{G})=\Oh(|V(H)|\cdot \sqrt{\log |V(H)|})$~\cite[Lemma 4.1]{Sparsity}. Thus, the constants obtained using our general technique are not only explicit, but also much lower than the ones obtained earlier.

\onlyfullversion{
\begin{figure}[h]
  \begin{center}
  \hspace*{.5cm}\includegraphics[scale=.9]{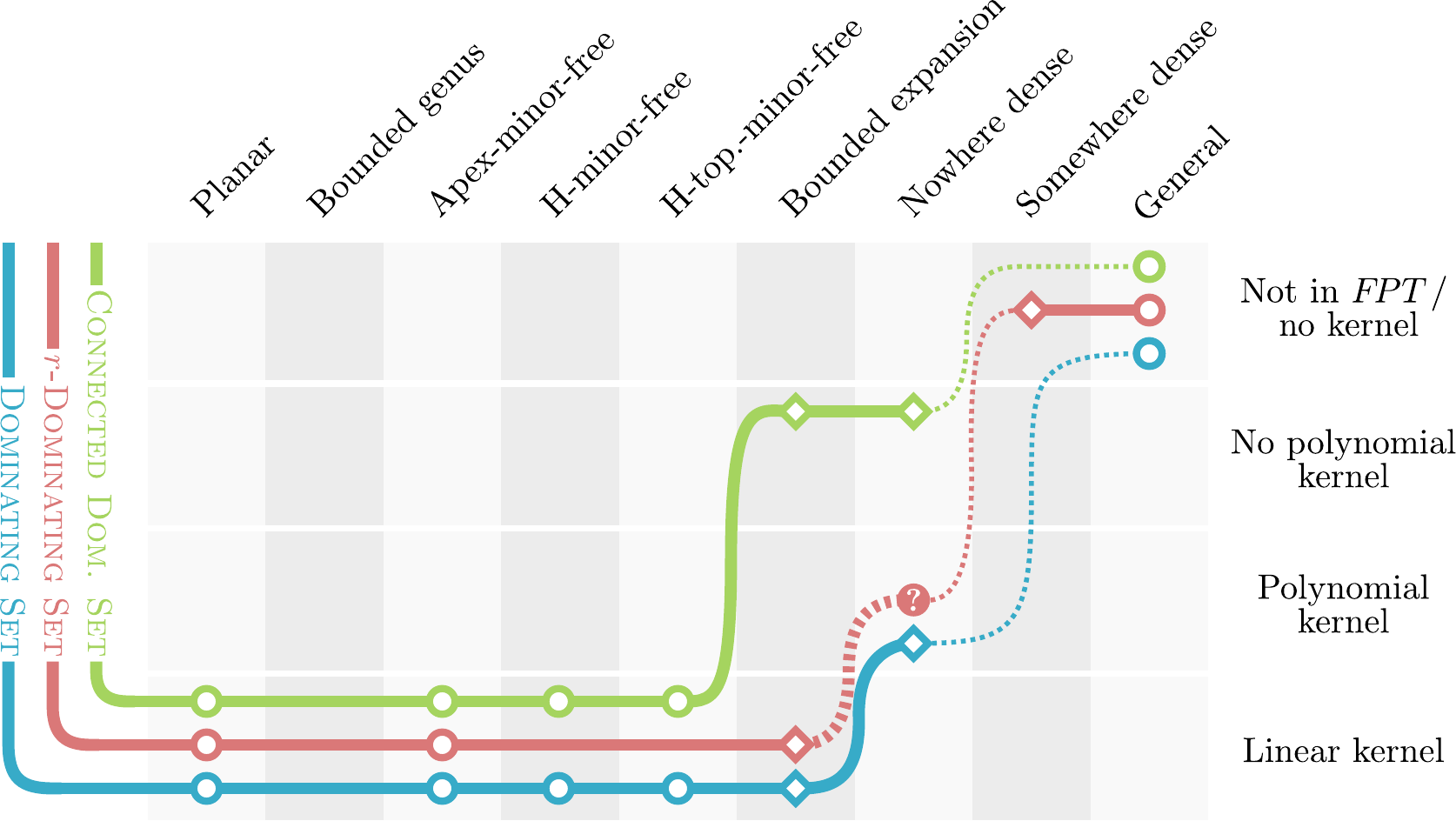}
  \end{center}
  \caption{\label{fig:kernelmapnce} An overview over results contained in past publications (circles) and this
  paper (diamonds) for \textsc{Dominating Set}, \textsc{$r$-Dominating Set} and \textsc{Connected Dominating Set}.
  The dashed line and question mark denote conjectured statements. The dotted lines represent the (unclear) transition of the complexity between nowhere dense graph classes and general graphs through larger and larger classes of somewhere dense graphs.
  }
\end{figure}
}

We conclude by stating some open questions and prospects for future work.

In order to make the algorithm more practical it is necessary to implement it in time linear in the size of the graph. In the current presentation we have not estimated the exact running times of the kernelization procedures; however, they are at least quadratic due to removing vertices from the domination core one by one. We expect that with more technical insight, the irrelevant dominatees can be removed in larger portions, which would lead to linear running time. However, we wanted to keep the current presentation as simple as possible, and hence we deferred optimizing the running time to future work.

From the theoretical point of view, the most important question left is understanding the kernelization complexity of {\sc{$r$-Dominating Set}} on nowhere dense graph classes. So far we know that this problem admits a linear kernel on any class of bounded expansion, for each $r$, whereas on any somewhere dense class closed under taking subgraphs, for some $r$ it is $\W{2}$-hard. Our approach for bounded expansion graph classes fails to generalize to nowhere dense classes mostly because of technical reasons. We believe that, in fact, for any nowhere dense class $\mc G$ and any positive integer $r$, {\sc{$r$-Dominating Set}} has an almost linear kernel on $\mc G$. Together with the lower bound of Theorem~\ref{thm:w2-hardness}, this would confirm the following dichotomy conjecture that we pose:

\begin{conjecture}
Let $\mc G$ be a graph class closed under taking subgraphs. Then:
\begin{itemize}
\item If $\mc G$ is nowhere dense, then for every integer $r\geq 1$ and real $\eps>0$, {\sc{$r$-Dominating Set}} admits an $\Oh(k^{1+\eps})$ kernel on $\mc G$.
\item If $\mc G$ is somewhere dense, then {\sc{$r$-Dominating Set}} is $\W{2}$-hard on $\mc G$ for some integer $r\geq 1$.
\end{itemize}
\end{conjecture}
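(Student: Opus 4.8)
The second bullet needs nothing new: it is exactly Theorem~\ref{thm:w2-hardness}, established above through the $r_0$-subdivision characterization of somewhere dense subgraph-closed classes and a reduction from \Problem{Set Cover}. So the plan concentrates on the first, currently open, bullet: for every fixed $r\ge 1$ and $\eps>0$, {\sc{$r$-Dominating Set}} admits an $\Oh(k^{1+\eps})$ kernel on any nowhere dense, subgraph-closed $\mc G$. I would follow the two-phase template of Sections~\ref{sec:bnd-exp}--\ref{sec:nowhere-dense} (first compute an $r$-domination core of size $\Oh(k\cdot n^{\eps})$, then reduce the number of dominators) and, as in the proof of Theorem~\ref{thm:main-nd}, iterate the whole reduction to upgrade the $k\cdot n^{\eps}$ bound to $k^{1+\eps}$.

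The groundwork is to port three ingredients from the bounded-expansion setting to the nowhere dense regime, with $n^{\eps}$ factors in place of constants. \emph{(i) A nowhere-dense closure lemma for $r$-projections.} Given $G\in\mc G$ and $X$, set the threshold $\thresh=\lceil 2\gradfun(r-1,\eps)\,(2n)^{\eps}\rceil$ and run the contraction procedure of Lemma~\ref{lem:closure} verbatim: every round contracts a connected subgraph of radius $\le r-1$, so the evolving graph stays an $(r-1)$-shallow minor of $G$ on at most $2|X|$ vertices, whose density is at most $\gradfun(r-1,\eps)(2n)^{\eps}<\thresh/2$ by Proposition~\ref{prop:nd-parameter}, point~(\ref{grad}); hence the process stops within $|X|$ rounds and returns $\cl{r}{X}$ with $|\cl{r}{X}|=\Oh(n^{\eps}\cdot|X|)$ and $|\prg{r}{u}{\cl{r}{X}}{G}|=\Oh(n^{2\eps})$ for every $u\notin\cl{r}{X}$. \emph{(ii) A nowhere-dense ``few $r$-projections'' lemma:} the number of distinct sets $\prg{r}{u}{X}{G}$, $u\in V(G)\setminus X$, is $\Oh(|X|\cdot n^{\eps})$; this should follow by combining the centered/weak-coloring proof of Lemma~\ref{lem:fewneighbourhoods} with the nowhere-dense bounds of Proposition~\ref{prop:nd-parameter}, in the same spirit as Lemmas~\ref{lem:nd-cliques} and~\ref{lem:twin-classes2} are derived. \emph{(iii)} The approximation routine of Lemma~\ref{lem:better-dvorak}, which is already grad-free and thus applies to $\mc G$ with ratio $P_r(\gradfun(r,\eps)\,n^{\eps})$.

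The main obstacle is the termination of the iterative extraction of the pair $(X,S)$ for $r>1$ over a nowhere dense class. The bounded-expansion argument (Lemma~\ref{lem:extraction-terminates}) caps the number of rounds by a fixed quantity $\sum_{i\le r}\Delta^i$, which is a constant only because post-closure $r$-projections have constant size $\Delta$; over $\mc G$ this count becomes $n^{\Oh(r\eps)}$, and since each extraction round inflates $|X|$ by an $n^{\Oh(\eps)}$ factor, a super-constant number of rounds is fatal. The $r=1$ nowhere-dense argument (Lemma~\ref{lem:extraction-terminates-nd}) instead caps the rounds by the biclique constant $\excbic$ with $K_{\excbic,\excbic}\notin\mc G\nab 0$, noting that $\excbic$ rounds would force $\excbic$ dominatees sharing $\excbic$ common \emph{neighbours} in the union of the extracted dominators; for $r>1$ one must replace ``common neighbour'' by ``common $3r$-projection'', and the short paths realizing those projections may share internal vertices, so a $K_{\excbic,\excbic}$ of bounded depth is not immediate. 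The repair I would attempt: after $T$ rounds (layers $X_1\subseteq\cdots\subseteq X_T$, dominators $D_i\subseteq X_i\setminus X_{i-1}$), use the closures of the $X_i$ and ingredient (ii) to partition $Z\setminus X_T$ by the tuple of $3r$-projections into all layers — at most $|X_T|\cdot n^{\Oh(\eps)}$ tuples, since each coordinate projection has size $n^{\Oh(\eps)}$ — and pick a homogeneous class $\kappa$ of size $\ge\excbicone$, where $K_{\excbicone,\excbicone}\notin\mc G\nab s$ for the relevant depth $s=\Oh(r)$. For each layer $D_i$ realizes, through a path avoiding $X_{i-1}$, a nonempty common $r$-projection of $\kappa$ into $X_i\setminus X_{i-1}$; passing to a $2r$-scattered subfamily of $\kappa$ (one more call to Lemma~\ref{lem:better-dvorak} inside $G-X_T$), contracting the radius-$\le r$ balls of its members, and routing inside a lexicographic product $G\lexprod K_t$ as in Lemma~\ref{lem:shpclo} internally disjoint witness paths to one projection vertex per layer should yield $K_{\excbicone,\excbicone}$ as an $\Oh(r)$-shallow topological minor of $G\lexprod K_t$; by Lemma~\ref{lem:lexprod} and Proposition~\ref{prop:comp-of-grads} this minor lives in a nowhere dense class, a contradiction. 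Turning ``almost disjoint up to $K_t$-copies'' into genuine internal disjointness (via the scattered-set pre-extraction and a sunflower-type cleaning of the projection families) is the delicate point, and is where I expect the bulk of the work to go.

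Given $(X,S)$ with $S$ being $2r$-scattered in $G-X$, $|S|>C_0|X|n^{\delta}$, and $|\prg{3r}{u}{X}{G}|=n^{\Oh(\eps)}$ for $u\notin X$, the second phase copies Section~\ref{sec:nowhere-dense} with ``$3r$-projection'' replacing ``$X$-neighbourhood''. Build the class graph $H$ on the classes of the relation $u\simeq v\Leftrightarrow\prg{i}{u}{X}{G}=\prg{i}{v}{X}{G}$ for all $i\le 3r$; bound $|V(H)|$ by ingredient (ii) and the grads of $H$ by the charging Lemma~\ref{lem:orient2} applied to $\mc G\nab s$ together with Lemma~\ref{lem:lexprod}, exactly as in Lemmas~\ref{lem:classgraph-V-nd}--\ref{lem:classgraph-grad-nd} and Corollary~\ref{cor:classgraph-nd}; run the potential argument of Lemma~\ref{lem:large-subclass-nd} to obtain a subclass $\lambda$ with $|\lambda|>\excbicone(\deg_H(\kappa)+1)+1$ all of whose members share their genuine $r$-neighbourhood pattern in $X$; and then the exchange argument of Lemma~\ref{lem:replace-nd}, using that each member of $\lambda$ sees only $n^{\Oh(\eps)}$ vertices of $X$ within radius $3r$, identifies any vertex of $\lambda$ as an irrelevant dominatee. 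Iterating this (the analogue of Lemma~\ref{lem:reduce-corce-nd}) yields an $r$-domination core $Z$ of size $\Oh(k\cdot n^{\eps})$. Reducing dominators then works as in Sections~\ref{sec:dominators}--\ref{sec:dominators-nd}: partition $V(G)\setminus Z'$, where $Z'=\cl{r}{Z}$, by $r$-projections onto $Z'$ and keep one representative per class. As already observed for bounded expansion with $r>1$, connections between dominators and dominatees obstruct an induced-subgraph kernel; one instead outputs the annotated instance $(G[W],Z)$ with $|W|=\Oh(k\cdot n^{\eps})$ (a short-paths closure, Lemma~\ref{lem:shpclo}, but with the $n^{\eps}$-weaker bounds) or the gadgeted graph of Lemma~\ref{lem:dominator-reduction}, whose bitsize is $\Oh(k^{1+\eps})$ up to rescaling $\eps$. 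Finally, iterating the full kernelization as in the proof of Theorem~\ref{thm:main-nd} collapses $k\cdot n^{\eps}$ to $k^{1+\Oh(\eps)}$, which, after one more rescaling of $\eps$, gives the first bullet.
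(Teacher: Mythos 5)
The statement you set out to prove is posed in the paper as an open \emph{conjecture}, not a theorem: the paper establishes only the second bullet (which is exactly Theorem~\ref{thm:w2-hardness}; your observation that nothing new is needed there is correct) and the $r=1$ case of the first bullet (Theorem~\ref{thm:main-nd}). Your plan for the first bullet with general $r$ is a sketch that stalls at precisely the points the authors identify as the open obstructions, so it does not prove the conjecture.

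Concretely, two steps fail. First, termination of the iterative extraction. Your ingredient (i) does adapt Lemma~\ref{lem:closure} with threshold $\thresh=n^{\Oh(\eps)}$, but then the path-counting bound of Lemma~\ref{lem:extraction-terminates} caps the number of rounds only at $\sum_{i\le r}\thresh^{i}=n^{\Oh(r\eps)}$, which is fatal since each round multiplies $|X|$ by $n^{\Theta(\eps)}$ --- as you note. Your proposed repair, however, contains an actual error on top of the acknowledged gap: finding $K_{\excbicone,\excbicone}$ as a shallow topological minor of $G\lexprod K_t$ with $t=n^{\Oh(\eps)}$ yields \emph{no} contradiction, because $G\lexprod K_t$ is not in a nowhere dense class once $t$ grows with $n$ (already a single edge $uv\in E(G)$ produces $K_{t,t}$ as a subgraph between the copies of $u$ and of $v$), and Lemma~\ref{lem:lexprod} controls grads, not clique or biclique numbers. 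A contradiction with nowhere density requires the biclique as a bounded-depth (topological) minor of $G$ itself, i.e.\ genuinely internally disjoint witness paths in $G$; no mechanism in the paper produces these when $r$-projections have size $n^{\Oh(\eps)}$ rather than constant, and your ``sunflower-type cleaning'' is not specified. This is exactly the generalization of Lemma~\ref{lem:extraction-terminates-nd} (where common \emph{neighbours} give the biclique for free) that the authors say they cannot carry out for $r>1$. Second, the class-graph phase: the paper states explicitly that its analysis is ``very delicate; we are able to perform it only for $r=1$''; the $r=1$ construction relies on contracting the pairwise disjoint sets $N(s)\setminus X$ and on genuine neighbourhoods, and for $r>1$ the edges of the class graph correspond to length-$\le 3r$ connections whose routing again needs the constant multiplicity bounds that only the bounded-expansion closure lemma provides. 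Both gaps are the open content of the conjecture; naming them does not close them.
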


\def\redefineme{
    \def\LNCS{LNCS}%
    \def\ICALP##1{Proc. of ##1 ICALP}%
    \def\FOCS##1{Proc. of ##1 FOCS}%
    \def\COCOON##1{Proc. of ##1 COCOON}%
    \def\SODA##1{Proc. of ##1 SODA}%
    \def\SWAT##1{Proc. of ##1 SWAT}%
    \def\IWPEC##1{Proc. of ##1 IWPEC}%
    \def\IWOCA##1{Proc. of ##1 IWOCA}%
    \def\ISAAC##1{Proc. of ##1 ISAAC}%
    \def\STACS##1{Proc. of ##1 STACS}%
    \def\IWOCA##1{Proc. of ##1 IWOCA}%
    \def\ESA##1{Proc. of ##1 ESA}%
    \def\WG##1{Proc. of ##1 WG}%
    \def\LIPIcs##1{LIPIcs}%
    \def\LIPIcs{LIPIcs}%
    \def\LICS##1{Proc. of ##1 LICS}%
}

\bibliographystyle{abbrv}
\bibliography{./biblio}

\pagebreak

\appendix

\section{Omitted simple proofs from Section~\ref{sec:preliminaries}}\label{sec:omitted}

\pendant*
\begin{proof}
Suppose $\grad_r(G')\geq 1$, and let $M'$ be an $r$-shallow minor of $G'$ with $\den(M')=\grad_r(G')$. If $\{v'\}$ is not one of the branch sets of the model of $M'$ in $G'$, then by removing $v'$ from the branch set of the model it belongs to (if it belongs to any) we obtain an $r$-shallow minor model of $M'$ in $G$, which implies that $\grad_r(G)\geq \den(M')=\grad_r(G')$. Otherwise, the vertex of which $\{v'\}$ is a branch set has degree $1$ in $M'$. Since $\den(M')\geq 1$, it follows that by removing this vertex we obtain a graph $M$ with $\den(M)\geq \den(M')$. At the same time $M$ is an $r$-shallow minor of $G$, which implies that $\grad_r(G)\geq \den(M')=\grad_r(G')$.
\end{proof}

\universalvertex*
\begin{proof}
  
  Let $M'$ be an $r$-shallow
  minor of $G'$ with $\den(M') = \grad_r(G')$.  If the minor model
  of $M'$ in $G'$ does not contain the
  universal vertex, we have that $\grad_r(G') = \grad_r(G)$.  So
  suppose it contains the universal vertex.  Then, by using the same
  minor model in~$G$ but removing the branch set that contains the universal vertex, we obtain an $r$-shallow minor~$M$ of $G$ which lacks one vertex and
  at most $|M'|-1$ edges with respect to~$M'$.  Hence, we have the following:
\begin{align*}
\grad_r(G)\geq& \den(M) \geq \frac{||M'|| - |M'| + 1}{|M'| - 1} =\\
&= \frac{||M'||}{|M'|-1}-1 \geq \den(M')-1 = \grad_r(G')-1.
\end{align*}
\end{proof}

The topological grad $\topgrad_r(G)$ of a graph $G$ is defined
similarly to the grad, but we restrict ourselves to topological
$r$-shallow minors, i.e., we may only contract vertex disjoint paths
as follows: A shallow topological minor of a given graph $G$ at depth
$r$, for some half-integral $r$, is a graph $H$ obtained from $G$ by
taking a subgraph and then contracting internally vertex disjoint
paths of length at most $2r + 1$ to edges.  We denote the set of
$r$-shallow topological minors of $G$ by $G \topnab r$.  Then the
definition of a topological grad follows:
\begin{definition}[Topological grad (top-grad)]
  Let $G$ be a graph and $r$ a half-integral.  Then we define the
  topological grad as
  \[
  \topgrad_r(G) = \max_{H \in G \topnab r} \den(H) .
  \]
\end{definition}

It is known that topological grads are comparable to normal ones; for the following inequalities see Corollary 4.1 of~\cite{Sparsity}:
\begin{equation}\label{eq:gradtopgrad}
    \topgrad_r(G) \leq \grad_r(G) \leq 4(4 \topgrad_r(G))^{(r+1)^2} 
\end{equation}

Using the notion of topological grad as a pivot parameter, we can now
prove Lemma~\ref{lem:lexprod}.

\lexprodprop*
\begin{proof}
The following inequality has been proven in \cite[Proposition 4.6]{Sparsity}:
$$\topgrad_r(G \lexprod K_c) \leq \max\{2r(c-1)+1,c^2\} \cdot \topgrad_r(G) + c - 1.$$
We remark that even though Proposition 4.6 of \cite{Sparsity} assumes that $c\geq 2$, the claim holds also for $c=1$. We now observe that
$$\max\{2r(c-1)+1,c^2\}\leq 2r(c-1)+1+c^2\leq 2c(r+c),$$
and hence
\begin{equation}\label{eq:topgrad-lexprod}
\topgrad_r(G \lexprod K_c) \leq 2c(r+c)\cdot \topgrad_r(G)+c
\end{equation}
By combining (\ref{eq:gradtopgrad}) and (\ref{eq:topgrad-lexprod}) we obtain
\begin{align*}
\grad_r(G \lexprod K_c) &\leq 4(4 \topgrad_r(G\lexprod K_c))^{(r+1)^2}\\
&\leq 4(8c(r+c)\cdot \topgrad_r(G)+4c)^{(r+1)^2}\\
&\leq 4(8c(r+c)\cdot \grad_r(G)+4c)^{(r+1)^2},
\end{align*}
as claimed.
\end{proof}

\dvorakfull*
\begin{proof}
The core argument of \Dvorak~\cite{Dvorak13} lies in the following statement; here $\alpha_m(G)$ denotes the maximum size of an $m$-scattered set in $G$.\footnote{The definitions of weak colorings, used in this proof, are provided in Section~\ref{sec:prelim-nd}.}
\begin{theorem}[Theorem 4 of \cite{Dvorak13}]\label{thm:dvorak-precise}
If $1\leq m\leq 2k+1$ and $G$ satisfies $\wcol_m(G)\leq c$, then $\ds(G)\leq c^2\alpha_m(G)$. Furthermore, if an ordering $\sigma$ of $V(G)$ such that $|B^\sigma_m(v)|<c$ for every $v\in V(G)$
is given, then a $k$-dominating set $D$ and an $m$-scattered set $A$ such that $|D|\leq c^2|A|$ can be found in $\Oh(c^2\cdot \max(k,m)\cdot |V(G)|)$ time.
\end{theorem}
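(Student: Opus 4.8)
The plan is to establish the existential inequality and its algorithmic refinement in one stroke, by a single greedy sweep guided by $\sigma$. Throughout write $B^\sigma_m[v]=B^\sigma_m(v)\cup\{v\}$, so that $|B^\sigma_m[v]|\le\wcol_m(G)\le c$. Everything rests on one structural fact: if $\dist(u,v)\le m$ and $w$ is the $\sigma$-minimum vertex on some shortest $u$--$v$ path, then $w\in B^\sigma_m[u]\cap B^\sigma_m[v]$ and $\dist(u,w)+\dist(w,v)=\dist(u,v)\le m$; indeed every vertex of that path is $\ge w$ in $\sigma$, so each of the two sub-paths certifies weak $m$-reachability. Contrapositively, if $A$ is $m$-scattered then the balls $\{B^\sigma_m[a]\}_{a\in A}$ are pairwise disjoint.

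First I would run the following greedy. Process $V(G)$ in the order $\sigma$, maintaining a set $D$ that is initially empty. On reaching $v$: if $v$ is already within distance $k$ of $D$, skip it; otherwise call $v$ \emph{active} and add all of $B^\sigma_m[v]$ to $D$ (since $v$ lies in its own ball, it is now distance-$k$ dominated). Let $A$ be the set of active vertices. At the end $D$ is a distance-$k$ dominating set with $|D|\le c\,|A|$, so the whole statement reduces to $|A|\le c\,\alpha_m(G)$, and the required obstruction will be an $m$-scattered subset of $A$ of size $\ge|A|/c$. One checks immediately that $A$ is $k$-scattered: when $v$ becomes active it is at distance $>k$ from every previously inserted ball, hence from every earlier active vertex. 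To gain the extra factor $c$ I would show that the \emph{conflict graph} $\mathcal{C}$ on $A$, with an edge between $a,a'$ iff $\dist(a,a')\le m$, has degeneracy at most $c-1$ (or $\Oh(c)$, which still suffices); since an independent set in $\mathcal{C}$ is exactly an $m$-scattered subset of $A$, the usual low-degeneracy ordering argument then yields $|A|\le c\,\alpha_m(G)$ together with an explicit $m$-scattered subset $A'\subseteq A$ of the right size.

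The degeneracy bound is the step I expect to be the main obstacle. I would order $\mathcal{C}$ by $\sigma|_A$ and bound, for a fixed active $u$, the number of earlier active conflict-neighbours $u'$ (so $u'<_\sigma u$ and $\dist(u,u')\le m$). For each such pair fix a shortest $u$--$u'$ path and its $\sigma$-minimum $w=w(u,u')$; then $w\in B^\sigma_m[u]$ and $w<_\sigma u$, giving at most $c-1$ possible ``hubs'', so it is enough to bound by $\Oh(1)$ the number of eligible $u'$ with a given hub $b$. Here $m\le 2k+1$ is used: since $A$ is $k$-scattered we have $k+1\le\dist(u,u')\le m\le 2k+1$, hence along the recorded path $\min(\dist(u',b),\dist(b,u))\le k$, and the ``short half'' of the path is then governed by a weak ball of radius $\le k$ around a single vertex ($b\in B^\sigma_k[u']$ in one case, $b\in B^\sigma_k[u]$ in the other). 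A careful case analysis on where the $\sigma$-minimum of $u'$'s half of the path falls, combined with the fact that $A$ is $k$-scattered, should pin the count down to $\Oh(1)$ per hub; the boundary value $m=2k+1$ is exactly where these estimates are tight, and I would package the claim as a stand-alone lemma.

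Finally, for the constructive statement and the running time: the active set $A$, the balls $B^\sigma_m[\cdot]$, and the auxiliary radius-$k$ weak balls are all produced by breadth-first searches from prescribed roots, truncated at depth $\max(k,m)$ and restricted to vertices that come $\ge$ the root in $\sigma$. There are $\Oh(|A|)=\Oh(\alpha_m(G))$ such searches; each explores a truncated ordered BFS tree of $\Oh(c)$ branching, touching $\Oh(c\cdot\max(k,m))$ edges, and processing the corresponding ball costs a further $\Oh(c)$ factor, for a total of $\Oh\!\bigl(c^2\cdot\max(k,m)\cdot|V(G)|\bigr)$ as claimed. The algorithm returns $D$ together with the extracted $m$-scattered set $A'$, which by construction satisfies $|D|\le c^2|A'|$.
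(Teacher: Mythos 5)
First, a point of reference: the paper does not prove this statement at all — it is quoted as Theorem~4 of \cite{Dvorak13} and used as a black box inside the proof of Theorem~\ref{thm:dvorak}. So the honest comparison is against \Dvorak's original argument, and your construction is exactly his: sweep $V(G)$ in the order $\sigma$, activate $v$ when it lies at distance more than $k$ from the current $D$, add $B^\sigma_m[v]$ to $D$, and then split the bound into $|D|\le c|A|$ plus ``the conflict graph on $A$ has back-degree at most $c-1$ with respect to $\sigma$''. That skeleton is correct.

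The gap is precisely the step you flag as the main obstacle, and the route you sketch for it does not close it. You propose to bound the number of earlier active conflict-neighbours $u'$ of $u$ sharing a given hub $b$ by a case analysis ``combined with the fact that $A$ is $k$-scattered''. Scatteredness alone is not enough: if $b$ is the centre of a star whose legs have length $k$, the leaf-ends are pairwise at distance $2k\ge k+1$, so arbitrarily many pairwise $k$-scattered vertices can sit at distance $k$ from one and the same hub. What actually pins the count to \emph{one} per hub is the activation condition against $D$, used twice. (i) Since $b$ is the $\sigma$-minimum of the chosen shortest $u$--$u'$ path, the sub-path from $u'$ to $b$ witnesses $b\in B^\sigma_m[u']$, so $b$ was already placed in $D$ when $u'$ became active; as $u$ was activated later, $\dist(u,b)\ge k+1$, whence $\dist(u',b)\le m-(k+1)\le k$ — this is where $m\le 2k+1$ enters, and it rules out your ``$b\in B^\sigma_k[u]$'' branch entirely. (ii) If two earlier active vertices $u'_1<_\sigma u'_2$ shared the hub $b$, then by (i) $\dist(u'_2,b)\le k$ while $b$ already lay in $D$ when $u'_2$ was processed, so $u'_2$ could not have been activated. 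Hence at most one $u'$ per hub $b\in B^\sigma_m(u)$, back-degree at most $c-1$, an $m$-scattered $A'\subseteq A$ with $|A'|\ge|A|/c$, and the exact constant $c^2$. Note also that your fallback ``degeneracy $\Oh(c)$ still suffices'' would only yield $\Oh(c^2)$, which is fine for the paper's use of the theorem but does not prove the statement as written; the argument above gives the tight constant with no extra work.
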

If we set $k=r$ and $m=2r$, then the proof of Theorem~\ref{thm:dvorak} boils down to finding an ordering of $V(G)$ with a near-optimal $2r$-weak coloring number. As \Dvorak observes, this can be done using the notion of $m$-admissibility, which is a similar measure of orderings of $V(G)$ as weak colorings. In particular (see Lemma 5 in~\cite{Dvorak13} and the discussion after it), an ordering of $V(G)$ of $2r$-admissibility $c$ has weak coloring number at most $(c(c-1)^{2r-1}+1)^{2r}$. Also, as \Dvorak~\cite{Dvorak13}, argues $m$-admissibility admits a simple polynomial-time $m$-approximation algorithm. By applying this algorithm we can thus obtain an ordering of $V(G)$ with weak coloring number at most $(2rc(2rc-1)^{2r-1}+1)^{2r}$, where $c=\textrm{adm}_{2r}(G)$ is the optimum $2r$-admissibility of $G$.

We are left with bounding the $2r$-admissibility of a graph in terms of its grads. For this, we use a trivial bound $\textrm{adm}_{2r}(G)<\col_{2r}(G)$ (see Exercise 4.5 in~\cite{Sparsity}) and the bound
$$\col_{2r}(G)\leq 1+Q_r(\grad_r(G)),$$
for some polynomial $Q_r$, following from Theorem 7.11 in~\cite{Sparsity}. Thus $\textrm{adm}_{2r}(G)\leq P_r(\grad_r(G))$, and hence the approximation algorithm for $2r$-admissibility outputs an ordering with weak coloring number at most $R_r(\grad_r(G))$, for some polynomial $R_r$.
By applying the algorithm of Theorem~\ref{thm:dvorak-precise} we can either find a $2r$-scattered set $A$ of size at least $k+1$, or a dominating set $D$ of size at most $P_r(\grad_r(G))\cdot k$ for $P_r(x)=(R_r(x))^2$, as claimed.
\end{proof}

\section{Proof of Lemma~\ref{lem:fewneighbourhoods}}\label{sec:fewneighbourhoods}

\newcommand{\Fam}{\mathcal{F}}
\newcommand{\coloring}{\phi}
\newcommand{\sig}{\sigma}
\newcommand{\type}{\tau}
\newcommand{\Sig}{\Sigma}
\newcommand{\nfun}{\rho}

\subsection{$p$-centered colorings}

We first recall the notion of a {\em{$p$-centered coloring}} that will be the crucial tool throughout the proof.

\begin{definition}
For a graph $G$ and positive integer $p$, a {\em{$p$-centered coloring}} of $G$ is a coloring of $V(G)$ with colors $[q]$, for some positive integer $q$, such that for every connected subgraph $H$ of $G$ that contains at most $p$ colors, there is a color that has exactly one vertex in $H$. By $\widetilde{\chi}_p(G)$ we denote the minimal number of colors $q$ needed for a $p$-centered coloring of $G$.
\end{definition}

It is known that graph classes of bounded expansion admit $p$-centered colorings with few colors.

\begin{theorem}[\cite{Sparsity}]\label{thm:centered-colorings}
Let $p$ be a positive integer. There exists a polynomial $R_p$, such that for every graph $G$ the following holds:
$$\widetilde{\chi}_p(G)\leq R_p(\grad_{2^{p-2}+1}(G)).$$
\end{theorem}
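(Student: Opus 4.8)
The plan is not to re-derive this deep fact from scratch but to recall the standard route of~\cite{Sparsity} via \emph{transitive fraternal augmentations}, which is the cleanest way to expose the dependence on a bounded-depth grad. The argument has two pillars: (i) a graph of bounded expansion admits, for each fixed number of steps, an augmentation of its edge set by arcs so that the resulting digraph has maximum in-degree bounded polynomially in a low-depth grad of $G$, with all arcs spanning only short distances in $G$; and (ii) a proper colouring of the underlying undirected graph of the $(p-1)$-step augmentation is already a $p$-centered colouring of $G$. Combining the two with the elementary fact that a digraph with maximum in-degree $d$ has an underlying undirected graph that is $2d$-degenerate, hence properly $(2d+1)$-colourable, yields the claimed bound, with $R_p$ obtained from the polynomial in pillar~(i).

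First I would set up the augmentation. Fix an acyclic orientation $\vec{G}=\vec{G}_1$ of $G$ of in-degree at most $2\grad_0(G)$, which exists since $G$ is $2\grad_0(G)$-degenerate. Then define inductively $\vec{G}_1\subseteq\vec{G}_2\subseteq\cdots$, where $\vec{G}_{i+1}$ is obtained from $\vec{G}_i$ by: (\emph{transitivity}) whenever $u\to v\to w$ are arcs of $\vec{G}_i$ and $u,w$ are non-adjacent, insert an arc between $u$ and $w$; (\emph{fraternity}) whenever $u\to w\leftarrow v$ are arcs of $\vec{G}_i$ and $u,v$ are non-adjacent, insert an arc between $u$ and $v$. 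The orientations of the newly inserted arcs are chosen to keep the in-degree small (a minimum-in-degree orientation of the new undirected edges, re-balanced against the old orientation). Two invariants are maintained: every arc of $\vec{G}_i$ joins vertices at distance at most $2^{i-1}$ in $G$ (each step at most doubles the reach, and unfolding a fraternal cherry accounts for the ``$+1$'' that appears later), and the arc-count of $\vec{G}_i$ is at most $c_i\cdot|V(G)|$ for a constant $c_i$ one must control.

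The heart of the argument is the in-degree bound: I would prove by induction on $i$ that $\Delta^-(\vec{G}_i)\le Q_i(\grad_{2^{i-2}+1}(G))$ for a polynomial $Q_i$. The inductive step uses that the arcs of $\vec{G}_i$ incident to a fixed vertex $v$ are witnessed by short directed walks in $\vec{G}_{i-1}$ (length $2$ walks for transitivity, cherries for fraternity), and contracting such walks realises $v$ as a vertex of high degree in a shallow minor of $G$ of radius at most $2^{i-2}+1$; the density bound $\grad_{2^{i-2}+1}(G)$ on such shallow minors then forces the in-degree down after re-balancing the orientation. (The rough count is: at most roughly $(\Delta^-)^2$ transitivity arcs and a comparable number of fraternity arcs per vertex before re-orientation, then a degeneracy-style re-balancing.) This is exactly the delicate bookkeeping of Chapter~7 of~\cite{Sparsity}, and it is the step I expect to be the main obstacle, since one needs simultaneously the correct depth $2^{i-2}+1$ and a genuinely \emph{polynomial} (not merely elementary) dependence on the grad.

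Finally I would connect the augmentation to centered colourings. Let $G'$ be the underlying undirected graph of $\vec{G}_{p-1}$ and let $c$ be a proper colouring of $G'$ with at most $2\Delta^-(\vec{G}_{p-1})+1\le 2Q_{p-1}(\grad_{2^{p-2}+1}(G))+1$ colours; set $R_p$ to be this polynomial. It remains to show $c$ is $p$-centered for $G$. Suppose $H$ is a connected subgraph of $G$ using at most $p$ colours in which no colour occurs exactly once. Using transitivity and fraternity one shows, by induction on the number of colours of $H$, that $H$ must then contain two vertices adjacent in $G'$ with the same colour: a colour repeated within a small radius of a DFS tree of $H$ produces, via a fraternal step, a new edge inside $H$ that shortcuts the tree, and iterating this $p-1$ times — which is precisely why $p-1$ augmentation steps are enough — collapses a monochromatic pair into an arc of $\vec{G}_{p-1}$. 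This contradicts properness of $c$ on $G'$. Hence some colour of $H$ is a centre, so $\widetilde{\chi}_p(G)\le R_p(\grad_{2^{p-2}+1}(G))$, as required.
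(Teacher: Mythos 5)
This theorem is imported from \cite{Sparsity}: the paper offers no proof of its own beyond pointing to Theorem~7.8 there and remarking that the coloring constructed in that proof is in fact $p$-centered, and your sketch via transitive fraternal augmentations is precisely the argument of that cited proof, so the approach matches. The only wrinkles are bookkeeping ones you already flag as deferred to Chapter~7 of \cite{Sparsity} — the index of the final augmentation should be $\vec{G}_p$ (i.e.\ $p-1$ augmentation steps, which is what makes the reach $2^{p-2}$ and hence the grad depth $2^{p-2}+1$ come out right), not $\vec{G}_{p-1}$, and one must handle the loops and $2$-cycles the augmentation can create — neither of which affects the substance of the argument.
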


This statement follows from Theorem 7.8 of~\cite{Sparsity} and its proof. Theorem 7.8 of~\cite{Sparsity} states only a bound on the number of colors needed for a low tree-depth coloring, which is a slightly weaker notion. However, from the proof it directly follows that the constructed coloring is $p$-centered (see Lemma 7.8 of~\cite{Sparsity}). Also, the result about boundedness of $\widetilde{\chi}_p(G)$ for graphs belonging to classes of bounded expansion, though without the explicit bound stated, follows directly from Theorem~7.7 or Theorem 7.8 of~\cite{Sparsity}, and Theorem 2.5 of~\cite{NesetrilM08b}.

\subsection{Zigzag-free families of functions}

We first introduce some auxiliary combinatorial observations about families of functions that naturally appear in our proofs.

\begin{definition}
A family $\Fam$ of functions $[\ell]\to [n]$ is called {\em{zigzag-free}} if there are no three functions $f_1,f_2,g\in \Fam$ and elements $i,j\in [\ell]$ such that $f_1(i)=g(i)\neq f_2(i)$ and $f_1(j)\neq g(j)=f_2(j)$.
\end{definition}

\begin{lemma}\label{lem:zigzag-free}
Let $\Fam$ be a zigzag-free family of functions from $[\ell]$ to $[n]$. Then $|\Fam|\leq \ell n$.
\end{lemma}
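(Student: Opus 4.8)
The plan is to prove the bound $|\Fam| \le \ell n$ by induction on $\ell$. The base case $\ell = 1$ is immediate: a family of functions $[1]\to[n]$ has at most $n$ members since each function is determined by its single value, so $|\Fam| \le n = \ell n$. For the inductive step, suppose the claim holds for all zigzag-free families of functions with domain of size $\ell - 1$, and let $\Fam$ be a zigzag-free family of functions $[\ell]\to[n]$.

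The key structural idea is to exploit coordinate $\ell$ to split $\Fam$ according to the value taken there. For each $v \in [n]$, let $\Fam_v = \{ f \in \Fam : f(\ell) = v \}$, and let $\Fam_v'$ denote the family of restrictions $\{ f|_{[\ell-1]} : f \in \Fam_v \}$. The first thing I would argue is that at most one of the classes $\Fam_v$ can fail to be ``essentially'' a family of restrictions of size $\le (\ell-1)n$; more precisely, I would like to show that for all but at most one value $v$, the restriction map $f \mapsto f|_{[\ell-1]}$ is injective on $\Fam_v$, so that $|\Fam_v| = |\Fam_v'| \le (\ell-1)n$ by the inductive hypothesis applied to $\Fam_v'$ (which is zigzag-free, being a set of restrictions of a zigzag-free family). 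Indeed, if two distinct functions $f_1, f_2 \in \Fam_v$ agree on $[\ell-1]$ then they differ nowhere, contradiction — so in fact the restriction map is \emph{always} injective, and each $\Fam_v'$ is zigzag-free. This gives $|\Fam_v| \le (\ell-1)n$ for every $v$, but summing over all $v$ yields only $|\Fam| \le (\ell-1)n^2$, which is too weak. So the induction on $\ell$ alone does not suffice, and the zigzag-freeness must be used more globally.

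The real argument, which I would carry out instead, picks a distinguished function and charges everything against it. Fix an arbitrary $g \in \Fam$ (assume $\Fam$ nonempty, else the bound is trivial). For each $f \in \Fam$ and each coordinate $i \in [\ell]$, say $i$ is an \emph{agreement coordinate} of $f$ if $f(i) = g(i)$. The zigzag-free condition, read with $g$ as the middle function, says: there are no $f_1, f_2 \in \Fam$ and $i,j$ with $f_1(i) = g(i) \ne f_2(i)$ and $f_2(j) = g(j) \ne f_1(j)$. In other words, the sets of agreement coordinates of the functions in $\Fam$ are \emph{totally ordered by inclusion} modulo the coordinates where $g$'s value is forced — more carefully, for any two functions $f_1, f_2$, one cannot have $f_1$ agreeing with $g$ somewhere $f_2$ disagrees \emph{and} $f_2$ agreeing somewhere $f_1$ disagrees. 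Thus if we let $A(f) = \{ i : f(i) = g(i) \}$, then for any $f_1, f_2$ we have $A(f_1) \subseteq A(f_2)$ or $A(f_2) \subseteq A(f_1)$: the family $\{A(f) : f \in \Fam\}$ is a chain in the subset lattice of $[\ell]$. A chain in $2^{[\ell]}$ has at most $\ell + 1$ elements, so there are at most $\ell + 1$ distinct agreement sets; I would then bound, for each fixed agreement set $S$, the number of $f \in \Fam$ with $A(f) = S$. Such an $f$ is determined by $g|_S$ (forced) together with its values on $[\ell]\setminus S$, each of which must differ from $g$'s value, giving at most $(n-1)^{|[\ell]\setminus S|}$ choices. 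This still over-counts; the clean way to finish is to observe that functions with agreement set exactly $S$, restricted to the complement of $S$, again form a zigzag-free family over a smaller domain with alphabet $[n]$, and to set up the induction so that the chain bound $\ell+1$ multiplied by the per-level count telescopes to $\ell n$. I expect the main obstacle to be precisely this bookkeeping: organizing the chain-of-agreement-sets observation together with the induction so that the constants multiply out to exactly $\ell n$ rather than something larger like $(\ell+1)n$ or $\ell n^2$; the clean inductive formulation — probably inducting on $\ell$ while using the chain property to reduce to a single ``largest agreement set'' class plus a bounded correction — is the delicate point, and I would spend most of the effort pinning down the right induction hypothesis.
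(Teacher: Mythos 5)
Your chain-of-agreement-sets observation is correct and is in fact already enough to prove the lemma, but the completion you sketch --- partition $\Fam$ by agreement level, bound each level, and hope the counts telescope --- does not close, and you acknowledge as much. If you bound the level with agreement set $S$ by applying induction to the restrictions to $[\ell]\setminus S$, that level contributes up to $(\ell-|S|)\cdot n$ functions, and summing these along a chain of length up to $\ell+1$ yields a bound growing quadratically in $\ell$, far above $\ell n$. So the per-level decomposition is a wrong turn, not mere bookkeeping.

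The clean finish, which is what the paper does, is a certificate argument that drops straight out of the chain property you already noticed. Fix any $g\in\Fam$ and consider $A_g(f)=\{i: f(i)=g(i)\}$ for $f\in\Fam$ with $f\neq g$. By your observation these sets form a chain under inclusion; if $\Fam\neq\{g\}$ the chain has a maximum element $S$, and $S\neq[\ell]$ because $A_g(f)=[\ell]$ would force $f=g$. Pick any $i\notin S$. Then every $f\in\Fam$ with $f\neq g$ has $f(i)\neq g(i)$, so the pair $(i,g(i))\in[\ell]\times[n]$ is taken by $g$ and by no other member of $\Fam$; this is vacuous when $\Fam=\{g\}$. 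Assigning each $g$ such a pair is an injection $\Fam\to[\ell]\times[n]$, hence $|\Fam|\le\ell n$. The paper phrases this without naming the chain: it takes $f_2\neq g$ maximizing $|A_g(f_2)|$, and shows via the zigzag condition that if every coordinate of $g$ were matched by some other function, there would be an $f_1\neq g$ agreeing with $g$ on a strictly larger set, contradicting maximality. Your chain property is precisely the structure that makes that maximality argument work --- you found the right ingredient but aggregated it the wrong way.
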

\begin{proof}
We prove the following claim: for every function $g\in \Fam$ there exists a pair $(i,a)\in [\ell]\times [n]$ such that $g(i)=a$ and $g'(i)\neq a$ for every $g'\in \Fam$, $g'\neq g$. Since the pairs $(i,a)$ have to be pairwise different for all $g\in \Fam$, and there are at most $\ell n$ such pairs, the claim will follow.

Suppose for the sake of contradiction that there exists $g\in \Fam$ such that for every $i\in [\ell]$ there is a function $g'\in \Fam$, $g'\neq g$, with $g(i)=g'(i)$. Let $f_2$ be a function from $\Fam$ that agrees with $g$ on the maximum number of indices, among functions different from $g$. Since $f_2\neq g$, there is an index $i$ such that $g(i)\neq f_2(i)$. By our assumption, however, there is a function $f_1\in \Fam$, $f_1\neq g$, such that $g(i)=f_1(i)$. Observe that $f_1$ has to agree with $g$ on every index $j$ where $f_2$ agrees with $g$, since otherwise the triple $(f_1,f_2,g)$ would contradict the fact that $\Fam$ is zigzag-free. Therefore, $f_1$ is different from $g$ and agrees with $g$ on strictly more positions than $f_2$ does, which is a contradiction with the choice of $f_2$.
\end{proof}

\subsection{Layered graphs and signatures}

Throughout this section we will be working with a {\em{layered graph}}, defined as follows. A layered graph $G$ is a graph with the vertex set partitioned into $r+1$ layers $V_0,V_1,\ldots,V_r$, such that every edge of the graph connects a pair of vertices from two consecutive layers in this ordering. Note that, in this manner, every vertex of $V_r$ is at distance at least $r$ from every vertex of $V_0$. For $u\in V_0$, let $R(u)=N^r[u]\cap V_r$, i.e., $R(u)$ is the set of vertices of the last layer that can be reached by a path of length $r$ from $u$. Note that then such a path traverses vertices belonging to consecutive layers $V_0,V_1,\ldots,V_r$, in this order. Every $V_0$-$V_r$ path with this property shall be called a {\em{straight path}}.

From now on, suppose we have some $2(r+1)$-centered coloring $\coloring\colon V(G)\to [q]$ of $G$, that is, a coloring of vertices of $\coloring$ using color set $[q]$ such that for every connected subgraph $H$ of $G$ that contains at most $2(r+1)$ colors, there is a color $i\in [q]$ such that exactly one vertex of $H$ is colored with $i$. If $G$ belongs to a class $\mathcal{G}$ of bounded expansion, then by Theorem~\ref{thm:centered-colorings} we know that there exists a constant $q$ depending on $r$ and the expansion of $\mathcal{G}$ for which such a coloring of $G$ exists.

For a straight path $P$ connecting some $u\in V_0$ and $v\in V_r$, we define the {\em{signature}} of $P$, denoted $\sig(P)$, as the sequence of colors of consecutive vertices appearing on $P$. By $\Sig = [q]^{r+1}$ we denote the set of all possible signatures of straight paths. For $u\in V_0$ and $\alpha\in \Sig$, by $R^\alpha(u)$ we denote the set of all vertices of $V_r$ that are reachable from $u$ by a straight path with signature $\alpha$. We say that $\alpha$ is {\em{realizable at $u$}} if $R^\alpha(u)\neq \emptyset$. The {\em{type}} of vertex $u\in V_0$, denoted $\type(u)$, is defined as the set of signatures realizable at $u$. Note that there are $2^{|\Sig|}=2^{q^{r+1}}$ possible types of vertices in $V_0$.

We now describe the main result of this section, i.e., the upper bound on the number of possible $r$-neighborhoods of vertices of $V_0$ within $V_r$.

\begin{lemma}\label{lem:layered-nei}
$| \{A\subseteq V_r\, \colon\, \exists_{u\in V_0}\, A=R(u)\} | \leq 2^{q^{r+1}}\cdot q^{r+1}\cdot |V_r|.$
\end{lemma}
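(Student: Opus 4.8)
The plan is to organise, for every source $u\in V_0$, all the straight paths leaving $u$ according to their signatures, and then to bound the number of resulting ``signature patterns'' using Lemma~\ref{lem:zigzag-free}. By Theorem~\ref{thm:centered-colorings} we may take $\coloring$ to use a constant number $q$ of colors; note $\coloring$ is in particular $2$-centered, hence proper, so no straight path has a constant signature. Fix a linear order of $V(G)$ and, for every $u\in V_0$ and $v\in R(u)$, a canonical straight $u$--$v$ path $\pi_u(v)$ determined by this order (taken consistently, so that the canonical paths out of a fixed $u$ form a tree rooted at $u$). Encode each $u\in V_0$ by the pair $\big(\type(u),\,g_u\big)$, where $g_u\colon V_r\to\Sig$ sends $v\in R(u)$ to $\sig(\pi_u(v))$ and sends every $v\notin R(u)$ to a fixed constant signature $\alpha_0$ (which, by properness, is never realised). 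Then $R(u)=g_u^{-1}(\Sig\setminus\{\alpha_0\})$, so $R(u)$ is recovered from the encoding. Since there are at most $2^{|\Sig|}=2^{q^{r+1}}$ types, it is enough to prove that for each fixed type $T$ the number of distinct functions $g_u$ among $u$ with $\type(u)=T$ is at most $|\Sig|\cdot|V_r|=q^{r+1}\cdot|V_r|$; summing over $T$ then yields the bound $2^{q^{r+1}}\cdot q^{r+1}\cdot|V_r|$.

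For the per-type count I intend to apply Lemma~\ref{lem:zigzag-free} to the family $\Fam_T=\{g_u : \type(u)=T\}$, identifying $V_r$ with $[|V_r|]$ and $\Sig$ with $[q^{r+1}]$; everything hinges on showing that (a suitable version of) $\Fam_T$ is zigzag-free. The engine for this is the centered coloring applied to unions of canonical paths: if $\pi_{u}(v)$ and $\pi_{u'}(v)$ end at a common $v$ and have a common signature $\alpha=(a_0,\dots,a_r)$, their union is connected and uses at most $r+1\le 2(r+1)$ colors, so some color occurs exactly once in it; as a color occurs inside a straight path exactly at the positions where it sits in $\alpha$, this forces that color to be $a_j$ for a position $j$ that is \emph{rigid} in $\alpha$ (occurs nowhere else in $\alpha$), and forces the two canonical paths to pass through the same vertex at position $j$. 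Given a would-be zigzag --- three sources $u_1,u_2,u_3$ of type $T$ and two targets $v,v'$ with the prescribed coincidences among the $g_{u_i}$ --- one would combine several such rigid-vertex coincidences (including one obtained by applying the centered coloring to the two canonical paths leaving the common source of the zigzag), and use that $\type(u_1)=\type(u_2)=\type(u_3)$ makes the relevant signatures realisable at all three sources, in order to splice a straight path realising a ``forbidden'' signature and contradict a disagreement in the zigzag.

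I expect the zigzag-freeness step to be the genuine obstacle. The naive encoding above does not by itself give a zigzag-free family: already for $r=1$, three vertices lying in one color class of $\coloring$ whose $V_r$-neighbourhoods ``cross'' form a zigzag, so either the grouping has to be refined beyond the type $\type(u)$, or the function $g_u$ has to record more than the bare signatures (for instance the whole tuple of rigid vertices along $\pi_u(v)$), or the base case $r=1$ has to be peeled off and handled separately via Lemma~\ref{lem:twin-classes}; pinning down the right invariant and making the rerouting argument go through is the delicate core of the proof. Once a zigzag-free family of the right size is in hand, Lemma~\ref{lem:zigzag-free} gives $q^{r+1}|V_r|$ per type, and summing over the at most $2^{q^{r+1}}$ types finishes the proof; the unrealisable default signature $\alpha_0$ is precisely what keeps the codomain equal to $\Sig$ and so produces the coefficient $q^{r+1}$ rather than $q^{r+1}+1$.
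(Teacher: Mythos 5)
Your proposal correctly sets out the high-level scaffolding---group the sources of $V_0$ by type, encode each source by a function, bound the number of distinct encodings per type via Lemma~\ref{lem:zigzag-free}, and sum over the at most $2^{q^{r+1}}$ types---and you correctly sense that the encoding $g_u\colon V_r\to\Sig$ is \emph{not} zigzag-free (your crossing-neighbourhoods example is real: three sources of the same colour with crossing $V_r$-neighbourhoods give a zigzag as long as the two targets get distinct colours, and nothing in a $2(r+1)$-centered coloring rules that out). But none of the repairs you float (refining beyond $\type(u)$, enriching $g_u$ with rigid-vertex data, or peeling off $r=1$ via Lemma~\ref{lem:twin-classes}) is the one the paper uses, and the third one in particular does nothing for $r>1$. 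So the proposal has a genuine gap at the single step that carries the whole argument.

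The missing idea is a preliminary \emph{equal-or-disjoint dichotomy}, and it is what lets one \emph{flip the domain and codomain} of the encoded function. Fix a type $\Pi$ and a signature $\alpha\in\Pi$. Then for any two sources $u,v$ of type $\Pi$, the sets $R^\alpha(u)$ and $R^\alpha(v)$ are either equal or disjoint. (This is Claim~\ref{cl:equal-or-disjoint}: if $a\in R^\alpha(u)\setminus R^\alpha(v)$ and $b\in R^\alpha(u)\cap R^\alpha(v)$, take straight $\alpha$-signature paths $u$--$a$, $u$--$b$, $v$--$b$; the first and third are disjoint, else one could splice a straight $\alpha$-path from $v$ to $a$; their union is then connected, uses only the $\le r+1$ colours appearing in $\alpha$, and every such colour appears at least twice, contradicting the $2(r+1)$-centered coloring.) Consequently, within type $\Pi$ and for each $\alpha\in\Pi$, the sets $R^\alpha(u)$ partition a subset of $V_r$ into at most $|V_r|$ cells. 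One then encodes $u$ not by $g_u\colon V_r\to\Sig$, but by $\nfun_u\colon[|\Pi|]\to[|V_r|]$ sending each $\alpha$ to the index of the cell $R^\alpha(u)$. This family \emph{is} zigzag-free (Claim~\ref{cl:is-zigzag-free}), by a six-path version of the same splicing-plus-centered-coloring argument (three sources, two signatures; the equal-or-disjoint dichotomy provides exactly the disagreement you need to force the two ``outer'' paths disjoint from the ``inner'' ones), and Lemma~\ref{lem:zigzag-free} gives at most $|\Pi|\cdot|V_r|\le q^{r+1}\cdot|V_r|$ functions per type. Your candidate $g_u$ has the same numerical shape ($|V_r|\cdot|\Sig|$) but the wrong combinatorics: the three-source zigzag you describe is not forbidden when the two targets are in different $\alpha$-cells, precisely the case the paper dissolves by indexing on signatures rather than on $V_r$. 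Without the dichotomy there is no natural partition of $V_r$ to index into, and hence no zigzag-free family to feed to Lemma~\ref{lem:zigzag-free}; with it, the argument you were ``groping for'' goes through cleanly.
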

\begin{proof}
We focus on any subset of signatures $\Pi\subseteq \Sig$, and prove that the number of sets $R(u)$ that are induced by vertices of $u$ with type $\Pi$ is at most $q^{r+1}\cdot |V_r|$. Then the claim follows from summing through all the possible types. Let then $W=\type^{-1}(\Pi)$ be the set of all vertices of $V_0$ that have type $\Pi$.

\begin{claim}\label{cl:equal-or-disjoint}
For any $u,v\in W$ and any $\alpha\in \Pi$, sets $R^\alpha(u)$ and $R^\alpha(v)$ are either equal or disjoint.
\end{claim}
\begin{proof}
For the sake of contradiction, without loss of generality suppose that there are vertices $a,b\in V_r$ such that $a\in R^\alpha(u)\setminus R^\alpha(v)$ and $b\in R^\alpha(u)\cap R^\alpha(v)$. Let $P_1,P_2,P_3$ be straight paths with signature $\alpha$ that respectively connect $u$ with $a$, $u$ with $b$, and $v$ with $b$. 
\begin{center}
  \includegraphics[scale=.75]{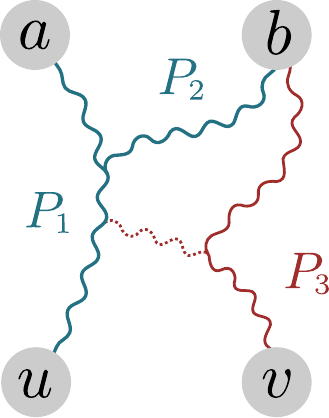}
\end{center}
Observe that $P_1$ and $P_3$ do not intersect: Indeed, if this was the case, then one could create a straight path of signature $\alpha$ connecting $v$ with $a$ by first traversing a prefix of $P_3$ up to the vertex of intersection, and then continuing using a suffix of $P_1$ up to $a$. This would contradict the assumption that $a\notin R^\alpha(v)$. Let $H$ be the subgraph obtained by taking the union of paths $P_1$, $P_2$, and $P_3$. Then $H$ is connected, vertices of $H$ are colored using only at most $r+1$ colors (the ones appearing in $\alpha$), and each such color is used at least twice: once on $P_1$ and once on $P_3$. This is a contradiction with $\coloring$ being a $2(r+1)$-centered coloring.
\cqed\end{proof}

From Claim~\ref{cl:equal-or-disjoint} we infer that for every $\alpha\in \Pi$ we can find a partition $(A^\alpha_1,A^\alpha_2,\ldots,A^\alpha_{n_\alpha})$ of a subset of $V_r$ into nonempty subsets such that for every $u\in W$, we have that $R^\alpha(u)=A^\alpha_j$ for some $j\in [n_\alpha]$. In particular, we have that $n_\alpha\leq n$, where $n:=|V_r|$. Let $\Pi=\{\alpha_1,\alpha_2,\ldots,\alpha_{|\Pi|}\}$. For a vertex $u\in W$, let us define a function $\nfun_u\colon [|\Pi|]\to [n]$ as follows: for $i\in [|\Pi|]$, $\nfun_u(i)$ is such an index $j$ that $R^{\alpha_i}(u)=A^{\alpha_i}_j$. Let also $\Fam$ be the set of all functions $\nfun_u$ for $u\in W$. 

\begin{claim}\label{cl:is-zigzag-free}
$\Fam$ is zigzag-free.
\end{claim}
\begin{proof}
For the sake of contradiction, suppose there are vertices $u_1,u_2,v\in V_0$ and signatures $\alpha,\beta\in \Pi$, such that $R^{\alpha}(u_1)=R^{\alpha}(v)\neq R^{\alpha}(u_2)$ and $R^{\beta}(u_1)\neq R^{\beta}(v)=R^{\beta}(u_2)$. Let us take arbitrary vertices $a\in R^{\alpha}(v)$, $a'\in R^{\alpha}(u_2)$, $b\in R^{\beta}(v)$, $b'\in R^{\beta}(u_1)$. We define six straight paths as follows:

\vspace*{.8em} 
\begin{minipage}[c]{.6\textwidth}
\begin{itemize}
\item $P_{u_1}^\alpha$ is of signature $\alpha$ and connects $u_1$ with $a$;
\item $P_{u_1}^\beta$ is of signature $\beta$ and connects $u_1$ with $b'$;
\item $P_{u_2}^\alpha$ is of signature $\alpha$ and connects $u_2$ with $a'$;
\item $P_{u_2}^\beta$ is of signature $\beta$ and connects $u_2$ with $b$;
\item $P_{v}^\alpha$ is of signature $\alpha$ and connects $v$ with $a$;
\item $P_{v}^\beta$ is of signature $\beta$ and connects $v$ with $b$.
\end{itemize}
\end{minipage}
\begin{minipage}[c]{.3\textwidth}
 \centering\includegraphics[scale=.75]{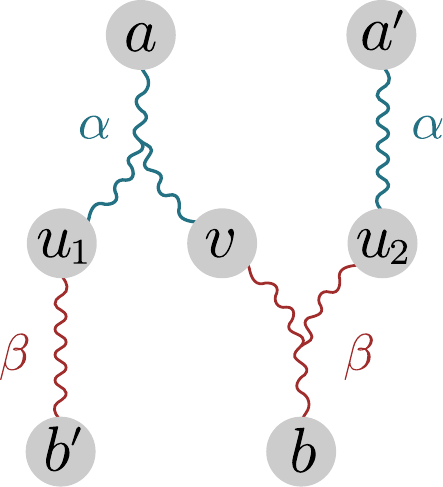}
\end{minipage}
\vspace*{1em}

\noindent
Let $H$ be the subgraph of $G$ obtained by taking the union of these paths; note that $H$ is connected. Now observe that paths $P_{u_1}^\beta$ and $P_{v}^\beta$ do not intersect. Indeed, if they intersected, then similarly as in the proof of Claim~\ref{cl:equal-or-disjoint} we would be able to find a straight path of signature $\beta$ connecting $v$ with $b'$ by concatenating a prefix of $P_{v}^\beta$ and a suffix of $P_{u_1}^\beta$. This, in turn, would imply that $b'\in R^\beta(v)$, which by Claim~\ref{cl:equal-or-disjoint} would mean that $R^{\beta}(u_1)=R^\beta(v)$, contradicting our assumptions. Similarly we prove that paths $P_{u_2}^\alpha$ and $P_{v}^\alpha$ do not intersect.

We conclude that $H$ is a connected subgraph that contains vertices of at most $2(r+1)$ colors (the ones used in signatures $\alpha$ and $\beta$), and each of these colors appears in $H$ at least twice: colors from $\alpha$ appear both on $P_{u_2}^\alpha$ and on $P_{v}^\alpha$, whereas colors from $\beta$ appear both on $P_{u_1}^\beta$ and on $P_{v}^\beta$. This is a contradiction with the assumption that $\coloring$ is a $2(r+1)$-centered coloring.
\cqed\end{proof}

Note that for two vertices $u,u'\in W$ with $\nfun_u=\nfun_{u'}$, we have that $R(u)=R(u')$. Therefore, the lemma immediately follows from combining Lemma~\ref{lem:zigzag-free} with Claim~\ref{cl:is-zigzag-free}.
\end{proof}

\subsection{Proof of Lemma~\ref{lem:fewneighbourhoods}}

Based on graph $G$ and a subset of vertices $X\subseteq V(G)$, we construct an auxiliary layered graph $G'$ with layers $V_0,V_1,\ldots,V_r$ as follows:
\begin{itemize}
\item $V_r$ is a copy of $X$ and $V_i$ is a copy of $V(G)\setminus X$, for each $i=0,1,\ldots,r-1$. The copy of a vertex $u\in V(G)$ in layer $i$ will be denoted by $u^i$.
\item For every $i=0,1,\ldots,r-1$, create edges between $V_i$ and $V_{i+1}$ as follows: for $u^i\in V_i$ and $v^{i+1}\in V_{i+1}$, add the edge $u^iv^{i+1}$ if and only if $u=v$ or $uv\in E(G)$.
\end{itemize}
It is now easy to see that for every $u\in V(G)\setminus X$, we have that $R(u^0)$ is exactly the set of copies of vertices of $\pr{r}{u}{X}$. Observe that $G'$ is a subgraph of the graph $G\lexprod K_{r+1}$. Hence, if $G\in \mathcal{G}$ for some class of bounded expansion $\mathcal{G}$, then, by Lemma~\ref{lem:lexprod}, $G'\in \mathcal{G}'$ for some class $\mathcal{G}'$ that also has bounded expansion, and where $\grad_i(\mathcal{G}')$ is bounded in terms of $r$ and $\grad_i(\mathcal{G})$, for every nonnegative integer $i$. Therefore, we can find a $2(r+1)$-centered coloring $\coloring$ of $G'$ that uses $q$ colors, where $q$ is a constant depending on $r$ and the grads of $\mc G$. Now Lemma~\ref{lem:fewneighbourhoods} follows from a direct application of Lemma~\ref{lem:layered-nei}.

\end{document}